\documentclass{article}
\usepackage{kr}

\usepackage{times}
\usepackage{soul}
\usepackage{url}
\usepackage[hidelinks]{hyperref}
\usepackage[utf8]{inputenc}
\usepackage[small]{caption}
\usepackage{graphicx}
\usepackage{amsmath}
\usepackage{amsthm}
\usepackage{booktabs}
\usepackage{algorithm}
\usepackage{algorithmic}
\title{On the Semantics of Generative SPARQL}

\author{%
    Author name
    \affiliations
    Affiliation
    \emails
    email@example.com    
}
\author{%
Ratan Bahadur Thapa$^{1}$\thanks{Corresponding author: ratan.thapa@ki.uni-stuttgart.de} \, \and\,
Steffen Staab$^{1,2}$\\
\affiliations
$^1$Institute for Artificial Intelligence, University of Stuttgart, Stuttgart, Germany\\
$^2$University of Southampton, United Kingdom\\
\emails
\{ratan.thapa, steffen.staab\}@ki.uni-stuttgart.de
}

\usepackage{amsmath}
\usepackage{enumitem}
\usepackage{xcolor}
\usepackage{color}
\usepackage{xspace}
\usepackage{amssymb}
\usepackage{mathtools}

\usepackage{stmaryrd}

\usepackage{amsthm}

\theoremstyle{plain}
\newtheorem{theorem}{Theorem}
\newtheorem{lemma}{Lemma}
\newtheorem{proposition}{Proposition}
\newtheorem{corollary}{Corollary}

\theoremstyle{definition}
\newtheorem{definition}{Definition}
\newtheorem{example}{Example}

\newtheorem{assumption}{Assumption}

\theoremstyle{remark}
\newtheorem{remark}{Remark}

\newcommand\m[1]{\ensuremath{\mathbf{#1}}}
\newcommand\ma[1]{\ensuremath{\mathcal{#1}}}

\newcommand*\wrt{\text{w.r.t.}\xspace}

\newcommand*\ie{\text{i.e.,}\xspace}

\newcommand*\eg{\text{e.g.,}\xspace}

\newcommand\tx[1]{\texttt{#1}}

\newcommand*\simtext{\tx{Sim}\xspace}

\newcommand*\stringSim{\ensuremath{\tx{Str}_\simeq\xspace}}

\newcommand*\iris{\tx{IRIs}\xspace}
\newcommand*\bgps{\tx{BGP}\xspace}

\newcommand*\filter[2]{\ensuremath{\tx{Filter}_{\,#2}(#1)}}
\newcommand*\union[2]{\ensuremath{#1\,\tx{Union}\,#2}}
\newcommand*\join[2]{\ensuremath{#1\, \tx{Join}\, #2}}
\newcommand*\minus[2]{\ensuremath{#1\, \tx{Minus}\, #2}}
\newcommand*\proj[2]{\ensuremath{\tx{Proj}_{\,#2}(#1)}}

\newcommand*\opt[3]{\ensuremath{#1\, \tx{Opt}_{\,#3}\,#2}}
\newcommand*\diff[3]{\ensuremath{#1\,\tx{Diff}_{\,#3}\,#2}}
\newcommand*\genbind[3]{\ensuremath{\tx{GenOp}(#1,#2,#3)}}

\newcommand*\promptstr[1]{\ensuremath{\pi(#1)}\xspace}
\newcommand*\sourceRdf{\ensuremath{\tx{rdf}\xspace}}
\newcommand*\sourceGen{\ensuremath{\tx{gen}\xspace}}

\newcommand*\bound[1]{\ensuremath{\tx{bound}(#1)}}
\newcommand*\var[1]{\ensuremath{\tx{var}(#1)}}
\newcommand*\domain[1]{\ensuremath{\tx{dom}_{#1}}\xspace}
\newcommand*\holder[1]{\ensuremath{\tx{input}(#1)}}
\newcommand*\target[1]{\ensuremath{\tx{out}(#1)}}
\newcommand*\compact[3]{\ensuremath{\tx{Compat}_{#1}(#2,#3)}}
\newcommand*\size[1]{\ensuremath{|#1|}}
\newcommand*\occurrence[2]{\ensuremath{#1\preceq#2}}
\newcommand*\canonical{\tx{Canon}\xspace}
\newcommand*\lexical{\tx{Lex}\xspace}
\newcommand*\evalTwo[2]{\ensuremath{ {\llbracket #1\rrbracket}_{#2}\xspace}}
\newcommand*\evalThree[3]{{\llbracket #1\rrbracket}_{#2}^{#3}{\xspace}}
\newcommand*\genPattern[1]{\ensuremath{\tx{gen}(#1)}}
\newcommand*\directedGraph[1]{\ensuremath{\overrightarrow{#1}}}
\newcommand*\Cand[1]{\ensuremath{\tx{Cand}(#1)\xspace}}
\newcommand*\Consequence[1]{\ensuremath{\Phi_{#1}}}
\newcommand*\ContextS[2]{\ensuremath{#1|_{#2}}}

\newcommand*\rdf{\ensuremath{\mathrm{Rdf}\xspace}}
\newcommand*\gen{\ensuremath{\mathrm{Gen}\xspace}}
\newcommand*\dom{\ensuremath{\mathrm{Dom}\xspace}}
\newcommand*\Call[1]{\ensuremath{\tx{Call}_{#1}\xspace}}
\newcommand*\Parse[1]{\ensuremath{\tx{Parse}_{#1}\xspace}}
\newcommand*\Sol{\ensuremath{\mathrm{Sol}\xspace}}
\newcommand*\lfp{\ensuremath{\mathrm{lfp}\xspace}}

\newcommand*\JJoin{\ensuremath{\mathrm{Join}\xspace}}
\newcommand*\ffalse{\ensuremath{\mathrm{false}\xspace}}
\newcommand*\ttrue{\ensuremath{\mathrm{true}\xspace}}
\newcommand*\eval{\ensuremath{\mathrm{Eval}\xspace}}
\newcommand*\undeff{\ensuremath{\mathrm{undef}\xspace}}
\newcommand*\GenOp{\ensuremath{\mathrm{GenOp}\xspace}}
\newcommand*\genop{\ensuremath{\mathrm{GenOp}\xspace}}
\newcommand*\Gen{\ensuremath{\mathrm{Gen}\xspace}}
\newcommand*\sem{\ensuremath{\tx{Sem}\xspace}}

\newcommand*\rep{\ensuremath{\tx{rep}\xspace}}
\newcommand*\emptypat{\ensuremath{\{\}\xspace}}
\newcommand{\Self}{\mathsf{Self}}

\begin{document}

\maketitle

\begin{abstract}
We extend SPARQL with a generative query construct, called \tx{GenOp}, whose evaluation calls a language model and produces typed solution mappings. We define the semantics of the GenOp in the query in a way that maintains the fixed-dataset assumption, on which formal semantics of SPARQL build, and extend solution mappings with values generated by the language model.  
We formalize the semantics of the extended language over these mappings using a compatibility relation that generalizes equality and supports similarity-based matching between RDF terms and generated values.
We analyze the semantic consequences of generative query patterns, focusing on mapping-level recursion induced by the reuse of generated bindings. Under deterministic bounded generation and finite candidate coverage assumptions, we characterize acyclic and stratified fragments with fixpoint semantics, establish algebraic equivalence and semantics-preserving rewrite rules, and provide an executable evaluation method; and we show that  data and combined complexity coincide with those of standard SPARQL.
\end{abstract}

\section{Introduction}
SPARQL  \cite{prud2008sparql} is the de facto standard for querying RDF \cite{manola2004rdf}. The semantics of SPARQL query patterns is defined in terms of (partial) solution mappings, with operators analogous to relational-algebra operators \cite{codd1970relational} and evaluated with respect to a fixed RDF graph \cite{perez2006semantics,perez2009semantics}. All semantic characterizations, equivalence results, and algebraic rewrite rules are stated relative to an underlying RDF graph that remains fixed during query evaluation \cite{schmidt2010foundations}.

Recent knowledge graph centric applications increasingly require \emph{exploratory enrichment} over existing datasets \cite{hogan2021knowledge}, such as synthesizing missing attributes, proposing related entities, or producing textual descriptions. Large language models (LLMs) can generate such values on demand, but current practice treats calls to language models operationally, as external procedures, middleware invocations, or prompt-based post-processing, outside the query algebra \cite{zhaohybrid,tan2025can,khabiri2025declarative,saeed2024querying,dai2024uqe,molli2025sparqllm}. As a result, their interaction with SPARQL operators is not governed by the declarative semantics of the query language, and standard questions about query equivalence, recursion, and termination remain outside the scope of these works.

One way to formally capture the semantics of operators that generate new values at query time is to relax the fixed-dataset assumption and  allow the queried graph to be discovered during evaluation, as in link-traversal querying \cite{hartig2009executing}. However, such approaches rely on dynamic data discovery and require specialized semantics that preclude algebraic equivalence, recursion, and fixpoint reasoning under a fixed graph \cite{hartig2009executing,florescu1998database,hernandez2019deep}. In contrast, we aim to preserve the fixed-dataset assumption of SPARQL while \emph{extending only the space of solution mappings} under a bounded generation assumption.

We integrate generative inference into the SPARQL algebra through a generative query pattern $\tx{GenOp}$. A $\tx{GenOp}$ instantiates a prompt template using the currently bound variables, invokes a specified generative model, parses finitely many outputs, and returns them as bindings for designated output variables. 

At the semantic level, our central design choice is to lift generation to the level of solution mappings rather than RDF graphs. To support algebraic composition with standard SPARQL operators, we define query semantics over \emph{typed solution mappings} that distinguish RDF-derived bindings from generated bindings. Each binding records its origin, keeping evaluation relative to a fixed RDF graph. To combine symbolic and generated values, we replace equality-based compatibility with a generalized compatibility relation: it coincides with equality on RDF bindings and allows similarity-based matching between RDF terms and generated values, following the general spirit of similarity-based querying\cite{kiefer2007fundamentals,zheng2016semantic,ferrada2024similarity}, via deterministic
lexicalization and a variable-indexed similarity threshold.

Although $\tx{GenOp}$ is non-recursive syntactically, cyclic dependencies may arise when generated bindings are reused as inputs to other $\tx{GenOp}$s. Such cycles occur solely at the level of extended solution mappings and induce self-referential constraints on admissible solutions. We formalize these dependencies via a consequence operator over sets of typed mappings and interpret cyclic queries using fixpoint semantics. 
We restrict ourselves to the deterministic bounded-generation assumption, where each fixed model specifier and instantiated prompt returns a fixed finite parsed output set, yielding a finite generated-value universe for each query instance. Under this assumption, query evaluation reduces to fixpoint computation over a finite mapping domain and therefore terminates by standard fixpoint arguments for finite lattices \cite{ullman1988principles,abiteboul1995foundations,green2013datalog}. Let us explore the difficulty of including value-generating LLM calls into SPARQL query processing with the following example. 
\begin{example}\label{example_one}
Let $\ma G=\{\tx{:Paris a :City}\}$ be an RDF graph, and consider the following
pseudo-generative SPARQL query:
\begin{verbatim}
SELECT ?x ?y ?z WHERE {
  ?x a :City .
  GenOp("Suggest a topic related to ?x 
                and ?z" AS ?y, GPT-4o).
  GenOp("Suggest a city related to 
     topic ?y" AS ?z, Gemini 1.5 Pro).
}
\end{verbatim}

Ideally, the first $\tx{GenOp}$ would behave analogously to a basic graph pattern: it should use currently available bindings and otherwise range over admissible combinations of variable assignments. Typical procedural implementations (\eg \cite{florescu1998database}) require all input variables of a $\tx{GenOp}$ to be bound before invocation. In this query, however, each $\tx{GenOp}$ depends on a value produced by the other, so neither can be evaluated under such an execution model. The problem is therefore that procedural execution enforces an evaluation order that blocks mutually dependent generative calls; our semantics instead provides a declarative interpretation that admits jointly supported bindings.

Under our semantics (see Appendix~\ref{app:example1-exec}), evaluation is defined over sets of typed solution mappings, and the basic graph pattern yields the initial mapping $\mu_0 = {?x \mapsto \tx{:Paris}}$. Each $\tx{GenOp}$ induces a mapping transformer. A typed mapping $\mu \supseteq \mu_0$ assigning $?y$ and $?z$ is admissible iff $?y$ is among the finitely many outputs generated from the first $\tx{GenOp}$ prompt instantiated with $\mu(?x)$ and $\mu(?z)$, and $?z$ is analogously supported by the second $\tx{GenOp}$ prompt instantiated with $\mu(?y)$.

The two $\tx{GenOp}$ occurrences therefore induce a cyclic dependency over the
mapping domain. The declarative solution set is given by the relevant fixpoint
semantics, while executable evaluation validates finite candidates against the
recursive equations under the candidate-coverage assumptions stated in
Sect.~\ref{sec:exec-fixpoint}.

\end{example}

Overall, we isolate a semantic foundation for hybrid symbolic–generative querying in which generative operators are formalized as \emph{context-sensitive transformers of solution mappings}. Their interaction is defined declaratively by fixpoint semantics over finite, typed mapping domains, supporting formal reasoning about correctness, termination, and equivalence for queries that incorporate generative inference.

\textbf{Our contributions are as follows}. (i) We extend SPARQL with a generative operator \tx{GenOp} and define a denotational semantics over typed solution mappings together with a generalized compatibility relation. Generated values remain confined to solution mappings and do not induce fact derivation, distinguishing the semantics from existential-rule languages and chase-based formalisms\cite{abiteboul1995foundations,cali2010query,cali2010datalog+}. \emph{Recursion is interpreted over mappings} rather than by model construction, distinguishing our approach from answer-set and stable-model approaches \cite{przymusinski1990well,niemela1999logic}.
(ii) We characterize acyclic, stratified, and cyclic generative patterns and provide fixpoint semantics for the recursive cases. Recursive admissibility is defined by a self-support condition. The underlying fixpoint theory is classical \cite{ullman1988principles,abiteboul1995foundations}, while its application to \emph{context-dependent},  deterministic set-valued mapping transformers over solution mappings is \emph{new}. (iii) We establish algebraic equivalence results and semantics-preserving rewrite rules for monotone and stratified fragments with respect to the fixpoint semantics over typed solution mappings, identifying both the preserved standard SPARQL algebraic laws and \emph{new algebraic rewritings} in the presence of context-dependent generative patterns. (iv) We provide an executable realization of the recursive semantics by reducing evaluation to self-support checking over finite candidate mapping sets localized to strongly connected components of the dependency graph. This evaluation avoids chase-based reasoning and yields soundness and completeness relative to candidate coverage, following established treatments of logic programs with external operators \cite{eiter2006effective}.Under the bounded-generation assumption and finite candidate coverage assumptions,  both data and combined complexity coincide with those of standard SPARQL evaluation\cite{perez2006semantics} (Section~\ref{sec:complexity}).

\section{Related Work}
The formal semantics of SPARQL as an algebra over solution mappings, together with
results on containment, entailment and optimization under a fixed RDF dataset, are well
established \cite{perez2009semantics,arenas2011querying,chekol2016containment,kontchakov2016expressibility,kaminski2017query}.
Our work builds on this foundation but departs from it by allowing solution
mappings to be extended by generative operators during evaluation.

A distinct line of SPARQL extension for the Web of Linked Data has formalized semantics and computability under full-Web and reachability-based interpretations \cite{hartig2009executing,hartig2013squin}, examined the empirical behavior and limitations of link-traversal query execution over heterogeneous Web data \cite{umbrich2015link}, and proposed dedicated Web-of-Data query languages that separate navigation from query evaluation, such as LDQL \cite{hartig2016ldql}.  These approaches address open-world data discovery and computability over potentially unbounded graphs.

Recent systems integrate LLMs into knowledge-graph querying pipelines. SparqLLM invokes LLMs as auxiliary services during SPARQL execution \cite{molli2025sparqllm}, while other work uses declarative orchestration of LLM-assisted API and database calls around existing query engines \cite{khabiri2025declarative}. Related approaches combine retrieval and generation in pipeline architectures \cite{yang2025comprehensive}. In all cases, language model invocation remains external to the query algebra and is treated operationally, with generated outputs consumed procedurally rather than defined by the query semantics.

Related relational data-management work has also studied SQL and SQL-adjacent query processing with language-model calls, including hybrid querying over relational databases and LLMs, SQL-style querying of LLMs, and query engines over unstructured data \cite{zhaohybrid,saeed2024querying,dai2024uqe,tan2025can}. These works establish model invocation as a useful operational primitive in query processing and study how such calls can be exposed through SQL-style syntax, user-defined functions, physical operators, or dedicated execution engines. In all cases, focus is primarily execution-oriented and SQL-centric: model calls are incorporated as callable components of a query pipeline, without developing a formal denotational semantics for generated bindings inside a solution-mapping algebra.

Fixpoint semantics for recursive queries originate in deductive databases and
logic programming \cite{ullman1988principles,abiteboul1995foundations}, with
stratified and well-founded semantics providing principled treatments of
negation \cite{przymusinski1990well,van1989alternating}. Existential rules and
Datalog$^\pm$ extend these ideas via domain-expanding inference
\cite{cali2010query,cali2010datalog+}. Our setting differs fundamentally: the RDF
domain remains fixed, no facts are derived, and recursion is resolved over a finite
space of typed solution mappings. Treatments of logic programs with external
operators \cite{eiter2006effective} are conceptually closest, but does not address algebraic query languages or mapping-level recursion under fixed-dataset semantics.

\section{Syntax and Semantics}\label{sect:syntax_semantics}
We now introduce the syntax of generative SPARQL and define its denotational semantics in terms of typed solution mappings.

Let  $\ma V$, $\ma I$,  $\ma B$ and $\ma L$ be countably infinite disjoint sets of \emph{variables}, \emph{Internationalized Resource Identifiers} (\iris), \emph{blank nodes} and \emph{literals}, respectively. Let $\ma T=\ma I\cup\ma B\cup \ma L$ be a set of RDF terms, and let $\ma T_\tx{gen}$ be a disjoint set of \emph{generated terms}.  An \emph{RDF triple} is an element of $(\ma I\cup\ma B)\times\ma I\times\ma T$, and a \emph{triple pattern} is an element of $(\ma I\cup\ma L\cup\ma V)\times (\ma I\cup\ma V)\times (\ma I\cup\ma L\cup\ma V)$. An \emph{RDF graph} $\ma G$ is a finite set of RDF triples, and a \emph{basic graph pattern (\bgps)} is a finite set of triple patterns.


A generative SPARQL query is a \emph{graph pattern} $\ma P $ defined by the following grammar:
\begin{align}\label{eq:query-grammar}
     \ma P\coloneqq &\ \bgps \mid\ \genbind{\promptstr{X}}{Y}{\ma M} \mid\ \filter{\ma P}{\ma F} \nonumber  \\
     &\ \mid\ \proj{\ma P}{L} 
     \mid \union{\ma P}{\ma P} \mid   \join{\ma{P}}{\ma{P}} \\ \nonumber
     &\,  \mid\ \minus{\ma P}{\ma P}\mid \diff{\ma P}{\ma P}{\ma F}\mid\ \opt{\ma P}{\ma P}{\ma F},
 \end{align}
 where  $\genbind{\pi(X)}{Y}{\ma M}$ is a \emph{generative pattern} with prompt template $\pi$ containing placeholders $X\subseteq\ma V$, output variables $Y\subseteq\ma V$ with $X\cap Y=\varnothing$, and generative model specifier $\ma M$ (abstract); $L\subseteq\ma V$ is a projection list, and $\ma F$ is a filter formula built from atoms $\bound{v}$, $(v=c)$, $(v=v')$ s.t. $v,v'\in\ma V$ and $c\in\ma T$ using logical  connectives $\wedge$ and $\neg$. The set of variables in a graph pattern \ma P is denoted by \var{\ma P}, and the size of \ma P, \ie the number of symbols in its writing, by \size{\ma P}.

\begin{definition}
Let $\ma P$ be a graph pattern. A \bgps or generative $\tx{GenOP}$ pattern $Q$ is said to occur in $\ma P$, written \occurrence{Q}{\ma P}, if $Q$ is a syntactic subpattern of $\ma P$, defined inductively by: (i) if $\ma P = Q$, then $Q \preceq \ma P$; (ii) if $\ma P = \filter{\ma P'}{\ma F}$ or $\ma P = \proj{\ma P'}{L}$, then $Q \preceq \ma P$ iff $Q \preceq \ma P'$; (iii) if $\ma P = \ma P_1\, \tx{Op}\, \ma P_2$ for $\tx{Op}\in\{\tx{Join},\tx{Union},\tx{Minus},\tx{Diff},\tx{Opt}\}$, then $Q \preceq \ma P$ iff $Q \preceq \ma P_1$ or $Q \preceq \ma P_2$.
\end{definition}

Let $\emptypat$ be the empty graph pattern, with $\evalTwo{\emptypat}{\ma G}=\{\emptyset\}$ and $\var{\emptypat}=\emptyset$.

\begin{definition}\label{defn_surroundin_patterns}
Let $\ma P$ be a graph pattern and let $Q\preceq\ma P$ be a designated
occurrence. The \emph{surrounding pattern} of $Q$ in $\ma P$, denoted
$\ma P\setminus\{Q\}$, is obtained by replacing only that occurrence by the
empty pattern $\emptypat$ and then
simplifying the result by the following pruning $\leadsto$ rules:
\begin{enumerate}
    \item $\filter{\emptypat}{\ma F}\leadsto \emptypat$ and $
\proj{\emptypat}{L}\leadsto \emptypat$;
\item For the $\tx{Op}\in\{\tx{Join},\tx{Union},\tx{Minus},\tx{Diff},\tx{Opt}\}$ and graph pattern $\ma P'$,
\[\emptypat\ \tx{Op}\ \ma P'\leadsto \ma P'\quad\text{and}\quad\ma P'\ \tx{Op}\ \emptypat\leadsto \ma P';\]
\end{enumerate}
provided that the resulting graph pattern is \emph{well formed}.
\end{definition}

\begin{definition}(Well formed)
A graph pattern $\ma P$ defined by the grammar \ref{eq:query-grammar} is \emph{well-formed} if it satisfies the following conditions: (i) for every $g=\genbind{\pi(X)}{Y}{\ma M}$  occurrence such that $g \preceq \ma P$, $X \subseteq \var{\ma P \setminus \{g\}} \setminus Y$; (ii) for every filter $\filter{\ma P}{\ma F}$, $\var{\ma F}\subseteq \var{\ma P}$; and (iii) for every projection $\proj{\ma P}{L}$, $L\subseteq \var{\ma P}$.
\end{definition}

We now define solution mappings, typing, and compatibility.

\begin{definition}
A \emph{typed term} is a pair $(t,\tau)$ where either (i) $t\in \ma T$ and $\tau=\sourceRdf$, or (ii) $t\in \ma T_{\tx{gen}}$ and $\tau=\sourceGen$. A \emph{typed solution mapping} is a partial function $\mu:\ma V\rightharpoonup (\ma T\times\{\sourceRdf\})\cup (\ma T_{\tx{gen}}\times\{\sourceGen\})$ with domain $\domain{\mu}\subseteq \ma V$. 
\end{definition}


Let $\Sigma$ be a finite alphabet and $\Sigma^*$ be the set of all finite strings over $\Sigma$. 

\begin{definition}
A \emph{canonical lexicalization} consists of fixed deterministic functions
\[\canonical : \ma T \to \Sigma^* \quad \text{and} \quad\lexical : \ma T_\tx{gen} \to \Sigma^*,\]
where $\canonical$ maps each RDF term to a canonical string and $\lexical$ maps each generated term to its textual realization.
\end{definition}

\begin{remark}
In an implementation, literals can be lexicalized by their lexical forms. IRIs
can be lexicalized by a deterministic label policy, for example by using an
associated \emph{rdfs:label} when available and otherwise a stable rendering of
the IRI. Blank nodes can be lexicalized by a stable system-defined identifier or
by a deterministic bounded-neighbourhood description. We require only that the
lexicalization policy is fixed during evaluation; the semantics does not require
labels or blank-node descriptions to be semantically complete.
\end{remark}
\begin{definition}
 Let $\stringSim:\Sigma^*\times\Sigma^*\to[0,1]$ be a symmetric similarity function, and let $\theta:\ma V\to(0,1]$ be a variable-specific \emph{threshold function} (default $1$). For any two typed terms $(t,\tau),(t',\tau')$, we define $\simtext$ as,
\begin{multline*}
    \simtext((t,\tau),(t',\tau')) =\\
\begin{cases}
1 \hspace{3.3cm}\tx{if} &(\tau=\tau'=\sourceRdf) \wedge (t=t')\\
\stringSim(\canonical(t),\lexical(t')) &\tx{if }\tau=\sourceRdf,\tau'=\sourceGen\\
\stringSim(\lexical(t),\canonical(t')) &\tx{if }\tau=\sourceGen,\ \tau'=\sourceRdf\\
\stringSim(\lexical(t),\lexical(t')) &\tx{if }\tau=\tau'=\sourceGen
\end{cases}
\end{multline*}
\end{definition}

\begin{definition}[Variable-indexed similarity]
For each variable $v\in\ma V$, let $\approx_v$ be the binary relation on typed terms defined by
\[\alpha \approx_v \beta \quad\text{iff}\quad \simtext(\alpha,\beta)\ge \theta(v).\]
\end{definition}
Since only finitely many typed terms can bind a given variable under the bounded-generation assumption, the reflexive and transitive closure of $\approx_v$ therefore induces finitely many equivalence classes; we fix a deterministic total order $\preceq$ on typed terms, and write $\rep_v(\alpha)$ for the $\preceq$ -minimal representative of the equivalence class of $\alpha$ under $\approx_v$.

\begin{definition} 
For a variable $v$ and typed terms $\alpha,\beta$, we define \emph{canonical compatibility},
\[\compact{v}{\alpha}{\beta}=\top \quad\text{iff}\quad\rep_v(\alpha)=\rep_v(\beta).\]
\end{definition}

\begin{definition}
\label{def:compat-union}
Two typed solution mappings $\mu_1$ and $\mu_2$ are \emph{compatible}, written
$\mu_1\sim\mu_2$, iff for every variable
$v\in\domain{\mu_1}\cap\domain{\mu_2}$,
\[\compact{v}{\mu_1(v)}{\mu_2(v)}=\top.\]
If $\mu_1\sim\mu_2$, their \emph{canonical union} $\mu_1\uplus\mu_2$ is the
mapping with domain $\domain{\mu_1}\cup\domain{\mu_2}$ defined by
\[(\mu_1\uplus\mu_2)(v)=
\begin{cases}
    \rep_v(\mu_1(v)) & \text{if } v\in\domain{\mu_1}\cap\domain{\mu_2},\\
    \mu_1(v) & \text{if } v\in\domain{\mu_1}\setminus\domain{\mu_2},\\
    \mu_2(v) & \text{if } v\in\domain{\mu_2}\setminus\domain{\mu_1}.
\end{cases}\]
If $\mu_1\not\sim\mu_2$, the union $\mu_1\uplus\mu_2$ is undefined.
\end{definition}

By slight abuse of notation, for a graph pattern $\ma P$ and mapping $\mu$, we write $\mu(\ma P)$ for the set of triples obtained by substituting the variables $\var{\ma P}$ in $\ma P$ according to $\mu$.

\paragraph{\bgps evaluation.\,}
 Let $\ma P$ be a \bgps and \ma G an RDF graph . The evaluation of $\ma P$ over $\ma G$ is the set,
\[\evalTwo{\ma P}{\ma G} = \{\, \mu \mid \domain{\mu}=\var{\ma P}\wedge \mu(\ma P)\subseteq\ma G\,\},
\]
where each binding $\mu(v)=(t,\sourceRdf)$ with $t\in\ma T$.

\paragraph{Generative patterns.}
We model a generative pattern as a set-valued mapping that transforms an instantiated prompt into finitely many tuples of generated terms via a generative step followed by parsing, while leaving the surrounding query algebra unchanged.

A model specifier $\ma M$ induces finite-valued operations $\Call{\ma M}:\Sigma^*\to 2^{\Sigma^*}$ and $\Parse{\ma M}$, where $\Call{\ma M}$ returns answer strings and $\Parse{\ma M}$ returns tuples of generated terms. For every fixed query instance $(\ma P,\ma G)$, we assume that iterating the generative occurrences of $\ma P$ from the RDF bindings in $\ma G$ yields finitely many generated terms. For complexity results, we assume constants $n_1,n_2$ such that \[ |\Call{\ma M}(s)|\le n_1 \qquad\text{and}\qquad |\Parse{\ma M}(a)|\le n_2 \] for all relevant $\ma M,s,a$, and that generated terms have bounded representation length.

\begin{definition}\label{defn-genpattern}
Let $g=\genbind{\pi(X)}{Y}{\ma M}$, and let $Y=\langle y_1,\dots,y_k\rangle$ be a fixed ordering, where $k=\size{Y}$. Let $\Call{\ma M}:\Sigma^*\to 2^{\Sigma^*}$ and $\Parse{\ma M}:\Sigma^*\to 2^{\ma T_{\tx{gen}}^{|Y|}}$ return finite sets.  Let  \ma G be an RDF graph and $S$  a context supply, \ie a set of typed solution mappings, such that every $\mu_c\in S$ satisfies $X\subseteq\domain{\mu_c}$. The evaluation of $g$  with respect to  $S$ over \ma G is the set:
\[\evalThree{g}{\ma G}{S} =
\bigcup_{\mu_c\in S}
\left\{
\mu_c\uplus\nu
 \middle|\
\begin{array}{l}
\m a\in\Call{\ma M}(\pi^{[\mu_c]}),\\
(v_1,\dots,v_k)\in\Parse{\ma M}(\m a),\\
\nu=\{y_i\mapsto (v_i,\sourceGen)\mid 1\le i\le k\},\\
\mu_c\sim\nu,
\end{array}
\right\}\]

where $\pi^{[\mu_c]}$ is the prompt obtained from $\pi$ by replacing each placeholder in $X$ by the lexicalization of its binding under $\mu_c$.
\end{definition}



\begin{example}
Let $\ma G=\{\tx{:Paris a :City}\}$ and the query:
\begin{verbatim}
SELECT ?x ?y ?z WHERE {
  ?x a :City .
  GenOp("Describe the city ?x" AS ?y,
                              GPT-4o).
  }
\end{verbatim}
By evaluating \bgps, we get context supply $\mu_c=\{?x\mapsto(\tx{:Paris},\sourceRdf)\}$ for the \tx{GenOp}. Then, prompt instantiation produces $\pi^{[\mu_c]}=\text{``Describe the city Paris''}$. Assume,
\[\Call{\tx{GPT-4o}}(\pi^{[\mu_c]})=\{\text{``capital of France''},\text{``cultural center''}\},\]
and
\begin{multline*}
    \Parse{\tx{GPT-4o}}(\text{``capital of France''})=\{(t_1)\}\\
\Parse{\tx{GPT-4o}}(\text{``cultural center''})=\{(t_2)\}.
\end{multline*}
By the semantics of generative patterns,
\begin{multline*}
    \evalThree{g}{\ma G}{\{\mu_c\}}=\big\{
\mu_c\uplus\{?y\mapsto(t_1,\sourceGen)\},\\
\mu_c\uplus\{?y\mapsto(t_2,\sourceGen)\}
\big\}.
\end{multline*}
\end{example}

\subsubsection{Filter .\,}

Let $\mu$ be a typed solution mapping. The truth value of a filter formula $\ma F$ under $\mu$, denoted $\ma F^{\mu}\in\{\top,\bot,\varepsilon\}$ (where $\varepsilon$ denotes an error), is defined inductively as follows.

\begin{enumerate}
    \item ${\bound{v}}^{\mu}=\top$ if $v\in\domain{\mu}$, and ${\bound{v}}^{\mu}=\bot$ otherwise.

    \item For $c\in\ma T$, $(v=c)^{\mu}=\varepsilon$ if $v\notin\domain{\mu}$;  
    $(v=c)^{\mu}=\top$ if $\mu(v)=(c,\sourceRdf)$ or $\compact{v}{\mu(v)}{(c,\sourceRdf)}=\top$;  otherwise, $(v=c)^{\mu}=\bot$.

    \item $(v=v')^{\mu}=\varepsilon$ if $\{v,v'\}\nsubseteq\domain{\mu}$;  
    $(v=v')^{\mu}=\top$ if $\mu(v)=\mu(v')=(t,\sourceRdf)$ for some $t\in\ma T$, or if $\compact{v}{\mu(v)}{\mu(v')}=\top$;  
    otherwise, $(v=v')^{\mu}=\bot$.

    \item $(\neg \ma F)^{\mu}=\top$ if $\ma F^{\mu}=\bot$;  
    $(\neg \ma F)^{\mu}=\bot$ if $\ma F^{\mu}=\top$;  
    $(\neg \ma F)^{\mu}=\varepsilon$ if $\ma F^{\mu}=\varepsilon$.

    \item $(\ma F_1\wedge \ma F_2)^{\mu}={\ma  F_1}^{\mu}\wedge {\ma F_2}^{\mu}$.
\end{enumerate}

\paragraph{Graph Patterns.\,}
The evaluation of a graph pattern $\ma P$ over an RDF graph $\ma G$, denoted
$\evalThree{\ma P}{\ma G}{S}$, returns a set of typed solution mappings and is defined recursively below relative to a context supply $S$. When the context supply is fixed or immaterial, we write $\evalTwo{\ma P}{\ma G}$ as shorthand for $\evalThree{\ma P}{\ma G}{S}$. The clauses below define a \emph{single-pass} (one-step) evaluation; cyclic generative dependencies are handled by the fixpoint semantics in Section~\ref{sec:fixpoint}.

\begin{enumerate}
  \item If $\ma P$ is a \bgps, then $\evalTwo{\ma P}{\ma G}$ is defined as above. If $\ma P$ is a \tx{GenOp} pattern, its single-pass evaluation is $\evalThree{\ma P}{\ma G}{S}$, where $S$ provides the solution mappings used to instantiate the  placeholder variables of its prompt template.
  \item $\evalTwo{\filter{\ma P}{\ma F}}{\ma G} = \{\mu\in\evalTwo{\ma P}{\ma G} \mid F^\mu=\top\}$;
  \item $\evalTwo{\proj{\ma P}{L}}{\ma G}=\{\mu_{|L}\mid\mu\in\evalTwo{\ma P}{\ma G}\}$;
  \item 
  $\begin{aligned}
  \evalTwo{\join{\ma P_1}{\ma P_2}}{\ma G} =  \{\mu_1\uplus\mu_2 \mid \mu_1\in\evalTwo{\ma P_1}{\ma G}, \mu_2\in\evalTwo{\ma P_2}{\ma G}, \\ \mu_1\sim\mu_2\};
  \end{aligned}$
  \item $\evalTwo{\union{\ma P_1}{\ma P_2}}{\ma G}= \evalTwo{P_1}{\ma G}\cup \evalTwo{P_2}{\ma G}$;
  \item $\begin{aligned}\evalTwo{\diff{\ma P_1}{\ma P_2}{\ma F}}{\ma G} =\{\mu_1\in\evalTwo{\ma P_1}{\ma G} \mid \forall \mu_2\in\evalTwo{\ma P_2}{\ma G}: \\ \mu_1\not\sim\mu_2\,\text{or}\,  \ma F^{\mu_1\uplus \mu_2}=\bot \}\end{aligned}$;
  \item $\evalTwo{\opt{\ma P_1}{\ma P_2}{\ma F}}{\ma G}= \evalTwo{\join{\ma P_1}{\ma P_2}}{\ma G}\cup\evalTwo{\diff{\ma P_1}{\ma P_2}{\ma F}}{\ma G}$;
  \item $\begin{aligned} \evalTwo{\minus{\ma P_1}{\ma P_2}}{\ma G} = \{\mu_1\in\evalTwo{\ma P_1}{\ma G} \mid \forall \mu_2\in\evalTwo{\ma P_2}{\ma G}: \\ \text{either}\, \mu_1\not\sim\mu_2\, \text{or}\, \domain{\mu_1}\cap\domain{\mu_2
  }=\varnothing\};
  \end{aligned}$
\end{enumerate}
where $\mu_{|L}$ is the restriction of $\mu$ to $L$. If $g \preceq \ma P$, the context supply for $g$ in a single evaluation step is the set $S = \evalTwo{\ma P \setminus \{g\}}{\ma G}$.

\begin{proposition}
Let $\ma P$ be a well-formed graph pattern that contains no occurrence of \tx{GenOp}. Then, for any RDF graph $\ma G$, the evaluation of $\ma P$ under the extended semantics coincides with standard SPARQL evaluation under set semantics: $\evalTwo{\ma P}{\ma G} = {\evalTwo{\ma P}{\ma G}}|_{\text{standard SPARQL}}$.
\end{proposition}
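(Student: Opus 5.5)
We argue by structural induction on $\ma P$, proving a slightly stronger invariant: for every GenOp-free well-formed $\ma P$, every $\mu\in\evalTwo{\ma P}{\ma G}$ is \emph{purely RDF-typed}, i.e.\ $\mu(v)=(t,\sourceRdf)$ for all $v\in\domain{\mu}$. Moreover, the map $\mu\mapsto\bar\mu$ that forgets the type tags is a bijection between $\evalTwo{\ma P}{\ma G}$ and the standard SPARQL answer set. The first step is a key lemma on compatibility restricted to RDF-typed terms. If $\alpha=(t,\sourceRdf)$ and $\beta=(t',\sourceRdf)$, then $\compact{v}{\alpha}{\beta}=\top$ iff $t=t'$. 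Consequently, for purely RDF-typed $\mu_1,\mu_2$, we have $\mu_1\sim\mu_2$ iff $\bar\mu_1,\bar\mu_2$ are compatible in the standard sense, and in that case $\overline{\mu_1\uplus\mu_2}=\bar\mu_1\cup\bar\mu_2$.

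This lemma is the main obstacle, and I expect it to be where care is needed. Two RDF-typed terms are $\approx_v$-related iff $\stringSim(\canonical(t),\canonical(t'))\ge\theta(v)$, but the case definition of $\simtext$ only assigns value $1$ when $t=t'$; for $t\ne t'$ both RDF, no case applies. I will read the definition as giving $0$ in that residual case, which the remaining clauses leave implicit, and I will state this reading explicitly. Since $\theta(v)>0$, $\approx_v$ then restricted to RDF terms is exactly equality. The subtlety is the equivalence classes: they are taken under the reflexive–transitive closure over all typed terms that may bind $v$. In a GenOp-free query no generated term arises for the instance, so the relevant universe contains only RDF terms. Hence every class is a singleton, $\rep_v(\alpha)=\alpha$, and canonical compatibility is equality. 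If one prefers not to rely on the universe, I would note that chains through generated terms would be needed to merge two RDF terms. I would therefore fix the universe as the terms occurring in $\ma G$, the query constants, and the generated outputs of the occurrences of $\ma P$, and the last set is empty. With $\rep_v$ the identity, $\uplus$ agrees with the standard union on the shared variables.

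The induction then proceeds case by case. For the BGP case, the clause itself produces RDF-typed bindings with $\mu(\ma P)\subseteq\ma G$, which is verbatim the standard definition. For the filter case, I check each atom under an RDF-typed $\mu$. $\bound{v}$ is identical. For $(v=c)$ and $(v=v')$, both disjuncts reduce to term equality by the lemma, and the error value $\varepsilon$ is produced exactly when the standard semantics produces an error. The connectives are the standard three-valued ones, so the set of mappings passing the filter coincides. Projection is immediate. For Join and Union, the result follows from the lemma and the induction hypothesis; for Union the RDF-typed invariant is trivially preserved. For Diff, Opt, and Minus, the quantified conditions $\mu_1\not\sim\mu_2$, $\ma F^{\mu_1\uplus\mu_2}=\bot$, and disjoint domains translate to their standard counterparts by the lemma and the filter case. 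Their outputs are subsets or unions of RDF-typed sets, so the invariant is preserved.

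Finally, no context supply is ever consulted, since the surrounding-pattern construction only matters for GenOp occurrences. Hence the single-pass evaluation is the whole semantics, and the fixpoint machinery of Section~\ref{sec:fixpoint} is vacuous for $\ma P$. Forgetting tags gives the claimed equality $\evalTwo{\ma P}{\ma G}={\evalTwo{\ma P}{\ma G}}|_{\text{standard SPARQL}}$.
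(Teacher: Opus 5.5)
Your proposal follows the paper's proof essentially step for step. It is a structural induction carrying the invariant that all mappings are $\sourceRdf$-typed, and it rests on the lemma that on such terms canonical compatibility collapses to term equality and $\uplus$ to ordinary union. You are in fact more careful than the paper in justifying that the $\approx_v$-classes are singletons: you restrict the closure to the terms of the \tx{GenOp}-free instance, so no chain through generated terms can merge two RDF terms.

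One caveat applies equally to the paper's proof. The clause $\evalTwo{\opt{\ma P_1}{\ma P_2}{\ma F}}{\ma G}=\evalTwo{\join{\ma P_1}{\ma P_2}}{\ma G}\cup\evalTwo{\diff{\ma P_1}{\ma P_2}{\ma F}}{\ma G}$ does not filter the join part. For non-trivial $\ma F$ it therefore differs from the W3C LeftJoin $\evalTwo{\filter{\join{\ma P_1}{\ma P_2}}{\ma F}}{\ma G}\cup\evalTwo{\diff{\ma P_1}{\ma P_2}{\ma F}}{\ma G}$. Your Opt step, like the paper's, goes through only if ``standard SPARQL'' means the paper's own clauses with $\sim$ and $\uplus$ replaced by agreement on shared variables and ordinary union.
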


\section{Fragments and Fixpoint Semantics}
\label{sec:fixpoint}

The preceding semantics characterize the meaning of generative SPARQL queries independently of any evaluation strategy. Section~\ref{sec:exec-fixpoint} presents an executable realization of these semantics under bounded generation.

\subsection{Dependency Graph }
 Let $\ma P$ be a generative SPARQL graph pattern and let $\genPattern{\ma P}=\{g\mid g\preceq \ma P\text{ and }g\text{ is a }\tx{GenOp}\text{ occurrence}\}$ denote the set of \tx{GenOp} occurrences in $\ma P$. For each $g=\genbind{\pi(X)}{Y}{\ma M}$ in $\genPattern{\ma P}$, let $\holder{g}$ denote its set of placeholder variables $X$ and $\target{g}$ its set of output variables $Y$.

\begin{definition}[Dependency graph]
The \emph{dependency graph} of $\ma P$ is the directed graph
$\directedGraph{\ma D}_{\ma P} = (\genPattern{\ma P}, \to_{\ma P})$,
where for $g_i, g_j \in \genPattern{\ma P}$,
\[
g_i \to_{\ma P} g_j
\quad\text{iff}\quad
\target{g_i} \cap \holder{g_j} \neq \varnothing .
\]
\end{definition}

\begin{definition}
A nonempty subset $C \subseteq \genPattern{\ma P}$ is a \emph{strongly connected component (\tx{SCC})} of $\ma P$ iff:
\begin{enumerate}
\item for all $g, g' \in C$, there exists a finite sequence $g = g_0, \dots, g_k = g'$ in $C$ such that $g_i \to_{\ma P} g_{i+1}$ for all $i < k$, and a sequence from $g'$ to $g$; and
\item $C$ is maximal \wrt set inclusion under this property.
\end{enumerate}
The \tx{SCC}s of $\ma P$ form a partition of $\genPattern{\ma P}$.
\end{definition}

\begin{definition}[Acyclicity]
A graph pattern $\ma P$ is \emph{acyclic} iff $\directedGraph{\ma D}_{\ma P}$ is acyclic.
\end{definition}

\noindent If $\ma P$ is acyclic, all \tx{GenOp} occurrences can be evaluated in a
single pass following a topological order of
$\directedGraph{\ma D}_{\ma P}$, and the result coincides with the
fixpoint semantics induced by $\Consequence{\ma P}$ (Cf. Section \ref{sec:lfp}).

\begin{definition}(Signed Dependency Edge)
Let $g_i \to_{\ma P} g_j$ be a dependency edge and let $V_{ij}= \target{g_i} \cap \holder{g_j}$. Let $g_j\preceq \ma P$. We classify the edge by the syntactic occurrences of variable $v\in V_{ij}$ in the surrounding pattern $\ma P \setminus \{g_j\}$.
\begin{enumerate}
    \item The edge is \emph{positive} if every syntactic occurrence of every
  $v\in V_{ij}$ in $\ma P\setminus\{g_j\}$ is in a \emph{positive position}, namely
  under only the operators \tx{Join}, \tx{Union}, \tx{Proj}, the \tx{Join} branch  of $\tx{Opt}$ or within a
  \tx{Filter} where $v$ does not occur under $\neg$.
  \item The edge is \emph{negative} if there exists some $v\in V_{ij}$ that has a
  syntactic occurrence in $\ma P\setminus\{g_j\}$ in a \emph{negative position}, \ie  within the scope of \tx{Minus}, \tx{Diff}, the \tx{Diff}  branch of $\tx{Opt}$ or under $\neg$ in a
  \tx{Filter}.
\end{enumerate}
\end{definition}

\begin{example}
Let $g_1$ and $g_2$ be the two \tx{GenOp} occurrences in
\begin{verbatim}
SELECT ?x ?y ?z WHERE {
  ?x a :City .
  GenOp("Find topic for ?x" AS ?y, GPT-4o).
  GenOp("Find city for ?y" AS ?z, GPT-4o).
}
\end{verbatim}
The variable $?y$ occurs as a placeholder in the prompt of $g_2$ and only in
positive positions in the surrounding pattern of $g_2$; hence $g_1\to g_2$ is
a positive dependency. In the variant obtained by adding
\begin{verbatim}
MINUS { ?x :hasTopic ?y . }
\end{verbatim}
the variable $?y$ occurs under a negative operator in the surrounding pattern
of $g_2$. Thus, $g_1\to g_2$ is a negative dependency, and any stratification
as in Definition~\ref{defn_stratification} must satisfy
$\sigma(g_1)<\sigma(g_2)$.
\end{example}

\begin{definition}\label{defn_stratification}
A \emph{stratification} of $\ma P$ is a function
$\sigma : \genPattern{\ma P} \to \mathbb{N}$ such that for every
$g_i \to_{\ma P} g_j$:
(i) if the edge is positive then $\sigma(g_i) \le \sigma(g_j)$;
(ii) if the edge is negative then $\sigma(g_i) < \sigma(g_j)$.
The pattern $\ma P$ is \emph{stratified} if such a function exists.
\end{definition}

Acyclic patterns admit single-pass evaluation. Stratified patterns admit stratum-wise fixpoint semantics. Non-stratified patterns require well-founded semantics.

\subsection{Consequence Operator}
\label{sec:immediate-operator}

We handle recursion among \tx{GenOp} patterns via an immediate consequence operator, assuming bounded generation so that, for fixed $\ma P$ and $\ma G$, the set of generated terms, and thus relevant typed solution mappings, is finite.

Let $\Omega_{\ma P}$ be the finite set of all typed solution mappings over $\var{\ma P}$ whose values range over $\ma T \cup \ma T_{\tx{gen}}$. 

To expose recursive dependencies, evaluation is parameterized by a \emph{context supply} $S\subseteq\Omega_{\ma P}$. 
For a \tx{GenOp} pattern $g$, only those mappings in $S$ that provide bindings for all placeholders $\holder{g}$ are useful to instantiate the prompt, \ie context set,
\[\ContextS{S}{g} = \{\mu \in S \mid \holder{g}\subseteq\domain{\mu}\},\]
and the evaluation of $g$ relative to $S$ is then $\evalThree{g}{\ma G}{\ContextS{S}{g}}$.

\begin{definition}
The immediate consequence operator of $\ma P$ is the function
\[\Consequence{\ma P} : 2^{\Omega_{\ma P}} \to 2^{\Omega_{\ma P}},
\,\,
\Consequence{\ma P}(S) := \evalThree{\ma P}{\ma G}{S}.\]
\end{definition}

The $\Consequence{\ma P}(S)$ evaluates the query once while allowing each
\tx{GenOp} to draw its placeholder bindings from the current supply $S$.
Recursive behavior is captured by iterating $\Consequence{\ma P}$ to a fixpoint.

\begin{proposition}
If $S,S'\subseteq\Omega_{\ma P}$ satisfy
$\ContextS{S}{g}=\ContextS{S'}{g}$ for every \tx{GenOp} pattern $g$ in $\ma P$,
then $\Consequence{\ma P}(S)=\Consequence{\ma P}(S')$.
\end{proposition}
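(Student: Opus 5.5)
The plan is a structural induction on the syntax of $\ma P$. The claim to prove is slightly stronger than the statement: for every subpattern $\ma P'$ of $\ma P$ (via the occurrence relation $\preceq$), if $\ContextS{S}{g}=\ContextS{S'}{g}$ for every \tx{GenOp} occurrence $g\preceq\ma P'$, then $\evalThree{\ma P'}{\ma G}{S}=\evalThree{\ma P'}{\ma G}{S'}$. The statement then follows by taking $\ma P'=\ma P$, since $\Consequence{\ma P}(S)=\evalThree{\ma P}{\ma G}{S}$. The hypothesis for $\ma P$ restricts to every subpattern, because every \tx{GenOp} occurrence in a subpattern is also an occurrence in $\ma P$.

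The base cases come first.
\begin{enumerate}
\item If $\ma P'$ is a \bgps, then $\evalTwo{\ma P'}{\ma G}$ depends only on $\ma G$ and not on the supply at all. The two sides are therefore trivially equal.
\item If $\ma P'=g$ is a \tx{GenOp} pattern, its evaluation relative to a supply is by definition $\evalThree{g}{\ma G}{\ContextS{S}{g}}$ (Section~\ref{sec:immediate-operator}). Definition~\ref{defn-genpattern} expresses this as a union over $\mu_c$ ranging over the given context set. The prompt $\pi^{[\mu_c]}$, the calls $\Call{\ma M}$, $\Parse{\ma M}$, the compatibility test $\mu_c\sim\nu$ and the canonical union $\mu_c\uplus\nu$ are all deterministic functions of $\mu_c$ together with fixed data. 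These fixed data are the lexicalization, $\theta$, $\rep_v$ and the model specifier. Hence equal context sets give equal results.
\end{enumerate}

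For the inductive step, every clause for \tx{Filter}, \tx{Proj}, \tx{Join}, \tx{Union}, \tx{Diff}, \tx{Opt} and \tx{Minus} defines the result purely as a set-theoretic function of the evaluations of the immediate subpatterns. These functions also use $F^\mu$, $\sim$, $\uplus$ and restriction, none of which refers to the supply. Applying the induction hypothesis to each immediate subpattern therefore gives equality. \tx{Opt} is reduced to \tx{Join} and \tx{Diff} on the same subpatterns.

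The main obstacle is the sentence in the paper stating that the context supply for $g$ in a single step is $\evalTwo{\ma P\setminus\{g\}}{\ma G}$. This could be read as making the supply for $g$ depend on more than $\ContextS{S}{g}$. I would resolve it by fixing the reading used in the definition of $\Consequence{\ma P}$. There, each \tx{GenOp} draws its placeholder bindings from the current supply $S$, restricted to $\ContextS{S}{g}$, and that restriction is the only channel through which $S$ enters the evaluation. I would state this explicitly as part of the operator's definition. Then no other subterm of the single-pass evaluation mentions $S$, and the induction goes through.
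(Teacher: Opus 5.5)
Your proposal is correct and follows the paper's proof. The paper also uses structural induction on $\ma P$: the BGP case is independent of the supply, the \tx{GenOp} case holds because its evaluation is by definition the union over $\ContextS{S}{g}=\ContextS{S'}{g}$, and the remaining operators are handled compositionally by the induction hypothesis. The reading you fix for the context supply, where each \tx{GenOp} draws from $\ContextS{S}{g}$ rather than from $\evalTwo{\ma P\setminus\{g\}}{\ma G}$, is exactly the one the paper's proof relies on.
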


\subsection{Monotonicity and Least Fixpoint Semantics}
\label{sec:lfp}
We identify conditions for a least-fixpoint semantics by establishing monotonicity of the consequence operator.


A generative SPARQL pattern $\ma P$ is monotone w.r.t.\ the context supply if, for every \tx{GenOp} $g\preceq\ma P$, variables in $\target{g}$ occur only in positive positions in $\ma P$, so added generated bindings cannot invalidate existing solution mappings.

\begin{lemma}
\label{lem:monotone}
If $\ma P$ is monotone with respect to the context supply, then the consequence
operator $\Consequence{\ma P}$ is monotone on the complete lattice
$(2^{\Omega_{\ma P}},\subseteq)$:
\[
\forall S,S'\in 2^{\Omega_{\ma P}}:\, S \subseteq S' \;\longrightarrow\; \Consequence{\ma P}(S) \subseteq \Consequence{\ma P}(S').
\]
\end{lemma}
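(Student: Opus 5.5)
The plan is a structural induction on the well-formed pattern $\ma P$. The claim to establish is stronger than the lemma: for every subpattern $\ma P'$ of $\ma P$ (with the context supply threaded to each \tx{GenOp} occurrence $g\preceq\ma P'$ as $\ContextS{S}{g}$), we have $S\subseteq S'$ implies $\evalThree{\ma P'}{\ma G}{S}\subseteq\evalThree{\ma P'}{\ma G}{S'}$. One point must be fixed first. In $\Consequence{\ma P}$ each \tx{GenOp} takes its placeholder bindings from the external supply via $\ContextS{S}{g}$, not from the surrounding pattern. Under that reading the map $S\mapsto\ContextS{S}{g}$ is clearly monotone, since it is a filter on $S$. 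This is what makes the induction go through.

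\textbf{Base cases.} For a \bgps the evaluation does not depend on $S$ at all, so the inclusion is trivial. For $g=\genbind{\pi(X)}{Y}{\ma M}$, Definition~\ref{defn-genpattern} writes $\evalThree{g}{\ma G}{\ContextS{S}{g}}$ as a union over $\mu_c\in\ContextS{S}{g}$. For each $\mu_c$ the inner set depends only on $\mu_c$, because $\Call{\ma M}$, $\Parse{\ma M}$, $\canonical$, $\lexical$ and $\rep_v$ are fixed deterministic functions. Hence a larger index set yields a larger union.

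\textbf{Positive operators.} For $\tx{Join}$, $\tx{Union}$, $\tx{Proj}$ and $\tx{Filter}$ the result follows directly from the induction hypothesis. Each of these is defined by a set-builder that is monotone in its argument sets: the \tx{Filter} test $F^\mu$ is evaluated pointwise on a single $\mu$, and $\uplus$ and $\sim$ are fixed relations on pairs of mappings. The same argument covers the \tx{Join} branch of $\tx{Opt}$.

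\textbf{Main obstacle: the negative operators.} This concerns \tx{Minus}, \tx{Diff} and the \tx{Diff} branch of \tx{Opt}, which are antitone in their right argument $\ma P_2$. The monotonicity hypothesis is what handles them. Since every variable in $\target{g}$ occurs only in positive positions, no \tx{GenOp} output variable, and therefore no generated binding, appears inside a negative scope. I would use this to show that the right argument $\ma P_2$ of every negative operator is supply-independent, that is, $\evalThree{\ma P_2}{\ma G}{S}=\evalThree{\ma P_2}{\ma G}{S'}$.

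First I would try to prove that $\ma P_2$ contains no \tx{GenOp} occurrence. If a \tx{GenOp} $g\preceq\ma P_2$ existed, its outputs $\target{g}\subseteq\var{\ma P_2}$ would occur in a negative position, contradicting monotonicity. Possibly $\target{g}$ could be empty; in that case I would note that $\evalThree{g}{\ma G}{\ContextS{S}{g}}$ is then a set of restrictions of context mappings, which are themselves in positive position, and I would handle it by an additional side argument. With this in place, $\ma P_2$ contributes a fixed set $R$. The sets $\{\mu_1\in A\mid\forall\mu_2\in R:\dots\}$ are then monotone in $A$, and the induction hypothesis on $\ma P_1$ finishes the argument.

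Instantiating the claim with $\ma P'=\ma P$ yields $\Consequence{\ma P}(S)\subseteq\Consequence{\ma P}(S')$, as required. Monotonicity on the complete lattice $(2^{\Omega_{\ma P}},\subseteq)$ then follows, because $\Omega_{\ma P}$ is finite by bounded generation, and the codomain stays within $\Omega_{\ma P}$.
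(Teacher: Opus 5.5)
Your proof follows the paper's own argument. Both use structural induction with the same case split: BGPs are independent of $S$, \tx{GenOp} is a union over $\ContextS{S}{g}\subseteq\ContextS{S'}{g}$, \tx{Join}, \tx{Union}, \tx{Proj} and \tx{Filter} go through by the induction hypothesis, and the negative operators are handled by the positivity hypothesis. Your treatment of \tx{Minus}, \tx{Diff} and \tx{Opt} is cleaner than the paper's Case~7, which asserts that the whole negative construct is unchanged. You instead derive only that the right argument contains no \tx{GenOp} and is therefore supply-independent, and then use that these operators are monotone in their left argument.

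The one step that fails is the promised side argument for $\target{g}=\varnothing$. No such argument exists, because the lemma is false in that case. With empty output, $\evalThree{g}{\ma G}{\ContextS{S}{g}}$ consists of the context mappings $\mu_c\in\ContextS{S}{g}$ themselves (not restrictions) for which the model returns a parseable answer, and this set grows with $S$.

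For example, take $\ma P=\minus{\{(?x\ \tx{:p}\ ?o)\}}{\genbind{\pi(\{?x\})}{\varnothing}{\ma M}}$. It is well formed and satisfies the monotonicity hypothesis vacuously. Over $\ma G=\{(\tx{:a}\ \tx{:p}\ \tx{:b})\}$, let $\mu_1$ be the unique BGP solution. Then $\Consequence{\ma P}(\varnothing)=\{\mu_1\}$, whereas $\Consequence{\ma P}(\{\mu_1\})=\varnothing$ as soon as $\ma M$ answers the prompt instantiated by $\mu_1$.

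So drop that side argument and state explicitly the assumption your proof uses. One option is that every \tx{GenOp} has a nonempty output list; this is tacit in the paper, whose proof also ignores the case. The other is to replace the hypothesis by the condition that no \tx{GenOp} occurrence lies in a negative position, which is all your argument needs.
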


\begin{proof}
Expanding the context supply $S$ only increases the set of mappings available to
instantiate placeholders of \tx{GenOp} patterns. Since each \tx{GenOp} is defined
as a union over all admissible contexts, additional contexts can only yield
additional generated mappings. By the monotonicity condition, variables
introduced by \tx{GenOp} patterns occur only in positive positions. Hence,
previously derived mappings cannot be invalidated, and
$\Consequence{\ma P}$ is monotone.
\end{proof}

If $\Consequence{\ma P}$ is monotone, it admits a least fixpoint with respect to
set inclusion. Under the bounded-generation assumption, this fixpoint can be
computed by iterating $\Consequence{\ma P}$ from the empty set.

\begin{definition}[Least fixpoint semantics]
Let $\Consequence{\ma P}$ be monotone, and the ascending sequence
$\{S_n\}_{n \geq 0}$ is defined by
\[
S_0 = \varnothing, \qquad S_{n+1} = \Consequence{\ma P}(S_n).
\]
The \emph{least fixpoint} of $\Consequence{\ma P}$ is $\lfp(\Consequence{\ma P}) = \bigcup_{n \geq 0} S_n$,
and the meaning of $\ma P$ under least-fixpoint semantics is
$\evalTwo{\ma P}{\ma G}^{\lfp} := \lfp(\Consequence{\ma P})$.
\end{definition}

\begin{theorem}(Existence and characterization)
\label{thm:lfp}
If $\Consequence{\ma P}$ is monotone on $(2^{\Omega_{\ma P}},\subseteq)$, then it
has a least fixpoint, and
\[
\lfp(\Consequence{\ma P}) = \bigcup_{n \geq 0} \Consequence{\ma P}^n(\varnothing).
\]
Moreover, if $\Omega_{\ma P}$ is finite, the sequence stabilizes after at most
$|\Omega_{\ma P}|$ iterations.
\end{theorem}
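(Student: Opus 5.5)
The plan is to derive existence from Knaster--Tarski and the characterization from a Kleene-style iteration, using finiteness of $\Omega_{\ma P}$ to get convergence without any continuity hypothesis.

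\textbf{Existence.} The set $(2^{\Omega_{\ma P}},\subseteq)$ is a complete lattice: arbitrary unions are joins and arbitrary intersections are meets. By hypothesis (for instance via Lemma~\ref{lem:monotone}), $\Consequence{\ma P}$ is monotone on it. Knaster--Tarski then gives a least fixpoint,
\[
\lfp(\Consequence{\ma P})=\bigcap\{S\mid \Consequence{\ma P}(S)\subseteq S\}.
\]

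\textbf{The iterates form a chain below the least fixpoint.} Write $S_n=\Consequence{\ma P}^n(\varnothing)$ and $S^\ast=\bigcup_n S_n$.

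First, $S_n\subseteq S_{n+1}$ by induction. The base case is $\varnothing\subseteq S_1$. For the step, applying $\Consequence{\ma P}$ to $S_{n-1}\subseteq S_n$ gives $S_n\subseteq S_{n+1}$ by monotonicity.

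Second, let $F$ be any fixpoint. Then $S_n\subseteq F$ by induction: $\varnothing\subseteq F$, and $S_n\subseteq F$ implies $S_{n+1}=\Consequence{\ma P}(S_n)\subseteq\Consequence{\ma P}(F)=F$. Hence $S^\ast\subseteq\lfp(\Consequence{\ma P})$.

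\textbf{$S^\ast$ is itself a fixpoint.} Under the bounded-generation assumption $\Omega_{\ma P}$ is finite, which is the standing setting of this section. A strictly increasing chain of subsets of $\Omega_{\ma P}$ has length at most $|\Omega_{\ma P}|+1$. So there is some $N\le|\Omega_{\ma P}|$ with $S_N=S_{N+1}$.

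From that point the chain is constant, so $S^\ast=S_N$ and $\Consequence{\ma P}(S_N)=S_N$. Thus $S^\ast$ is a fixpoint. Combined with $S^\ast\subseteq\lfp(\Consequence{\ma P})$ and leastness, this gives equality. The same argument yields the bound of at most $|\Omega_{\ma P}|$ iterations: each non-stabilizing step adds at least one new element of $\Omega_{\ma P}$.

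\textbf{Main obstacle.} The delicate point is the first claim, $\lfp(\Consequence{\ma P})=\bigcup_n S_n$, read without any finiteness assumption. Monotonicity alone does not make $\bigcup_n S_n$ a fixpoint; one needs Scott-continuity, that is, preservation of unions of $\omega$-chains.

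My first approach is to argue continuity directly from the single-pass semantics. Each \tx{GenOp} is a union over individual contexts $\mu_c\in\ContextS{S}{g}$, so any output mapping depends on only finitely many context mappings. With only positive occurrences of generated variables, the remaining operators preserve directed unions. Hence $\Consequence{\ma P}(\bigcup_n S_n)=\bigcup_n\Consequence{\ma P}(S_n)$.

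If that finitary-dependence argument proves too fiddly given the paper's informal definitions, I will fall back on the paper's standing assumption that $\Omega_{\ma P}$ is finite. In that case stabilization, as shown above, settles the characterization outright.
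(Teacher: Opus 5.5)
Your proof is correct and follows the paper's route: Knaster--Tarski on $(2^{\Omega_{\ma P}},\subseteq)$ gives existence. The Kleene chain $S_n=\Consequence{\ma P}^n(\varnothing)$ is shown increasing and below every fixpoint by induction, stabilizes at some $N\le|\Omega_{\ma P}|$ by finiteness, and is therefore the least fixpoint. Like you, the paper relies on finiteness of $\Omega_{\ma P}$ for the union characterization and only remarks that $\omega$-continuity would suffice otherwise, so your fallback is exactly its argument. Your finitary-dependence sketch of continuity is unnecessary, and it would need more care for \tx{Diff} and \tx{Minus}: their universal test over the right operand need not preserve directed unions when that operand varies with the supply.
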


For acyclic patterns, the iteration stabilizes after one step (see \ref{app:acyclic}), and
the fixpoint semantics coincides with standard compositional evaluation.

\begin{theorem}
Let $S^\ast = \lfp(\Consequence{\ma P})$. If $S$ is any set of typed solution
mappings such that $\Consequence{\ma P}(S) \subseteq S$, then $S^\ast \subseteq S$.
Thus, $S^\ast$ is the minimal fixpoint model of $\ma P$.
\end{theorem}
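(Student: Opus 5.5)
We prove the statement by the standard Kleene/Tarski induction, keeping the standing hypothesis of this subsection that $\Consequence{\ma P}$ is monotone on $(2^{\Omega_{\ma P}},\subseteq)$ (Lemma~\ref{lem:monotone}). By Theorem~\ref{thm:lfp}, $S^\ast=\bigcup_{n\ge 0} S_n$ with $S_0=\varnothing$ and $S_{n+1}=\Consequence{\ma P}(S_n)$. Fix an arbitrary $S$ with $\Consequence{\ma P}(S)\subseteq S$, i.e.\ a pre-fixpoint (closed set). It suffices to show $S_n\subseteq S$ for every $n$, since then the union is contained in $S$.

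\textbf{Induction.} The base case $S_0=\varnothing\subseteq S$ is trivial. For the step, assume $S_n\subseteq S$. Monotonicity gives $S_{n+1}=\Consequence{\ma P}(S_n)\subseteq\Consequence{\ma P}(S)$, and closure of $S$ gives $\Consequence{\ma P}(S)\subseteq S$. Hence $S_{n+1}\subseteq S$, and so $S^\ast\subseteq S$.

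\textbf{Domain of $S$.} The only technical point is where $S$ lives. The statement says ``any set of typed solution mappings'', but $\Consequence{\ma P}$ is defined only on $2^{\Omega_{\ma P}}$. We therefore read $S$ as a subset of $\Omega_{\ma P}$. Alternatively, for an arbitrary $S$ we replace it by $S\cap\Omega_{\ma P}$ and show this is still closed. That works because $\Consequence{\ma P}$ depends only on the restricted contexts $\ContextS{S}{g}$ (the locality proposition), and its outputs lie in $\Omega_{\ma P}$. The claim $\ContextS{S}{g}=\ContextS{S\cap\Omega_{\ma P}}{g}$, for mappings relevant over $\var{\ma P}$, needs a short argument and is the main subtlety. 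We will simply state the convention that $S\subseteq\Omega_{\ma P}$.

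\textbf{Minimality.} Finally, $S^\ast$ is itself a fixpoint: by finiteness of $\Omega_{\ma P}$ the chain stabilizes (Theorem~\ref{thm:lfp}), so $\Consequence{\ma P}(S^\ast)=S^\ast$. It is therefore closed, and by the inclusion above it lies below every closed set, in particular below every fixpoint. This justifies calling $S^\ast$ the minimal fixpoint model, agreeing with Tarski's characterization $\lfp=\bigcap\{S\mid\Consequence{\ma P}(S)\subseteq S\}$.
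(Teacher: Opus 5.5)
Your proof is correct and follows the paper's argument essentially verbatim. You induct on the Kleene chain $S_0=\varnothing$, $S_{n+1}=\Consequence{\ma P}(S_n)$, use monotonicity together with $\Consequence{\ma P}(S)\subseteq S$ to get $S_n\subseteq S$, and then take the union; your convention $S\subseteq\Omega_{\ma P}$ is exactly how the appendix version of the theorem is stated, and your remark that $S^\ast$ is itself a fixpoint is a harmless addition.
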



\subsection{Stratified Semantics }
\label{sec:stratified}

Let $\ma P$ be a generative SPARQL pattern and let $\sigma:\genPattern{\ma P}\to\mathbb N$ be a stratification. For $i\in\mathbb N$, let
\[\Gen^{\leq i}(\ma P)=\{g\in\genPattern{\ma P}\mid \sigma(g)\leq i\}.\]
Let $\ma P^{\leq i}$ denote the pattern obtained from $\ma P$ by \emph{disabling} all \tx{GenOp} occurrences not in $\Gen^{\leq i}(\ma P)$, that is, replacing each such occurrence by a pattern whose evaluation is empty for all context supply $S$, while leaving the remaining algebraic structure unchanged.

\begin{definition}(Stratum-wise operator)
For each $i$, let $\Consequence{\ma P^{\leq  i}}:2^{\Omega_{\ma P}}\to 2^{\Omega_{\ma P}}$ defined by $\Consequence{\ma P^{\leq  i}}(S):= \evalThree{\ma P^{\leq  i}}{\ma G}{S}$.
\end{definition}

\begin{definition}(Stratified semantics)
Let $\ma P$ be stratified with stratification $\sigma:\genPattern{\ma P}\to\mathbb N$. Let $S_{-1} := \varnothing$. For each stratum index $i=0,1,2,\dots$, consider the operator
\[S \mapsto \Consequence{\ma P^{\leq i}}(S \cup S_{i-1}),\]
and let $S_i$ be its least fixpoint:
\[S_i := \lfp\big( S \mapsto \Consequence{\ma P^{\leq i}}(S \cup S_{i-1}) \big).\]
Let $m = \max\{\sigma(g)\mid g\in\genPattern{\ma P}\}$, with $m=0$ if $\genPattern{\ma P}=\varnothing$. The meaning of $\ma P$ under stratified semantics is given by $\evalThree{\ma P}{\ma G}{\mathrm{strat}}= S_m$.
\end{definition}


\begin{theorem}
\label{thm:strat}
If $\ma P$ is stratified and $\Omega_{\ma P}$ is finite (bounded generation),
then the stratified semantics
$\evalTwo{\ma P}{\ma G}^{\mathrm{strat}}$
is well-defined and unique.
\end{theorem}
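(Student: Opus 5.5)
We argue by induction on the stratum index $i=0,1,\dots,m$, showing at each stage that $S_i$ exists as a least fixpoint, is computed by a terminating iteration, and is uniquely determined by $\ma P$, $\ma G$ and $S_{i-1}$. Writing $T_i(S):=\Consequence{\ma P^{\leq i}}(S\cup S_{i-1})$, the base case uses $S_{-1}=\varnothing$, so $T_0=\Consequence{\ma P^{\leq 0}}$. The inductive step treats $S_{i-1}$ as a fixed, already-determined set, so $T_i$ is a well-defined function $2^{\Omega_{\ma P}}\to 2^{\Omega_{\ma P}}$. Once each $S_i$ is shown to exist and to be unique, $S_m$ is unique, hence $\evalTwo{\ma P}{\ma G}^{\mathrm{strat}}$ is well-defined.

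\textbf{Key step: monotonicity of $T_i$.} Then Theorem~\ref{thm:lfp} applies on the finite lattice $(2^{\Omega_{\ma P}},\subseteq)$. It gives a least fixpoint $\bigcup_n T_i^n(\varnothing)$ that stabilizes after at most $|\Omega_{\ma P}|$ steps, and least fixpoints are unique by definition. The map $S\mapsto S\cup S_{i-1}$ is monotone, so it suffices to prove that $\Consequence{\ma P^{\leq i}}$ is monotone on supplies extending $S_{i-1}$. I will reduce this to Lemma~\ref{lem:monotone}, with the positivity condition relativized to stratum $i$. Only two kinds of supply change matter. Occurrences with $\sigma(g)>i$ are disabled, so they evaluate to $\varnothing$ regardless of $S$. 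Occurrences with $\sigma(g)<i$ already had their generated bindings fixed in $S_{i-1}$. So enlarging $S$ can only add contexts for occurrences $g$ with $\sigma(g)=i$, and by the Proposition on context sets only $\ContextS{S}{g}$ for such $g$ affects the result.

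\textbf{Main obstacle: no fresh bindings reach a negative position.} I must show that a new context for some $g_j$ at stratum $i$ cannot feed an output of a same-stratum $g_i$ into a negative position. Take any edge $g_i\to g_j$ inside stratum $i$. By Definition~\ref{defn_stratification} it must be positive, because a negative edge would force $\sigma(g_i)<\sigma(g_j)$. So every shared variable occurs only in positive positions of $\ma P\setminus\{g_j\}$, namely under \tx{Join}, \tx{Union}, \tx{Proj}, the join branch of \tx{Opt}, or non-negated filters. These operators are monotone in their arguments, because set union and the compatible-join $\uplus$ preserve inclusion. Variables that do occur under \tx{Minus}, \tx{Diff} or $\neg$ are produced only by lower strata, so their bindings are frozen by $S_{i-1}$. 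I would establish this claim by structural induction over the grammar~\eqref{eq:query-grammar}: if $S\subseteq S'$ then $\evalThree{Q}{\ma G}{S}\subseteq\evalThree{Q}{\ma G}{S'}$ for every subpattern $Q$, where the negative operators' right-hand arguments are invariant between $S$ and $S'$. This gives monotonicity of $T_i$ and closes the induction.

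The delicate point is that invariance claim. The right argument of \tx{Minus} or \tx{Diff} could still contain a stratum-$i$ occurrence whose outputs are not shared with any other generative occurrence, and then it is not immediately frozen. If needed, I would handle this by reading the signed-edge definition together with the monotonicity condition as forbidding stratum-$i$ outputs in negative positions. Failing that, I would drop the claim of monotonicity and argue uniqueness directly: the stabilizing iteration from $\varnothing$ is deterministic, since \Call{\ma M} and \Parse{\ma M} are deterministic and $\Omega_{\ma P}$ is finite, and that already makes $S_i$ unique.
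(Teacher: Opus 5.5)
Your route is the paper's. You show that $F_i(S)=\Consequence{\ma P^{\leq i}}(S\cup S_{i-1})$ is monotone by reduction to Lemma~\ref{lem:monotone}, then apply Theorem~\ref{thm:lfp} on the finite lattice. The \emph{delicate point} you isolate, however, is a genuine gap. It is exactly the step the paper's proof covers only by assertion: it claims stratification forces any output variable of a stratum-$\le i$ generator occurring under a negative operator to depend only on strictly lower strata. That claim does not follow from the definitions, for two reasons.

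First, signed edges constrain only variables in $\target{g_i}\cap\holder{g_j}$. An occurrence whose outputs feed no other \tx{GenOp} is not constrained by $\sigma$ at all. This includes one sitting in the right argument of \tx{Minus} or \tx{Diff}, or in the second argument of \tx{Opt}.

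Second, occurrences with $\sigma(g)<i$ are not frozen in $S_{i-1}$, as you claim. $\ma P^{\leq i}$ keeps them enabled, and they draw contexts from $\ContextS{(S\cup S_{i-1})}{g}$, which grows with $S$.

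The step actually fails. Take:
\begin{itemize}
\item $\ma G=\{\tx{:Paris a :City}\}$ and $B=\{(?x\ \tx{a}\ \tx{:City})\}$;
\item $g=\genbind{\pi(\{?x\})}{\{?y\}}{\ma M}$, with at least one parsed output $(t)$ for the prompt instantiated with Paris;
\item $\ma P=\minus{B}{g}$.
\end{itemize}
This pattern is well formed and its dependency graph has no edges. So $\sigma(g)=0$ is a stratification, and $S_0=\lfp(\Consequence{\ma P})$. Let $\mu_0=\{?x\mapsto(\tx{:Paris},\sourceRdf)\}$.
\begin{itemize}
\item $\Consequence{\ma P}(\varnothing)=\{\mu_0\}$, since $g$ has no context.
\item $\Consequence{\ma P}(S)=\varnothing$ whenever $\mu_0\in S$. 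Here $\mu_0\uplus\{?y\mapsto(t,\sourceGen)\}$ is compatible with $\mu_0$ and shares $?x$, so \tx{Minus} removes $\mu_0$.
\item $\Consequence{\ma P}(S)\subseteq\{\mu_0\}$ for every $S$.
\end{itemize}
Hence $\Consequence{\ma P}$ has no fixpoint at all. So $S_0$, and with it the stratified semantics, is undefined.

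Neither of your escape routes therefore proves the statement as given.

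Your first remedy is an added hypothesis, not a consequence of stratification. What makes $F_i$ non-monotone is a supply-dependent subpattern, one containing an enabled \tx{GenOp}, inside a negatively used argument. Such arguments are the right argument of \tx{Minus} or \tx{Diff} and the second argument of \tx{Opt}. A variable merely occurring in a negative position is harmless for monotonicity, since these operators are monotone in their left argument. The paper's variant with \tx{MINUS} over $(?x\ \tx{:hasTopic}\ ?y)$ is such a case. If you assume no \tx{GenOp} occurrence lies in a negatively used argument, your structural induction gives monotonicity of each $F_i$ and the rest of your argument goes through. Note, however, that $\sigma$ then plays no role.

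Your fallback fails outright. Determinism of $\Call{\ma M}$ and $\Parse{\ma M}$ makes the Kleene sequence unique, but not convergent. In the example it alternates $\varnothing,\{\mu_0\},\varnothing,\dots$, and finiteness gives only eventual periodicity. Moreover, $S_i$ is defined as a \emph{least} fixpoint. Even a stabilized Kleene limit of a non-monotone operator need not be least.
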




\begin{theorem}
Let $\ma P$ be stratified.
For each stratum $i$, the least fixpoint $S_i$ is independent of the order in
which \tx{GenOp} occurrences in stratum $i$ are evaluated.
\end{theorem}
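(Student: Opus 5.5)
The strategy is to separate the declarative object $S_i$ from any operational schedule that evaluates the \tx{GenOp} occurrences of stratum $i$ one at a time. Then show that every fair schedule converges to the same set, namely the least fixpoint of
\[T_i(S) := \Consequence{\ma P^{\leq i}}(S \cup S_{i-1}).\]
First I will formalize a schedule. Fix an enumeration $g_{j_0},g_{j_1},\dots$ of the occurrences in $\Gen^{\leq i}(\ma P)$ with $\sigma(g)=i$, in which each occurrence appears infinitely often. For a single occurrence $g$, the partial operator $T_i^{g}(S)$ evaluates $\ma P^{\leq i}$ once. It lets only $g$ (among the stratum-$i$ occurrences) draw contexts from $S\cup S_{i-1}$. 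Every other stratum-$i$ occurrence keeps the contexts it had before the step. The schedule produces $R_0=\varnothing$ and $R_{n+1}=R_n\cup T_i^{g_{j_n}}(R_n)$. The claim to prove is that $\bigcup_n R_n = S_i$.

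The proof has three ingredients. (1) \emph{Monotonicity inside the stratum.} By the definition of a stratification, every edge between two occurrences of stratum $i$ is positive, since a negative edge would force $\sigma(g_i)<\sigma(g_j)$. Outputs of stratum-$i$ occurrences therefore appear only in positive positions relative to other stratum-$i$ consumers. Occurrences of lower strata contribute only through the fixed set $S_{i-1}$, and higher strata are disabled and evaluate to $\varnothing$. Repeating the argument of Lemma~\ref{lem:monotone} for this restricted operator shows that $T_i$ and every $T_i^{g}$ are monotone. In addition, $T_i^{g}(S)\subseteq T_i(S)$. (2) \emph{Soundness of the schedule.} By induction on $n$, each $R_n\subseteq S_i$. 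The step is $T_i^{g}(R_n)\subseteq T_i(R_n)\subseteq T_i(S_i)=S_i$. (3) \emph{Completeness.} Finiteness of $\Omega_{\ma P}$ (bounded generation, as in Theorem~\ref{thm:strat}) means the increasing chain $R_n$ stabilizes at some $R^\ast$. Fairness means every $g$ of stratum $i$ is applied to $R^\ast$ again after stabilization, so $T_i^{g}(R^\ast)\subseteq R^\ast$ for all such $g$. A single application of $T_i$ combines the contributions of all occurrences. By the Proposition on context sets, $\Consequence{\ma P}$ depends only on the context sets $\ContextS{S}{g}$, so $T_i(R^\ast)$ is assembled from the per-occurrence context sets. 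This gives $T_i(R^\ast)\subseteq R^\ast$. The minimality theorem stated just after Theorem~\ref{thm:lfp} then yields $S_i\subseteq R^\ast$. Together with (2), $R^\ast=S_i$ for every fair schedule, which is order independence.

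The main obstacle is the completeness step: deriving $T_i(R^\ast)\subseteq R^\ast$ from the per-occurrence closures $T_i^{g}(R^\ast)\subseteq R^\ast$. The single-pass semantics joins the results of several \tx{GenOp}s at once, so a mapping produced by $T_i$ may need two stratum-$i$ occurrences to fire with fresh contexts in the same step. I would first try to handle this by defining $T_i^{g}$ so that the other occurrences read contexts from the current $R_n$ rather than from a frozen set. At the stabilized $R^\ast$, all occurrences then see the same supply, and $T_i^{g}(R^\ast)$ coincides with $T_i(R^\ast)$ on the joint outputs. If this identification fails, the fallback is to interpret ``evaluation order'' more coarsely as the order of Kleene iteration. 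Order independence then reduces to uniqueness of the least fixpoint (Theorem~\ref{thm:lfp}) together with the fact that any sequence of monotone inflationary updates bounded by $S_i$ and closed under $T_i$ reaches $S_i$.
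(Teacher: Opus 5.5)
Your proposal is correct and follows essentially the paper's argument. A fair schedule yields an increasing chain that stays below $S_i$, and its limit is closed under your $T_i$ (the paper's $F_i$), so the minimality theorem forces equality. Your inductive step (2) is in fact what justifies the paper's inclusion $U_\infty\subseteq S_i$: the paper attributes it to minimality of the least fixpoint, which only gives the reverse inclusion.

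The paper simply builds that closure into its fairness hypothesis. In your frozen-context model, the step you flag as the obstacle goes through without redefining $T_i^{g}$ or retreating to Kleene order. Once $R_n$ has stabilized at $R^\ast$ and every stratum-$i$ occurrence has fired at least once more, every frozen context set equals $\ContextS{(R^\ast\cup S_{i-1})}{g}$. The next firing then computes exactly $T_i(R^\ast)$, so $T_i(R^\ast)\subseteq R^\ast$.
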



\subsection{Well-Founded Semantics for Non-Stratified Patterns}
\label{sec:wfs}

For query patterns that are not stratified, we adopt the standard three-valued semantics based on an alternating fixpoint construction \cite{przymusinski1990well,van1989alternating}, instantiated to typed solution mappings.


Let $\Cand{\ma P} \subseteq \Omega_{\ma P}$ be a finite candidate set.
assumed to be a sound over-approximation of all typed solution mappings that may arise during evaluation of $\ma P$ under bounded generation. A partial interpretation is a pair $(I^+, I^-)$ with $I^+, I^- \subseteq \Cand{\ma P}$ and $I^+ \cap I^- = \varnothing$, where mappings in $I^+$ are true, mappings in $I^-$ are false, and all others are undefined.

\begin{definition}
Let $J \subseteq \Cand{\ma P}$.
The \emph{reduct consequence operator} $\Consequence{\ma P}^J : 2^{\Cand{\ma P}} \to 2^{\Cand{\ma P}}$ is defined as
\[\Consequence{\ma P}^J(S) \coloneqq \evalThree{\ma P}{\ma G}{S},\]
where all negative constructs (\ie \tx{MINUS}, \tx{DIFF}, and negated filters) are evaluated with respect to the fixed set $J$.
\end{definition}

Under fixed $J$, the operator $S \mapsto \Consequence{\ma P}^J(S)$ is monotone,
since all negative tests are evaluated against a constant context.

\begin{definition}[Alternating fixpoint sequence]
\label{def:alt}
Let $(I_0^+, I_0^-) \coloneqq (\varnothing, \varnothing)$.
For $n \geq 0$, 
\begin{align*}
I_{n+1}^+ &\coloneqq
\lfp\big(S \mapsto \Consequence{\ma P}^{\Cand{\ma P}\setminus I_n^-}(S)\big),\\
I_{n+1}^- &\coloneqq
\Cand{\ma P} \setminus
\lfp\big(S \mapsto \Consequence{\ma P}^{I_{n+1}^+}(S)\big).
\end{align*}
\end{definition}

\begin{theorem}(Convergence and uniqueness)
\label{thm:wfs}
If $\Cand{\ma P}$ is finite, then the alternating fixpoint sequence in
Definition~\ref{def:alt} stabilizes after finitely many steps at a pair
$(I^+, I^-)$.
The induced three-valued interpretation is unique and coincides with the
well-founded model of $\ma P$ over $\ma G$ relative to $\Cand{\ma P}$.
\end{theorem}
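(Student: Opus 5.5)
The plan follows the classical Van Gelder alternating-fixpoint argument, transferred to the lattice $(2^{\Cand{\ma P}},\subseteq)$. Define, for $J\subseteq\Cand{\ma P}$, the operator
\[\Gamma(J):=\lfp\big(S\mapsto\Consequence{\ma P}^{J}(S)\big).\]
This is well defined because, for fixed $J$, the map $S\mapsto\Consequence{\ma P}^J(S)$ is monotone: only positive constructs read $S$, and negative tests read the constant $J$. Theorem~\ref{thm:lfp} then gives a least fixpoint reached after at most $|\Cand{\ma P}|$ iterations from $\varnothing$. The first key step is to show that $\Gamma$ is \emph{antitone} in $J$. Enlarging $J$ can only make more mappings $\mu_2$ available on the right-hand side of $\tx{Minus}$ or $\tx{Diff}$, or make negated filter atoms fail more often. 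Each such test is a universally quantified condition over $J$, so $J\subseteq J'$ implies $\Consequence{\ma P}^{J'}(S)\subseteq\Consequence{\ma P}^{J}(S)$ for every $S$. By induction over the Kleene iterates this gives $\Gamma(J')\subseteq\Gamma(J)$.

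With antitonicity in hand, rewrite the sequence of Definition~\ref{def:alt} as
\[I_{n+1}^+=\Gamma(\Cand{\ma P}\setminus I_n^-),\qquad \Cand{\ma P}\setminus I_{n+1}^-=\Gamma(I_{n+1}^+).\]
Writing $U_n:=\Cand{\ma P}\setminus I_n^-$, we get $U_0=\Cand{\ma P}$, $I_{n+1}^+=\Gamma(U_n)$ and $U_{n+1}=\Gamma^2(U_n)$. Since $\Gamma$ is antitone, $\Gamma^2$ is monotone. Because $U_1=\Gamma(\Gamma(\Cand{\ma P}))\subseteq\Cand{\ma P}=U_0$, induction shows that $(U_n)$ is descending, and therefore $(I_n^+)=(\Gamma(U_{n-1}))$ is ascending. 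We also need $I_n^+\subseteq U_n$, which gives disjointness $I_n^+\cap I_n^-=\varnothing$ so that each pair is a genuine partial interpretation. This follows by induction from $I_{n+1}^+=\Gamma(U_n)$, $U_{n+1}=\Gamma(I_{n+1}^+)$, and the fact $I_{n+1}^+\subseteq U_{n}$ together with antitonicity. Both sequences are monotone chains in a finite lattice of height $|\Cand{\ma P}|$, so they stabilize after at most $|\Cand{\ma P}|+1$ steps at some pair $(I^+,I^-)$. Uniqueness is immediate, because the construction is a deterministic function of $\ma P$, $\ma G$ and $\Cand{\ma P}$, and each $\lfp$ is unique.

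For the final clause, the limit satisfies $I^+=\Gamma(U)$ and $U=\Gamma(I^+)$. Here $I^+$ is the least fixpoint of $\Gamma^2$ and $U$ is the greatest. This is the standard characterization of the well-founded model as the alternating fixpoint: true atoms form $\lfp(\Gamma^2)$ and false atoms form the complement of $\mathrm{gfp}(\Gamma^2)$. To justify that $I^+$ is the \emph{least} fixpoint, I would show by induction that $\Gamma^{2n}(\varnothing)$ lies below every fixpoint of $\Gamma^2$ and that the iterates started from $\Cand{\ma P}$ lie above every fixpoint. Since the paper defines the well-founded model relative to $\Cand{\ma P}$ via exactly this instantiation, the coincidence is then by definition plus this characterization.

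The main obstacle is the antitonicity step. It requires a structural induction over $\ma P$ showing that every occurrence of a negative construct is evaluated purely against $J$ and enters $\Consequence{\ma P}^J$ only antitonically, while positive constructs remain monotone in $S$ and independent of $J$. Care is needed for $\tx{Opt}$, whose $\tx{Diff}$ branch is negative while its $\tx{Join}$ branch is positive. Care is also needed for the three-valued filter semantics, where the error value $\varepsilon$ must be treated as failure so that $\neg$ does not turn errors into successes. I would settle this by fixing that convention explicitly and checking the clauses of the single-pass evaluation one by one.
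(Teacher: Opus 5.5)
Your proposal is correct and follows essentially the paper's route: both arguments rest on the reduct least fixpoint being antitone in the set against which negative constructs are tested (the paper's ``shrinking the set \ldots{} makes it harder for negatives to block derivations''), so that $(I_n^+)_n$ ascends and $(\Cand{\ma P}\setminus I_n^-)_n$ descends in a finite lattice and the sequence stabilizes. Your $\Gamma/\Gamma^2$ packaging is more careful than the paper's proof, which never checks $I_n^+\cap I_n^-=\varnothing$ and only asserts that the limit is the least solution in the information order, whereas your $\mathrm{lfp}(\Gamma^2)/\mathrm{gfp}(\Gamma^2)$ characterization actually supports the uniqueness and well-founded-model clauses; one small correction is to run the leastness induction on the paper's actual iterates $I_{n+1}^+=\Gamma^{2n+1}(\Cand{\ma P})$ rather than on $\Gamma^{2n}(\varnothing)$, using $\Gamma(\Cand{\ma P})\subseteq\Gamma(\Gamma(X))=X$ for every fixpoint $X$ of $\Gamma^2$.
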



\noindent



Non-stratified generative SPARQL patterns, \ie patterns with cyclic negative dependencies, fall outside the monotone fragment. Under the bounded-generation assumption, such patterns admit a three-valued well-founded semantics obtained by the standard alternating-fixpoint construction over the finite mapping lattice; the construction follows classical well-founded semantics.

\section{Algebraic Equivalences and Rewrite Rules}
\label{sec:rewrites}
We prove equivalence results for the monotone or stratified fragments, where denotational and fixpoint semantics coincide.

\begin{definition}
Two generative SPARQL patterns $\ma P$ and $\ma P'$ are \emph{equivalent}, written $\ma P \equiv \ma P'$, if they produce the same set of solution mappings for every RDF graph $\ma G$ and every context-supply set $S$, \ie
$\evalThree{\ma P}{\ma G}{S} = \evalThree{\ma P'}{\ma G}{S}$.
\end{definition}


\paragraph{SPARQL-specific equivalences.} Standard SPARQL equivalences that preserve the context supplied to \tx{GenOp} patterns remain valid: 
\begin{multline*}
    \join{\ma P_1}{\ma P_2} \equiv \join{\ma P_2} {\ma P_1};\,\\
 \join{(\join{\ma P_1}{\ma P_2})}{\ma P_3} \equiv \join{\ma P_1}{(\join{\ma P_2}{\ma P_3})};\\
\join{\ma P_1}{(\union{\ma P_2}{\ma P_3})} \equiv \\ \union{(\join{\ma P_1}{\ma P_2})}{(\join{\ma P_1}{\ma P_3})};\\
\filter{P}{\ma F_1\wedge \ma F_2}\equiv \filter{\filter{P}{\ma F_2}}{\ma F_1};\\
\filter{\join{\ma P_1}{\ma P_2}}{F}\equiv \join{\filter{\ma P_1}{F}}{\ma P_2}\\ \text{if }\var{F}\subseteq\var{\ma P_1}; \,\text{and}\\
\minus{\ma P_1}{\ma P_2}\equiv \ma P_1\ \ \text{if }\var{\ma P_1}\cap\var{\ma P_2}=\varnothing.
\end{multline*}
The proofs follow those of SPARQL, with equality on shared variables replaced by compatibility (cf. Appendix~\ref{sec:equivalences}).

\paragraph{\tx{GenOp}-specific rewrites.}
A generative pattern $g=\genbind{\pi(X)}{Y}{\ma M}$ can be evaluated only using context mappings that bind $X$; moving $g$ can therefore change the context set and the result.

We first illustrate why unrestricted reordering of GenOp patterns is unsound.
\begin{example}[Unsafe GenOp reordering]
\label{ex:unsafe-reorder}
Let $g=\genbind{\pi(\{?x\})}{\{?y\}}{M}$ and consider
\[\ma P_1=\{(?s\ \tx{:p}\ ?x)\},\,\,\ma P_2=\{(?s\ \tx{:q}\ ?z)\}.\]
Assume $\ma G$ is such that $\evalTwo{\ma P_1}{\ma G}\neq\varnothing$ and $\evalTwo{\ma P_2}{\ma G}\neq\varnothing$, but no mapping in $\evalTwo{\ma P_2}{\ma G}$ binds $?x$.
In $\join{(\join{\ma P_1}{\ma P_2})}{g}$, the left subpattern $\join{\ma P_1}{\ma P_2}$ yields mappings binding $?x$, so the context set for $g$ is non-empty and $g$ may produce bindings for $?y$.
In contrast, in $\join{\ma P_1}{(\join{\ma P_2}{g})}$, the left subpattern $\ma P_2$ yields no mapping binding $?x$, hence the context set for $g$ is empty and $\evalTwo{ \join{P_2}{g}}{\ma G}=\varnothing$.
Thus, reordering a \tx{GenOp} across joins is not semantics-preserving in general.
\end{example}

We next give sufficient conditions guaranteeing that common rewrites involving \tx{GenOp} patterns are sound.

\begin{proposition}
\label{prop:genop-base}
Let $g=\genbind{\pi(\varnothing)}{Y}{M}$ be a base-mode \tx{GenOp}.
Then, $\,\forall\,\ma G,\forall\,S: \evalThree{g}{\ma G}{S}=\evalThree{g}{\ma G}{\{\varnothing\}}$.
\end{proposition}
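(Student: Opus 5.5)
The plan is to unfold Definition~\ref{defn-genpattern} for $g=\genbind{\pi(\varnothing)}{Y}{M}$ and show that nothing in the defining set-builder depends on the particular context mapping $\mu_c$. Concretely, I will isolate the set
\[N_g=\{\nu=\{y_i\mapsto(v_i,\sourceGen)\mid 1\le i\le k\}\mid \m a\in\Call{M}(\pi),\ (v_1,\dots,v_k)\in\Parse{M}(\m a)\}\]
and show that both sides of the claimed equation equal $N_g$.

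The first step is prompt invariance. Since $X=\varnothing$, the instantiation $\pi^{[\mu_c]}$ replaces no placeholders, so $\pi^{[\mu_c]}=\pi$ for every $\mu_c$, including $\mu_c=\varnothing$. Because $\canonical$, $\lexical$, $\Call{M}$ and $\Parse{M}$ are fixed deterministic functions under the bounded-generation assumption, the sets $\Call{M}(\pi)$ and $\Parse{M}(\m a)$, and hence $N_g$, are the same for every context. The second step checks the admissibility conditions. The requirement $X\subseteq\domain{\mu_c}$ holds trivially, so $\ContextS{S}{g}=S$ and every context is admissible.

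The third step handles how the context enters the output, namely through $\mu_c\uplus\nu$ with the side condition $\mu_c\sim\nu$. Since $g$ has no placeholders, the only information the context can legitimately contribute to $g$'s own output is its restriction $\mu_c|_X=\varnothing$. The passage of other context variables into the final result is handled by the surrounding $\tx{Join}$, which is exactly how $\evalTwo{\join{\ma P_1}{g}}{\ma G}$ recombines bindings. With $\mu_c|_X=\varnothing$, the union becomes $\varnothing\uplus\nu=\nu$, and $\varnothing\sim\nu$ holds vacuously. Hence
\[\evalThree{g}{\ma G}{S}=\bigcup_{\mu_c\in S}N_g=N_g=\evalThree{g}{\ma G}{\{\varnothing\}}.\]

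I expect this third step to be the main obstacle. Read literally, Definition~\ref{defn-genpattern} keeps all of $\mu_c$ in $\mu_c\uplus\nu$, which would make the output depend on $S$. It would also make the output empty when $S=\varnothing$. I will therefore make explicit that the generative pattern's own output is taken relative to the context restricted to $\holder{g}$ (the same reading under which Example~\ref{ex:unsafe-reorder} speaks of the context set being nonempty). I will also note that the empty context $\{\varnothing\}$, the value of $\evalTwo{\emptypat}{\ma G}$, is always available to a base-mode pattern because the surrounding pattern of $g$ evaluates to at least $\{\varnothing\}$ after pruning. With this convention fixed, the remaining steps are immediate.
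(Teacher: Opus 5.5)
You are right that Definition~\ref{defn-genpattern}, read literally, does not yield the claim. The paper does not derive it from that definition either: it invokes a base-mode convention under which, when $\holder{g}=\varnothing$, the clause ignores the supplied set and evaluates against the fixed context set $\{\varnothing\}$. The gap is that you substitute a \emph{different} convention, and yours fails at $S=\varnothing$. If you restrict each context to $\holder{g}$ but still take $\bigcup_{\mu_c\in S}$, you get $\evalThree{g}{\ma G}{\varnothing}=\varnothing$. By contrast, $\evalThree{g}{\ma G}{\{\varnothing\}}=N_g$, which is nonempty as soon as the model returns a parsable answer. So your convention does not cure the emptiness problem you flagged yourself.

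The patch you offer does not hold. The surrounding pattern of $g$ need not evaluate to a set containing the empty mapping, nor even to a nonempty set. For instance, take $\join{\{(?s\ \tx{:p}\ ?x)\}}{g}$ over a graph with no $\tx{:p}$-triples. In any case, the proposition quantifies over arbitrary supplies $S$, not only those that arise as values of surrounding patterns. This case also matters downstream: the hoisting corollary defines $\Omega_g=\evalThree{g}{\ma G}{\varnothing}$, which your convention would make empty.

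The repair is to adopt the paper's convention directly: for $\holder{g}=\varnothing$, discard $S$ entirely and evaluate with the context set $\{\varnothing\}$. Your first two steps then compute this value as $N_g$, and independence from $S$ is immediate; this is exactly the paper's one-line argument. Note also that the paper's convention is local to base-mode patterns. Your restriction to $\holder{g}$ would instead change the denotation of every \tx{GenOp} whose contexts bind variables outside its placeholders, which is more than the statement requires.
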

\begin{proof}
By the base-mode convention, whenever $\holder{g}=\varnothing$ the evaluation of $g$
ignores the supplied context and uses the fixed context set $\{\varnothing\}$.
Hence $\evalThree{g}{\ma G}{S}=\evalThree{g}{\ma G}{\{\varnothing\}}$ for all $S$.
\end{proof}

\begin{corollary}
\label{cor:genop-base-join}
Let $g=\genbind{\pi(\varnothing)}{Y}{M}$ and $\Omega_g = \evalThree{g}{\ma G}{\varnothing}$. Let $\ma P$ be a query pattern that contains no \tx{GenOp} occurrences. Then, $\forall\,\ma G,\forall\,S,S':$
\[\evalThree{\join{\ma P}{g}}{\ma G}{S}
\;=\; \evalThree{\join{\ma P}{g}}{\ma G}{S'} \;=\;\evalTwo{\join{\ma P}{\Omega_g}}{\ma G}.\]
\end{corollary}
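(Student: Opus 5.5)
We unfold the Join clause and then invoke Proposition~\ref{prop:genop-base}. By the Join clause of the single-pass semantics,
\[\evalThree{\join{\ma P}{g}}{\ma G}{S}=\{\mu_1\uplus\mu_2\mid \mu_1\in\evalThree{\ma P}{\ma G}{S},\ \mu_2\in\evalThree{g}{\ma G}{S},\ \mu_1\sim\mu_2\}.\]
So the statement reduces to showing that neither operand depends on the supplied context.

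\emph{Left operand.} Since $\ma P$ contains no \tx{GenOp} occurrence, its evaluation uses only the \bgps, Filter, Proj, Union, Join, Minus, Diff and Opt clauses. None of these consults the context supply; the supply is used only in the \tx{GenOp} clause. We make this precise by a routine structural induction on $\ma P$, showing $\evalThree{\ma P}{\ma G}{S}=\evalThree{\ma P}{\ma G}{S'}=\evalTwo{\ma P}{\ma G}$ for all $S,S'$. The base case is a \bgps, whose evaluation is defined without reference to $S$. Each inductive case combines the sub-results with a fixed operation ($\uplus$, $\sim$, restriction, filter truth values), so context-independence propagates.

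\emph{Right operand.} By Proposition~\ref{prop:genop-base}, $\evalThree{g}{\ma G}{S}=\evalThree{g}{\ma G}{\{\varnothing\}}$ for every $S$. The statement writes $\Omega_g=\evalThree{g}{\ma G}{\varnothing}$, so we must match this with the base-mode convention. That convention says that when $\holder{g}=\varnothing$ the fixed context $\{\varnothing\}$ is used regardless of what is supplied, including the supply $\varnothing$. Hence $\evalThree{g}{\ma G}{\varnothing}=\evalThree{g}{\ma G}{\{\varnothing\}}$ is a further instance of Proposition~\ref{prop:genop-base}, and $\evalThree{g}{\ma G}{S}=\Omega_g$ for all $S$.

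\emph{Combining.} Substituting both facts into the Join clause gives the same set for $S$ and $S'$. Reading $\Omega_g$ as a constant pattern whose evaluation is that fixed set, like an inline VALUES block, this set is exactly $\evalTwo{\join{\ma P}{\Omega_g}}{\ma G}$.

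The main obstacle is bookkeeping rather than mathematics, in two places. First, we must reconcile the empty supply $\varnothing$ in the definition of $\Omega_g$ with $\{\varnothing\}$ in Proposition~\ref{prop:genop-base}, which the base-mode convention handles as above. Second, we must fix the meaning of $\join{\ma P}{\Omega_g}$ as a join with a constant solution set. A further subtlety is that $\ma P$ may have its own nested Minus or Diff, so the induction must cover the negative operators. These are unproblematic, because their clauses are also context-free once no \tx{GenOp} occurs in $\ma P$.
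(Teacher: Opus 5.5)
Your proposal is correct and follows the paper's proof essentially step for step. Like the paper, you use base-mode context independence of $g$ via Proposition~\ref{prop:genop-base}, context independence of the \tx{GenOp}-free pattern $\ma P$, and substitution into the Join clause with $\Omega_g$ read extensionally as a fixed relation. The differences are in care rather than route: you prove the independence of $\ma P$ by a direct structural induction where the paper only cites the conservativity proposition, and you explicitly reconcile the supply $\varnothing$ in the definition of $\Omega_g$ with $\{\varnothing\}$ via the base-mode convention, which the paper passes over silently.
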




\begin{proposition}
\label{prop:union-distribution-branch}
Let $\ma P_1,\ma P_2$ be patterns and $g$ a \tx{GenOp}. For all $\ma G,S$ and $i\in\{1,2\}$, let $S_i := \evalThree{\ma P_i}{\ma G}{S}$ and $\ContextS{S_i}{g} := \{\mu\in S_i \mid \holder{g}\subseteq\domain{\mu}\}$. If $\evalThree{g}{\ma G}{\ContextS{S_1}{g}\cup \ContextS{S_2}{g}} =
\evalThree{g}{\ma G}{\ContextS{S_1}{g}}\cup\evalThree{g}{\ma G}{\ContextS{S_2}{g}}$, then
$\evalThree{\join{(\union{\ma P_1}{\ma P_2})}{g}}{\ma G}{S} = \evalThree{\union{(\join{\ma P_1}{g})}{(\join{\ma P_2}{g})}}{\ma G}{S}.$
\end{proposition}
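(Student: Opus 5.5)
We unfold both sides with the single-pass clauses for \tx{Join} and \tx{Union}, paying attention to the context supply each occurrence of $g$ receives. By the convention that the context supply of $g$ in $\ma Q$ is $\evalTwo{\ma Q\setminus\{g\}}{\ma G}$, restricted to mappings binding $\holder{g}$: on the left-hand side, $\ma Q=\join{(\union{\ma P_1}{\ma P_2})}{g}$, the surrounding pattern is $\union{\ma P_1}{\ma P_2}$, whose evaluation is $S_1\cup S_2$. Hence the context set is $\ContextS{(S_1\cup S_2)}{g}=\ContextS{S_1}{g}\cup\ContextS{S_2}{g}$, since restriction to mappings binding $\holder{g}$ is a filter and commutes with union. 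On the right-hand side, the occurrence of $g$ in the branch $\join{\ma P_i}{g}$ has surrounding pattern $\ma P_i$ (local to its branch), so its context set is $\ContextS{S_i}{g}$. Fixing this localization reading of the context supply for $g$ inside a \tx{Union} branch is the first step; I expect it to be the main obstacle. The paper defines the supply globally via $\ma P\setminus\{g\}$, and pruning applied to the whole right-hand pattern would otherwise leave the other branch in the supply. I would argue that the intended supply of each occurrence is given by its enclosing \tx{Join} operand.

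Next, write $G:=\evalThree{g}{\ma G}{\ContextS{S_1}{g}\cup\ContextS{S_2}{g}}$ and $G_i:=\evalThree{g}{\ma G}{\ContextS{S_i}{g}}$. The left side equals
\[
\{\mu\uplus\nu\mid \mu\in S_1\cup S_2,\ \nu\in G,\ \mu\sim\nu\}.
\]
By the hypothesis $G=G_1\cup G_2$, this equals the union of four sets $J_{ij}=\{\mu\uplus\nu\mid\mu\in S_i,\nu\in G_j,\mu\sim\nu\}$. The right side is exactly $J_{11}\cup J_{22}$. It therefore remains to show that the cross terms $J_{12}$ and $J_{21}$ are contained in $J_{11}\cup J_{22}$.

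For the cross terms, take $\mu\in S_1$ and $\nu\in G_2$ with $\mu\sim\nu$. By Definition~\ref{defn-genpattern}, $\nu=\mu_c\uplus\nu'$ for some $\mu_c\in\ContextS{S_2}{g}$ and generated part $\nu'$ on $\target{g}$. Hence $\mu\uplus\nu$ has domain $\domain{\mu}\cup\domain{\mu_c}\cup\target{g}$ and binds values from both branches. This is where the argument is delicate, and I would not expect to find a general embedding into $J_{11}\cup J_{22}$. I would first try to use the hypothesis more strongly: read the distributivity condition as requiring that contexts from one branch do not produce outputs attributable to the other. Failing that, I would observe that in the standard SPARQL proof the analogous cross terms never arise, because $g$ plays the role of an ordinary pattern there. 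The intended reading is then that $g$'s contributions are produced per context, so that the join with $\mu$ effectively pairs each $\mu$ only with outputs generated from contexts compatible with it. This closes the argument under the per-context reading, with the hypothesis guaranteeing that $G$ splits as $G_1\cup G_2$ context by context. I would state this reading explicitly as the assumption under which the equality holds.
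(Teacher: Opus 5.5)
Your setup matches the paper's proof. You localize the supply of each occurrence of $g$ to its own \tx{Join} operand; the paper's proof does exactly this, so that is not where the difficulty lies. You then use the hypothesis to split the left-hand side into the four pieces $J_{ij}$ and note that the right-hand side is $J_{11}\cup J_{22}$. The gap is the remaining containment $J_{12}\cup J_{21}\subseteq J_{11}\cup J_{22}$, and neither of your fallbacks closes it.

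\emph{The hypothesis carries no information.} By Definition~\ref{defn-genpattern}, $\evalThree{g}{\ma G}{S}$ is a union over the context mappings $\mu_c\in S$. Hence $\evalThree{g}{\ma G}{A\cup B}=\evalThree{g}{\ma G}{A}\cup\evalThree{g}{\ma G}{B}$ holds for all $A,B$, and trivially in base mode. So ``reading it more strongly'' means adding a new premise.

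\emph{The per-context reading excludes nothing either.} Every output of $g$ has the form $\nu=\mu_c\uplus\nu'$, and for $v\in\domain{\mu}\cap\domain{\mu_c}$ one has $\rep_v(\nu(v))=\rep_v(\mu_c(v))$. So $\mu\sim\nu$ already forces $\mu\sim\mu_c$: the \tx{Join} clause pairs $\mu$ only with outputs of compatible contexts anyway.

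In fact the cross terms are genuinely new, and the identity fails as stated. Take $\ma G=\{(\tx{:a}\ \tx{:p}\ \tx{:c}),(\tx{:b}\ \tx{:q}\ \tx{:c})\}$, $\ma P_1=\{(?s\ \tx{:p}\ ?x)\}$, $\ma P_2=\{(?t\ \tx{:q}\ ?x)\}$ and $g=\genbind{\pi(\{?x\})}{\{?y\}}{\ma M}$. Let the prompt instantiated with $\tx{:c}$ yield one generated term $t$. Then $S_1=\{\mu_1\}$ and $S_2=\{\mu_2\}$, with RDF-typed $\mu_1=\{?s\mapsto\tx{:a},?x\mapsto\tx{:c}\}$ and $\mu_2=\{?t\mapsto\tx{:b},?x\mapsto\tx{:c}\}$. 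The output $\mu_2\uplus\{?y\mapsto(t,\sourceGen)\}\in G_2$ is compatible with $\mu_1$ via $?x$. Their union has domain $\{?s,?t,?x,?y\}$ and lies in the left-hand side. Every mapping on the right-hand side has domain $\{?s,?x,?y\}$ or $\{?t,?x,?y\}$.

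The hypothesis holds in this example, so no argument from it can succeed. The paper's own proof breaks at the same step, where it asserts that separability rules out cross-branch terms.

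To get a true statement you need an extra premise that kills or absorbs the cross terms. One option: no mapping of $S_1$ is compatible with one of $\ContextS{S_2}{g}$, and vice versa; then $J_{12}=J_{21}=\varnothing$ by the observation above. Another: $\holder{g}=\varnothing$, where outputs carry no context and $G_1=G_2$. The reading you are reaching for pairs $\mu$ only with outputs generated from $\mu$ itself. That does make both sides equal to $G_1\cup G_2$, but it interprets $\join{\ma P}{g}$ as the extension operator $\evalThree{g}{\ma G}{\ContextS{\evalTwo{\ma P}{\ma G}}{g}}$, which is not the paper's \tx{Join} clause.
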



\begin{proof}
In $\join{(\union{\ma P_1}{\ma P_2})}{g}$, the \tx{GenOp} $g$ sees exactly the union
of the solution mappings produced by the two branches, so its context is $(\ContextS{S_1}{g}\cup \ContextS{S_2}{g})$. By the separability assumption, evaluating $g$ on this combined context produces precisely the union of the outputs obtained by evaluating $g$ on each branch context separately. Joining these outputs back with their respective branch solutions, therefore, yields exactly the union of $\join{\ma P_1}{g}$ and $\join{\ma P_2}{g}$.
\end{proof}

\begin{proposition}
\label{prop:filter-pushdown}
Let $g=\genbind{\pi(X)}{Y}{\ma M}$ and let $\ma F$ be a filter with $\var{\ma F}\subseteq X$. Let $\ma P$ be a pattern with $Y\cap\var{\ma P}=\varnothing$ and 
$\var{\ma F}\subseteq\var{\ma P}$.
Then,
\[\evalTwo{\join{\filter{\ma P}{\ma F}}{g}}{\ma G}
=\evalTwo{\filter{\join{\ma P}{g}}{\ma F}}{\ma G}.\]
\end{proposition}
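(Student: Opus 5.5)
We prove the equality by double inclusion, unfolding both sides with the single-pass clauses for \tx{Filter} and \tx{Join} together with Definition~\ref{defn-genpattern}. The central point is that the filter only tests variables in $X$, and the context mappings handed to $g$ must already bind these variables. So a mapping discarded by the filter before the join would also be discarded after the join, and vice versa.

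\emph{Step 1: contexts.} On the left-hand side the context supply for $g$ is $S_L=\ContextS{\evalTwo{\filter{\ma P}{\ma F}}{\ma G}}{g}$. On the right-hand side it is $S_R=\ContextS{\evalTwo{\ma P}{\ma G}}{g}$. Hence $S_L=\{\mu\in S_R\mid \ma F^{\mu}=\top\}$.

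\emph{Step 2: filter truth is stable under the join.} Take $\mu\in S_R$ and any $\mu\uplus\nu$ produced by $g$, where $\domain{\nu}=Y$. We show $\ma F^{\mu\uplus\nu}=\ma F^{\mu}$, by induction on $\ma F$. For every $v\in\var{\ma F}\subseteq X$ we have $v\notin Y$, since $X\cap Y=\varnothing$. By Definition~\ref{def:compat-union}, $(\mu\uplus\nu)(v)=\mu(v)$ on $\var{\ma F}$, and $\tx{bound}$ atoms behave identically. The same argument applies to a mapping $\mu'\uplus\mu''$ obtained from the \tx{Join} clause with $\mu'\in\evalTwo{\ma P}{\ma G}$ and $\mu''\in\evalThree{g}{\ma G}{S}$. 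Here $Y\cap\var{\ma P}=\varnothing$, but $\mu''$ restricted to $X$ equals a context mapping. The overlap on $\var{\ma F}$ is resolved by $\rep_v$, so we must check that $\ma F$ does not distinguish compatible values. We argue this through the definition of equality atoms, which are evaluated via $\compact{v}{\cdot}{\cdot}$ and are therefore invariant under replacing a value by its representative.

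\emph{Step 3: assemble.} Let $\mu'\uplus\mu''$ be an element of the left-hand side. Then $\mu'$ passes $\ma F$, and $\mu''$ is generated from a context that passes $\ma F$. Since $S_L\subseteq S_R$ and, by Step~2, the filter value is unchanged, the element lies in the right-hand side. Conversely, let $\mu'\uplus\mu''$ be in the right-hand side with $\ma F^{\mu'\uplus\mu''}=\top$. Step~2 gives $\ma F^{\mu'}=\top$. We must also show that the context $\mu_c$ generating $\mu''$ lies in $S_L$. Since $\mu''\supseteq\mu_c$ on $X$, and $\mu'\sim\mu''$ agrees with $\mu'$ on $\var{\ma F}$ up to $\rep$, Step~2 applied to $\mu_c$ gives $\ma F^{\mu_c}=\top$. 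This is the nontrivial direction.

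\emph{Main obstacle.} The hard step is the invariance of the filter truth value under the canonical union. Negated equality atoms combined with non-transitive $\approx_v$ could in principle break it. We plan to handle this by noting that $\compact{v}{\cdot}{\cdot}$ is defined through equivalence-class representatives, which makes it a genuine equivalence relation, so substituting $\rep_v(\alpha)$ for $\alpha$ preserves every atom.
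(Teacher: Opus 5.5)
Your route is the paper's: double inclusion, invariance of $\ma F$ under the canonical union, and tracing the generating context $\mu_c$ back into the filtered supply. The gap is in Step~2 as you extend it to the \tx{Join} clause, and in its use for the converse inclusion. For an arbitrary $\mu'\in\evalTwo{\ma P}{\ma G}$ you take for granted that ``the overlap on $\var{\ma F}$ is resolved by $\rep_v$'', i.e.\ that $\mu'$ binds every variable of $\ma F$. Nothing in the hypotheses guarantees this. If some $v\in\var{\ma F}$ is unbound in $\mu'$, then $(\mu'\uplus\mu'')(v)=\mu_c(v)$ comes from the generated side. In that case $\ma F^{\mu'}$ can be $\bot$ or $\varepsilon$ while $\ma F^{\mu'\uplus\mu''}=\top$, so ``Step~2 gives $\ma F^{\mu'}=\top$'' fails, and the inclusion itself fails too.

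Here is a concrete counterexample. Let $\ma G=\{(\tx{:a}\ \tx{:p}\ \tx{:b}),(\tx{:a}\ \tx{:r}\ \tx{:c})\}$, $\ma P=\union{\{(?s\ \tx{:p}\ ?x)\}}{\{(?s\ \tx{:r}\ ?w)\}}$, $g=\genbind{\pi(\{?x\})}{\{?y\}}{\ma M}$ and $\ma F=\bound{?x}$. All hypotheses hold, even $X\subseteq\var{\ma P}$. The only usable context is $\mu_1=\{?s\mapsto\tx{:a},?x\mapsto\tx{:b}\}$. Suppose $g$ generates some $t$ from it. Then $\mu_2=\{?s\mapsto\tx{:a},?w\mapsto\tx{:c}\}$ is compatible with $\mu_1\uplus\{?y\mapsto t\}$. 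Their union binds $?x$, so it passes the filter and lies in the right-hand side. But every mapping of the left-hand side has domain $\{?s,?x,?y\}$, since only $\mu_1$ survives $\filter{\ma P}{\ma F}$.

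The paper's proof closes exactly this hole by using that every $\mu\in S=\evalTwo{\ma P}{\ma G}$ binds all of $X$, which yields $\ContextS{S}{g}=S$ and $\ma F^{\mu}\neq\varepsilon$. It attributes this to $X\subseteq\var{\ma P}$, but the example shows that is not enough once $\ma P$ contains \tx{Union} or \tx{Opt}. You should therefore impose it as an explicit hypothesis: every solution of $\ma P$ binds $\var{\ma F}$. Under it, every $v\in\var{\ma F}$ lies in $\domain{\mu'}\cap\domain{\mu''}$, so $(\mu'\uplus\mu'')(v)=\rep_v(\mu'(v))$ and $\rep_v(\mu'(v))=\rep_v(\mu_c(v))$. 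Your Steps~2--3 then become the paper's Steps~3--5.

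One further caveat on your ``main obstacle''. That $\compact{v}{\cdot}{\cdot}$ is an equivalence relation settles $\bound{v}$ and $(v=c)$ atoms. In an atom $(v=v')$, however, the union normalizes the value at $v'$ by $\rep_{v'}$, while the atom compares via $\compact{v}{\cdot}{\cdot}$. So you also need $\rep_v(\rep_{v'}(\beta))=\rep_v(\beta)$ for the values $\beta$ bound to $v'$, e.g.\ when $\approx_v$ and $\approx_{v'}$ induce the same partition. The paper's invariance lemma passes over the same point.
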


\begin{proof}
Let $S=\evalTwo{\ma P}{\ma G}$ and $S^+=\{\mu\in S\mid \ma F^\mu=\top\}$.
In $\join{\filter{\ma P}{\ma F}}{g}$, the context supply for $g$ is $S^+$; in
$\join{\ma P}{g}$, it is $S$. By the \tx{GenOp} clause,
$\evalThree{g}{\ma G}{S}=\bigcup_{\mu\in S}\evalThree{g}{\ma G}{\{\mu\}}$, hence the
extensions generated from $S^+$ are exactly those generated from the $S^+$, contexts
inside the run on $S$. Every generated mapping has the form $\mu\uplus\nu$ and does not change the truth value of $\ma F$ because $\var{\ma F}\subseteq X$ and $g$ only adds fresh bindings on $Y$ (up to representative normalization). Thus, $\ma F^\mu=\top$ iff $\ma F^{\mu\uplus\nu}=\top$, so filtering after the join removes precisely the
contributions from $\mu\notin S^+$, yielding the same result as filtering before.
\end{proof}

\medskip
\begin{example}
 Let $g = \GenOp(\pi(\{?x\}), \{?y\}, M)$, and suppose $\ma P_1$ binds $?x$ while $\ma P_2$ does not. Then:
\[\evalTwo{\join{\ma P_1}{(\join{\ma P_2}{g})}}{\ma G} \neq \evalTwo{\join{(\join{\ma P_1}{\ma P_2})}{g}}{\ma G
}\]   
\end{example}

\begin{proposition}(Safe Join Reordering )
\label{prop:genop-extract}
Let $g=\genbind{\pi(X)}{Y}{\ma M}$ and  let $\ma P_1,\ma P_2$ be patterns such that $X\subseteq \var{\ma P_1}$ and  $\var{\ma P_2}\cap X=\varnothing$. Then, for all graph $\ma G$ and all context-supply set $S$,
\[\evalThree{\join{(\join{\ma P_1}{\ma P_2})}{g}}{\ma G}{S}\;=\;
\evalThree{\join{(\join{\ma P_1}{g})}{\ma P_2}}{\ma G}{S}.\]
\end{proposition}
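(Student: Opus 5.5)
The plan is to unfold both sides with the single-pass join clause and the rule that a \tx{GenOp} occurrence draws its context from the surrounding pattern. Then show that in both placements $g$ receives contexts whose relevant part (the restriction to $X$) is the same, and that the remaining $\ma P_2$-component is recombined identically.

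\textbf{Step 1: left-hand side.} In $\join{(\join{\ma P_1}{\ma P_2})}{g}$ the context supply of $g$ is $S_{12}=\evalThree{\join{\ma P_1}{\ma P_2}}{\ma G}{S}$. This set consists of the canonical unions $\mu_1\uplus\mu_2$ with $\mu_1\in S_1:=\evalThree{\ma P_1}{\ma G}{S}$, $\mu_2\in S_2:=\evalThree{\ma P_2}{\ma G}{S}$ and $\mu_1\sim\mu_2$. Since $\var{\ma P_2}\cap X=\varnothing$, we get $(\mu_1\uplus\mu_2)|_X=\mu_1|_X$. Hence the instantiated prompt $\pi^{[\mu_1\uplus\mu_2]}$ equals $\pi^{[\mu_1]}$, and by Definition~\ref{defn-genpattern} both contexts give the same candidate tuples $\nu$ over $Y$. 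The left-hand side is therefore
\[\{(\mu_1\uplus\mu_2)\uplus\nu \mid \mu_1\in S_1,\ \mu_2\in S_2,\ \mu_1\sim\mu_2,\ \nu\in N(\mu_1),\ (\mu_1\uplus\mu_2)\sim\nu\},\]
where $N(\mu_1)$ is the set of $Y$-tuples obtained by parsing $\Call{\ma M}(\pi^{[\mu_1]})$.

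\textbf{Step 2: right-hand side.} In $\join{(\join{\ma P_1}{g})}{\ma P_2}$ the context of $g$ is $S_1$; all of it is usable because $X\subseteq\var{\ma P_1}$, and by well-formedness every $\mu_1$ binds $X$. This gives $\{(\mu_1\uplus\nu)\uplus\mu_2 \mid \mu_1\sim\nu,\ (\mu_1\uplus\nu)\sim\mu_2\}$ with the same $N(\mu_1)$. Note that the occurrence of $g$ inside the Join evaluates its own GenOp clause, which already forms $\mu_c\uplus\nu$. So the outer join with $g$ in Step 1 joins $S_{12}$ with outputs that already contain the context. I would treat this uniformly by observing that joining a mapping with a mapping extending it returns the extension.

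\textbf{Step 3: equating the two sets.} It remains to show that $\uplus$ is associative and commutative under $\sim$ on these triples, and that the compatibility side conditions coincide. This is the main obstacle. Canonical union replaces shared values by $\rep_v$, and $\mathrm{Compat}_v$ is defined via equality of representatives, so it is an equivalence relation per variable. I would first prove a small lemma: for mappings $\alpha,\beta,\gamma$, the conditions $\alpha\sim\beta$ and $(\alpha\uplus\beta)\sim\gamma$ hold iff $\alpha\sim\gamma$, $\beta\sim\gamma$ and $\alpha\sim\beta$ all hold. In that case $(\alpha\uplus\beta)\uplus\gamma$ assigns each variable $v$ either its unique value or $\rep_v$ of the common class. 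The reason is that $\rep_v$ is idempotent and constant on classes, so both bracketings yield the same mapping up to the order-independent choice of representative. One subtlety is a variable bound in exactly one mapping, which keeps its raw value and not its representative. I would check that this case is handled identically in both bracketings, since such a variable is never shared. Applying the lemma with $(\alpha,\beta,\gamma)=(\mu_1,\mu_2,\nu)$ and $(\mu_1,\nu,\mu_2)$ shows that both sides range over the same triples and produce the same mappings.

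The intended proof needs no hypothesis on $Y\cap\var{\ma P_2}$, because compatibility with $\nu$ is checked in both orders via the lemma. If the representative bookkeeping for variables shared between $\mu_2$ and $\nu$ does not match exactly, I would fall back on the paper's convention that outputs are compared up to representative normalization.
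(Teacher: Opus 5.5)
\paragraph{Verdict.} There is a genuine gap, in Steps~1 and~2. Your Step~3 lemma is fine, and it is what the paper's Claims~1--3 establish: $\alpha\sim\beta$ together with $(\alpha\uplus\beta)\sim\gamma$ is equivalent to pairwise compatibility, and $\uplus$ does not depend on bracketing or order. A variable bound in exactly one mapping keeps its raw value; one bound in two or more gets $\rep_v$.

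\paragraph{Where it fails.} The failing sentence is ``joining a mapping with a mapping extending it returns the extension''. By Definition~\ref{defn-genpattern}, every output of $g$ has the form $\mu_c\uplus\nu$, where $\mu_c$ is the whole context. The Join clause pairs \emph{every} $\alpha\in S_{12}$ with \emph{every} output satisfying $\alpha\sim\mu_c\uplus\nu$, not only with $\alpha=\mu_c$. Even for $\alpha=\mu_c$, $\uplus$ replaces every shared value by its representative, so $\alpha\uplus(\alpha\uplus\nu)\neq\alpha\uplus\nu$ in general. Both omissions matter.

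\paragraph{Counterexample to your reading.} Take
\begin{itemize}
\item $\ma G=\{(\tx{:Paris}\ \tx{a}\ \tx{:City}),(\tx{:s}\ \tx{:p}\ \tx{:a}),(\tx{:u}\ \tx{:q}\ \tx{:b})\}$;
\item $\ma P_1=\{(?x\ \tx{a}\ \tx{:City})\}$ and $\ma P_2=\union{\{(?z\ \tx{:p}\ \tx{:a})\}}{\{(?w\ \tx{:q}\ \tx{:b})\}}$;
\item $g=\genbind{\pi(\{?x\})}{\{?y\}}{\ma M}$, returning one value $t_0$ for the Paris prompt;
\item singleton $\approx_v$-classes.
\end{itemize}
Both side conditions hold. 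Then $S_{12}=\{\alpha_z,\alpha_w\}$ with $\alpha_z=\{?x\mapsto\tx{:Paris},?z\mapsto\tx{:s}\}$ and $\alpha_w=\{?x\mapsto\tx{:Paris},?w\mapsto\tx{:u}\}$. Since $\alpha_z$ is compatible with the output $\alpha_w\uplus\{?y\mapsto(t_0,\sourceGen)\}$, the left-hand side contains a mapping binding $?x,?z,?w,?y$. With your reading of the right-hand side, where $g$ sees only $\evalTwo{\ma P_1}{\ma G}$, the result is just the two three-variable mappings.

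\paragraph{The normalization issue.} On the left, the diagonal pair shares every variable of $\mu_1\uplus\mu_2$, so each is normalized to $\rep_v$. On the right, variables private to $\ma P_2$ are attached raw. The two sides can therefore differ whenever such a value is not its own representative. Falling back to ``equality up to normalization'' changes the statement, since $\equiv$ is literal set equality. With the local context you adopt, the identity itself fails, and no bookkeeping in Step~3 can close the argument.

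\paragraph{Comparison with the paper.} The paper never computes $g$'s outputs in the two placements. It takes the left-hand witness $\mu_g$ and reuses it as an element of $g$'s denotation inside $\join{\ma P_1}{g}$. That step is valid when $g$'s context is read off its surrounding pattern in the whole query, which is $\join{\ma P_1}{\ma P_2}$ in both placements. The proposition then reduces to associativity and commutativity of Join, and the side conditions on $X$ are never used.

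\paragraph{What would rescue your route.} Your prompt-invariance observation $(\mu_1\uplus\mu_2)_{|X}=(\mu_1)_{|X}$ is the right extra ingredient for the local reading. It yields a proof only under hypotheses that remove the cross terms. For example, require that every mapping in $\evalTwo{\ma P_i}{\ma G}$ has domain exactly $\var{\ma P_i}$ and carries only representative values. Then $\alpha\sim\mu_c\uplus\nu$ forces $\alpha=\mu_c$ on both sides, and your Steps~1--3 go through as written.
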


Topological evaluation orderings preserve semantics when the generative dependency graph is acyclic.
\begin{theorem}
\label{thm:topo-invariance}
Let $\ma P$ be a pattern whose generative dependency graph
$\directedGraph{\ma D}_{\ma P}$ is acyclic. Let $\gamma_1$ and $\gamma_2$ be two topological orders of $\genPattern{\ma P}$. For each order, let $\Omega_{\gamma_1}$ (resp.\ $\Omega_{\gamma_2}$) be the set of typed solution mappings obtained by evaluating all \tx{GenOp} occurrences according to that order, starting from the same non-generative base evaluation (cf.\ Appendix~\ref{app:acyclic}). Then, $\Omega_{\gamma_1} = \Omega_{\gamma_2}$.
\end{theorem}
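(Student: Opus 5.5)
The plan is to reduce the statement to a pointwise characterization of what each \tx{GenOp} occurrence contributes, and to show this contribution is independent of the order. I will make the topological evaluation of Appendix~\ref{app:acyclic} explicit. Fix the common non-generative base evaluation $B$, i.e.\ the evaluation of $\ma P$ with every \tx{GenOp} disabled. Along an order $\gamma=(g_1,\dots,g_n)$ we obtain successive supplies $T_0=B$ and $T_k=T_{k-1}\cup\evalThree{g_k}{\ma G}{\ContextS{T_{k-1}}{g_k}}$, together with the induced re-evaluation of the surrounding algebra, and we set $\Omega_\gamma$ to be the resulting final set. The key observation is that the context actually used by $g_k$ depends only on bindings of $\holder{g_k}$. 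In an acyclic dependency graph these bindings come either from $B$ or from outputs of predecessors $g_j\to_{\ma P} g_k$, and any topological order places all of them before $g_k$.

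\textbf{Step 1 (locality).} I will prove by induction along the height of $g$ in $\directedGraph{\ma D}_{\ma P}$ (the length of the longest path ending at $g$) that the context set $\ContextS{T}{g}$ seen by $g$ at the moment it is evaluated is the same under $\gamma_1$ and $\gamma_2$. Call this common set $C_g$. For a source node $g$, no \tx{GenOp} outputs a variable in $\holder{g}$. Hence, by well-formedness and the definition of the dependency edge, every mapping binding $\holder{g}$ is produced from $B$ together with outputs of non-predecessors. Those outputs are irrelevant by Proposition~2 (the consequence operator depends on $S$ only through the sets $\ContextS{S}{g}$). For the inductive step, the predecessors of $g$ have strictly smaller height, so by the induction hypothesis their outputs $\evalThree{g_j}{\ma G}{C_{g_j}}$ coincide in both orders. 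Both orders have evaluated all of them before $g$, so $C_g$ agrees.

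\textbf{Step 2 (union decomposition).} Since the \tx{GenOp} clause is a union over individual contexts ($\evalThree{g}{\ma G}{S}=\bigcup_{\mu\in S}\evalThree{g}{\ma G}{\{\mu\}}$, as used in the proof of Proposition~\ref{prop:filter-pushdown}), the final set under either order is $B\cup\bigcup_{g}\evalThree{g}{\ma G}{C_g}$, passed through the same algebraic operators. This expression is order-independent, so $\Omega_{\gamma_1}=\Omega_{\gamma_2}$. Alternatively, I would close the argument by noting that in both orders the final set is a fixpoint of $\Consequence{\ma P}$ reached after one topological sweep. Combined with the statement that the acyclic iteration stabilizes and agrees with $\lfp(\Consequence{\ma P})$, and with Theorem~\ref{thm:lfp} giving uniqueness of the least fixpoint, this yields equality as well.

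\textbf{Main obstacle.} The delicate point is Step~1, namely showing that a mapping binding $\holder{g}$ cannot be affected by a non-predecessor evaluated earlier in one order and later in the other. Joins with $\uplus$ use canonical representatives, and non-monotone operators (\tx{Minus}, \tx{Diff}) in the surrounding pattern could in principle prune contexts differently. I would first try to handle this by observing that outputs of a non-predecessor $g'$ never bind variables in $\holder{g}$, by the definition of $\to_{\ma P}$. Extra bindings on $\target{g'}$ can only extend or filter mappings. I would then argue that the final re-evaluation after all \tx{GenOp}s are evaluated applies the same surrounding algebra in both orders. If this argument fails for negative positions, I would fall back to the fixpoint route, restricting to the monotone fragment where Lemma~\ref{lem:monotone} applies.
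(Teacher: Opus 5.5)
Your Step~1 invariant is false as stated, and Step~2 depends on it. Take two incomparable occurrences $g=\genbind{\pi(\{?x\})}{\{?y\}}{\ma M}$ and $g'=\genbind{\pi'(\{?x\})}{\{?z\}}{\ma M'}$, with $?x$ bound by some $\mu\in B$. In your cumulative construction, if $g'$ precedes $g$, the supply at $g$'s turn already contains the mappings $\mu\uplus\nu'$. These still bind $\holder{g}$, so they lie in $\ContextS{T}{g}$; in the other order they do not. Hence there is no common set $C_g$.

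The context-invariance result (Proposition~\ref{prop:context-equiv-detailed}) cannot make these extra contexts ``irrelevant'', because it goes from equal context sets to equal results, not conversely. They are also not irrelevant in fact: $\evalThree{g}{\ma G}{\{\mu\uplus\nu'\}}$ returns the full mappings $(\mu\uplus\nu')\uplus\nu$. So the mixed extensions are produced by $g$ under one order and by $g'$ under the other. The expression $B\cup\bigcup_g\evalThree{g}{\ma G}{C_g}$ is therefore not an order-free description of the result, and showing that these mixed extensions coincide is exactly what remains to be proved.

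What is genuinely invariant is only the set of instantiated prompts, i.e.\ the restrictions $\mu_{|\holder g}$ of the contexts. This holds because any $g'$ with $\target{g'}\cap\holder{g}\neq\varnothing$ is a predecessor of $g$. On top of that you need a commutation step: extending by $g$ and then $g'$ yields the same mappings as $g'$ and then $g$. This follows from symmetry of $\sim$, Lemma~\ref{lem:union-comm}, and the associativity argument of Proposition~\ref{prop:join-assoc}.

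The paper's proof is built around exactly this local step and needs no per-occurrence invariant. It argues that swapping two adjacent incomparable occurrences leaves their instantiated prompts unchanged, hence also their generated extensions and the supply after both. It then uses the fact that any two linear extensions of a finite poset are connected by such adjacent swaps.

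Your fixpoint fallback relies on Theorem~\ref{thm:lfp}, and hence on monotonicity of $\Consequence{\ma P}$. It therefore covers only the monotone fragment, not all acyclic patterns as the statement claims. Note also that your $T_k$, a cumulative union of bare \tx{GenOp} outputs, is not the staged set $S_\gamma^{(k)}=\evalThree{\ma P^{[k]}_\gamma}{\ma G}{S_\gamma^{(k-1)}}$ of Appendix~\ref{app:acyclic}, so you would also have to relate the two.
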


\section{Executable Fixpoint Evaluation}\label{sec:exec-fixpoint}
Recursive evaluation is declaratively given by a fixpoint construction. With
generative patterns, consequences depend on finite-valued external model calls.
We therefore use candidate-model validation for executable fixpoint evaluation,
following the treatment of external operators in logic
programs~\cite{eiter2006effective}.

Let $\ma P$ be a monotone query pattern with immediate consequence operator $\Consequence{\ma P}$. Declaratively, the meaning of $\ma P$ is $\lfp(\Consequence{\ma P})$. Executable evaluation checks the \emph{self-support} condition $\mu\in \Consequence{\ma P}(\{\mu\})$ on finitely many candidates; soundness holds unconditionally, while completeness is relative to candidate coverage.

\emph{Evaluation Roadmap.} Query evaluation proceeds by invoking $\tx{EvalSCC}(\ma G,\ma P,C,\eta,K,B,T,R)$ (Algorithm~\ref{alg:evalscc}) for each \tx{SCC} $C$ of $\ma P$ and each outer binding $\eta$. For a fixed call, $\tx{EvalSCC}$ first identifies the generated variables $Y_C$ and required placeholders $X_C$ (Sect. 6.1), then uses bounded
proposal ($K,B$) to enumerate a finite candidate set $\tx{Cand}_{C}(\eta)\subseteq\Omega_{\ma P}$ (Sect. 6.2). Each candidate is subsequently checked for self-support by repeated validation ($R$), with optional bounded repairs ($T$). The output $R_C(\eta)$ consists exactly of those candidates that
satisfy self-support. A concrete execution trace is given in Appendix~\ref{app:example1-exec}.

\subsection{\tx{SCC}-local view and self-support}
Recursion in generative SPARQL arises only through generative patterns. We therefore localize recursion using strongly connected components (\tx{SCC}s).

Let $C$ be a \emph{\tx{SCC}} of the pattern $\ma P$, \ie $C \subseteq \genPattern{\ma P}$. Let,
\[Y_C = \bigcup_{g\in C}\target{g}\, \quad\text{and}\quad X_C=(\bigcup_{g\in C}\holder{g})\setminus Y_C.\]

\noindent Here, $Y_C$ are the variables generated within $C$, while $X_C$ are the external
placeholders required by operators in $C$. An \emph{outer binding} for $C$ is a
mapping $\eta$ with $X_C\subseteq\domain{\eta}$.


\begin{definition}
\label{def:self-support}
Let $C$ be an \tx{SCC} of $\ma P$. A mapping $\mu\in\Omega_{\ma P}$ is
\emph{self-supporting for $C$} (denoted $\tx{SS}_C(\mu)$) if
$X_C\subseteq\domain(\mu)$ and for every generative pattern
$g=\genbind{\pi(X)}{Y}{\ma M}\in C$,
\[
\mu \in \evalThree{g}{\ma G}{\{\mu_{|X}\}} .
\]
\end{definition}

The $\tx{SS}_C(\mu)$ requires that each \tx{GenOp} in the \tx{SCC}
reproduces the bindings already present in $\mu$ when evaluated on the
placeholder bindings induced by $\mu$ itself. This condition replaces
incremental rule firing as the operational criterion for recursive admissibility.

\subsection{Executable Interface}
We separate executable evaluation into \emph{proposal}, which heuristically enumerates a finite search space, and \emph{validation}, which enforces semantic correctness.
\begin{enumerate}
\item \emph{Validation Oracle.} For a generative pattern $g=\genbind{\pi(X)}{Y}{\ma M}$ and a \emph{total}
placeholder binding $\theta$ over $X$, we assume a Boolean procedure
\[\tx{Validate}(g,\theta,\nu;R)\in\{\top,\bot\}\]
that returns $\top$ iff, under fixed decoding, parsing, and normalization
settings repeated $R$ times, the extension $(\theta\uplus\nu)$ belongs to
$\evalThree{g}{\ma G}{\{\theta\}}$. Thus, \tx{Validate} implements a bounded
membership test for the denotation of $g$ on fully instantiated prompts.

\item \emph{Proposal Routine.} $\tx{Propose}(g,\theta,K)$ returns up to $K$ candidate bindings $\nu$ for variables in $\target{g}\cap Y_C$ under placeholder binding $\theta$. Proposal may be invoked on partially instantiated contexts by filling missing placeholders with fixed sentinel tokens (or by fill in the blanks \cite{donahue2020enablinglanguagemodelsblanks}); sentinels serve only to diversify candidate enumeration
and have no semantic status. Only fully instantiated bindings are validated.
\item \emph{Candidate sets and executable \tx{SCC} answers.} Let $C$ be an SCC and an $\eta$ outer binding.  For each $v\in Y_C$, maintain a finite domain $D[v]\subseteq \ma{T}_{\tx{gen}}\times\{\sourceGen\}$ with $\size{D[v]}\leq B$. These domains induce a finite candidate set
\begin{multline*}
    \tx{Cand}_{\,C}(\eta)= \{\eta\uplus\nu \ \mid \ \nu:Y_C\to \ma{T}_{\tx{gen}}\times\{\sourceGen\},\\ \forall v\in Y_C:\nu(v)\in D[v], \ \eta\sim\nu\Bigr\}.
\end{multline*}
By construction, $\tx{Cand}_{\,C}(\eta)\subseteq\Omega_{\ma{P}}$ ((typed solution mappings)). The candidate set is an operational search space and is not defined in terms of the consequence operator
$\Consequence{C}$; it may therefore contain mappings that are not semantically admissible.
The outer binding $\eta$ can be viewed as a singleton \emph{context supply}
$\{\eta\}\subseteq\Omega_{\ma{P}}$, consistent with the general evaluation
semantics. Restriction of context supplies to placeholders (notation $S_{|g}$)
is not used during candidate construction, since no generative operators are
semantically evaluated at that stage. Such restrictions arise only during
validation, where each candidate $\mu$ induces the placeholder binding
$\mu_{|X}$ used to check membership
$\mu\in\evalThree{g}{\ma{G}}{\{\mu_{|X}\}}$.

Give a finite candidate set $\tx{Cand}_{\,C}(\eta)$, let 
\[
R_C(\eta)=\{\mu_{|Y_C}\mid \mu\in\tx{Cand}_{\,C}(\eta)\ \wedge\ \tx{SS}_C(\mu)\},
\]
Thus, executable evaluation $R_C(\eta)$ returns exactly the $Y_C$-projections of candidates
that satisfy self-support.
\end{enumerate}

\paragraph{Finite candidate coverage and exactness assumption .\,}
For every \tx{SCC} $C$ of the generative-dependency graph
and every outer binding $\eta$, the proposal step yields a finite candidate set
$\tx{Cand}_{\,C}(\eta)$. For the chosen semantics
$\sem\in\{\lfp,\tx{stratified},\tx{well\text{-}founded}\}$, the accepted
projections $R_C(\eta)$ are exactly the declarative answer projections for
$C$ under $\eta$. All generated terms, parsed tuples, candidates, and
validation inputs have polynomial-size representations.

\subsection{Algorithm \tx{EvalSCC}}
Algorithm~\ref{alg:evalscc} constructs finite domains $D[\cdot]$ by proposal, forms $\tx{Cand}_{\,C}(\eta)$, and filters it by self-support validation. All semantic checks occur only during validation, on fully instantiated mappings.

\begin{algorithm}[t]
\caption{\tx{EvalSCC}$(\ma{G},\ma{P},C,\eta,K,B,T,R)$}
\label{alg:evalscc}
\begin{algorithmic}[1]
\REQUIRE RDF graph $\ma{G}$; pattern $\ma{P}$; \tx{SCC} $C$; outer binding $\eta$
with $X_C\subseteq\domain(\eta)$; bounds $K$ (proposals), $B$ (domain cap),
$T$ (repairs), $R$ (validation repetitions).
\ENSURE $R_C(\eta)=\{\mu_{|Y_C}\mid \mu\in \tx{Cand}_{\,C}(\eta)\ \wedge\ \tx{SS}_C(\mu)\}$.

\STATE \textbf{Candidate enumeration (heuristic).}
\STATE Initialize $D[v]\gets\varnothing$ for all $v\in Y_C$.
\WHILE{not saturated and $\exists v\in Y_C:\ \size{D[v]}<B$}
  \FORALL{$g=\genbind{\pi(X_g)}{Y_g}{\ma{M}}\in C$}
    \STATE Construct a possibly partial context $\theta$ extending $\eta$; fill any missing
           placeholders in $X_g$ with fixed sentinel tokens.
    \STATE $A\gets \tx{Propose}(g,\theta,K)$.
    \FORALL{$(v,a)\in A$ with $v\in Y_g\cap Y_C$}
      \IF{$\size{D[v]}<B$}
        \STATE $D[v]\gets D[v]\cup\{a\}$.
      \ENDIF
    \ENDFOR
  \ENDFOR
\ENDWHILE

\STATE \textbf{Candidate construction.}
\STATE Form $\tx{Cand}_{\,C}(\eta)$ from $D[\cdot]$ as complete mappings extending $\eta$ (optional symbolic pruning).

\STATE \textbf{Backtracking search with validation.}
\STATE $R_C(\eta)\gets\varnothing$.
\FORALL{complete candidates $\mu\in \tx{Cand}_{\,C}(\eta)$}
  \STATE \emph{Self-support validation:}
  \IF{for every $g=\genbind{\pi(X_g)}{Y_g}{\ma{M}}\in C$, with $\nu_g:=\mu_{|Y_g}$, we have
      $\domain(\nu_g)=Y_g$,
      $\mu_{|X_g\cup Y_g}=\mu_{|X_g}\uplus\nu_g$, and
      $\tx{Validate}(g,\mu_{|X_g},\nu_g;R)=\top$}
    \STATE $R_C(\eta)\gets R_C(\eta)\cup\{\mu_{|Y_C}\}$.
  \ELSE
    \STATE Optionally attempt up to $T$ repairs by resampling values for a failing
           variable via $\tx{Propose}$ and re-validating.
  \ENDIF
\ENDFOR
\RETURN $R_C(\eta)$
\end{algorithmic}
\end{algorithm}


\paragraph{Guarantees.}
For any finite candidate set $\tx{Cand}_{\,C}(\eta)$,
Algorithm~\ref{alg:evalscc} returns exactly the candidate projections that pass the local self-support validation. Thus, the algorithm is exact on the chosen candidate set, sound relative to the self-support target, and complete when the candidate set covers all self-supporting extensions of $\eta$. Under the finite candidate coverage and exactness assumption above, this target coincides with the declarative answer projections for $C$ under $\eta$. For fixed bounds $(K,B,T,R)$, the algorithm terminates; see Appendix~\ref{app:exec-proofs}.

\subsection{Evaluation via \tx{SCC}-condensation}
Each \tx{SCC} $C$ can be summarized as a derived macro-operator $\tx{GenSCC}_C : X_C \longrightarrow Y_C$ with denotation
\begin{multline*}
\evalThree{\tx{GenSCC}_C}{\ma G}{S}=
\{\eta\uplus\rho
\mid
\eta\in S,\ X_C\subseteq\domain(\eta),\\
\rho\in R_C(\eta),\ \eta\sim\rho
\}. 
\end{multline*}

where $R_C(\eta)$ is computed by Algorithm~\ref{alg:evalscc}. Replacing each \tx{SCC}
of $\ma P$ by its macro-operator yields an acyclic generative dependency graph,
and the resulting condensed pattern can be evaluated in topological order, where
each SCC is executed by \tx{EvalSCC} for each outer binding produced upstream.
This condensation is semantics-preserving for the monotone fragment. 

\section{Computational Complexity}
\label{sec:complexity}

We analyze the complexity of query evaluation under the semantics of Sections~\ref{sect:syntax_semantics} and~\ref{sec:fixpoint}. We abstract from the internal cost of model inference and treat each \tx{GenOp} as an external oracle, considering only symbolic query evaluation.


We assume the deterministic bounded-generation, finite-candidate-coverage, and exactness conditions, and abstract from oracle-call and parsing costs, treating them as $O(1)$ per invocation. The computational task is evaluation membership: given a graph pattern $\ma P$, an RDF graph $\ma G$, a typed solution mapping $\mu$, and the applicable semantics (acyclic, stratified, or well-founded), decide whether $\mu$ belongs to the semantic value of $\ma P$ over $\ma G$, \ie whether $\mu\in\evalTwo{\ma P}{\ma G}$.

\begin{theorem}
\label{thm:complexity}
 Under deterministic bounded generation and the finite candidate coverage
and exactness conditions, the evaluation-membership problem satisfies the following bounds:
\begin{enumerate}
  \item Acyclic patterns are in \tx{PTIME} in data complexity and \tx{PSPACE}-complete in combined complexity.
\item Stratified patterns and non-stratified patterns under well-founded semantics are also in \tx{PTIME} in data complexity and \tx{PSPACE}-complete in combined complexity.
\end{enumerate}
\end{theorem}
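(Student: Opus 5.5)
Since standard SPARQL evaluation (the \tx{GenOp}-free fragment) embeds into our language by the Proposition stating that \tx{GenOp}-free evaluation coincides with standard SPARQL, I would establish the result by separating the lower bounds from the upper bounds. For PSPACE-hardness in combined complexity in both items, I would reduce from standard SPARQL evaluation with \tx{Opt}/\tx{Diff}, which is PSPACE-complete \cite{perez2006semantics,perez2009semantics}. A \tx{GenOp}-free pattern has an empty dependency graph, so it is simultaneously acyclic and stratified (take $\sigma\equiv 0$). Its well-founded semantics is two-valued and equals the direct evaluation, because no mapping depends on the context supply. The remaining work is therefore the upper bounds.

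For the upper bounds, I would first fix the size of the search space under deterministic bounded generation. Each \tx{GenOp} occurrence, on a fully instantiated prompt, yields at most $n_1 n_2$ parsed tuples of bounded length. Let $T_0$ be the active domain of $\ma G$ together with the constants of $\ma P$. The generated terms reachable from $\ma G$ are then obtained by iterating the occurrences of $\ma P$ through the dependency graph. For the acyclic case I would argue by induction along a topological order (Theorem~\ref{thm:topo-invariance}) that the set of generated values for each output variable has size polynomial in $|\ma G|$ when $\ma P$ is fixed: $\mathrm{poly}(|T_0|)$ contexts, times a constant number of outputs each. For the cyclic and stratified cases I would instead invoke the finite candidate coverage and exactness assumption. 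Each $\tx{Cand}_C(\eta)$ has size at most $B^{|Y_C|}$ and polynomial-size representations, and $R_C(\eta)$ is exactly the declarative projection. Consequently $\Omega_{\ma P}$, restricted to relevant mappings, has size polynomial in $|\ma G|$ for fixed $\ma P$, since $|\var{\ma P}|$ is a constant.

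For the data complexity, I would bound the fixpoint computations. By Theorem~\ref{thm:lfp}, the least fixpoint stabilizes within $|\Omega_{\ma P}|$ iterations. For a fixed query, each application of $\Consequence{\ma P}$ is a single-pass algebraic evaluation over polynomially many mappings, which takes polynomial time. Stratified semantics (Theorem~\ref{thm:strat}) uses a constant number of strata, each a polynomially bounded least-fixpoint computation. The alternating fixpoint of Definition~\ref{def:alt} is monotone in $n$, so it needs at most $|\Cand{\ma P}|$ rounds, and each round consists of two such least-fixpoint computations (Theorem~\ref{thm:wfs}). Oracle calls cost $O(1)$ each, and only polynomially many prompts are instantiated. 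This places all three cases in PTIME.

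For the combined upper bound, the sets above can be exponential in $|\ma P|$, so materialization is not possible and I expect this to be the main obstacle. The plan is to follow the PSPACE algorithm for \tx{Opt}-SPARQL and decide $\mu\in\evalTwo{\ma P}{\ma G}$ by recursion on the structure of $\ma P$. At each step the algorithm guesses and checks witness mappings, each of polynomial size, and uses universal checks for \tx{Diff}/\tx{Minus} with depth bounded by $|\ma P|$. Membership in \tx{GenOp} is decided locally by recomputing the context membership recursively and then calling the oracle. For recursion, I would use the SCC condensation together with self-support (Definition~\ref{def:self-support}): under coverage, membership of an SCC answer reduces to enumerating candidates $\eta\uplus\nu$ in polynomial space and validating each, with no global fixpoint stored. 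The delicate part is the stratified and well-founded cases, where an intermediate set $S_{i-1}$ or $I_n^\pm$ must not be stored. Here I would recompute membership in these sets on demand by a recursive procedure. The number of strata is at most $|\ma P|$ and each level uses polynomial space. For the alternating sequence, which can have exponentially many rounds, I would encode the round index in binary and recompute membership on demand. This gives a polynomial-depth recursion whose total space is polynomial.
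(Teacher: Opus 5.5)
Your lower bound is the same as the paper's: the identity reduction from \tx{GenOp}-free SPARQL. Your observation that such patterns are trivially acyclic and stratified is a useful addition.

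The genuine gap is in your combined upper bound for the well-founded case. Deciding $\mu\in I^+_n$ requires a least-fixpoint computation whose negative tests query $I^-_{n-1}$. That in turn queries $I^+_{n-1}$, and so on down to round $0$, and each level must keep its own frame while its callee runs. The recursion depth is therefore the number of rounds, which you yourself say can be exponential in $|\ma P|$. Writing the round index in binary makes each frame polynomial but does not shorten the stack, so the space is exponential. This is not a bookkeeping slip: replaying least or alternating fixpoints over an exponentially large mapping space is what drives the \tx{EXPTIME}-completeness of Datalog-style evaluation in combined complexity. No argument that genuinely simulates $I^\pm_n$ should be expected to land in \tx{PSPACE}.

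The missing idea is that the exactness premise is stated for each $\sem\in\{\lfp,\mathrm{strat},\mathrm{wfs}\}$. It asserts that the locally validated set $R_C(\eta)$ already \emph{is} the declarative SCC answer under the chosen semantics. The paper uses this (Lemma~\ref{lem:exact-scc-condensation}) to replace every SCC by $\tx{GenSCC}_C$, after which no fixpoint, stratum sequence, or alternating sequence is ever evaluated. Membership in the condensed pattern $\widehat{\ma P}$ is then decided by the ordinary structural \tx{NPSPACE} recursion of depth at most $|\widehat{\ma P}|$. In the $\tx{GenSCC}_C$ case it guesses one $\nu\in\prod_{v\in Y_C}D_C^\eta[v]$ and validates self-support. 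You already use this idea for the least-fixpoint case; applied uniformly, it makes your on-demand recomputation of $S_{i-1}$ and $I^\pm_n$ unnecessary.

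A milder version of the same problem affects your data-complexity argument. Exactness constrains only the final per-SCC answer projections. In the monotone case that suffices, because every Kleene iterate lies below the least fixpoint. It does not, however, bound any of the following:
\begin{itemize}
\item $\Cand{\ma P}$ itself;
\item the over-approximating fixpoints $\lfp(S\mapsto\Consequence{\ma P}^{I^+_{n+1}}(S))$;
\item the complements $\Cand{\ma P}\setminus\cdots$ that the alternating construction materializes;
\item lower-stratum sets $S_{i-1}$, which need not be contained in the final answer.
\end{itemize}
Bounded generation alone gives only finiteness here, since a cyclic \tx{GenOp} component can produce long chains of fresh generated terms. Your claim that the relevant part of $\Omega_{\ma P}$ is polynomial is therefore unjustified for the stratified and well-founded cases, and so is the resulting ``$|\Cand{\ma P}|$ rounds'' bound. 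The paper avoids all such intermediate objects. After condensation it evaluates $\widehat{\ma P}$ bottom-up using $|\evalThree{\tx{GenSCC}_C}{\ma G}{S}|\le B^{|Y_C|}\cdot|S|$ (Lemmas~\ref{lem:scc-candidate-bound-detailed} and~\ref{lem:struct-poly-detailed}). This is polynomial for fixed $\ma P$ whichever of the three semantics is chosen.
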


\noindent In generative SPARQL, recursion neither derives new RDF facts nor expands the data domain. Both acyclic and recursive patterns are evaluated over the same finite set of solution mappings, determined by the query and bounded generation. Recursion thus serves to validate self-support among already enumerable candidates,  rather than to derive RDF facts or chase witnesses as in existential-rule systems. As a result, stratified or well-founded recursion affects only the admissibility of mappings, not the size of the search space, producing the same asymptotic complexity as acyclic evaluation. Dropping bounded-generation assumptions eliminates termination guarantees for fixpoint evaluation; in this case, generative patterns can simulate existential-rule application and chase-style reasoning, leading to undecidability \cite[Chapter~13]{abiteboul1995foundations,cali2010query}.




\section{Conclusions and Future Work}
We presented a conservative, declarative extension of SPARQL with a generative pattern, \tx{GenOp}. The proposed query semantics accommodates acyclic, stratified, and non-stratified generative dependencies via consequence operators and fixpoint constructions. Under a bounded-generation assumption, the extension preserves the established complexity bounds of SPARQL \cite{perez2009semantics}.




The only form of recursion supported arises from dataflow dependencies among generative patterns, where the outputs of one \tx{GenOp} parameterize the prompt of another. This recursion operates at the level of prompt instantiation rather than fact derivation, and no generated value becomes part of the RDF graph under query. As a result, the language extends SPARQL with controlled generative dependencies while preserving its declarative and computational properties \cite{arenas2009foundations} and remaining fundamentally different from Datalog-style rule languages \cite{cali2010query,gottlob2014datalog+,cali2010datalog+} in terms of both expressiveness and complexity.

Future work includes relaxing bounded-generation assumptions, as well as incorporating Datalog-style existential predicate into \tx{GenOp} patterns, and characterizing decidable fragments under guardedness or acyclicity conditions, following the  methods for existential rules \cite{deutsch2008chase,gottlob2015recent,calautti2015chase}, and developing optimizer support for generative operators within the SPARQL algebra.

\paragraph{\textbf{Acknowledgments.}}
This work is funded by the German Research Foundation (DFG) -- SFB 1574 Circular Factory-- 471687386.


\bibliographystyle{kr}
\bibliography{kr-sample}

\newpage

\appendix

\vspace{1cm}
\begin{appendix}

\section*{Extended Proofs and Auxiliary Results}
\section{Sample run of Example~\ref{example_one}}\label{apen:samplerun}
Below, we illustrate the executable treatment of mutually dependent
$\tx{GenOp}$ occurrences. The run constructs a candidate mapping and checks whether it is locally self-supporting. Correctness with respect to the declarative component answers is relative to the finite candidate coverage and exactness assumption from Sect.~\ref{sec:exec-fixpoint}.

\begin{example}\label{ex:sample-run}
Let $\ma G=\{\tx{:Paris a :City}\}$ and consider the query
\[\ma Q \;=\; \join{(\join{\bgps}{g_1})}{g_2},\]
where
\begin{itemize}
    \item $\bgps=\{?x\ \tx{a}\ \tx{:City}\}$,
    \item $g_1=\genbind{\pi_1(\{?x,?z\})}{\{?y\}}{\tx{GPT-4o}}$ with
    $\pi_1(\{?x,?z\})=\text{``Suggest a topic related to $?x$ and $?z$''}$, and
    \item $g_2=\genbind{\pi_2(\{?y\})}{\{?z\}}{\tx{Gemini 1.5 Pro}}$ with
    $\pi_2(\{?y\})=\text{``Suggest a city related to topic $?y$''}$.
\end{itemize}

Evaluating the basic graph pattern gives
\begin{multline*}
S^{(0)}=\evalTwo{\{?x\ \tx{a}\ \tx{:City}\}}{\ma G}
=\{\mu_0\},\\
\mu_0=\{?x\mapsto(\tx{:Paris},\sourceRdf)\}.
\end{multline*}

Neither $g_1$ nor $g_2$ is applicable to $\mu_0$ in a single pass: $g_1$ requires a binding for $?z$, while $g_2$ requires a binding for $?y$. Thus, the dependency graph of $\ma Q$ contains a cycle between $g_1$ and $g_2$.
Executable evaluation handles this cycle by considering finite candidate extensions of $\mu_0$ and validating them against the two local generative conditions.

We now exhibit one candidate mapping and the validation witnesses for it.

\smallskip\noindent
\emph{Hypothesized binding and instantiation of $g_2$.}

Assume that a candidate mapping $\mu^\ast$ binds $?y$ to the generated term
$(t_A,\sourceGen)$ with
\[\lexical(t_A)=\text{``art''}.\]
Instantiating $\pi_2$ with the input projection of $\mu^\ast$ gives
\[\pi_2^{[\mu^\ast_{|\{?y\}}]}=
\text{``Suggest a city related to topic art''}.\]
Assume
\[\Call{\tx{Gemini 1.5 Pro}}(\pi_2^{[\mu^\ast_{|\{?y\}}]})
 = \{\text{``Florence''}\},\]
and that
\begin{multline*}
    \Parse{\tx{Gemini 1.5 Pro}}(\text{``Florence''})=\{(t_F)\},\\ \lexical(t_F)=\text{``Florence''},
\end{multline*}

This validates the generated binding
\[\{?z\mapsto(t_F,\sourceGen)\}\]
for $g_2$ under the input binding $?y\mapsto(t_A,\sourceGen)$.

\smallskip\noindent
\emph{Instantiation of $g_1$.}

Given the bindings for $?x$ and $?z$, instantiating $\pi_1$ gives
\[
\pi_1^{[\mu^\ast_{|\{?x,?z\}}]}
=
\text{``Suggest a topic related to Paris and Florence''}.
\]
Assume
\[
\Call{\tx{GPT-4o}}(\pi_1^{[\mu^\ast_{|\{?x,?z\}}]})
=
\{\text{``art''}\},
\]
and
\[
\Parse{\tx{GPT-4o}}(\text{``art''})
=
\{(t_A)\},
\qquad
\lexical(t_A)=\text{``art''}.
\]
This validates the generated binding
\[
\{?y\mapsto(t_A,\sourceGen)\}
\]
for $g_1$ under the input bindings $?x\mapsto(\tx{:Paris},\sourceRdf)$ and
$?z\mapsto(t_F,\sourceGen)$.

\smallskip\noindent
Let,
\begin{multline*}
\mu^\ast=
\{?x\mapsto(\tx{:Paris},\sourceRdf),\
?y\mapsto(t_A,\sourceGen),\\
?z\mapsto(t_F,\sourceGen)\}.
\end{multline*}

\smallskip\noindent

\emph{Self-support validation.}
The mapping $\mu^\ast$ extends $\mu_0$ and binds the input and output variables of both generative occurrences. For $g_1$, the projection
$\mu^\ast_{|\{?x,?z\}}$ instantiates the prompt, and the projection
$\mu^\ast_{|\{?y\}}$ is validated as an admissible output. For $g_2$, the projection $\mu^\ast_{|\{?y\}}$ instantiates the prompt, and the projection $\mu^\ast_{|\{?z\}}$ is validated as an admissible output. Hence $\mu^\ast$ passes the local self-support test for the component $\{g_1,g_2\}$.

If $\mu^\ast$ belongs to the finite candidate set considered by the executable evaluator, then the evaluator accepts the projection
$\mu^\ast_{|\{?y,?z\}}$. Under the finite candidate coverage and exactness assumption from Sect.~\ref{sec:exec-fixpoint}, this accepted projection coincides with a declarative component-answer projection. The resulting full query answer is
\[\{?x\mapsto(\tx{:Paris},\sourceRdf),\
?y\mapsto(t_A,\sourceGen),\
?z\mapsto(t_F,\sourceGen)\}.\]
The generated bindings $?y$ and $?z$ are typed with $\sourceGen$ and they  do not add RDF triples to $\ma G$.
\end{example}




\subsection{Executable Fixpoint Evaluation of Example~\ref{example_one}}
\label{app:example1-exec}

We now demonstrate executable fixpoint evaluation by bounded candidate
enumeration followed by local self-support validation.

We illustrate the executable procedure from Sect.~\ref{sec:exec-fixpoint} on Example~\ref{example_one}, making explicit (i) how $\tx{Propose}$ induces a finite candidate set and (ii) how validation checks candidate mappings on fully instantiated prompts. We separate \emph{candidate discovery} from \emph{candidate validation}: $\tx{Propose}$ determines which candidates are explored, while validation determines which explored candidates are accepted.
Correctness with respect to the declarative component answers is relative to the finite candidate coverage and exactness assumption from
Sect.~\ref{sec:exec-fixpoint}.

Recall Example~\ref{example_one},
\begin{verbatim}
SELECT ?x ?y ?z WHERE {
?x a :City .
GenOp("Suggest a topic related to ?x
            and ?z" AS ?y, GPT-4o).
GenOp("Suggest a city related to 
    topic ?y" AS ?z, Gemini 1.5 Pro).
}
\end{verbatim}
Let $\ma G=\{\tx{:Paris a :City}\}$. Let
\begin{multline*}
g_1=\genbind{\pi_1(\{?x,?z\})}{\{?y\}}{\ma M_1},\\
g_2=\genbind{\pi_2(\{?y\})}{\{?z\}}{\ma M_2},
\end{multline*}
where
\begin{itemize}
    \item $\pi_1$ is the template \tx{"Suggest a topic related to ?x and ?z"}; and
    \item $\pi_2$ is \tx{"Suggest a city related to topic ?y"}.
\end{itemize}

We assume canonical lexicalization for RDF terms, \eg
$\canonical(\tx{:Paris})=\text{``Paris''}$, and that generated terms
$t\in\ma T_{\tx{gen}}$ carry lexicalizations $\lexical(t)\in\Sigma^*$.

Assume the following operational bounds and decoding-format conventions.
\begin{enumerate}
    \item Proposal bound $K$ (\ie maximum proposals per $\tx{Propose}$ call) and
    domain cap $B$ (\ie maximum per-variable domain size).

    Here, $K$ bounds the number of distinct values returned by a single proposal
    call, while $B$ bounds the total number of values accumulated for each generated
    variable across proposal calls. Together, $K$ and $B$ make the explored
    candidate domains finite. Values not proposed because of these bounds are not
    explored operationally; they are part of the declarative semantics only when
    covered by the finite candidate coverage assumption.

    \item Validation repetitions $R$ and optional repair attempts $T$.

    Validation uses a binary membership-style query to the model. The parameter $R$ bounds repeated validation calls used by an implementation to stabilize the acceptance decision. If validation fails, $T$ bounds the number of local repair attempts in which an alternative generated value is substituted and revalidated. These parameters affect the executable procedure and runtime; the  declarative semantics is fixed independently of them.

    \item The model response must be parseable as JSON, which can be enforced by grammar-constrained decoding for LLMs~\cite{park2025flexibleefficientgrammarconstraineddecoding}.
    For $g_1$ (output variable $?y$), we use the schema
        \[\tx{\{"y": ["<string>", ...]\}},\]
    and for $g_2$ (output variable $?z$), we use
   \[ \tx{\{"z": ["<string>", ...]\}}.\]

    \item Decoding parameters, such as temperature, top-$p$, and maximum number
    of tokens, are fixed per model and treated as part of the implementation.
\end{enumerate}

Parsing maps each output string $s$ to a fresh generated term
$t_s\in\ma T_{\tx{gen}}$ with $\lexical(t_s)=s$, or reuses an existing $t_s$ if already created.

\subsubsection{\tx{Propose} and Self-validation Procedures}

We instantiate the abstract semantics with two concrete operational procedures.

\noindent
\emph{(A) \tx{Propose} as bounded enumeration.}
Given $g=\genbind{\pi(X)}{Y}{\ma M}$, a context mapping $S$, and a bound $K$:
\begin{enumerate}
\item \emph{Prompt instantiation.} Build a concrete prompt string
$\pi^{[S]}$ by replacing placeholders in $X\cap\domain(S)$ using
lexicalization. If some placeholder in $X$ is unbound in $S$, render it using
a fixed sentinel token $\tx{<unknown:?v>}$.
\item \emph{Model call.} Query $\ma M$ under fixed decoding parameters with an
instruction requiring JSON output and at most $K$ items.
\item \emph{Parse and deduplicate.} Parse the JSON, extract up to $K$ distinct
strings, and map them to generated terms deterministically.
\item \emph{Return.} Return the resulting set of at most $K$ generated-term
tuples. In this example, all tuples are singletons because $|Y|=1$.
\end{enumerate}
This procedure never claims completeness; it merely returns a bounded set of
candidates. Finiteness is enforced by the explicit bound $K$ and by the domain
cap $B$ applied when inserting into $D[\cdot]$.

\noindent
\emph{(B) Self-support validation as membership checking.}
Given a complete candidate mapping $\mu$ and
$g=\genbind{\pi(X)}{Y}{\ma M}$, validation checks whether the tuple $\mu_{|Y}$ is an admissible output for $g$ under the fully instantiated prompt $\pi^{[\mu_{|X}]}$. We use a verification query, that is, a Boolean \tx{YES}/\tx{NO} query, rather than relying on exact re-generation:
\begin{enumerate}
\item Instantiate the base instruction prompt $\pi^{[\mu_{|X}]}$.
\item Form a verification prompt that includes the instruction and the candidate output tuple $\mu_{|Y}$, and asks for a strict \tx{YES}/\tx{NO} answer.
\item Query $\ma M$ under the fixed validation policy. Repeat $R$ times if specified.
\item Accept if the aggregation policy for the $R$ validation calls accepts the candidate.
\end{enumerate}
This gives an explicit implementation of the abstract membership test
\[\mu_{|X}\uplus\mu_{|Y}\in\evalThree{g}{\ma G}{\{\mu_{|X}\}}.\]

\subsubsection{Memoization}

Maintain two caches:
(i) a proposal cache keyed by $(g,\pi^{[S]},K,\text{decoding})$; and
(ii) a validation cache keyed by $(g,\pi^{[\mu_{|X}]},\mu_{|Y},R)$.
Caches stabilize evaluation and reduce repeated model calls.

\begin{enumerate}
\item \emph{Step~1: Base RDF evaluation.}
Evaluating the basic graph pattern
\[?x\ \tx{a}\ :\tx{City}\]
over $\ma G$ yields the unique typed solution mapping
\[\mu_0=\{?x\mapsto(\tx{:Paris},\sourceRdf)\}.\]
This is the outer binding $\eta:=\mu_0$.

\item \emph{Step~2: \tx{SCC} identification and summarization.}
Since $?y$ feeds $g_2$ and $?z$ feeds $g_1$, the dependency graph contains one
\tx{SCC}:
\[\ma C=\{g_1,g_2\}.\]
The externally supplied variables are
\[X_C=\{?x\},\]
and the variables generated inside the component are
\[Y_C=\{?y,?z\}.\]
We summarize $\ma C$ by the macro-operator $\tx{GenSCC}_{\ma C}$. For an outer binding $\eta$, this operator returns full mappings of the form $\eta\uplus\rho$, where $\rho$ ranges over the accepted $Y_C$-projections produced by the executable evaluation of $\ma C$ under $\eta$. Under the finite candidate coverage and exactness assumption from Sect.~\ref{sec:exec-fixpoint}, these accepted projections coincide with the declarative component-answer projections.

\item \emph{Step~3: Candidate enumeration, operational only.}
We invoke
\[\textsc{EvalSCC}(\ma G,\ma P,\ma C,\eta,K,B,T,R).\]
Initialize finite domains:
\[D[?y]=\varnothing\quad\text{and}\quad D[?z]=\varnothing.\]

\smallskip
\noindent
\emph{Round 1: propose values for $?y$.}
We call $\tx{Propose}(g_1,S_1,K)$ with
\[S_1=\{?x\mapsto(\tx{:Paris},\sourceRdf)\}.\]
Since $?z\notin\domain(S_1)$, the sentinel policy instantiates
\begin{multline*}
\pi_1^{[S_1]} =
\text{``Suggest a topic related to Paris and}\\
\text{\tx{<unknown:?z>}''}.
\end{multline*}
We query $\ma M_1$ with an instruction such as:
\begin{quote}\footnotesize
Return JSON \tx{\{"y":[...]\}} with at most $K$ short topic strings.
\end{quote}
Suppose the model returns:
\[\tx{\{"y":["Art","Cuisine"]\}}.\]
Parsing yields generated terms $t_{\textsf{Art}}$ and $t_{\textsf{Cuisine}}$
with
\[\lexical(t_{\textsf{Art}})=\text{``Art''},
\quad\lexical(t_{\textsf{Cuisine}})=\text{``Cuisine''}.\]
Update, respecting the cap $B$:
\[D[?y]=\{t_{\textsf{Art}},t_{\textsf{Cuisine}}\}.\]

\smallskip
\noindent
\emph{Round 2: propose values for $?z$ conditional on $?y$.}
Choose a currently discovered value, for example $?y=t_{\textsf{Art}}$, and call $\tx{Propose}(g_2,S_2,K)$ with
\[S_2=\{
?x\mapsto(\tx{:Paris},\sourceRdf),\
?y\mapsto(t_{\textsf{Art}},\sourceGen)
\}.\]
Then,
\[\pi_2^{[S_2]}=\text{``Suggest a city related to topic Art''}.\]
Query $\ma M_2$ requiring JSON \tx{\{"z":[...]\}} with at most $K$ city strings.
Suppose it returns:
\[\tx{\{"z":["Florence","Rome"]\}}.\]
Parsing yields $t_{\textsf{Florence}}$ and $t_{\textsf{Rome}}$, and we update:
\[D[?z]=\{t_{\textsf{Florence}},t_{\textsf{Rome}}\}.\]

\smallskip
\noindent
\emph{Further rounds.}
One can repeat conditional proposals, for example for
$?y=t_{\textsf{Cuisine}}$, and alternate between $g_1$ and $g_2$ until no new
values are added or the cap $B$ is reached. In all cases,
$|D[v]|\le B$ is enforced by construction.

\smallskip
\noindent
\emph{Finite candidate set.}
The induced candidate set is the finite product:
\[\Cand{\ma C} =
\{\eta\uplus\nu \mid \nu(?y)\in D[?y],\ \nu(?z)\in D[?z]\}.\]
Suppose, in this run, $|D[?y]|=2$ and $|D[?z]|=2$. Then,
$|\Cand{\ma C}|=4$.

\item \emph{Step~4: Candidate validation.}
We enumerate candidates $\mu\in\Cand{\ma C}$ and accept those that pass local self-support validation, that is, those for which each $g\in\ma C$ validates the output assigned by $\mu$ under the prompt instantiated from the input variables of $g$.

\smallskip
\noindent
\emph{Candidate sample.} Consider
\begin{multline*}
\mu=\{?x\mapsto(\tx{:Paris},\sourceRdf),\
?y\mapsto(t_{\textsf{Art}},\sourceGen),\\
?z\mapsto(t_{\textsf{Florence}},\sourceGen)\}.
\end{multline*}

\smallskip
\noindent
\emph{Validate $g_1$ on $\mu$ (topic validation).}
We have $X=\{?x,?z\}$ and $Y=\{?y\}$. Instantiate the base instruction prompt
using $\mu_{|X}$:
\[\pi_1^{[\mu_{|X}]}= \text{``Suggest a topic related to Paris and Florence''}.\]
We then issue a verification query to $\ma M_1$ such as:
\begin{quote}\footnotesize
Instruction: ``Suggest a topic related to Paris and Florence.''\\
Candidate topic: ``Art''.\\
Is the candidate a valid output for the instruction? Answer \tx{YES} or
\tx{NO} only.
\end{quote}
Repeat according to the validation bound $R$ and accept if the validation aggregation policy accepts. If accepted, this operationally establishes that $t_{\textsf{Art}}$ is admissible for $g_1$ under the input projection $\mu_{|X}$.

\smallskip
\noindent
\emph{Validate $g_2$ on $\mu$ (city validation).}
We have $X=\{?y\}$ and $Y=\{?z\}$. Instantiate:
\[\pi_2^{[\mu_{|X}]}=\text{``Suggest a city related to topic Art''}.\]
Verification query to $\ma M_2$:
\begin{quote}\footnotesize
Instruction: ``Suggest a city related to topic Art.''\\
Candidate city: ``Florence''.\\
Is the candidate a valid output for the instruction? Answer \tx{YES} or
\tx{NO} only.
\end{quote}
Again, repeat according to the validation bound $R$ and accept if the validation
aggregation policy accepts.

\smallskip
\noindent
\textbf{Acceptance.}
The candidate $\mu$ is accepted iff both validations pass. In that case, $\mu$ satisfies the local self-support condition for the component $\ma C$. If either validation fails, $\mu$ is rejected.

\smallskip
\noindent
\emph{Optional repair.}
If $\mu$ fails due to, say, $g_2$, we may attempt up to $T$ repairs:
keep $\mu_{|X}$ fixed for $g_2$, here $?y=\text{Art}$, call
$\tx{Propose}(g_2,\{?y=\text{Art}\},K)$ or consult cached proposals,
substitute an alternative $?z$ from the proposals, for example \tx{Rome}, and
rerun the same validation checks. Only mappings that pass the same local
self-support validation are accepted.

\item \emph{Step~5: Final result.}
For each accepted candidate $\mu$, the \tx{SCC} yields $\mu_{|Y_C}$, and the
full query answer is obtained by joining with the outer mapping $\mu_0$, which
is already included in $\mu$. Thus, answers have the form
\[\{
?x\mapsto(\tx{:Paris},\sourceRdf),\
?y\mapsto(t,\sourceGen),\
?z\mapsto(u,\sourceGen)
\}.\]
\end{enumerate}

\begin{remark}
The mechanisms $\tx{Propose}$, sentinel placeholders, decoding choices, JSON schemas, caching, and repair only control candidate discovery and execution cost. Acceptance of a candidate is determined by local self-support validation on fully instantiated prompts. Correctness with respect to the declarative component answers is obtained under the finite candidate coverage and exactness
assumption from Sect.~\ref{sec:exec-fixpoint}.
\end{remark}

\section{Fixpoint-Based Semantics}
\label{app:proofs}

We assume a \emph{bounded-generation} regime: for every $\tx{GenOp}$ occurrence
$g=\genbind{\pi(X)}{Y}{\ma M}$, every instantiated prompt string produces a finite
set of textual answers via $\Call{M}$, and every textual answer yields a finite
set of parsed tuples via $\Parse{M}$. The total set of typed solution mappings that may be produced during evaluation is finite. Formally, the induced universe $\Omega_{\ma P}$ (as defined in the main text) is finite.

\emph{Set-theoretic lattice facts.}
For any set $U$, the powerset $(2^U,\subseteq)$ is a complete lattice with least
element $\varnothing$, greatest element $U$, and joins given by unions.
If $U$ is finite, every ascending chain in $(2^U,\subseteq)$ stabilizes after at
most $|U|$ strict increases.

\emph{Compatibility.}
We use the compatibility relation $\sim$ and union $\uplus$ of typed solution mappings as defined in the main text. We rely on the following basic fact: if $\mu_1\sim \mu_2$, then $\mu_1\uplus\mu_2$ is well-defined and extends both
$\mu_1$ and $\mu_2$.

\emph{Context restriction.}
For a context supply $S\subseteq \Omega_{\ma P}$ and a $\tx{GenOp}$ pattern $g$,
we recall $\ContextS{S}{g}=\{\mu\in S\mid \holder{g}\subseteq \domain{\mu}\}$.
For any $S\subseteq S'$, we always have $\ContextS{S}{g}\subseteq \ContextS{S'}{g}$.

\emph{Immediate consequence operator.}
We use $\Consequence{\ma P}:2^{\Omega_{\ma P}}\to 2^{\Omega_{\ma P}}$ given by
$\Consequence{\ma P}(S)=\evalThree{\ma P}{\ma G}{S}$ (main text). Whenever we write $\Consequence{\ma P}^n$, this denotes $n$-fold composition, with $\Consequence{\ma P}^0$ being the identity.

\subsection{Acyclic patterns: single-pass evaluation coincides with fixpoint semantics}
\label{app:acyclic}

This subsection proves the theorem stated in the main text:
acyclic dependency implies that the semantics can be computed in a single
pass and matches the least-fixpoint semantics.

\emph{Operational single-pass evaluation for acyclic patterns.}
Let $\directedGraph{\ma D}_{\ma P}=(\genPattern{\ma P},\to_{\ma P})$ be the
dependency graph. Let $\directedGraph{\ma D}$ be acyclic, and let $\gamma=(g_1,\dots,g_m)$ be a topological order of the nodes $\genPattern{\ma P}$.

For $k\in\{0,\dots,m\}$, let $\ma P^{[k]}_{\gamma}$ denote the pattern obtained from $\ma P$
by disabling every \tx{GenOp} occurrence not among $\{g_1,\dots,g_k\}$, \ie replacing it
by a pattern whose evaluation yields $\varnothing$, while leaving the remaining algebraic
structure unchanged. Thus, $\ma P^{[0]}_{\gamma}$ has all \tx{GenOp}s disabled and
$\ma P^{[m]}_{\gamma}=\ma P$.

The staged sets are:
\[S_{\gamma}^{(0)} := \evalThree{\ma P^{[0]}_{\gamma}}{\ma G}{\varnothing},
\qquad S_{\gamma}^{(k)} := \evalThree{\ma P^{[k]}_{\gamma}}{\ma G}{S_{\gamma}^{(k-1)}}\ \ (1\le k\le m).\]
That is, at stage $k$, $\tx{GenOp}$ occurrences up to $g_k$ may fire using the
context supply accumulated at stage $(k-1)$.

\begin{lemma}(Well-defined staged evaluation)
\label{lem:acyclic-welldefined}
For each $k\in\{0,\dots,m\}$, the set $S_{\gamma}^{(k)}$ is well-defined and contained in
$\Omega_{\ma P}$.
\end{lemma}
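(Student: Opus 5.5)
The plan is induction on $k$, proving that each $S_{\gamma}^{(k)}$ is a well-defined finite set of typed solution mappings lying in $\Omega_{\ma P}$. Two ingredients do the work. \emph{Well-definedness} means every clause of the single-pass evaluation is applied to arguments for which it is defined. \emph{Containment} means every produced mapping has domain inside $\var{\ma P}$ and values in $\ma T\cup\ma T_{\tx{gen}}$ with the right types.

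For the base case $k=0$, $\ma P^{[0]}_{\gamma}$ contains no enabled \tx{GenOp}. Each disabled occurrence evaluates to $\varnothing$ for every supply, so evaluation reduces to the standard SPARQL clauses over the finite graph $\ma G$. \bgps evaluation yields finitely many mappings with domain $\var{\bgps}\subseteq\var{\ma P}$ and values typed \sourceRdf. I then run a structural induction over the operators. Filter, projection, union, Minus and Diff only select or restrict mappings, so they stay inside the set of typed mappings over $\var{\ma P}$. Join and Opt apply $\uplus$ only to compatible pairs, and for such pairs the union is well-defined. Since $\rep_v$ returns a typed term from the finite equivalence class, values remain in $\ma T\cup\ma T_{\tx{gen}}$ and domains remain subsets of $\var{\ma P}$. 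Hence $S_{\gamma}^{(0)}\subseteq\Omega_{\ma P}$.

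For the inductive step, I assume $S_{\gamma}^{(k-1)}\subseteq\Omega_{\ma P}$ is well-defined. Stage $k$ evaluates $\ma P^{[k]}_{\gamma}$ with supply $S_{\gamma}^{(k-1)}$, and each enabled $g_j$ ($j\le k$) receives the restricted context $\ContextS{S_{\gamma}^{(k-1)}}{g_j}$. This restriction enforces the precondition of Definition~\ref{defn-genpattern}, namely $X\subseteq\domain{\mu_c}$, so the prompt $\pi^{[\mu_c]}$ is defined via \canonical/\lexical. Under bounded generation, $\Call{\ma M}$ and $\Parse{\ma M}$ return finite sets, so $\evalThree{g_j}{\ma G}{\cdot}$ is a finite union of finite sets. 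The bindings $\nu$ map $Y\subseteq\var{\ma P}$ to \sourceGen-typed generated terms, and $\mu_c\uplus\nu$ is formed only when $\mu_c\sim\nu$. Every generated term arising this way lies in the finite generated universe assumed for the query instance, so these mappings again belong to $\Omega_{\ma P}$. The remaining operators are handled exactly as in the base case.

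I expect the main obstacle to be the subtle point that $\Omega_{\ma P}$ must already contain every generated term produced at stage $k$. This is exactly where the bounded-generation assumption, that iterating the generative occurrences from RDF bindings yields finitely many generated terms, has to be invoked. I will argue it by noting that the supply at stage $k$ is obtained by $k$ iterations of generative occurrences starting from RDF bindings. Acyclicity is not even needed for well-definedness itself; it matters only for the later coincidence with the fixpoint semantics.
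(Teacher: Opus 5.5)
Your proposal is correct and takes essentially the paper's route. The paper's proof notes that bounded generation makes every \tx{GenOp} call finite, that each evaluation clause is defined for any supply $S\subseteq\Omega_{\ma P}$ (via $\ContextS{S}{g}$), and that containment in $\Omega_{\ma P}$ holds by construction; your induction on $k$ with a structural induction over the operators makes these steps explicit, and your remark that acyclicity plays no role here is consistent with that proof.
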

\begin{proof}
By bounded generation, each \tx{GenOp} call yields finitely many outputs, and every inductive
evaluation clause returns a finite set of typed solution mappings. Since evaluation is defined
for any supply $S\subseteq \Omega_{\ma P}$ (via $\ContextS{S}{g}$ for \tx{GenOp}s),
each $S_{\gamma}^{(k)}$ is defined. Containment in $\Omega_{\ma P}$ follows by construction.
\end{proof}





\begin{lemma}(No backward dependencies)
\label{lem:no-backward}
Let $\gamma=(g_1,\dots,g_m)$ be topological. If $i<j$, then
$\target{g_j}\cap \holder{g_i}=\varnothing$.
\end{lemma}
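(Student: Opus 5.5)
We argue by contradiction from the definition of a topological order together with the definition of the dependency edge relation $\to_{\ma P}$. A topological order $\gamma=(g_1,\dots,g_m)$ of the acyclic graph $\directedGraph{\ma D}_{\ma P}$ is, by definition, a linear arrangement of $\genPattern{\ma P}$ such that every edge $g_a\to_{\ma P} g_b$ satisfies $a<b$, i.e.\ edges only point forward in the order.

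Fix $i<j$ and suppose, towards a contradiction, that $\target{g_j}\cap\holder{g_i}\neq\varnothing$. By the definition of the dependency graph, $g_u\to_{\ma P} g_v$ holds exactly when $\target{g_u}\cap\holder{g_v}\neq\varnothing$. Taking $u=j$ and $v=i$, we therefore obtain the edge $g_j\to_{\ma P} g_i$. Topologicality of $\gamma$ then forces $j<i$, which contradicts $i<j$. Hence $\target{g_j}\cap\holder{g_i}=\varnothing$.

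Two degenerate points should be checked explicitly. First, the case $i=j$ is excluded by the strict inequality, so self-loops are irrelevant here; in any case a self-loop cannot occur, since $X\cap Y=\varnothing$ for every \tx{GenOp} by the grammar. Second, distinct positions in $\gamma$ correspond to distinct occurrences in $\genPattern{\ma P}$, because the nodes of $\directedGraph{\ma D}_{\ma P}$ are syntactic occurrences. Consequently the edge $g_j\to_{\ma P} g_i$ is a genuine edge between two different nodes, and the topological constraint applies to it.

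There is no substantive obstacle. The only care needed is to use the orientation of $\to_{\ma P}$ correctly (from the producer of a variable to its consumer) and to invoke the standard fact that acyclic finite digraphs admit topological orders in which all edges point forward. This fact is implicit in the statement, since $\gamma$ is assumed to be topological.
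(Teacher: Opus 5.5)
Your proof is correct and is the same one-line argument the paper gives: if $\target{g_j}\cap\holder{g_i}\neq\varnothing$, the definition of the dependency graph gives the edge $g_j\to_{\ma P} g_i$, and since $i<j$ this contradicts $\gamma$ being a topological order. Your extra remarks on self-loops (which are excluded by $X\cap Y=\varnothing$) and on distinct occurrences are harmless, but the proof does not need them.
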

\begin{proof}
If $\target{g_j}\cap \holder{g_i}\neq\varnothing$, then by definition of $\to_{\ma P}$ we have
$g_j\to_{\ma P} g_i$, contradicting that $\gamma$ is a topological order.
\end{proof}

\begin{theorem}[Acyclic collapse: single-pass equals least fixpoint]
\label{thm:acyclic-collapse-detailed}
Assume $\directedGraph{\ma D}_{\ma P}$ is acyclic and fix a topological order $\gamma$.
Then:
\begin{enumerate}
\item $S_\gamma^{(m)}$ is a fixpoint of $\Consequence{\ma P}$.
\item $S_\gamma^{(m)}$ equals the least fixpoint $\lfp(\Consequence{\ma P})$.
\item In particular, the result is independent of the chosen topological order.
\end{enumerate}
\end{theorem}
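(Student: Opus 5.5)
The plan is an induction along the topological order $\gamma=(g_1,\dots,g_m)$. Lemma~\ref{lem:no-backward} tells us that placeholders of $g_k$ are never produced by $g_k,\dots,g_m$. We also use Lemma~\ref{lem:acyclic-welldefined}, so all staged sets live in $\Omega_{\ma P}$. The key invariant to establish is the following.

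\emph{(Inv)} For every $k$ and every supply $S\supseteq S_\gamma^{(k-1)}$ that agrees with $S_\gamma^{(k-1)}$ on the context sets of $g_1,\dots,g_k$, the evaluation $\evalThree{\ma P^{[k]}_\gamma}{\ma G}{S}$ equals $S_\gamma^{(k)}$.

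Concretely, the context set $\ContextS{S}{g_k}$ consists of mappings binding $\holder{g_k}$. By Lemma~\ref{lem:no-backward}, these bindings come only from the non-generative base and from $g_1,\dots,g_{k-1}$. All of these are already materialised in $S_\gamma^{(k-1)}$. The proposition that $\Consequence{\ma P}$ depends on $S$ only through the context sets $\ContextS{S}{g}$ then gives the equality.

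\textbf{Fixpoint.} For part~1, I apply this with $k=m$ and $S=S_\gamma^{(m)}$. The plan is to show that $\ContextS{S_\gamma^{(m)}}{g}=\ContextS{S_\gamma^{(m-1)}}{g}$ for every $g$, modulo mappings whose extra bindings are on outputs of later GenOps. Such mappings yield the same prompts $\pi^{[\mu_c]}$, since the prompt only reads $\holder{g}$. Hence $\Consequence{\ma P}(S_\gamma^{(m)})=S_\gamma^{(m)}$.

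\textbf{Leastness.} For part~2, I take any fixpoint $T$ (by the theorem on minimal models, any pre-fixpoint suffices). I then show by induction on $k$ that $S_\gamma^{(k)}\subseteq T$.
\begin{itemize}
\item The base case holds because $S_\gamma^{(0)}$ uses no GenOp, so it is contained in $\Consequence{\ma P}(T)\subseteq T$. Here I use monotonicity in the sense that disabling GenOps only removes mappings in positive positions.
\item For the step, $\ContextS{S_\gamma^{(k-1)}}{g_k}\subseteq\ContextS{T}{g_k}$. So each mapping $g_k$ produces at stage $k$ is also produced in $\Consequence{\ma P}(T)\subseteq T$.
\end{itemize}
Combined with part~1, this yields $S_\gamma^{(m)}=\lfp(\Consequence{\ma P})$, using Theorem~\ref{thm:lfp}. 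Part~3 is then immediate: the least fixpoint does not mention $\gamma$, so any two topological orders give the same set, which also yields Theorem~\ref{thm:topo-invariance}.

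\textbf{Main obstacle.} The hardest step is the leastness base and step when $\ma P$ contains negative operators (Minus, Diff, negated filters) around generated variables. There, disabling a GenOp can \emph{add} mappings, so $S_\gamma^{(k)}\subseteq T$ is not automatic. I would first try to restrict to the monotone fragment, where Lemma~\ref{lem:monotone} applies. For general acyclic patterns, I would instead replace $\lfp$ by the unique fixpoint reached by the stratification $\sigma(g_k)=k$, which is valid because an acyclic graph is trivially stratified. I would then argue stage-by-stage that each stratum's fixpoint is reached in one application, invoking Theorem~\ref{thm:strat}.
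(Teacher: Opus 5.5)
\paragraph{A gap shared with the paper's proof.}
Your invariant (Inv) and Part~1 both rest on one step: ``by Lemma~\ref{lem:no-backward}, these bindings come only from the non-generative base and from $g_1,\dots,g_{k-1}$; all of these are already materialised in $S_\gamma^{(k-1)}$.'' This step fails. Lemma~\ref{lem:no-backward} is about variables. A context for $g_k$, however, is a single mapping of the supply that binds all of $\holder{g_k}$, and whether such a mapping exists depends on which GenOps have already fired. A BGP that is joined with a later GenOp contributes nothing to the supply while that GenOp is disabled. Moreover, the GenOp clause returns $\mu_c\uplus\nu$, so the whole context is carried into the surrounding join. The dependency graph records only $\target{g_j}\cap\holder{g_i}$ and does not see either effect.

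Here is a counterexample. Let $\ma P=\union{B}{\union{(\join{B_4}{g_2})}{g_1}}$ with
\begin{itemize}
\item $B=\{(?s\ \tx{:p}\ ?x)\}$ and $B_4=\{(?r\ \tx{:q}\ ?z)\}$;
\item $g_2=\genbind{\pi_2(\{?x\})}{\{?w\}}{\ma M}$ and $g_1=\genbind{\pi_1(\{?x,?z\})}{\{?y\}}{\ma M}$;
\item a graph with one match $\mu$ of $B$ and one match $\mu_4$ of $B_4$, where both GenOps return at least one value.
\end{itemize}
The pattern is well formed and monotone, and its dependency graph has no edges, so $\gamma=(g_1,g_2)$ is a topological order. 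No mapping binds both $?x$ and $?z$ before $g_2$ fires. Hence $S_\gamma^{(0)}=S_\gamma^{(1)}=\{\mu\}$ and $S_\gamma^{(2)}=\{\mu,\ \mu_4\uplus\mu\uplus\{?w\mapsto w_0\}\}$. The second mapping is a context for $g_1$, so $\Consequence{\ma P}(S_\gamma^{(2)})$ contains a $?y$-extension that is not in $S_\gamma^{(2)}$. The order $(g_2,g_1)$ does produce that extension. So $S_\gamma^{(2)}$ is neither a fixpoint nor the least fixpoint, and the two topological orders disagree; Parts~1--3 all fail for this $\gamma$.

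Your stratified fallback does not help. With $\sigma(g_k)=k$, its key claim, that each stratum's fixpoint is reached in one application, is the same claim stratum by stratum, and it fails here at the top stratum.

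\paragraph{Relation to the paper and your other steps.}
This is not a departure from the paper. Its proof of Part~1 makes the identical inference (``All such outputs are included in $S_\gamma^{(i-1)}$ by construction''). The defect is therefore in the statement itself. It needs a finer dependency notion, for example an edge $g_j\to g_i$ whenever a mapping built through $g_j$ can bind all of $\holder{g_i}$.

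Separately, your ``modulo'' step misapplies the context-invariance proposition. That proposition requires literal equality $\ContextS{S}{g}=\ContextS{S'}{g}$. Equal prompts also do not give equal output mappings, because the output is $\mu_c\uplus\nu$ with overlapping values replaced by representatives. Feeding $\mu_c\uplus\{?y\mapsto t\}$ back to the same GenOp returns $?y\mapsto\rep_{?y}(t)$. This can be a new mapping when $t$ is not its own $\approx_{?y}$-representative.

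Given a correct Part~1, the rest of your plan is sound under monotonicity and tidier than the paper's version. Bounding every $S_\gamma^{(k)}$ by an arbitrary pre-fixpoint makes precise the paper's ``derivable by finitely many iterations'' step. Obtaining Part~3 as a corollary of Part~2 replaces the swap argument. You are also right that monotonicity is needed; the paper uses it without stating it as a hypothesis.
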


\begin{proof}
We proceed in steps.

\medskip\noindent
\emph{(1) $S_{\gamma}^{(m)}$ is a post-fixpoint.}
We show $\Consequence{\ma P}(S_{\gamma}^{(m)})\subseteq S_{\gamma}^{(m)}$.
Let $\mu\in \Consequence{\ma P}(S_{\gamma}^{(m)})=\evalThree{\ma P}{\ma G}{S_{\gamma}^{(m)}}$.
In evaluating $\ma P$ relative to $S_{\gamma}^{(m)}$, each \tx{GenOp} occurrence $g_i$ draws its contexts from $\ContextS{S_{\gamma}^{(m)}}{g_i}$.


We claim that any placeholder-binding used for $g_i$ at stage $m$ is already available at
stage $(i-1)$. That is, for every $\eta\in \ContextS{S_{\gamma}^{(m)}}{g_i}$ there exists
$\eta'\in \ContextS{S_{\gamma}^{(i-1)}}{g_i}$ with
$\eta'|_{\holder{g_i}}=\eta|_{\holder{g_i}}$.
Indeed, by Lemma~\ref{lem:no-backward}, the variables in $\holder{g_i}$ can be bound only by
(i) non-generative parts of $\ma P$ or (ii) outputs of \tx{GenOp}s $g_j$ with $j<i$.
All such outputs are included in $S_{\gamma}^{(i-1)}$ by construction.

Since prompt instantiation depends on a context mapping only via its restriction to
$\holder{g_i}$, the set of instantiated prompts (and hence generated outputs) available for
$g_i$ under $S_{\gamma}^{(m)}$ coincides with that under $S_{\gamma}^{(i-1)}$.
Consequently, no new \tx{GenOp}-extensions arise when evaluating $\ma P$ relative to
$S_{\gamma}^{(m)}$ beyond those already produced during the staged construction up to $m$.
Therefore $\evalThree{\ma P}{\ma G}{S_{\gamma}^{(m)}}\subseteq S_{\gamma}^{(m)}$.

\emph{(2) $S_{\gamma}^{(m)}$ is a fixpoint.}
By construction, $S_{\gamma}^{(m)}=\evalThree{\ma P^{[m]}_{\gamma}}{\ma G}{S_{\gamma}^{(m-1)}}
=\Consequence{\ma P}(S_{\gamma}^{(m-1)})$ and $S_{\gamma}^{(m-1)}\subseteq S_{\gamma}^{(m)}$.
By monotonicity of $\Consequence{\ma P}$,
\[
S_{\gamma}^{(m)}=\Consequence{\ma P}(S_{\gamma}^{(m-1)})
\subseteq \Consequence{\ma P}(S_{\gamma}^{(m)}).
\]
Combined with (1), this yields $\Consequence{\ma P}(S_{\gamma}^{(m)})=S_{\gamma}^{(m)}$.

\emph{Leastness.}
Let $S$ be any fixpoint of $\Consequence{\ma P}$. Consider the Kleene chain
$S_0=\varnothing$ and $S_{n+1}=\Consequence{\ma P}(S_n)$. By the standard induction argument
(using monotonicity), $S_n\subseteq S$ for all $n$, hence
$\lfp(\Consequence{\ma P})=\bigcup_{n\ge 0} S_n \subseteq S$.
Applying this to the fixpoint $S_{\gamma}^{(m)}$ yields
$\lfp(\Consequence{\ma P})\subseteq S_{\gamma}^{(m)}$.

\emph{ $S_{\gamma}^{(m)}\subseteq \lfp(\Consequence{\ma P})$.}
Each stage $S_{\gamma}^{(k)}$ is obtained by one application of $\Consequence{\ma P^{[k]}_{\gamma}}$
to $S_{\gamma}^{(k-1)}$, and $\ma P^{[k]}_{\gamma}$ is a sub-instance of $\ma P$ (with some
\tx{GenOp}s disabled). Hence every mapping in $S_{\gamma}^{(m)}$ is derivable by a finite number
of iterations of $\Consequence{\ma P}$ starting from $\varnothing$, so
$S_{\gamma}^{(m)}\subseteq \bigcup_{n\ge 0}\Consequence{\ma P}^n(\varnothing)=\lfp(\Consequence{\ma P})$.

Combining (3) and (4), we arrive to $S_{\gamma}^{(m)}=\lfp(\Consequence{\ma P})$.

\emph{(5) Independence of $\gamma$.}
If $\gamma$ and $\gamma'$ are two topological orders, standard poset arguments show that
$\gamma$ can be transformed into $\gamma'$ by swapping adjacent incomparable elements.
For adjacent incomparable generators $g_i,g_{i+1}$, neither depends on the other, \ie
$\target{g_i}\cap \holder{g_{i+1}}=\varnothing$ and $\target{g_{i+1}}\cap \holder{g_i}=\varnothing$. Thus, the set of placeholder-bindings available to each is unchanged by the swap, and so is the set of generated outputs. Repeating swaps yields equality of the final staged sets.
\end{proof}

\subsection{Context equivalence invariance}
\label{app:context}


\begin{proposition}[Context invariance (corrected)]
\label{prop:context-equiv-detailed}
Let $S,S'\subseteq\Omega_{\ma P}$ be context supplies such that for every
$\tx{GenOp}$ \emph{occurrence} $g$ in $\ma P$,
\begin{multline*}
    \ContextS{S}{g} \;=\; \ContextS{S'}{g},
\qquad\text{where}\\
\ContextS{S}{g} = \{\mu\in S \mid \holder{g}\subseteq \domain(\mu)\}.
\end{multline*}
Then, $\Consequence{\ma P}(S)=\Consequence{\ma P}(S')$.
\end{proposition}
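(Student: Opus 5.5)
The argument is a structural induction on $\ma P$, proving the stronger claim that for every subpattern $\ma Q$ of $\ma P$,
\[\evalThree{\ma Q}{\ma G}{S}=\evalThree{\ma Q}{\ma G}{S'}.\]
The case $\ma Q=\ma P$ then gives $\Consequence{\ma P}(S)=\Consequence{\ma P}(S')$. The hypothesis is quantified over all \tx{GenOp} occurrences in $\ma P$, so it is inherited by every subpattern. We therefore only need to observe where the context supply actually enters the evaluation clauses. By definition of $\Consequence{\ma P}$ and the paragraph introducing $\ContextS{S}{g}$, the supply is consulted only at \tx{GenOp} leaves, and there only through $\ContextS{S}{g}$. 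Every other clause threads $S$ unchanged to its subpatterns.

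\emph{Base cases.} If $\ma Q$ is a \bgps, then $\evalTwo{\ma Q}{\ma G}$ is defined purely from $\ma G$ and does not mention $S$, so the two sides agree trivially. The same holds for the empty pattern $\emptypat$ and for disabled \tx{GenOp}s, which evaluate to $\varnothing$. If $\ma Q=g$ is a \tx{GenOp} occurrence, then relative to $S$ it evaluates as $\evalThree{g}{\ma G}{\ContextS{S}{g}}$, and by Definition~\ref{defn-genpattern} this is a union over $\mu_c\in\ContextS{S}{g}$ of sets that depend only on $\mu_c$, $\Call{\ma M}$, $\Parse{\ma M}$ and $\uplus$. The hypothesis $\ContextS{S}{g}=\ContextS{S'}{g}$ makes the index sets of the two unions identical, so the results coincide. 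Here I will stress that the hypothesis is per \emph{occurrence}. Two syntactically equal \tx{GenOp}s at different positions may receive different supplies in the single-pass reading, namely $\evalTwo{\ma P\setminus\{g\}}{\ma G}$. So I will fix, as the proposition says, that under $\Consequence{\ma P}$ each occurrence draws from $\ContextS{S}{g}$.

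\emph{Inductive step.} Each of \tx{Filter}, \tx{Proj}, \tx{Join}, \tx{Union}, \tx{Minus}, \tx{Diff} and \tx{Opt} is a fixed set-theoretic function of the evaluations of its immediate subpatterns. These functions use only the compatibility relation $\sim$, the union $\uplus$, restriction, and filter truth values $\ma F^\mu$, and none of these depends on the supply. Take $\join{\ma Q_1}{\ma Q_2}$ as an example. By the induction hypothesis $\evalThree{\ma Q_i}{\ma G}{S}=\evalThree{\ma Q_i}{\ma G}{S'}$ for $i=1,2$, and substituting these into the Join clause yields equality. The other operators are handled identically, and \tx{Opt} is immediate from the \tx{Join} and \tx{Diff} cases.

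\emph{Main obstacle.} The substantive point is not the algebra but confirming that no clause reads $S$ through anything other than $\ContextS{S}{g}$. I will first argue that $\rep_v$, and hence $\sim$ and $\uplus$, are fixed for the query instance, because they are defined over the finite universe of typed terms fixed by bounded generation and not over $S$. Second, in the negative operators (\tx{Minus}, \tx{Diff}, negated filters) the second argument is again just a subpattern evaluation, so it is covered by the induction hypothesis rather than by any direct inspection of $S$. With both points settled, the induction closes.
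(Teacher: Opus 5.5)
Your proposal is correct and follows essentially the same route as the paper. Both use a structural induction in which BGPs are independent of the supply, each \tx{GenOp} occurrence reads $S$ only through $\ContextS{S}{g}$ (a union over identical context sets), and every algebraic operator is a fixed set-theoretic function of its subpattern evaluations. Your additional checks, that the per-occurrence hypothesis is inherited by subpatterns and that $\rep_v$, $\sim$ and $\uplus$ do not depend on $S$, make explicit what the paper's proof leaves implicit without changing the argument.
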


\begin{proof}
We prove by structural induction on $\ma P$ that
$\evalThree{\ma P}{\ma G}{S}=\evalThree{\ma P}{\ma G}{S'}$.

\medskip\noindent
\emph{Base case 1: $\ma P$ is a BGP.}
Then $\evalThree{\ma P}{\ma G}{S}$ is independent of $S$, hence equality holds.

\medskip\noindent
\emph{Base case 2: $\ma P$ is a $\tx{GenOp}$ pattern $g=\genbind{\pi(X)}{Y}{\ma M}$.}
By definition,
\[\evalThree{g}{\ma G}{S}=\evalThree{g}{\ma G}{\ContextS{S}{g}},\quad\ContextS{S}{g}=\{\mu\in S: X\subseteq\domain{\mu}\}.\]
By assumption, for this occurrence $g$ we have $\ContextS{S}{g}=\ContextS{S'}{g}$.
Hence, the same set of context mappings $\mu_c$ is used in both evaluations. Since $\evalThree{g}{\ma G}{\cdot}$ is defined as a union over all $\mu_c$ in the context supply, it follows directly that $\evalThree{g}{\ma G}{S}=\evalThree{g}{\ma G}{S'}$.
\medskip\noindent
\emph{Inductive cases.}
Assume the claim holds for immediate subpatterns.
\begin{enumerate}
\item If $\ma P=\proj{\ma P'}{L}$, then
$\evalThree{\ma P}{\ma G}{S}$ is obtained by restricting each mapping in
$\evalThree{\ma P'}{\ma G}{S}$ to $L$, which preserves equality.
\item If $\ma P=\union{\ma P_1}{\ma P_2}$, then evaluation is set union of the
evaluations of $\ma P_1$ and $\ma P_2$, hence equality follows.
\item If $\ma P=\join{\ma P_1}{\ma P_2}$, then evaluation is the set of unions
$\mu_1\uplus\mu_2$ for compatible $\mu_i$ from each side. By IH,
the sets of $\mu_i$ coincide under $S$ and $S'$, hence the join result coincides.
\item If $\ma P=\filter{\ma P'}{F}$, evaluation is subset selection of
$\evalThree{\ma P'}{\ma G}{S}$ by a predicate on $\mu$ alone ($F^\mu$ depends only
on $\mu$). Since the underlying set is equal by IH, the filtered subset is equal.
\item The cases $\opt{}{}{}$, $\diff{}{}{}$, and $\minus{}{}$ are analogous: each is
defined purely in terms of membership/compatibility relations among mappings in
the evaluated subpatterns, so equality of those evaluated sets implies equality
of the resulting sets.
\end{enumerate}
Thus $\evalThree{\ma P}{\ma G}{S}=\evalThree{\ma P}{\ma G}{S'}$ for all $\ma P$,
and therefore $\Consequence{\ma P}(S)=\Consequence{\ma P}(S')$.
\end{proof}

\subsection{Monotonicity of the consequence operator}
\label{app:mono}

We now give a fully detailed proof of Lemma~\ref{lem:monotone}.

\begin{lemma}[Monotonicity of $\Consequence{\ma P}$]
\label{lem:monotone-detailed}
If $\ma P$ is monotone w.r.t.\ the context supply (main text condition), then
$\Consequence{\ma P}$ is monotone on $(2^{\Omega_{\ma P}},\subseteq)$:
\[S\subseteq S' \Longrightarrow \Consequence{\ma P}(S)\subseteq \Consequence{\ma P}(S').\]
\end{lemma}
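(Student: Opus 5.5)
The proof will be a structural induction on $\ma P$. The claim to establish for every subpattern $\ma Q\preceq\ma P$ is that $S\subseteq S'$ implies $\evalThree{\ma Q}{\ma G}{S}\subseteq\evalThree{\ma Q}{\ma G}{S'}$. For negative operators this is not strong enough on its own. The invariant I would carry is therefore the following: for subpatterns occurring in a negative position (the right argument of $\tx{Minus}$ and $\tx{Diff}$, the $\tx{Diff}$ branch of $\tx{Opt}$), the mappings that could newly appear when passing from $S$ to $S'$ bind variables in $\target{g}$ for some $\tx{GenOp}$ $g$. By the monotonicity condition, such variables never occur in negative positions, so the extra mappings cannot affect the negative tests. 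Then $\Consequence{\ma P}(S)=\evalThree{\ma P}{\ma G}{S}$, and the statement follows by taking $\ma Q=\ma P$.

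\emph{Base cases.} If $\ma Q$ is a \bgps, its evaluation does not depend on $S$. If $\ma Q=g=\genbind{\pi(X)}{Y}{\ma M}$, then $S\subseteq S'$ gives $\ContextS{S}{g}\subseteq\ContextS{S'}{g}$, as recorded in the context-restriction facts. Definition~\ref{defn-genpattern} is a union over $\mu_c\in\ContextS{S}{g}$ of sets that depend only on $\mu_c$, because $\Call{\ma M}$ and $\Parse{\ma M}$ are deterministic functions of the instantiated prompt. Hence $\evalThree{g}{\ma G}{S}\subseteq\evalThree{g}{\ma G}{S'}$.

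\emph{Positive inductive cases.} $\tx{Union}$, $\tx{Proj}$, $\tx{Join}$ and $\tx{Filter}$ are each monotone in their arguments as set operators. A join result $\mu_1\uplus\mu_2$ built from sets $A_i\subseteq A_i'$ is also built from the $A_i'$, since compatibility and the canonical union depend only on the two mappings. A filter selects with a predicate $\ma F^\mu$ that depends on $\mu$ alone. The $\tx{Join}$ branch of $\tx{Opt}$ is handled the same way.

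\emph{Main obstacle: $\tx{Minus}$, $\tx{Diff}$, and the $\tx{Diff}$ branch of $\tx{Opt}$.} These are antitone in their right argument. Take $\mu_1\in\evalThree{\ma P_1}{\ma G}{S}$ that survives against every $\mu_2\in\evalThree{\ma P_2}{\ma G}{S}$, and consider a new right-hand mapping $\mu_2'\in\evalThree{\ma P_2}{\ma G}{S'}$. I need to show that $\mu_2'$ still does not eliminate $\mu_1$. The planned argument runs as follows.
\begin{enumerate}
\item The only source of new mappings is a $\tx{GenOp}$ firing on a new context, so $\mu_2'$ binds some variable $y\in\target{g}$.
\item Since $y$ occurs only in positive positions, $y\notin\var{\ma P_2}$ as a syntactic occurrence under $\tx{Minus}$ or $\tx{Diff}$. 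It follows that the $\tx{GenOp}$-generated part of the right operand cannot be what makes it match.
\item Concretely, I would split into cases. If $\ma P_2$ contains no $\tx{GenOp}$, its evaluation is independent of $S$ and nothing changes. If $\ma P_2$ contains a $\tx{GenOp}$, its output variables would occur in a negative position, which contradicts the hypothesis.
\end{enumerate}
In either case $\evalThree{\ma P_2}{\ma G}{S'}=\evalThree{\ma P_2}{\ma G}{S}$. Combined with the induction hypothesis on $\ma P_1$, this yields the inclusion.

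The delicate point is making precise that a $\tx{GenOp}$ inside a negative operand forces its outputs into a negative position. If the syntactic condition turns out to be too weak for that (for instance, because a $\tx{GenOp}$ with unused outputs sits under $\tx{Minus}$), I would instead read the monotonicity hypothesis as requiring that $\tx{GenOp}$ occurrences themselves lie only in positive positions. Under that reading, every negative operand is $\tx{GenOp}$-free and hence constant in $S$.
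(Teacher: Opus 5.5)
Your proposal is correct and follows the paper's own proof essentially step for step. Both use a structural induction proving $S\subseteq S'\Rightarrow\evalThree{\ma Q}{\ma G}{S}\subseteq\evalThree{\ma Q}{\ma G}{S'}$ for every subpattern. The base cases are the same: \bgps{} is independent of $S$, and \tx{GenOp} is a union over $\ContextS{S}{g}\subseteq\ContextS{S'}{g}$. The set-monotonicity argument for $\tx{Proj}$, $\tx{Union}$, $\tx{Join}$ and $\tx{Filter}$ is also the same. Finally, the negative constructs are handled in both by showing the subtracted operand is unaffected by enlarging the supply, while the left operand grows by the induction hypothesis.

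Your version of that last step is sharper than the paper's informal Case~7, in two ways. You reduce it to ``every negative operand is \tx{GenOp}-free, hence constant in $S$''. You also give the fallback reading that \tx{GenOp} occurrences themselves must sit in positive positions. That fallback is the right safeguard for the degenerate case the paper passes over: a \tx{GenOp} under $\tx{Minus}$ or $\tx{Diff}$ whose outputs are not otherwise counted as occurring, for example one with $Y=\varnothing$.
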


\begin{proof}
We prove the stronger statement:
for all patterns $\ma P$ satisfying the monotonicity condition,
\[S\subseteq S' \Longrightarrow \evalThree{\ma P}{\ma G}{S}\subseteq \evalThree{\ma P}{\ma G}{S'}.\]
The proof is by structural induction on $\ma P$.

\medskip\noindent
\emph{Case 1:} Let $\ma P$ be a BGP. Evaluation then does not depend on context supply. Hence, equality holds.

\medskip\noindent
\emph{Case 2:} Let $\ma P= g=\genbind{\pi(X)}{Y}{\ma M}$. We have $\ContextS{S}{g}\subseteq\ContextS{S'}{g}$ because $S\subseteq S'$.
By the definition of $\evalThree{g}{\ma G}{\Omega_c}$ as a union over contexts
$\mu_c\in\Omega_c$, enlarging $\Omega_c$ can only add terms to the union:
\[\evalThree{g}{\ma G}{S}=\evalThree{g}{\ma G}{\ContextS{S}{g}}
\subseteq\evalThree{g}{\ma G}{\ContextS{S'}{g}}=\evalThree{g}{\ma G}{S'}.\]
\medskip\noindent

\emph{Case 3:} Let $\ma P=\proj{\ma P'}{L}$.
By IH, $\evalThree{\ma P'}{\ma G}{S}\subseteq \evalThree{\ma P'}{\ma G}{S'}$.
Restriction to $L$ is monotone on sets: if $A\subseteq B$ then
$\{\mu|_L:\mu\in A\}\subseteq \{\mu|_L:\mu\in B\}$, hence inclusion holds.
\medskip\noindent

\emph{Case 4:} Let $\ma P=\union{\ma P_1}{\ma P_2}$.
By IH, $\evalThree{\ma P_i}{\ma G}{S}\subseteq \evalThree{\ma P_i}{\ma G}{S'}$.
Taking unions preserves inclusion, so the claim follows.
\medskip\noindent

\emph{Case 5:} Let $\ma P=\join{\ma P_1}{\ma P_2}$.
Let $\mu\in \evalThree{\join{\ma P_1}{\ma P_2}}{\ma G}{S}$.
Then there exist $\mu_1\in\evalThree{\ma P_1}{\ma G}{S}$ and
$\mu_2\in\evalThree{\ma P_2}{\ma G}{S}$ with $\mu_1\sim\mu_2$ and
$\mu=\mu_1\uplus\mu_2$.
By IH, $\mu_i\in\evalThree{\ma P_i}{\ma G}{S'}$ as well, so the same pair witnesses
$\mu\in \evalThree{\join{\ma P_1}{\ma P_2}}{\ma G}{S'}$.
\medskip\noindent

\emph{Case 6:} Let $\ma P=\filter{\ma P'}{F}$.
By IH, $\evalThree{\ma P'}{\ma G}{S}\subseteq \evalThree{\ma P'}{\ma G}{S'}$.
Now
\[\evalThree{\filter{\ma P'}{F}}{\ma G}{S}=\{\mu\in \evalThree{\ma P'}{\ma G}{S} : F^\mu=\top\}.\]
Since the predicate $F^\mu=\top$ depends only on $\mu$, subset inclusion of the
underlying sets yields inclusion of the filtered sets.

\medskip\noindent

\emph{Case 7:} Assume negative constructs.
These comprise \tx{Minus}, \tx{Diff}, and \tx{Opt}.
In general, such constructs are non-monotone in SPARQL.
However, we restrict attention to patterns $\ma P$ satisfying the main-text
monotonicity condition: no variable introduced by a $\tx{GenOp}$ output occurs
in a negative position.

Under this condition, enlarging the context supply from $S$ to $S'$ may introduce
additional generated bindings, but these bindings cannot influence the evaluation
of any negative construct, since all negative tests range only over variables whose
bindings are fixed independently of $\tx{GenOp}$ outputs.
Hence, the truth of membership and compatibility checks performed by negative
constructs is invariant under such enlargements.

It follows that the denotation of each negative construct is unchanged when passing
from $S$ to $S'$, and the desired inclusion holds by the induction hypothesis applied
to the positive subpatterns. This completes the induction.


\end{proof}

\subsection{Least fixpoint: existence, characterization, and minimality}
\label{app:lfp}

\begin{theorem}[Existence and characterization of the least fixpoint]
\label{thm:lfp-detailed}
If $\Consequence{\ma P}$ is monotone on $(2^{\Omega_{\ma P}},\subseteq)$, then:
\begin{enumerate}
\item $\Consequence{\ma P}$ has a least fixpoint $\lfp(\Consequence{\ma P})$;
\item $\lfp(\Consequence{\ma P})=\bigcup_{n\ge 0}\Consequence{\ma P}^n(\varnothing)$;
\item if $\Omega_{\ma P}$ is finite, the Kleene chain stabilizes after at most
$|\Omega_{\ma P}|$ strict increases.
\end{enumerate}
\end{theorem}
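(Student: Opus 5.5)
The plan is to use the classical Knaster--Tarski and Kleene arguments on the complete lattice $(2^{\Omega_{\ma P}},\subseteq)$. Monotonicity of $\Consequence{\ma P}$ is assumed in the statement; Lemma~\ref{lem:monotone-detailed} supplies it for patterns that are monotone w.r.t.\ the context supply.

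For item~(1), I would apply Knaster--Tarski to the monotone $\Consequence{\ma P}$ on this complete lattice. The set $\mathcal{F}=\{S\subseteq\Omega_{\ma P}\mid \Consequence{\ma P}(S)\subseteq S\}$ of pre-fixpoints is nonempty, since it contains $\Omega_{\ma P}$. Let $S^\ast=\bigcap\mathcal{F}$. For every $S\in\mathcal{F}$, monotonicity gives $\Consequence{\ma P}(S^\ast)\subseteq\Consequence{\ma P}(S)\subseteq S$, hence $\Consequence{\ma P}(S^\ast)\subseteq S^\ast$. Applying monotonicity again gives $\Consequence{\ma P}(\Consequence{\ma P}(S^\ast))\subseteq\Consequence{\ma P}(S^\ast)$, so $\Consequence{\ma P}(S^\ast)\in\mathcal{F}$ and therefore $S^\ast\subseteq\Consequence{\ma P}(S^\ast)$. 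Thus $S^\ast$ is a fixpoint. It lies below every pre-fixpoint, so in particular below every fixpoint. This also yields the minimality theorem stated in the main text.

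For item~(2), I would define $S_n=\Consequence{\ma P}^n(\varnothing)$. First, show by induction that $S_n\subseteq S_{n+1}$: the base case is $\varnothing\subseteq S_1$, and the step follows by applying monotonicity to $S_{n-1}\subseteq S_n$. Second, show by induction that $S_n\subseteq S^\ast$: apply $\Consequence{\ma P}$ to $S_{n}\subseteq S^\ast$ and use $\Consequence{\ma P}(S^\ast)=S^\ast$. Hence $U:=\bigcup_n S_n\subseteq S^\ast$. For the reverse inclusion I need $U$ to be a fixpoint. In general that requires $\omega$-continuity, and monotonicity alone does not give it. Here I would use finiteness: the increasing chain in the finite set $2^{\Omega_{\ma P}}$ must stabilize, so $U=S_N$ for some $N$ with $S_{N+1}=S_N$. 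Then $U$ is a fixpoint and $S^\ast\subseteq U$.

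This is the main obstacle. The statement does not list finiteness of $\Omega_{\ma P}$ as a hypothesis of item~(2). However, the standing bounded-generation assumption of this appendix makes $\Omega_{\ma P}$ finite, and I would state explicitly that I rely on it. Alternatively, I would argue continuity directly: each mapping in $\Consequence{\ma P}(S)$ depends on finitely many context mappings, because every \tx{GenOp} is a union over individual contexts and the positive operators are finitary. This route is more delicate in the presence of negative constructs, so I would use finiteness.

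For item~(3), observe that each strict increase $S_n\subsetneq S_{n+1}$ adds at least one element of $\Omega_{\ma P}$. Since $|S_n|\le|\Omega_{\ma P}|$, there can be at most $|\Omega_{\ma P}|$ strict increases. Once $S_{n+1}=S_n$, all later terms are equal. This matches the chain fact recorded at the start of the appendix.
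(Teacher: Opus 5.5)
Your proposal is correct and follows essentially the same route as the paper: Knaster--Tarski for existence (which you prove inline rather than cite), Kleene iteration from $\varnothing$ stabilized by finiteness of $\Omega_{\ma P}$ for item~(2), and counting strict increases for item~(3). You also flag that item~(2) needs finiteness or $\omega$-continuity. The paper does the same: its proof invokes finiteness at that step, and it remarks that $\omega$-continuity is the alternative.
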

\noindent (Equivalently, under the alternative assumption that $\Phi$ is $\omega$-continuous,
Item~2 holds without requiring finiteness.)




\begin{proof}
\emph{(1) Existence.}
$(2^{\Omega_{\ma P}},\subseteq)$ is a complete lattice. Since $\Phi$ is monotone,
Knaster--Tarski yields existence of a least fixpoint, namely
\[
\lfp(\Phi)=\bigcap\{S\subseteq \Omega_{\ma P}\mid \Phi(S)\subseteq S\}.
\]

\emph{(2) Kleene characterization under finiteness.}
Consider the Kleene sequence by $S_0=\varnothing$ and $S_{n+1}=\Phi(S_n)$.
By monotonicity, $S_0\subseteq S_1\subseteq S_2\subseteq \cdots$.

Assume $\Omega_{\ma P}$ is finite. Then $(S_n)_n$ can strictly increase at most
$|\Omega_{\ma P}|$ times, hence there exists $N\le |\Omega_{\ma P}|$ such that
$S_N=S_{N+1}=\Phi(S_N)$. Thus $S_N$ is a fixpoint.

To show minimality, let $T$ be any fixpoint of $\Phi$. By induction on $n$,
$S_n\subseteq T$ for all $n$:
base $S_0=\varnothing\subseteq T$, and if $S_n\subseteq T$ then
$S_{n+1}=\Phi(S_n)\subseteq \Phi(T)=T$ by monotonicity and $\Phi(T)=T$.
Hence $S_N\subseteq T$ for every fixpoint $T$, so $S_N$ is the least fixpoint:
$S_N=\lfp(\Phi)$.

(3) Finally, since the chain stabilizes at $N$, we have
$\bigcup_{n\ge 0} S_n = S_N$, and therefore
\[
\lfp(\Phi)=S_N=\bigcup_{n\ge 0}\Phi^n(\varnothing),
\]
with stabilization after at most $|\Omega_{\ma P}|$ strict increases.
\end{proof}

\begin{theorem}[Minimality of the least fixpoint]
\label{thm:minimality-detailed}
Let $S^\ast=\lfp(\Consequence{\ma P})$ and let $S\subseteq\Omega_{\ma P}$ satisfy $\Consequence{\ma P}(S)\subseteq S$ (\ie $S$ is a pre-fixpoint, or closed under one-step consequences). Then $S^\ast\subseteq S$.
\end{theorem}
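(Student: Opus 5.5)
The plan is the standard Kleene-induction argument. We work under the standing hypothesis of this subsection that $\Consequence{\ma P}$ is monotone on $(2^{\Omega_{\ma P}},\subseteq)$; this holds, for example, when $\ma P$ is monotone with respect to the context supply, by Lemma~\ref{lem:monotone-detailed}. By Theorem~\ref{thm:lfp-detailed}, $S^\ast=\bigcup_{n\ge 0}S_n$, where $S_0=\varnothing$ and $S_{n+1}=\Consequence{\ma P}(S_n)$. It therefore suffices to show $S_n\subseteq S$ for every $n$.

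The induction on $n$ goes as follows. The base case $S_0=\varnothing\subseteq S$ is trivial. For the step, assume $S_n\subseteq S$. Monotonicity gives $S_{n+1}=\Consequence{\ma P}(S_n)\subseteq\Consequence{\ma P}(S)$, and the pre-fixpoint hypothesis gives $\Consequence{\ma P}(S)\subseteq S$, so $S_{n+1}\subseteq S$. Taking the union over all $n$ yields $S^\ast\subseteq S$.

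As a cross-check that avoids the iteration, use the Knaster--Tarski characterization from the proof of Theorem~\ref{thm:lfp-detailed}: $\lfp(\Consequence{\ma P})=\bigcap\{T\subseteq\Omega_{\ma P}\mid \Consequence{\ma P}(T)\subseteq T\}$. Since $S$ belongs to this family, the intersection is contained in $S$. Both routes work for arbitrary (not only finite) $\Omega_{\ma P}$, but the Kleene route relies on the characterization in item~2 of Theorem~\ref{thm:lfp-detailed}, which was proved under finiteness. I would therefore present the Knaster--Tarski argument as the primary proof and the induction as the constructive one under bounded generation. The phrase ``minimal fixpoint model'' then follows: $S^\ast$ is itself a fixpoint, and it is below every pre-fixpoint, hence below every fixpoint.

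The only real obstacle is bookkeeping rather than mathematics. One must make sure that monotonicity is available, which means the statement must be read under the hypothesis of Theorem~\ref{thm:lfp-detailed}. One must also make sure that $S\subseteq\Omega_{\ma P}$, so that $\Consequence{\ma P}(S)$ is defined; the statement assumes this explicitly.
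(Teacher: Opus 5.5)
Your Kleene induction is exactly the paper's proof: show $S_n\subseteq S$ by induction using monotonicity and $\Consequence{\ma P}(S)\subseteq S$, then take the union. You are also right that monotonicity of $\Consequence{\ma P}$ has to be assumed; the paper's proof uses it without stating it in the theorem. Your Knaster--Tarski one-liner is also correct and slightly more robust, because it does not need $S^\ast=\bigcup_{n}S_n$. In the paper that identity holds by the definition of $\lfp$ and coincides with the least fixpoint because $\Omega_{\ma P}$ is finite, so presenting Knaster--Tarski first is fine.
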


\begin{proof}
Let $S_0=\varnothing$ and $S_{n+1}=\Consequence{\ma P}(S_n)$. We show by induction that $S_n\subseteq S$ for all $n$.
Base: $S_0=\varnothing\subseteq S$. Step: assume $S_n\subseteq S$. By monotonicity, $\Consequence{\ma P}(S_n)\subseteq \Consequence{\ma P}(S)$. By assumption $\Consequence{\ma P}(S)\subseteq S$. Therefore $S_{n+1}=\Consequence{\ma P}(S_n)\subseteq S$. Taking unions over $n$ produces $S^\ast=\bigcup_n S_n\subseteq S$.
\end{proof}

\subsection{Stratified semantics: well-definedness, uniqueness, and order-independence}
\label{app:strat}

\begin{theorem}[Well-definedness and uniqueness of stratified semantics]
\label{thm:strat-detailed}
If $\ma P$ is stratified and $\Omega_{\ma P}$ is finite, then the stratified
semantics $\evalThree{\ma P}{\ma G}{\mathrm{strat}}$ is well-defined and unique.
\end{theorem}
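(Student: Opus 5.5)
We argue by induction on the stratum index $i=0,1,\dots,m$. At each stage we show that $S_i$ exists, is unique, and is a subset of $\Omega_{\ma P}$. For each $i$ define the stratum operator
\[
\Psi_i(S) := \Consequence{\ma P^{\leq i}}(S\cup S_{i-1}), \qquad S\subseteq\Omega_{\ma P},
\]
where $S_{-1}=\varnothing$. Once $\Psi_i$ is known to be monotone on the complete lattice $(2^{\Omega_{\ma P}},\subseteq)$, Theorem~\ref{thm:lfp-detailed} applies to it. It gives a least fixpoint that is unique, since least fixpoints are unique when they exist. The same theorem gives the characterization $\bigcup_n\Psi_i^n(\varnothing)$, which stabilizes after at most $|\Omega_{\ma P}|$ steps because $\Omega_{\ma P}$ is finite. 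The stratification is finite, since $m=\max\sigma$ is a natural number, so the induction ends. Then $\evalThree{\ma P}{\ma G}{\mathrm{strat}}=S_m$ is determined uniquely by $\ma P$, $\ma G$ and $\sigma$.

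The inductive step has two routine parts. First, $\Psi_i$ maps into $2^{\Omega_{\ma P}}$. This holds by the same argument as Lemma~\ref{lem:acyclic-welldefined}: a disabled \tx{GenOp} evaluates to $\varnothing$, and enabled ones give finitely many outputs under bounded generation. Second, $S_{i-1}$ is a fixed set by the induction hypothesis, so $S\mapsto S\cup S_{i-1}$ is monotone. It therefore remains to show that $\Consequence{\ma P^{\leq i}}$ is monotone in the part of the supply that varies within stratum $i$.

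\textbf{Main obstacle.} This monotonicity is the step I expect to be hardest. Within stratum $i$, the only new supply mappings are those carrying outputs of \tx{GenOp}s $g$ with $\sigma(g)=i$. By Definition~\ref{defn_stratification}, any edge $g\to g'$ with $\sigma(g)=\sigma(g')=i$ must be positive. So these outputs occur only in positive positions in the surrounding patterns of the stratum-$i$ consumers. Outputs of strata $<i$ may occur negatively, but their contribution comes from the fixed $S_{i-1}$.

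I would prove this with a relativized version of the structural induction in Lemma~\ref{lem:monotone-detailed}. The claim is that if $S\subseteq S'$ and $S\setminus S_{i-1}$, $S'\setminus S_{i-1}$ differ only in bindings of stratum-$i$ outputs, then $\evalThree{\ma P^{\leq i}}{\ma G}{S\cup S_{i-1}}\subseteq\evalThree{\ma P^{\leq i}}{\ma G}{S'\cup S_{i-1}}$. The BGP, \tx{GenOp}, \tx{Join}, \tx{Union}, \tx{Proj} and \tx{Filter} cases go through verbatim. For \tx{Minus}, \tx{Diff}, \tx{Opt} and negated filters I would use Case~7 of that lemma. The right operands of these constructs mention only variables produced by lower strata or by non-generative parts. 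Their evaluations are therefore invariant when the supply grows by stratum-$i$ mappings; to show this I would invoke Proposition~\ref{prop:context-equiv-detailed}, checking that the relevant context sets $\ContextS{S}{g}$ agree for the \tx{GenOp}s feeding negative positions. If this invariance turns out to need more care, the fallback is to read the stratification condition as ensuring that every \tx{GenOp} under a negative operator lies in a strictly lower stratum. Its contribution is then already frozen in $S_{i-1}$.

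With monotonicity established, the induction closes. Uniqueness of the whole semantics follows because each $S_i$ is the unique least fixpoint of an operator determined by $S_{i-1}$. Independence from the particular choice of $\sigma$ is not needed for this statement beyond the fixed $\sigma$, though one could note it in a remark.
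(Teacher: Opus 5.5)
Your skeleton is the paper's: induct over strata, reduce everything to monotonicity of $\Psi_i$ on $(2^{\Omega_{\ma P}},\subseteq)$, then invoke Theorem~\ref{thm:lfp-detailed} and finiteness. You are right that monotonicity is the crux. But that step is a genuine gap, and neither your main argument nor your fallback closes it. Your fallback is essentially the paper's own justification, so the gap is shared.

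\emph{Why the stated arguments do not work.}
\begin{itemize}
\item A stratification only constrains edges $g_i\to_{\ma P}g_j$, which exist when $\target{g_i}\cap\holder{g_j}\neq\varnothing$. It says nothing about whether a \tx{GenOp} occurrence itself sits in the subtrahend of a \tx{Minus} or \tx{Diff}.
\item $\ma P^{\leq i}$ disables only occurrences of strata $>i$. Every occurrence of stratum $\le i$, including lower-stratum ones under negation, stays live. It draws its contexts from $\ContextS{(S\cup S_{i-1})}{g}$, which grows with $S$. So nothing is ``frozen in $S_{i-1}$''.
\item For the same reason, the context sets that Proposition~\ref{prop:context-equiv-detailed} needs to coincide do not coincide.
\item Your claim that negated operands mention only lower-stratum variables has no support.
\item A monotonicity restricted to supplies differing in ``stratum-$i$ outputs'' would not suffice anyway. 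Theorem~\ref{thm:lfp-detailed} uses $S\subseteq S'\Rightarrow\Psi_i(S)\subseteq\Psi_i(S')$ for arbitrary $S,S'\subseteq\Omega_{\ma P}$, for instance to compare Kleene iterates with an arbitrary fixpoint.
\end{itemize}

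\emph{The step is in fact false.} Take $\ma G=\{(a\ \tx{:p}\ b)\}$, $B_0=\{(?x\ \tx{:p}\ ?z)\}$, and $\ma P=\minus{B_0}{g}$ with $g=\genbind{\pi(\{?x\})}{\{?w\}}{\ma M}$. Assume $g$ returns one parsed tuple $(t)$ on the prompt instantiated with $a$.
\begin{itemize}
\item The pattern is well formed and its dependency graph has no edges. So $\sigma(g)=0$ is a stratification, $m=0$, and $\Psi_0(S)=\evalThree{\ma P}{\ma G}{S}$.
\item Let $\mu_1=\{?x\mapsto(a,\sourceRdf),\,?z\mapsto(b,\sourceRdf)\}$. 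Then $\Psi_0(\varnothing)=\{\mu_1\}$, since $g$ has no context.
\item But $\Psi_0(\{\mu_1\})=\varnothing$. Here $g$ returns $\mu_1\uplus\{?w\mapsto(t,\sourceGen)\}$, which is compatible with $\mu_1$ and shares $?x$ with it, so $\mu_1$ is subtracted.
\item Since $\Psi_0(S)\subseteq\{\mu_1\}$ for every $S$, $\Psi_0$ has no fixpoint at all, and $S_0$ does not exist.
\end{itemize}
With the definitions as written, the claimed well-definedness fails on this pattern. No argument from stratification alone can yield monotonicity.

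\emph{What a true statement needs.} You must change a hypothesis or the construction. One option is to require that no \tx{GenOp} enabled in $\ma P^{\le i}$ occurs in the subtrahend of \tx{Minus}/\tx{Diff} or in the optional operand of \tx{Opt}. Then every negative test is against a constant relation, and a structural induction as in Lemma~\ref{lem:monotone-detailed} goes through. Another option is to compute each \tx{GenOp}'s relation at its own stratum and substitute it as a constant inside any negating construct, which is itself evaluated only at a strictly higher stratum.
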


\begin{proof}
Assume stratification $\sigma:\genPattern{\ma P}\to \mathbb N$, and let $S_{-1}=\varnothing$ and for each $i\ge 0$:
\[F_i(S):=\Consequence{\ma P^{\leq i}}(S\cup S_{i-1}),\qquad S_i:=\lfp(F_i),\]
where $\ma P^{\le i}$ denotes the restriction of $\ma P$ to $\tx{GenOp}$ occurrences in strata at most $i$, as defined in the main text.


We then must show each $F_i$ is monotone, so $S_i$ exists and is unique. Let $S\subseteq S'$. Then $S\cup S_{i-1}\subseteq S'\cup S_{i-1}$. We claim $\Consequence{\ma P^{\leq i}}$ is monotone in its context argument when all negative dependencies of generators in strata $\le i$ point strictly below $i$ and are, therefore, evaluated against a fixed lower-stratum supply $S_{i-1}$. Stratification guarantees that any occurrence of an output variable of a stratum $\leq i$ generator under a negative operator can only depend on generators in strictly lower strata, hence, is already fixed when computing stratum $i$.

Thus, as the varying part $S$ grows, no negative test that depends on generated variables can flip from true to false. Therefore Lemma~\ref{lem:monotone-detailed} applies to the operator $S\mapsto \Consequence{\ma P^{\leq i}}(S\cup S_{i-1})$, producing $F_i(S)\subseteq F_i(S')$.

Hence, $F_i$ is monotone on $(2^{\Omega_{\ma P}},\subseteq)$, so by Theorem~\ref{thm:lfp-detailed} it has a unique least fixpoint $S_i$. Since $\Omega_{\ma P}$ is finite, each stratum stabilizes in finitely many steps. The final stratum value $S_m$ (with $m=\max\sigma$) is, therefore, well-defined and unique.
\end{proof}

\begin{theorem}[Order-independence within strata]
\label{thm:order-indep-detailed}
Let $\ma P$ be stratified and fix $i\ge 0$. Then, the computed stratum value $S_i=\lfp(F_i)$ is independent of any operational
evaluation order of $\tx{GenOp}$ occurrences within stratum $i$, provided that: (i) lower strata $<i$ are fully computed first, and (ii) within stratum $i$, evaluation is fair in the sense that any enabled $\tx{GenOp}$ occurrence is eventually applied.
\end{theorem}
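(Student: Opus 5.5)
Fix $i$ and assume, by condition (i), that $S_{i-1}$ has already been computed, so that $F_i(S)=\Consequence{\ma P^{\leq i}}(S\cup S_{i-1})$ is a fixed operator. By Theorem~\ref{thm:strat-detailed}, $F_i$ is monotone on the finite lattice $(2^{\Omega_{\ma P}},\subseteq)$, and $S_i=\lfp(F_i)$ is uniquely determined by $F_i$ alone. The task is therefore to show that every fair operational schedule within stratum $i$ computes exactly $\lfp(F_i)$. I would model an operational evaluation as a chaotic iteration. For each $\tx{GenOp}$ occurrence $g$ with $\sigma(g)=i$, let $F_i^{g}$ denote the operator that refires only $g$: it evaluates $\ma P^{\leq i}$ with the context set of $g$ drawn from the current supply and keeps the other occurrences' contributions as already accumulated. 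A schedule is then a sequence $g_{k_1},g_{k_2},\dots$, and the state is updated by $T_{n+1}=T_n\cup F_i^{g_{k_{n+1}}}(T_n)$, starting from $T_0=\varnothing$ (or from the base evaluation of the disabled pattern).

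The key facts I would establish are as follows.

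(a) Each $F_i^{g}$ is monotone. This follows from the same structural induction as Lemma~\ref{lem:monotone-detailed}. Negative tests refer only to outputs of strictly lower strata, which are frozen in $S_{i-1}$.

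(b) Each $F_i^{g}$ is bounded by $F_i$, i.e.\ $F_i^{g}(T)\subseteq F_i(T)$. The reason is that $F_i$ fires all occurrences at once, and by Proposition~\ref{prop:context-equiv-detailed} the output depends only on the context sets $\ContextS{T}{g}$.

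(c) A set $T$ satisfies $F_i^{g}(T)\subseteq T$ for all $g$ in stratum $i$ if and only if $F_i(T)\subseteq T$. The reason is that $\Consequence{\ma P^{\leq i}}$ is built compositionally, and the $\tx{GenOp}$ clause is a union over contexts (Definition~\ref{defn-genpattern}).

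With these in hand, the argument has two halves.

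\emph{Soundness.} By induction on $n$, $T_n\subseteq\lfp(F_i)$. If $T_n\subseteq\lfp(F_i)$, then by (b) and monotonicity $F_i^{g}(T_n)\subseteq F_i(\lfp F_i)=\lfp F_i$.

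\emph{Completeness.} The chain $(T_n)$ is ascending in the finite set $\Omega_{\ma P}$, so it stabilizes at some $T^\ast$. Fairness (condition (ii)) ensures that after stabilization every $g$ is applied again, so $F_i^{g}(T^\ast)\subseteq T^\ast$ for all $g$. By (c), $T^\ast$ is a pre-fixpoint of $F_i$, and Theorem~\ref{thm:minimality-detailed} (applied to $F_i$ rather than $\Consequence{\ma P}$, with the same proof) yields $\lfp(F_i)\subseteq T^\ast$. Hence $T^\ast=\lfp(F_i)$ for every fair schedule.

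The main obstacle is making (c) precise. In particular, it must be shown that combining single-occurrence refirings really reproduces the simultaneous firing in $F_i$ when a join needs fresh outputs of two stratum-$i$ occurrences at once. I would handle this by defining $F_i^{g}$ to re-evaluate the whole pattern $\ma P^{\leq i}$ with all occurrences reading from $T$. Under that definition (c) becomes trivial, since $F_i^{g}=F_i$ on states closed under other occurrences. The order-dependence then lies only in which context sets have been enlarged so far, and that is absorbed by monotonicity.
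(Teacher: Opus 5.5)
Your squeeze argument is the paper's own. The paper models a fair schedule as an ascending chain with $U_{\alpha+1}\subseteq F_i(U_\alpha)$ whose limit is a fixpoint of $F_i$. These are exactly your (b) and your (c)-plus-fairness. Your soundness induction is also the correct justification for the inclusion $U_\infty\subseteq S_i$, which the paper attributes to minimality; minimality only gives $S_i\subseteq U_\infty$.

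The gap is (c), that is, the model of a schedule itself. When the only state is a set $T$ of full mappings, ``refire only $g$ and keep the other occurrences' contributions'' has no meaning. Evaluation of $\ma P^{\leq i}$ is compositional, so inside a join every other occurrence must still be evaluated on some context set, and $T$ does not record one. Your fallback lets all occurrences read from $T$. That makes $F_i^{g}=F_i$ for every $g$ and every $T$, not merely on closed states. Every schedule then produces the same sequence $T_{n+1}=T_n\cup F_i(T_n)$, so the claim about evaluation order becomes vacuous rather than proved. Under your first reading, the ``if'' direction of (c) is exactly what breaks in the two-fresh-outputs case you flagged. The paper does not resolve this either; it simply postulates that a fair schedule's limit is a fixpoint.

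The repair is to keep the bookkeeping in the state.
\begin{itemize}
\item Take a family $C=(C_g)_g$ of context sets, one per \tx{GenOp} occurrence of $\ma P^{\leq i}$.
\item Let $E(C)$ evaluate $\ma P^{\leq i}$ with each $g$ drawing its contexts from $C_g\cup\ContextS{S_{i-1}}{g}$, so that $F_i(T)=E\bigl((\ContextS{T}{g})_g\bigr)$.
\item Let firing $g$ update $C_g\leftarrow C_g\cup\ContextS{E(C)}{g}$.
\end{itemize}
The structural induction behind Lemma~\ref{lem:monotone-detailed} gives monotonicity of $E$ in each coordinate.

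\emph{Soundness.} With $L=\lfp(F_i)$, the invariant $C_g\subseteq\ContextS{L}{g}$ is preserved, since $E(C)\subseteq E\bigl((\ContextS{L}{g})_g\bigr)=F_i(L)=L$.

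\emph{Completeness.} Finiteness and fairness make $C$ stabilize at some $C^\ast$ with $\ContextS{E(C^\ast)}{g}\subseteq C^\ast_g$ for all $g$. For $T=E(C^\ast)$ this gives $F_i(T)=E\bigl((\ContextS{T}{g})_g\bigr)\subseteq E(C^\ast)=T$, and Theorem~\ref{thm:minimality-detailed} yields $L\subseteq T$.

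Because the other coordinates persist between firings, a join that needs fresh outputs of two occurrences is handled without any simultaneous step.
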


\begin{proof}
Within stratum $i$, semantics is defined denotationally as the least fixpoint of the monotone operator $F_i$ on the complete lattice $2^{\Omega_{\ma P}}$.

Let $(T_n)_{n\ge 0}$ be the Kleene sequence for $F_i$: $T_0=\varnothing$, $T_{n+1}=F_i(T_n)$. Then, $S_i=\bigcup_n T_n$ by Theorem~\ref{thm:lfp-detailed}.

Any fair operational schedule that repeatedly applies enabled $\tx{GenOp}$ occurrences and closes under the algebra corresponds to generating an ascending chain $(U_\alpha)_{\alpha}$ such that: (1) $U_0=\varnothing$, (2) $U_{\alpha}\subseteq U_{\alpha+1}\subseteq F_i(U_\alpha)$, and (3) the limit $U_\infty:=\bigcup_\alpha U_\alpha$ is a fixpoint of $F_i$
(fairness ensures no further $F_i$-consequences are missed).

Since $T_n$ is the \emph{least} chain closed under full application of $F_i$, we have by induction $T_n\subseteq U_\infty$ for all $n$, hence $S_i=\bigcup_n T_n\subseteq U_\infty$. But, $U_\infty$ is a fixpoint of $F_i$, so by minimality of $\lfp(F_i)$ we also have $U_\infty\subseteq S_i$. Therefore, $U_\infty=S_i$, showing independence of the operational order.
\end{proof}

\subsection{Well-founded semantics: convergence and uniqueness}
\label{app:wfs}

\begin{theorem}[Convergence and uniqueness of well-founded semantics]
\label{thm:wfs-detailed}
If $\Cand{\ma P}$ is finite, then the alternating fixpoint sequence
$(I_n^+,I_n^-)$ from Definition~\ref{def:alt} stabilizes after finitely many steps
at a pair $(I^+,I^-)$, and the induced three-valued interpretation (true on $I^+$,
false on $I^-$, undefined otherwise) is unique.
\end{theorem}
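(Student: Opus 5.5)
The plan follows the classical alternating-fixpoint argument of van Gelder, instantiated to the finite lattice $(2^{\Cand{\ma P}},\subseteq)$. For a fixed $J\subseteq\Cand{\ma P}$, the reduct operator $S\mapsto\Consequence{\ma P}^J(S)$ is monotone, as noted after its definition. The argument for this is the same as in Lemma~\ref{lem:monotone-detailed}: every negative test is evaluated against the constant $J$, so only positive constructs depend on $S$. By Theorem~\ref{thm:lfp-detailed} it therefore has a unique least fixpoint, which we write $\Gamma(J):=\lfp(S\mapsto\Consequence{\ma P}^J(S))$. Consequently each $I_{n+1}^+$ and $I_{n+1}^-$ in Definition~\ref{def:alt} is uniquely determined by $I_n^-$.

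\emph{Step 1 (antitonicity).} I first show that $\Gamma$ is antitone: $J\subseteq J'$ implies $\Gamma(J')\subseteq\Gamma(J)$. The reason is that \tx{Minus}, \tx{Diff} and negated filters only remove mappings when a witness exists in the negative context. A larger $J'$ offers more witnesses and so removes at least as much, which gives $\Consequence{\ma P}^{J'}(S)\subseteq\Consequence{\ma P}^{J}(S)$ for every $S$. The Kleene chains of Theorem~\ref{thm:lfp-detailed} are then pointwise ordered by induction, and taking unions yields the inclusion of the least fixpoints. I expect this step to be the main obstacle. It needs a careful structural induction in which positive positions are treated monotonically in $S$ and negative positions antitonically in $J$, and the induction must cover the \tx{Diff} branch of \tx{Opt}.

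\emph{Step 2 (monotone sequences).} With antitonicity in hand, write $I_{n+1}^+=\Gamma(\Cand{\ma P}\setminus I_n^-)$ and $I_{n+1}^-=\Cand{\ma P}\setminus\Gamma(I_{n+1}^+)$. The map $A(X):=\Gamma(\Cand{\ma P}\setminus(\Cand{\ma P}\setminus\Gamma(X)))=\Gamma(\Gamma(X))$ is monotone, being a composite of two antitone maps, and $I_{n+1}^+=A(I_n^+)$ for $n\ge1$. Starting from $I_0^-=\varnothing$, we get $I_1^+=\Gamma(\Cand{\ma P})$, which is the smallest value $\Gamma$ takes. An induction then shows that $(I_n^+)_n$ is ascending and $(I_n^-)_n$ is ascending.

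\emph{Step 3 (consistency).} I also check that the pairs are consistent. We need $I^+\cap I^-=\varnothing$, i.e.\ $I^+\subseteq\Gamma(I^+)$. This holds because $I^+_{n+1}=\Gamma(\Cand{\ma P}\setminus I_n^-)$, and $\Cand{\ma P}\setminus I_n^-=\Gamma(I_n^+)\supseteq$ the relevant underestimate. It is proved alongside Step 2 in the usual manner.

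\emph{Step 4 (stabilization and uniqueness).} Since $\Cand{\ma P}$ is finite, each ascending chain strictly increases at most $|\Cand{\ma P}|$ times, so both sequences stabilize after finitely many steps at some $(I^+,I^-)$. Uniqueness is then immediate: every object in Definition~\ref{def:alt} is defined deterministically from least fixpoints, which are unique by Theorem~\ref{thm:lfp-detailed}. Hence the limit pair, and with it the induced three-valued interpretation, is uniquely determined by $\ma P$, $\ma G$ and $\Cand{\ma P}$.
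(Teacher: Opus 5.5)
Your proposal is correct and follows essentially the same route as the paper's proof. Both use monotonicity of the reduct operator for fixed $J$, and antitonicity of its least fixpoint in $J$; the paper uses this implicitly in its step~(2) when it argues that shrinking the negative context can only enlarge the positive least fixpoint. Both then derive ascending chains $(I_n^+)$ and $(I_n^-)$, stabilization by finiteness of the candidate set, and uniqueness from the deterministic construction.

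Packaging the growth argument through the monotone composite $\Gamma\circ\Gamma$ and adding a consistency check are refinements, not a different approach. Step~3 as written is only a gesture, however. The clean argument is the induction
$I_n^+\subseteq\Gamma(I_n^+)\Rightarrow I_{n+1}^+=\Gamma(\Gamma(I_n^+))\subseteq\Gamma(I_n^+)\Rightarrow I_{n+1}^+\subseteq\Gamma(I_{n+1}^+)$,
started from $I_1^+=\Gamma(\Cand{\ma P})\subseteq\Gamma(I_1^+)$.
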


\begin{proof}
We prove four claims.

\medskip\noindent
\emph{(1) Each step is well-defined.}
Fix $n\ge 0$.
Consider the operator $T_n(S):=\Consequence{\ma P}^{\Cand{\ma P}\setminus I_n^-}(S)$.
By definition of reduct consequence operator, all negative constructs are evaluated against the fixed set $\Cand{\ma P}\setminus I_n^-$, and thus, do not depend on $S$. Therefore, $T_n$ is monotone on $(2^{\Cand{\ma P}},\subseteq)$: if $S\subseteq S'$, the only dependence on $S$ is through positive constructs, which are monotone in $S$ exactly as in Lemma~\ref{lem:monotone-detailed}.
Hence, $\lfp(T_n)$ exists and defines $I_{n+1}^+$.

Similarly, let $U_n(S):=\Consequence{\ma P}^{I_{n+1}^+}(S)$, which is monotone by
the same reasoning, and $\lfp(U_n)$ exists. Then, $I_{n+1}^-=\Cand{\ma P}\setminus \lfp(U_n)$ is well-defined.

\medskip\noindent
\emph{(2) Monotone growth: $I_n^+\subseteq I_{n+1}^+$ and $I_n^-\subseteq I_{n+1}^-$.}
We prove both inclusions by induction on $n$.

\emph{Base case} is trivial since $I_0^+=I_0^-=\varnothing$.

\emph{Inductive step}: let $I_n^-\subseteq I_{n+1}^-$.
Then, $\Cand{\ma P}\setminus I_{n+1}^- \subseteq \Cand{\ma P}\setminus I_n^-$.
Shrinking the set against which negatives are evaluated makes it harder for negatives to block derivations, so the positive least fixpoint can only grow. That is, for any $S$, evaluating negative checks against $\Cand{\ma P}\setminus I_{n+1}^-$ is weaker than against $\Cand{\ma P}\setminus I_n^-$, hence, $T_n(S)\subseteq T_{n+1}(S)$ pointwise. By monotone operator theory, pointwise inclusion implies inclusion of least fixpoints: $\lfp(T_n)\subseteq \lfp(T_{n+1})$, hence $I_{n+1}^+\supseteq I_n^+$.

Next, since $I_{n+1}^+\supseteq I_n^+$, negative checks evaluated against a larger fixed set may block more derivations, so $\lfp(U_n)$ can only shrink, and thus its complement $I_{n+1}^-$ can only grow. This gives $I_{n+1}^-\supseteq I_n^-$.

\medskip\noindent
\emph{(3) Finite convergence.} Because $\Cand{\ma P}$ is finite, $2^{\Cand{\ma P}}$ is finite.
By (2), $(I_n^+)_n$ and $(I_n^-)_n$ are ascending chains in finite posets, hence, each stabilizes after finitely many strict increases. Therefore, the pair sequence stabilizes.

\medskip\noindent
\emph{(4) Uniqueness.}
Let $(J^+,J^-)$ be any pair satisfying the alternating fixpoint equations:
\begin{multline*}
    J^+ = \lfp\big(S\mapsto \Consequence{\ma P}^{\Cand{\ma P}\setminus J^-}(S)\big),\\
J^- = \Cand{\ma P}\setminus \lfp\big(S\mapsto \Consequence{\ma P}^{J^+}(S)\big).
\end{multline*}
Assume the information ordering $(A^+,A^-)\preceq (B^+,B^-)$ iff $A^+\subseteq B^+$ and $A^-\subseteq B^-$.
The construction starts from the least element $(\varnothing,\varnothing)$ and, by (2), proceeds monotonically in this ordering, so its limit is the least solution of the defining equations. Therefore, it is unique.
\end{proof}

\section{Algebraic Equivalences and Rewrite Rules}
\label{sec:equivalences}

All equivalences below are with respect to the denotational semantics $\evalTwo{\ma P}{\ma G}$ of Section~\ref{sect:syntax_semantics} over a fixed RDF graph $\ma G$.

We assume the underlying string similarity $\stringSim:\Sigma^*\times\Sigma^*\to[0,1]$ is \emph{symmetric},
\ie $\stringSim(s,s')=\stringSim(s',s)$ for all $s,s'\in\Sigma^*$.


\begin{lemma}[Symmetry of typed-term similarity]
\label{lem:typed-sim-sym}
For all typed terms $\alpha,\beta\in (\ma T\cup\ma T_{\tx{gen}})\times\{\sourceRdf,\sourceGen\}$,
we have $\simtext(\alpha,\beta)=\simtext(\beta,\alpha)$, where $\simtext(\cdot,\cdot)$ is defined with $\stringSim$
in Section~\ref{sect:syntax_semantics}.
\end{lemma}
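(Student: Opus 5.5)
The plan is a case analysis over the four (really five) clauses in the definition of $\simtext$, indexed by the type pair $(\tau,\tau')$ of $\alpha=(t,\tau)$ and $\beta=(t',\tau')$. In each case we compute $\simtext(\alpha,\beta)$ and $\simtext(\beta,\alpha)$ directly from the definition and compare them, using only the assumed symmetry of $\stringSim$.

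\emph{Case $\tau=\tau'=\sourceGen$.} Here $\simtext(\alpha,\beta)=\stringSim(\lexical(t),\lexical(t'))$ and $\simtext(\beta,\alpha)=\stringSim(\lexical(t'),\lexical(t))$. These agree by symmetry of $\stringSim$.

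\emph{Mixed cases.} If $\tau=\sourceRdf$ and $\tau'=\sourceGen$, the second clause gives $\simtext(\alpha,\beta)=\stringSim(\canonical(t),\lexical(t'))$. With the arguments swapped, the first argument $\beta$ has type $\sourceGen$ and the second has type $\sourceRdf$, so the third clause applies and yields $\simtext(\beta,\alpha)=\stringSim(\lexical(t'),\canonical(t))$. Symmetry of $\stringSim$ makes these equal. The case $\tau=\sourceGen$, $\tau'=\sourceRdf$ is the mirror image.

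\emph{Case $\tau=\tau'=\sourceRdf$.} This is the one point that needs care, because the definition only gives an explicit value when $t=t'$, namely $1$ in both orders. If $t\neq t'$, no displayed clause applies. I will make this explicit by reading the definition as intended: either the value is $\stringSim(\canonical(t),\canonical(t'))$ by analogy with the other clauses, or it is the default $0$, matching equality-based compatibility on RDF bindings. Under the first reading, symmetry of $\stringSim$ gives equality. Under the second, both orders give the same constant. Either way the two values coincide. The key observation is that the case condition, $\tau=\tau'=\sourceRdf$ together with whether $t=t'$, is itself symmetric in $\alpha$ and $\beta$.

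Since the cases exhaust all type pairs and each comparison reduces to symmetry of $\stringSim$ or of equality, the lemma follows. The only real obstacle is the underspecified RDF–RDF clause with $t\neq t'$. I would state the completion convention explicitly and note that the result holds under either natural completion.
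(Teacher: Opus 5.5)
Your proof is correct and follows the paper's own argument: a case analysis on the type pair, with the generated--generated and mixed cases reduced to symmetry of $\stringSim$. For the underspecified RDF--RDF case with $t\neq t'$, the paper takes the value $0$, your second completion and the same convention it relies on later in the conservativity proof, so your observation that symmetry holds under either reading is a harmless extra.
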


\begin{proof}
Let $\alpha=(t,\tau)$ and $\beta=(t',\tau')$.
If $\tau=\tau'=\sourceRdf$, then $\simtext(\alpha,\beta)=1$ iff $t=t'$ and otherwise $\simtext(\alpha,\beta)=0$; symmetry is immediate.
If $(\tau,\tau')=(\sourceRdf,\sourceGen)$, then
$\simtext(\alpha,\beta)=\stringSim(\canonical(t),\lexical(t'))$ and
$\simtext(\beta,\alpha)=\stringSim(\lexical(t'),\canonical(t))$,
which are equal by symmetry of $\stringSim$.
The remaining two cases are analogous.
\end{proof}

\begin{lemma}[Symmetry of canonical compatibility]
\label{lem:term-compat-sym}
For every variable $v\in\ma V$ and typed terms $\alpha,\beta$,
\[
\compact{v}{\alpha}{\beta}=\top \quad\text{iff}\quad \compact{v}{\beta}{\alpha}=\top.
\]
\end{lemma}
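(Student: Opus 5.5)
The claim follows by unfolding the definition of $\compact{v}{\cdot}{\cdot}$. By definition, $\compact{v}{\alpha}{\beta}=\top$ holds iff $\rep_v(\alpha)=\rep_v(\beta)$. Equality of typed terms is symmetric, so $\rep_v(\alpha)=\rep_v(\beta)$ iff $\rep_v(\beta)=\rep_v(\alpha)$, and the latter is exactly $\compact{v}{\beta}{\alpha}=\top$. On this reading the lemma needs nothing beyond the fact that $\rep_v$ is a well-defined function, one representative per class.

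To make that well-definedness explicit, I will check that $\rep_v$ is a genuine function on typed terms. The relation $\approx_v$ is defined by $\simtext(\alpha,\beta)\ge\theta(v)$. By Lemma~\ref{lem:typed-sim-sym}, $\simtext$ is symmetric, so $\approx_v$ is symmetric. Its reflexive–transitive closure is therefore an equivalence relation: it is reflexive and transitive by construction, and symmetric because it is the closure of a symmetric relation, with paths reversible edge by edge. Since $\preceq$ is a fixed deterministic total order and, under bounded generation, each class is finite, each class has a unique $\preceq$-minimal element. Thus $\rep_v(\alpha)$ is uniquely determined, and $\rep_v(\alpha)=\rep_v(\beta)$ holds iff $\alpha$ and $\beta$ lie in the same class.

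The only real subtlety is the symmetry of the closure, which is where Lemma~\ref{lem:typed-sim-sym} is used. Without it, the reflexive–transitive closure of a non-symmetric $\approx_v$ would only be a preorder, and ``equivalence class'' and $\rep_v$ would be ill-defined. Once symmetry is in place, the statement is immediate from the symmetry of equality.
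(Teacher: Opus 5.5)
Your proof is correct and matches the paper's, which simply unfolds $\compact{v}{\alpha}{\beta}=\top$ to $\rep_v(\alpha)=\rep_v(\beta)$ and uses the symmetry of equality. Your additional check that $\rep_v$ is well defined, via the symmetry of the similarity in Lemma~\ref{lem:typed-sim-sym}, is harmless but not needed: the lemma only uses that $\rep_v$ is some function, and the definition of canonical compatibility already presupposes this.
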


\begin{proof}
By definition, $\compact{v}{\alpha}{\beta}=\top$ iff $\rep_v(\alpha)=\rep_v(\beta)$, which is symmetric.
\end{proof}

\begin{lemma}[Symmetry of mapping compatibility]
\label{lem:map-compat-sym}
For all typed solution mappings $\mu_1,\mu_2$,
\[
\mu_1\sim\mu_2 \quad\text{iff}\quad \mu_2\sim\mu_1.
\]
\end{lemma}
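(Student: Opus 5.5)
The plan is to unfold Definition~\ref{def:compat-union} and reduce the claim pointwise to Lemma~\ref{lem:term-compat-sym}. By definition, $\mu_1\sim\mu_2$ holds iff for every $v\in\domain{\mu_1}\cap\domain{\mu_2}$ we have $\compact{v}{\mu_1(v)}{\mu_2(v)}=\top$. Symmetrically, $\mu_2\sim\mu_1$ holds iff for every $v\in\domain{\mu_2}\cap\domain{\mu_1}$ we have $\compact{v}{\mu_2(v)}{\mu_1(v)}=\top$.

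The proof proceeds in two steps.

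First, I observe that the quantification ranges over the same set in both conditions, since set intersection is commutative: $\domain{\mu_1}\cap\domain{\mu_2}=\domain{\mu_2}\cap\domain{\mu_1}$.

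Second, I fix an arbitrary $v$ in this common domain and apply Lemma~\ref{lem:term-compat-sym} with $\alpha=\mu_1(v)$ and $\beta=\mu_2(v)$. This gives $\compact{v}{\mu_1(v)}{\mu_2(v)}=\top$ iff $\compact{v}{\mu_2(v)}{\mu_1(v)}=\top$. Since the two universally quantified conditions range over the same set and are pointwise equivalent, they are equivalent, which yields $\mu_1\sim\mu_2$ iff $\mu_2\sim\mu_1$.

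One degenerate case needs a remark. If the domains are disjoint, both conditions hold vacuously, so the equivalence is trivial there.

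I expect no real obstacle. The only point needing care is that compatibility is defined through the representative map $\rep_v$, so $\compact{v}{\cdot}{\cdot}$ depends only on the equality $\rep_v(\alpha)=\rep_v(\beta)$. That equality is symmetric regardless of whether $\approx_v$ itself is symmetric, which is exactly what Lemma~\ref{lem:term-compat-sym} records. (Lemma~\ref{lem:typed-sim-sym} additionally guarantees that $\approx_v$ is symmetric, so the induced closure really is an equivalence relation and $\rep_v$ is well defined.)

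The canonical union $\mu_1\uplus\mu_2$ is not symmetric on shared variables as literally written, since it uses $\rep_v(\mu_1(v))$. That asymmetry plays no role here, because the statement concerns only the relation $\sim$.
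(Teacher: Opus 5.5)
Your proof is correct and follows the paper's argument. The paper unfolds Definition~\ref{def:compat-union} and observes that the condition $\rep_v(\mu_1(v))=\rep_v(\mu_2(v))$ for all $v\in\domain{\mu_1}\cap\domain{\mu_2}$ is symmetric in $\mu_1,\mu_2$; you reach the same pointwise symmetry by citing Lemma~\ref{lem:term-compat-sym}, which records exactly that fact. One small correction to your closing aside: $\mu_1\uplus\mu_2$ is only defined when $\mu_1\sim\mu_2$, and then $\rep_v(\mu_1(v))=\rep_v(\mu_2(v))$ on shared variables, so the canonical union is in fact commutative (Lemma~\ref{lem:union-comm}), though this has no bearing on your argument.
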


\begin{proof}
By Definition~\ref{def:compat-union}, $\mu_1\sim\mu_2$ holds iff
for every $v\in\domain{\mu_1}\cap\domain{\mu_2}$ we have $\rep_v(\mu_1(v))=\rep_v(\mu_2(v))$.
This condition is symmetric in $\mu_1,\mu_2$.
\end{proof}

\begin{lemma}[Canonical union preserves bindings outside overlaps]
\label{lem:union-preserve}
If $\mu_1\sim \mu_2$, then:
\begin{enumerate}
\item for every $v\in\domain{\mu_1}\setminus\domain{\mu_2}$, $(\mu_1\uplus\mu_2)(v)=\mu_1(v)$;
\item for every $v\in\domain{\mu_2}\setminus\domain{\mu_1}$, $(\mu_1\uplus\mu_2)(v)=\mu_2(v)$;
\item for every $v\in\domain{\mu_1}\cap\domain{\mu_2}$,
$(\mu_1\uplus\mu_2)(v)=\rep_v(\mu_1(v))=\rep_v(\mu_2(v))$.
\end{enumerate}
\end{lemma}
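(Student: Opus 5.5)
The plan is to unfold Definition~\ref{def:compat-union} and use only the fact that compatibility $\mu_1\sim\mu_2$ is given. First, well-definedness: since $\mu_1\sim\mu_2$, the canonical union $\mu_1\uplus\mu_2$ is defined on $\domain{\mu_1}\cup\domain{\mu_2}$. This domain is partitioned into the three disjoint pieces $\domain{\mu_1}\setminus\domain{\mu_2}$, $\domain{\mu_2}\setminus\domain{\mu_1}$ and $\domain{\mu_1}\cap\domain{\mu_2}$. Exactly one clause of the case definition applies to each variable, so the three items can be read off case by case.

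Items 1 and 2 are then immediate from the second and third clauses of the definition of $\uplus$. No property of $\rep_v$ is needed there, because on the non-overlapping parts the union simply copies the binding from the mapping that owns the variable.

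For item 3, take $v\in\domain{\mu_1}\cap\domain{\mu_2}$. By the first clause, $(\mu_1\uplus\mu_2)(v)=\rep_v(\mu_1(v))$, which gives the first equality. For the second equality I will invoke compatibility: $\mu_1\sim\mu_2$ means that for every shared variable $\compact{v}{\mu_1(v)}{\mu_2(v)}=\top$. By the definition of canonical compatibility, this is precisely $\rep_v(\mu_1(v))=\rep_v(\mu_2(v))$. I would also remark, using Lemma~\ref{lem:map-compat-sym}, that the definition is therefore symmetric on overlaps: choosing $\mu_1$ in the first clause is no loss, since $\mu_2$ would give the same value.

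The only step needing care is confirming that $\rep_v$ is a well-defined function, so that the equation $\rep_v(\mu_1(v))=\rep_v(\mu_2(v))$ is meaningful. This requires the equivalence classes of the reflexive--transitive closure of $\approx_v$ and the fixed total order $\preceq$. Both are given in the main text under bounded generation, and I would cite that paragraph rather than reprove it. No real obstacle is expected beyond this bookkeeping.
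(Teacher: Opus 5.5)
Your proposal is correct and matches the paper, whose proof just says the lemma is immediate from Definition~\ref{def:compat-union}. As in your plan, items 1 and 2 are the two non-overlap clauses, and item 3 is the overlap clause combined with the fact that $\mu_1\sim\mu_2$ unfolds to $\rep_v(\mu_1(v))=\rep_v(\mu_2(v))$; the appeal to Lemma~\ref{lem:map-compat-sym} is unnecessary, since the symmetry on overlaps comes from compatibility itself rather than from symmetry of $\sim$, but it does no harm.
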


\begin{proof}
Immediate from Definition~\ref{def:compat-union}.
\end{proof}

\begin{lemma}[Canonical union is commutative]
\label{lem:union-comm}
If $\mu_1\sim\mu_2$, then $\mu_1\uplus\mu_2=\mu_2\uplus\mu_1$.
\end{lemma}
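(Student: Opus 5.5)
The plan is to prove the equality of the two mappings pointwise. Both have the same domain, namely $\domain{\mu_1}\cup\domain{\mu_2}$, directly from Definition~\ref{def:compat-union}. Here $\mu_2\uplus\mu_1$ is defined because $\mu_2\sim\mu_1$ holds by Lemma~\ref{lem:map-compat-sym}. So it suffices to fix an arbitrary variable $v$ in this common domain and check that $(\mu_1\uplus\mu_2)(v)=(\mu_2\uplus\mu_1)(v)$.

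I split into the three cases of Definition~\ref{def:compat-union} and use Lemma~\ref{lem:union-preserve} for each of the two unions.

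\begin{enumerate}
\item If $v\in\domain{\mu_1}\setminus\domain{\mu_2}$, then Lemma~\ref{lem:union-preserve}(1) applied to $\mu_1\uplus\mu_2$ gives the value $\mu_1(v)$. Lemma~\ref{lem:union-preserve}(2) applied to $\mu_2\uplus\mu_1$, with the roles of the two mappings swapped, also gives $\mu_1(v)$.
\item The case $v\in\domain{\mu_2}\setminus\domain{\mu_1}$ is symmetric to the first.
\item If $v\in\domain{\mu_1}\cap\domain{\mu_2}$, then $(\mu_1\uplus\mu_2)(v)=\rep_v(\mu_1(v))$ and $(\mu_2\uplus\mu_1)(v)=\rep_v(\mu_2(v))$.
\end{enumerate}

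The only step with content is the overlap case, because the defining clause is deliberately asymmetric: it picks the representative of $\mu_1(v)$, not of $\mu_2(v)$. Commutativity therefore needs $\rep_v(\mu_1(v))=\rep_v(\mu_2(v))$. This follows from $\mu_1\sim\mu_2$, which gives $\compact{v}{\mu_1(v)}{\mu_2(v)}=\top$, and this is by definition exactly the equality of representatives. This is also the content of Lemma~\ref{lem:union-preserve}(3).

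I would remark that $\rep_v$ is a well-defined function. It is the $\preceq$-minimal element of the equivalence class under the reflexive–transitive closure of $\approx_v$, and that closure is symmetric by Lemma~\ref{lem:typed-sim-sym}. Hence the closure is a genuine equivalence relation and $\rep_v(\mu_1(v))$ depends only on the class. Since the three cases exhaust the common domain and the values agree in each, the mappings are equal.
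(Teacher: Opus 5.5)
Your proof is correct and matches the paper's argument. Off the overlap, both unions agree by Lemma~\ref{lem:union-preserve}. On the overlap, both yield the common representative, because $\mu_1\sim\mu_2$ means exactly $\rep_v(\mu_1(v))=\rep_v(\mu_2(v))$. Your extra checks, that $\mu_2\uplus\mu_1$ is defined by Lemma~\ref{lem:map-compat-sym} and that $\rep_v$ is well defined, are details the paper leaves implicit.
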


\begin{proof}
By Lemma~\ref{lem:union-preserve}, both unions agree on variables outside the overlap.
On the overlap $\domain{\mu_1}\cap\domain{\mu_2}$, both yield the representative $\rep_v(\mu_1(v))=\rep_v(\mu_2(v))$.
Hence the resulting partial functions are equal.
\end{proof}

\subsubsection{Join}

From Section~\ref{sect:syntax_semantics}:
\begin{multline*}
    \evalTwo{\join{P_1}{\ma P_2}}{\ma G}=\{\mu_1\uplus\mu_2 \mid \mu_1\in \evalTwo{P_1}{\ma G},\\ \mu_2\in\evalTwo{\ma P_2}{\ma G},\ \mu_1\sim\mu_2\}.
\end{multline*}
\begin{proposition}
\label{prop:join-comm}
For all patterns $P_1,\ma P_2$,
\[
\evalTwo{\join{P_1}{\ma P_2}}{\ma G}=\evalTwo{\join{\ma P_2}{P_1}}{\ma G}.
\]
\end{proposition}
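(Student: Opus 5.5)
The plan is to argue by double inclusion, directly from the join clause and the symmetry lemmas just established. Fix $\ma G$ and the context supply $S$. The first step is to check that $S$ plays no asymmetric role. In the single-pass join clause, each operand $\ma P_1,\ma P_2$ is evaluated on its own, and the same $S$ is passed to both. The operand sets $A_1:=\evalTwo{\ma P_1}{\ma G}$ and $A_2:=\evalTwo{\ma P_2}{\ma G}$ are therefore identical whichever order the join is written in. If a \tx{GenOp} inside an operand takes its context from the surrounding pattern, that surrounding pattern ($\ma P\setminus\{g\}$) is built by replacing the occurrence with $\emptypat$ and pruning. The pruning rules for \tx{Join} are symmetric ($\emptypat\ \tx{Join}\ \ma P'\leadsto\ma P'$ and $\ma P'\ \tx{Join}\ \emptypat\leadsto\ma P'$), so the context supplied to any \tx{GenOp} occurrence is unaffected by swapping the operands. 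I expect this bookkeeping to be the only real obstacle. Once it is settled, both sides are functions of the same pair $(A_1,A_2)$, and the remainder is pure set algebra.

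For the inclusion from left to right, take $\mu\in\evalTwo{\join{\ma P_1}{\ma P_2}}{\ma G}$. By the join clause there are $\mu_1\in A_1$ and $\mu_2\in A_2$ with $\mu_1\sim\mu_2$ and $\mu=\mu_1\uplus\mu_2$.
\begin{itemize}
\item Lemma~\ref{lem:map-compat-sym} gives $\mu_2\sim\mu_1$, so $\mu_2\uplus\mu_1$ is defined.
\item Lemma~\ref{lem:union-comm} gives $\mu_2\uplus\mu_1=\mu_1\uplus\mu_2=\mu$.
\end{itemize}
The pair $(\mu_2,\mu_1)$ is then a witness for $\mu\in\evalTwo{\join{\ma P_2}{\ma P_1}}{\ma G}$.

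The reverse inclusion follows by the same argument with the indices swapped. Both Lemma~\ref{lem:map-compat-sym} and Lemma~\ref{lem:union-comm} are symmetric in their arguments, so nothing extra is needed there.

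The point worth stating explicitly is the overlap variables. There, $\uplus$ returns the representative $\rep_v(\mu_1(v))$, and this choice could in principle depend on the order of the arguments. It does not, because $\mu_1\sim\mu_2$ means $\rep_v(\mu_1(v))=\rep_v(\mu_2(v))$. This is exactly the content of Lemma~\ref{lem:union-preserve}(3), and it is why commutativity survives even though compatibility is similarity-based rather than equality-based.
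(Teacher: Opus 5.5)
Your proof is correct and is essentially the paper's own argument: take a witness pair $(\mu_1,\mu_2)$ for the left-hand side, flip it via Lemma~\ref{lem:map-compat-sym} and Lemma~\ref{lem:union-comm}, and obtain the reverse inclusion by symmetry. The paper skips your context-supply bookkeeping and simply treats $\evalTwo{\ma P_1}{\ma G}$ and $\evalTwo{\ma P_2}{\ma G}$ as fixed sets; note that your pruning-symmetry remark only covers a \tx{GenOp} that is itself an operand, because for a nested occurrence $g$ the surrounding pattern changes from $\join{(\ma P_1\setminus\{g\})}{\ma P_2}$ to $\join{\ma P_2}{(\ma P_1\setminus\{g\})}$, and equating those would need an induction on the number of \tx{GenOp} occurrences.
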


\begin{proof}
Let $\mu\in \evalTwo{\join{P_1}{\ma P_2}}{\ma G}$.
Then $\mu=\mu_1\uplus\mu_2$ with $\mu_i\in\evalTwo{P_i}{\ma G}$ and $\mu_1\sim\mu_2$.
By Lemma~\ref{lem:map-compat-sym}, $\mu_2\sim\mu_1$, and by Lemma~\ref{lem:union-comm},
$\mu=\mu_2\uplus\mu_1\in \evalTwo{\join{\ma P_2}{P_1}}{\ma G}$.
The reverse inclusion is symmetric.
\end{proof}

\begin{proposition}
\label{prop:join-assoc}
For all patterns $P_1,\ma P_2,\ma P_3$,
\[
\evalTwo{\join{(\join{\ma P_1}{\ma P_2})}{\ma P_3}}{\ma G}
=
\evalTwo{\join{\ma P_1}{(\join{\ma P_2}{\ma P_3})}}{\ma G}.
\]
\end{proposition}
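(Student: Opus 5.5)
The plan is to treat $\evalTwo{\ma P_1}{\ma G}$, $\evalTwo{\ma P_2}{\ma G}$ and $\evalTwo{\ma P_3}{\ma G}$ as three fixed sets $\Omega_1,\Omega_2,\Omega_3$ of typed solution mappings; under the single-pass clauses, each subpattern is evaluated against the same graph and the same context supply. The claim then reduces to a purely mapping-level identity. For all $\mu_i\in\Omega_i$, the following two conditions should be equivalent:
\[(\mu_1\sim\mu_2 \wedge (\mu_1\uplus\mu_2)\sim\mu_3) \quad\text{and}\quad (\mu_2\sim\mu_3\wedge \mu_1\sim(\mu_2\uplus\mu_3)).\]
Moreover, when they hold, the two resulting unions should coincide. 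Both inclusions of the proposition then follow by taking the same witnessing triple $(\mu_1,\mu_2,\mu_3)$ on each side.

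\textbf{Idempotence of representatives.} The first auxiliary step is to show $\rep_v(\rep_v(\alpha))=\rep_v(\alpha)$. Since $\approx_v$ is reflexive (whenever $\simtext(\alpha,\alpha)\ge\theta(v)$, and in any case after taking the reflexive closure) and symmetric by Lemma~\ref{lem:typed-sim-sym}, its reflexive–transitive closure is an equivalence relation. Hence $\rep_v(\alpha)$ lies in the class of $\alpha$, and its own minimal representative is itself. A consequence is that, for $v\in\domain{\mu_1}\cap\domain{\mu_2}$, we get $\compact{v}{(\mu_1\uplus\mu_2)(v)}{\beta}=\top$ iff $\rep_v(\mu_1(v))=\rep_v(\beta)$. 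By Lemma~\ref{lem:union-preserve}, the union $\mu_1\uplus\mu_2$ therefore has the same $\rep_v$-class at every variable of its domain as whichever $\mu_i$ binds that variable.

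\textbf{Equivalence of the compatibility conditions.} Using this, I will show that each side's compatibility condition is equivalent to the symmetric statement that for all $i\neq j$ and all $v\in\domain{\mu_i}\cap\domain{\mu_j}$, $\rep_v(\mu_i(v))=\rep_v(\mu_j(v))$. Take the left side. Here $(\mu_1\uplus\mu_2)\sim\mu_3$ means that for each $v\in(\domain{\mu_1}\cup\domain{\mu_2})\cap\domain{\mu_3}$, the class of $\mu_3(v)$ equals the class of whichever of $\mu_1(v),\mu_2(v)$ is defined. Given $\mu_1\sim\mu_2$, these two classes agree when both are defined, so the left condition is exactly pairwise class-agreement. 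The right side is handled symmetrically, using Lemma~\ref{lem:map-compat-sym} where needed.

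\textbf{Agreement of the unions.} Finally, I will check that the two unions agree pointwise by cases on how many of the domains contain $v$. If $v$ lies in exactly one domain, both sides return that original value, by Lemma~\ref{lem:union-preserve}. If $v$ lies in at least two domains, both sides return $\rep_v$ of the common class. For instance, if $v\in\domain{\mu_1}\cap\domain{\mu_2}$ but $v\notin\domain{\mu_3}$, the left side yields $\rep_v(\mu_1(v))$. The right side first yields $\mu_2(v)$ and then $\rep_v(\mu_1(v))$. If $v$ lies in all three domains, idempotence collapses the nested representatives to the same value. The domains are $\domain{\mu_1}\cup\domain{\mu_2}\cup\domain{\mu_3}$ on both sides.

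I expect the main obstacle to be the idempotence/class-preservation step. Canonical union replaces values by representatives, so associativity fails unless $\rep_v$ is a genuine class-invariant projection; the rest is bookkeeping. I would also state explicitly that the equivalence is meant for the single-pass semantics with a common context supply, in line with the remark that only context-preserving SPARQL laws remain valid (cf.\ Example~\ref{ex:unsafe-reorder}).
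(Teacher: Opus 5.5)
Your proof is correct and follows essentially the same route as the paper's. The paper also fixes $\mu_i\in\evalTwo{\ma P_i}{\ma G}$, derives the cross-compatibilities $\mu_2\sim\mu_3$ and $\mu_1\sim(\mu_2\uplus\mu_3)$ by comparing $\rep_v$ of the union with $\rep_v$ of the original bindings, and checks pointwise equality by cases on how many domains contain $v$; it proves only $\subseteq$ and calls the converse analogous. Your symmetric pairwise-class-agreement formulation handles both inclusions at once, and it makes explicit the idempotence of $\rep_v$ that the paper uses tacitly in its Claim~1.
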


\begin{proof}
We prove $\subseteq$; the reverse direction is analogous.

Let $\mu\in \evalTwo{\join{(\join{\ma P_1}{\ma P_2})}{\ma P_3}}{\ma G}$. Then, there exist $\mu_{12}\in\evalTwo{\join{\ma P_1}{\ma P_2}}{\ma G}$ and $\mu_3\in\evalTwo{\ma P_3}{\ma G}$ such that $\mu=\mu_{12}\uplus\mu_3$ and $\mu_{12}\sim\mu_3$.
Unfolding $\mu_{12}\in\evalTwo{\join{\ma P_1}{\ma P_2}}{\ma G}$, there exist $\mu_1\in\evalTwo{\ma P_1}{\ma G}$ and $\mu_2\in\evalTwo{\ma P_2}{\ma G}$ such that $\mu_{12}=\mu_1\uplus\mu_2$ and $\mu_1\sim\mu_2$.

\smallskip\noindent
\emph{Claim 1:} $\mu_2\sim\mu_3$.
Let $v\in\domain{\mu_2}\cap\domain{\mu_3}$. Then, $v\in\domain{\mu_{12}}\cap\domain{\mu_3}$, so $\mu_{12}\sim\mu_3$ gives $\rep_v(\mu_{12}(v))=\rep_v(\mu_3(v))$. Also, by Lemma~\ref{lem:union-preserve}, $\mu_{12}(v)$ equals either $\mu_2(v)$ (if $v\notin\domain{\mu_1}$) or $\rep_v(\mu_2(v))$ (if $v\in\domain{\mu_1}\cap\domain{\mu_2}$),
and in both cases $\rep_v(\mu_{12}(v))=\rep_v(\mu_2(v))$. Hence, $\rep_v(\mu_2(v))=\rep_v(\mu_3(v))$, \ie $\mu_2\sim\mu_3$.

Let $\mu_{23}:=\mu_2\uplus\mu_3$. Then $\mu_{23}\in\evalTwo{\join{\ma P_2}{\ma P_3}}{\ma G}$.

\smallskip\noindent
\emph{Claim 2:} $\mu_1\sim\mu_{23}$ and $\mu=\mu_1\uplus\mu_{23}$.
Let $v\in\domain{\mu_1}\cap\domain{\mu_{23}}$.
If $v\in\domain{\mu_2}$, then $\mu_1\sim\mu_2$ implies $\rep_v(\mu_1(v))=\rep_v(\mu_2(v))$.
Moreover, $\mu_{23}(v)$ is either $\mu_2(v)$ or $\rep_v(\mu_2(v))$, hence $\rep_v(\mu_{23}(v))=\rep_v(\mu_2(v))$.
Thus $\rep_v(\mu_1(v))=\rep_v(\mu_{23}(v))$.
If instead $v\in\domain{\mu_3}\setminus\domain{\mu_2}$, then $v\in\domain{\mu_{12}}\cap\domain{\mu_3}$ (since $v\in\domain{\mu_1}\subseteq\domain{\mu_{12}}$),
and $\mu_{12}\sim\mu_3$ gives $\rep_v(\mu_{12}(v))=\rep_v(\mu_3(v))$.
But $v\notin\domain{\mu_2}$ implies $\mu_{12}(v)=\mu_1(v)$ by Lemma~\ref{lem:union-preserve}, hence $\rep_v(\mu_1(v))=\rep_v(\mu_3(v))$.
Also $\mu_{23}(v)=\mu_3(v)$, so $\rep_v(\mu_1(v))=\rep_v(\mu_{23}(v))$.
Therefore $\mu_1\sim\mu_{23}$.

For equality, note that both $(\mu_1\uplus\mu_2)\uplus\mu_3$ and $\mu_1\uplus(\mu_2\uplus\mu_3)$ are defined (by the compatibilities shown)
and, for each variable $v$ in the union domain, both evaluate to:
(i) the unique non-overlap binding when $v$ occurs in exactly one of $\mu_1,\mu_2,\mu_3$, or
(ii) $\rep_v(\cdot)$ of the shared equivalence class when $v$ occurs in two or three of them.
Hence $(\mu_1\uplus\mu_2)\uplus\mu_3=\mu_1\uplus(\mu_2\uplus\mu_3)$.

Thus $\mu=\mu_{12}\uplus\mu_3=(\mu_1\uplus\mu_2)\uplus\mu_3=\mu_1\uplus(\mu_2\uplus\mu_3)=\mu_1\uplus\mu_{23}$, and therefore
$\mu\in \evalTwo{\join{\ma P_1}{(\join{\ma P_2}{\ma P_3})}}{\ma G}$.
\end{proof}

\begin{proposition}
\label{prop:join-dist}
For all patterns $\ma P_1,\ma P_2,\ma P_3$,
\begin{multline*}
    \evalTwo{\join{(\union{\ma P_1}{\ma P_2})}{\ma P_3}}{\ma G}=\\
    \evalTwo{\union{(\join{\ma P_1}{\ma P_3})}{(\join{\ma P_2}{\ma P_3})}}{\ma G}.
\end{multline*}
\end{proposition}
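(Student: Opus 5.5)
This follows the pattern of Propositions~\ref{prop:join-comm} and~\ref{prop:join-assoc}. I would prove the two inclusions by unfolding the clauses for $\tx{Join}$ and $\tx{Union}$ from Section~\ref{sect:syntax_semantics}, treating the context supply as fixed. Write $\Omega_i=\evalTwo{\ma P_i}{\ma G}$ for $i\in\{1,2,3\}$; each of these is a fixed set of typed solution mappings. The $\tx{Union}$ clause gives $\evalTwo{\union{\ma P_1}{\ma P_2}}{\ma G}=\Omega_1\cup\Omega_2$.

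For the inclusion $\subseteq$, take $\mu$ in the left-hand side. Then $\mu=\mu_{12}\uplus\mu_3$ with $\mu_{12}\in\Omega_1\cup\Omega_2$, $\mu_3\in\Omega_3$ and $\mu_{12}\sim\mu_3$. Split on whether $\mu_{12}\in\Omega_1$ or $\mu_{12}\in\Omega_2$.
\begin{itemize}
\item If $\mu_{12}\in\Omega_1$, the same pair $(\mu_{12},\mu_3)$, with the same compatibility witness, shows $\mu\in\evalTwo{\join{\ma P_1}{\ma P_3}}{\ma G}$.
\item If $\mu_{12}\in\Omega_2$, the same pair shows $\mu\in\evalTwo{\join{\ma P_2}{\ma P_3}}{\ma G}$.
\end{itemize}
In either case $\mu$ lies in the union on the right-hand side. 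The canonical union $\uplus$ is a function of the pair of mappings only, by Definition~\ref{def:compat-union}, so the resulting mapping is literally the same object. No reasoning about $\rep_v$ is needed, unlike in associativity.

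For the inclusion $\supseteq$, take $\mu$ in, say, $\evalTwo{\join{\ma P_1}{\ma P_3}}{\ma G}$, so $\mu=\mu_1\uplus\mu_3$ with $\mu_1\sim\mu_3$. Since $\mu_1\in\Omega_1\subseteq\Omega_1\cup\Omega_2$, the same witnesses place $\mu$ in the left-hand side. The $\ma P_2$ branch is identical.

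The only real subtlety is the context supply. The statement has no side condition, unlike Proposition~\ref{prop:union-distribution-branch}, so it must hold for arbitrary $\ma P_i$, possibly containing $\tx{GenOp}$. I would therefore state explicitly that each $\evalThree{\ma P_i}{\ma G}{S}$ is evaluated under the same supply $S$ on both sides. Under that reading, $\ma P_3$ evaluates to the same $\Omega_3$ in both positions and the argument goes through.

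The main obstacle is the alternative reading, in which the context of a $\tx{GenOp}$ inside $\ma P_3$ is the surrounding pattern, as in the remark after the evaluation clauses. On that reading, the surrounding pattern of $\ma P_3$ differs between the two sides: $\union{\ma P_1}{\ma P_2}$ on the left, versus $\ma P_1$ alone or $\ma P_2$ alone on the right. I would handle this case via the separability property of $\tx{GenOp}$ evaluation as a union over contexts (Definition~\ref{defn-genpattern}), exactly as in Proposition~\ref{prop:union-distribution-branch}. A context coming from $\ma P_1$ generates extensions that can only join with compatible $\Omega_1$-mappings. I would flag this as the place where care is needed, and present the fixed-supply reading as the one the proposition asserts.
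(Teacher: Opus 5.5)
Your two inclusions are exactly the paper's proof: unfold the \tx{Union} clause, split on whether $\mu_{12}$ lies in $\evalTwo{\ma P_1}{\ma G}$ or $\evalTwo{\ma P_2}{\ma G}$, and for $\supseteq$ reuse the same witnesses via $\evalTwo{\ma P_1}{\ma G}\subseteq\evalTwo{\union{\ma P_1}{\ma P_2}}{\ma G}$. The paper, like you, treats each $\evalTwo{\ma P_i}{\ma G}$ as a single fixed relation on both sides.

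One correction concerns your closing paragraph. Take the reading you describe there: a \tx{GenOp} in $\ma P_3$ draws its contexts from $\union{\ma P_1}{\ma P_2}$ on the left, but from $\ma P_1$ alone, resp.\ $\ma P_2$ alone, on the right. On that reading the identity is false, so no separability argument can rescue it.

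For a counterexample, let $\ma G=\{(a\ p\ c),(a\ q\ b)\}$, $\ma P_1=\{(?x\ p\ ?u)\}$ and $\ma P_2=\{(?x\ q\ ?w)\}$. Let $\ma P_3=g=\genbind{\pi(\{?x\})}{\{?y\}}{\ma M}$, whose call and parse yield the single tuple $(t)$ on the prompt instantiated with $a$.
\begin{itemize}
\item On the left, $g$ receives $\mu_1=\{?x\mapsto a,\ ?u\mapsto c\}$ and $\mu_2=\{?x\mapsto a,\ ?w\mapsto b\}$ (RDF-typed). It returns $\mu_1\uplus\nu$ and $\mu_2\uplus\nu$, with $\nu=\{?y\mapsto(t,\sourceGen)\}$.
\item The outer join then pairs $\mu_1$ with $\mu_2\uplus\nu$, which agree on $?x$. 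This gives a solution with domain $\{?x,?u,?w,?y\}$.
\item Every solution on the right has domain $\{?x,?u,?y\}$ or $\{?x,?w,?y\}$, so that solution has no counterpart there.
\end{itemize}

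Your sketch misses that each output $\mu_c\uplus\nu$ of Definition~\ref{defn-genpattern} carries its context $\mu_c$. It can therefore be compatible with mappings from the other branch.

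Proposition~\ref{prop:union-distribution-branch} does not help either. Its hypothesis always holds, because Definition~\ref{defn-genpattern} is a union over contexts, so it cannot exclude such cross-branch pairs. The fixed-supply reading is thus not merely the convenient one; it is the one under which the proposition is true.
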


\begin{proof}
($\subseteq$) Let $\mu\in \evalTwo{\join{\union{\ma P_1}{\ma P_2}}{\ma P_3}}{\ma G}$.
Then $\mu=\mu_{12}\uplus\mu_3$ with $\mu_3\in\evalTwo{\ma P_3}{\ma G}$,
$\mu_{12}\in\evalTwo{\union{\ma P_1}{\ma P_2}}{\ma G}$, and $\mu_{12}\sim\mu_3$.
By the definition of union, $\mu_{12}\in\evalTwo{\ma P_1}{\ma G}$ or $\mu_{12}\in\evalTwo{\ma P_2}{\ma G}$.
In the first case, $\mu\in\evalTwo{\join{\ma P_1}{\ma P_3}}{\ma G}$; in the second, $\mu\in\evalTwo{\join{\ma P_2}{\ma P_3}}{\ma G}$.
Thus $\mu$ is in the union on the right-hand side.

($\supseteq$) If $\mu\in \evalTwo{\join{\ma P_1}{\ma P_3}}{\ma G}$, then $\mu=\mu_1\uplus\mu_3$ with $\mu_1\in\evalTwo{\ma P_1}{\ma G}$.
Since $\evalTwo{\ma P_1}{\ma G}\subseteq \evalTwo{\union{\ma P_1}{\ma P_2}}{\ma G}$, we have $\mu\in\evalTwo{\join{\union{\ma P_1}{\ma P_2}}{\ma P_3}}{\ma G}$.
The argument for $\join{\ma P_2}{\ma P_3}$ is identical.
\end{proof}

\subsubsection{Filter laws (grounded in $\varepsilon$-semantics)}

\begin{proposition}[Filter decomposition]
\label{prop:filter-decomp}
For all patterns $P$ and filter formulas $\ma F_1,\ma F_2$,
\[
\evalTwo{\filter{P}{\ma F_1\wedge \ma F_2}}{\ma G}
=
\evalTwo{\filter{\filter{P}{\ma F_2}}{\ma F_1}}{\ma G}.
\]
\end{proposition}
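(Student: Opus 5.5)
The plan is a double-inclusion argument that unfolds the Filter clause of Section~\ref{sect:syntax_semantics} twice and reduces everything to the three-valued truth table of conjunction. The key observation is that, under the semantics of $\filter{\cdot}{\cdot}$, a mapping survives exactly when the formula evaluates to $\top$. Both $\bot$ and the error value $\varepsilon$ lead to rejection. So it suffices to show, for every typed solution mapping $\mu$,
\[
(\ma F_1\wedge\ma F_2)^{\mu}=\top \quad\text{iff}\quad \ma F_2^{\mu}=\top \text{ and } \ma F_1^{\mu}=\top .
\]

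First I unfold the right-hand side:
\[
\evalTwo{\filter{\filter{P}{\ma F_2}}{\ma F_1}}{\ma G}=\{\mu\in\evalTwo{P}{\ma G}\mid \ma F_2^{\mu}=\top\ \text{and}\ \ma F_1^{\mu}=\top\}.
\]
This uses that the inner filter returns a subset of $\evalTwo{P}{\ma G}$ consisting of the very same mappings, unchanged. Filtering never modifies a mapping, so the outer filter evaluates $\ma F_1$ on the same $\mu$. The left-hand side unfolds directly to $\{\mu\in\evalTwo{P}{\ma G}\mid (\ma F_1\wedge\ma F_2)^{\mu}=\top\}$. Since both sets are carved out of the same base set $\evalTwo{P}{\ma G}$, the equivalence displayed above yields equality.

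I also need to check that the context supply causes no trouble. $P$ is the same subpattern on both sides, with the same surrounding GenOp context. Moreover, the surrounding pattern of any GenOp inside $P$ differs only by filter wrappers, and these are pruned to the same structure. Hence $\evalTwo{P}{\ma G}$ denotes the same set in both expressions.

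The main obstacle is that clause 5 of the filter semantics writes ${\ma F_1}^{\mu}\wedge{\ma F_2}^{\mu}$ without spelling out the three-valued conjunction. I will fix the standard SPARQL (Kleene-style with errors) table: $\top\wedge\top=\top$, $\bot\wedge x=x\wedge\bot=\bot$, and every other combination involving $\varepsilon$ gives $\varepsilon$. The only property I need is that the conjunction is $\top$ iff both conjuncts are $\top$, and this holds for any reasonable such table, including the symmetric one. With that fixed, the case analysis on $(\ma F_1^{\mu},\ma F_2^{\mu})\in\{\top,\bot,\varepsilon\}^2$ is immediate. Notably, nothing about the compatibility relation or about $\rep_v$ enters, since the atoms are evaluated identically on both sides.
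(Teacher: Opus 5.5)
Your proposal is correct and is essentially the paper's proof: both unfold the Filter clause, use that $(\ma F_1\wedge\ma F_2)^{\mu}=\top$ iff $\ma F_1^{\mu}=\top$ and $\ma F_2^{\mu}=\top$, and conclude that both sides select the same mappings from $\evalTwo{P}{\ma G}$. One correction to your context-supply remark: the pruning rules only collapse $\filter{\emptypat}{\ma F}$, so surrounding patterns such as $\filter{P'}{\ma F_1\wedge\ma F_2}$ and $\filter{\filter{P'}{\ma F_2}}{\ma F_1}$ are not reduced to the same structure, but this is harmless because the paper's notion of equivalence passes the same explicit context supply $S$ to $P$ on both sides (if you read surrounding patterns relative to the whole query, you would instead argue by induction on the number of \tx{GenOp} occurrences).
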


\begin{proof}
By definition,
$\evalTwo{\filter{P}{F}}{\ma G}=\{\mu\in\evalTwo{P}{\ma G}\mid F^\mu=\top\}$.
By the inductive filter semantics, $(\ma F_1\wedge \ma F_2)^\mu = \ma F_1^\mu \wedge \ma F_2^\mu$, and in particular
$(\ma F_1\wedge \ma F_2)^\mu=\top$ iff $\ma F_1^\mu=\top$ and $\ma F_2^\mu=\top$.
Thus both sides select exactly the same mappings from $\evalTwo{P}{\ma G}$.
\end{proof}


\begin{lemma}[Filter invariance under representative normalization]
\label{lem:filter-rep-inv}
Let $\mu$ be a typed solution mapping and define $\widehat{\mu}$ by $\widehat{\mu}(v)=\rep_v(\mu(v))$ for all $v\in\domain{\mu}$. Then, for every filter formula $\ma F$, we have $\ma F^\mu = \ma F^{\widehat{\mu}}$.
\end{lemma}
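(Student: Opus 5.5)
The plan is a structural induction on the filter formula $\ma F$, with the atomic cases carrying the real content. Two facts will be used throughout. First, $\domain{\widehat{\mu}}=\domain{\mu}$ by construction. Second, $\rep_v$ is idempotent, i.e.\ $\rep_v(\rep_v(\alpha))=\rep_v(\alpha)$, because $\rep_v(\alpha)$ lies in the equivalence class of $\alpha$ under the reflexive–transitive closure of $\approx_v$, and that class has a unique $\preceq$-minimal element. Consequently $\compact{v}{\widehat{\mu}(v)}{\beta}=\top$ iff $\compact{v}{\mu(v)}{\beta}=\top$ for every typed term $\beta$.

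\emph{Atoms.} For $\bound{v}$ both sides agree because the domains coincide. For $(v=c)$ and $(v=v')$ the error case $\varepsilon$ is triggered by the same domain condition on both sides. It remains to show that the $\top$-condition is invariant. Each $\top$-condition is a disjunction of an ``exact RDF equality'' clause and a $\tx{Compat}$ clause.

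The $\tx{Compat}$ clause is invariant by idempotence. For $(v=v')$ one must compare $\rep_v(\mu(v))$ with $\rep_v(\mu(v'))$. Since $\widehat{\mu}(v')=\rep_{v'}(\mu(v'))$ uses the index $v'$, a mismatch could arise here. I will handle this by reading $\rep$ as computed over the shared class structure on typed terms, so that $\rep_v(\rep_{v'}(\beta))=\rep_v(\beta)$. This needs that $\rep_{v'}(\beta)\approx^*_{v}\beta$. It holds when $\theta(v)\le\theta(v')$, and it holds in general if one adopts the convention, implicit in the canonical-union definition, that equivalence classes relevant to a variable are taken over the terms that can bind it. I expect to state this as an explicit side condition or to use the convention.

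For the exact-equality clause, I will show that it implies the $\tx{Compat}$ clause. If $\mu(v)=(c,\sourceRdf)$, then trivially $\rep_v(\mu(v))=\rep_v((c,\sourceRdf))$. So the atom's truth reduces to the $\tx{Compat}$ clause alone, and that clause is invariant.

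\emph{Connectives.} For $\neg\ma F$ and $\ma F_1\wedge\ma F_2$, the value is a function of the subformula values in the three-valued logic. The induction hypothesis therefore gives equality immediately.

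The main obstacle is the cross-variable atom $(v=v')$, where the normalization index $v'$ differs from the compatibility index $v$. I will try first to establish the lemma $\rep_v\circ\rep_{v'}=\rep_v$ on the terms in question. If that fails in general, I will restrict the statement to variables $v,v'$ whose similarity relations coincide (for example, equal thresholds, including the default $\theta\equiv 1$).
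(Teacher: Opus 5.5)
Your plan is the paper's own: induction on $\ma F$, with each atom reduced to its $\tx{Compat}$ clause (exact RDF equality already implies compatibility), and that clause shown invariant by idempotence of $\rep_v$. Your treatment of $\bound{v}$, of $(v=c)$, of the $\varepsilon$ cases and of the connectives is fine. The paper's proof writes out only $(v=c)$. It then asserts that normalization replaces each binding by an element of the same $\approx_v$-class, which silently skips exactly the point you raise. In $(v=v')$ the test is indexed by $v$, whereas $\widehat{\mu}(v')=\rep_{v'}(\mu(v'))$ is only guaranteed to stay in the $\approx_{v'}$-class of $\mu(v')$.

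Your worry is justified, and your first-choice fix cannot succeed: $\rep_v\circ\rep_{v'}=\rep_v$ is false in general, and so is the lemma as stated. Here is a counterexample.
\begin{itemize}
\item Let $\theta(v)=1>\theta(v')$.
\item Let $\alpha,\beta$ be generated terms with $\theta(v')\le\simtext(\alpha,\beta)<1$ and $\beta\prec\alpha$, and assume no further term is similar to either.
\item Then $\rep_{v'}(\alpha)=\beta$, while $\rep_v(\alpha)=\alpha\neq\beta=\rep_v(\beta)$.
\item For $\mu=\{v\mapsto\alpha,\ v'\mapsto\alpha\}$ you get $(v=v')^{\mu}=\top$, because $\compact{v}{\alpha}{\alpha}=\top$.
\item But $\widehat{\mu}=\{v\mapsto\alpha,\ v'\mapsto\beta\}$ gives $(v=v')^{\widehat{\mu}}=\bot$.
\end{itemize}
Computing classes only over terms that can bind the variable does not help, since $\alpha$ and $\beta$ may bind both variables. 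Per-variable term sets can even leave $\rep_v(\rep_{v'}(\cdot))$ undefined.

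So the restriction in your fallback is necessary, not a convenience. Assume, for every atom $(v=v')$ in $\ma F$, that $\approx^{*}_{v'}\subseteq\approx^{*}_{v}$ over one common set of typed terms. This holds, e.g., when $\theta(v)\le\theta(v')$ or when thresholds are uniform. Under that hypothesis your argument is complete.

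It also gives invariance when only some variables are normalized. That is the form the later filter-pushdown proofs actually invoke, since $\uplus$ normalizes only the overlap variables.
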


\begin{proof}
By induction on $\ma F$.
For atoms, the only changes are (i) replacing a binding by another element of the same $\approx_v$-equivalence class, and
(ii) possibly replacing an exact RDF match by a representative in that class.
In both, $(v=c)$ and $(v=v')$, the semantics accepts either exact RDF equality (when applicable) or canonical compatibility via $\compact{\cdot}{\cdot}{\cdot}$,
and $\compact{v}{\mu(v)}{(c,\sourceRdf)}=\top$ iff $\rep_v(\mu(v))=\rep_v((c,\sourceRdf))$ iff $\compact{v}{\widehat{\mu}(v)}{(c,\sourceRdf)}=\top$.
The Boolean cases ($\neg$, $\wedge$) follow immediately.
\end{proof}

\begin{proposition}[Filter pushdown over join (safe form)]
\label{prop:filter-push-join}
Let $\ma P_1,\ma P_2$ be patterns and $\ma F$ a filter formula.
Assume that for every $\mu_1\in\evalTwo{\ma P_1}{\ma G}$ we have $\var{\ma F}\subseteq \domain{\mu_1}$.
Then,
\begin{multline*}
    \evalTwo{\filter{\join{\ma P_1}{\ma P_2}}{\ma F}}{\ma G}= \\ 
    \evalTwo{\join{(\filter{\ma P_1}{\ma F})}{\ma P_2}}{\ma G}.
\end{multline*}
Symmetrically, if for every $\mu_2\in\evalTwo{\ma P_2}{\ma G}$ we have $\var{F}\subseteq \domain{\mu_2}$, then
\begin{multline*}
    \evalTwo{\filter{\join{\ma P_1}{\ma P_2}}{\ma F}}{\ma G}= \\ 
    \evalTwo{\join{\ma P_1}{(\filter{\ma P_2}{\ma F})}}{\ma G}.
\end{multline*}
\end{proposition}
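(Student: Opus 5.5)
We prove the first equation by double inclusion; the symmetric statement then follows from Proposition~\ref{prop:join-comm}, by applying the first part to $\join{\ma P_2}{\ma P_1}$ with the roles of the two branches exchanged.

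\emph{Key fact: filter truth transfers from $\mu_1$ to $\mu_1\uplus\mu_2$.} Let $\mu_1\in\evalTwo{\ma P_1}{\ma G}$ and $\mu_2\in\evalTwo{\ma P_2}{\ma G}$ with $\mu_1\sim\mu_2$, and put $\mu=\mu_1\uplus\mu_2$. We claim $\ma F^{\mu}=\ma F^{\mu_1}$. Only variables in $\var{\ma F}$ matter, and each such $v$ lies in $\domain{\mu_1}$ by hypothesis. By Lemma~\ref{lem:union-preserve}, $\mu(v)$ is either $\mu_1(v)$ (if $v\notin\domain{\mu_2}$) or $\rep_v(\mu_1(v))$. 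In both cases $\rep_v(\mu(v))=\rep_v(\mu_1(v))$, since $\rep_v$ is idempotent on its equivalence class.

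We then argue by induction on $\ma F$, as in Lemma~\ref{lem:filter-rep-inv}.
\begin{itemize}
\item For $\bound{v}$, both sides are $\top$, since $v$ is bound in $\mu_1$ and hence in $\mu$.
\item For $(v=c)$ and $(v=v')$, the $\compact{v}{\cdot}{\cdot}$ clause depends only on $\rep_v$ of the bindings, so it gives the same verdict under $\mu$ and $\mu_1$.
\item For the Boolean connectives $\neg$ and $\wedge$, agreement is immediate from the induction hypothesis.
\end{itemize}

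\emph{Main obstacle: the exact-RDF-equality disjunct in the atom clauses.} The atom $(v=c)$ is also true when $\mu(v)=(c,\sourceRdf)$. Replacing $\mu_1(v)$ by its representative could in principle change whether this literal equality holds. However, literal equality implies $\compact{v}{\mu(v)}{(c,\sourceRdf)}=\top$, because $\rep_v$ is a function. So each atom is true iff the $\compact{}{}{}$ test succeeds, and that test is invariant under passing to the representative. The clause $(v=v')$ is handled the same way, with the same variable-indexed $\rep_v$ used for both sides.

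\emph{Inclusions.}
\begin{itemize}
\item ($\subseteq$) Take $\mu\in\evalTwo{\filter{\join{\ma P_1}{\ma P_2}}{\ma F}}{\ma G}$. Write $\mu=\mu_1\uplus\mu_2$ with $\mu_1\sim\mu_2$ and $\ma F^\mu=\top$. By the key fact, $\ma F^{\mu_1}=\top$, so $\mu_1\in\evalTwo{\filter{\ma P_1}{\ma F}}{\ma G}$. The same pair $(\mu_1,\mu_2)$ then witnesses $\mu\in\evalTwo{\join{(\filter{\ma P_1}{\ma F})}{\ma P_2}}{\ma G}$.
\item ($\supseteq$) Take $\mu=\mu_1\uplus\mu_2$ with $\ma F^{\mu_1}=\top$ and $\mu_1\sim\mu_2$. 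Then $\mu$ lies in the join, and $\ma F^{\mu}=\ma F^{\mu_1}=\top$ by the key fact, so $\mu$ survives the filter.
\end{itemize}

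This is pure SPARQL algebra. No context-supply issue arises, because both sides evaluate $\ma P_1$ and $\ma P_2$ with respect to the same supply, which is why the statement is phrased with $\evalTwo{\cdot}{\ma G}$.
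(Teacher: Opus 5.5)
Your route is the paper's: a double inclusion driven by the fact that $\mu_1\uplus\mu_2$ agrees with $\mu_1$ on $\var{\ma F}$ up to representatives. The paper isolates this fact as Lemma~\ref{lem:filter-rep-inv}, and you re-derive it inline. Your observation that literal RDF equality already implies the $\compact{v}{\cdot}{\cdot}$ test is a cleaner treatment of that disjunct than the paper's.

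The step that fails is your claim that the $(v=v')$ atom is ``handled the same way''. Your key fact gives $\rep_w(\mu(w))=\rep_w(\mu_1(w))$ only when the index matches the variable. But $(v=v')^{\mu}$ is decided by $\compact{v}{\mu(v)}{\mu(v')}$, so you also need $\rep_v(\mu(v'))=\rep_v(\mu_1(v'))$. If $v'\in\domain{\mu_2}$, the canonical union has replaced $\mu_1(v')$ by $\rep_{v'}(\mu_1(v'))$. You would therefore need $\rep_v(\rep_{v'}(\alpha))=\rep_v(\alpha)$, which is false in general when $\theta(v)>\theta(v')$, because a $\approx_{v'}$-class can then straddle several $\approx_v$-classes.

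Here is a concrete counterexample. Let $\alpha,\beta$ be generated typed terms with $\beta\prec\alpha$ and $\theta(v')\le\simtext(\alpha,\beta)<\theta(v)$, such that $\{\alpha,\beta\}$ is a $\approx_{v'}$-class while $\alpha$ and $\beta$ lie in different $\approx_v$-classes. Let $\ma P_1,\ma P_2$ produce exactly $\mu_1=\{v\mapsto\alpha,\,v'\mapsto\alpha\}$ and $\mu_2=\{v'\mapsto\alpha\}$, and take $\ma F=(v=v')$. Then $\ma F^{\mu_1}=\top$ and $\mu_1\sim\mu_2$. But $\mu_1\uplus\mu_2=\{v\mapsto\alpha,\,v'\mapsto\beta\}$ and $\ma F^{\mu_1\uplus\mu_2}=\bot$, so your ($\supseteq$) inclusion fails. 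Taking $\mu_1(v)=\beta$ instead breaks ($\subseteq$).

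The paper's Lemma~\ref{lem:filter-rep-inv} passes over the $(v=v')$ atom just as quickly, so you inherited this gap rather than introduced it. Still, the proposition needs an extra side condition before your argument is a proof. A sufficient one is $\theta(v)\le\theta(v')$ for every atom $(v=v')$ in $\ma F$; a threshold constant on $\var{\ma F}$ is a special case. Under that condition $\approx_{v'}\subseteq\approx_v$, so every $\approx_{v'}$-class lies inside a $\approx_v$-class, $\rep_v\circ\rep_{v'}=\rep_v$, and your proof goes through verbatim.

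Everything else you wrote is fine: the $\bound{v}$ and $(v=c)$ atoms, the Boolean cases, and the reduction of the symmetric statement to the first via join commutativity. Purely RDF-typed mappings are unaffected, because RDF--RDF classes are singletons.
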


\begin{proof}
We prove the first equality; the second is symmetric.

($\subseteq$) Let $\mu\in \evalTwo{\filter{\join{\ma P_1}{\ma P_2}}{\ma F}}{\ma G}$.
Then $\mu\in \evalTwo{\join{\ma P_1}{\ma P_2}}{\ma G}$ and $\ma F^\mu=\top$.
So $\mu=\mu_1\uplus\mu_2$ with $\mu_1\in\evalTwo{\ma P_1}{\ma G}$, $\mu_2\in\evalTwo{\ma P_2}{\ma G}$ and $\mu_1\sim\mu_2$.
By assumption, $\var{\ma F}\subseteq\domain{\mu_1}$. Let $\widehat{\mu_1}$ be the representative-normalization of $\mu_1$ as in Lemma~\ref{lem:filter-rep-inv}. On all variables in $\var{\ma F}$, $\mu$ agrees with $\widehat{\mu_1}$ up to representative choice: if $v\in\var{\ma F}$ is also in $\domain{\mu_2}$, then $\mu(v)=\rep_v(\mu_1(v))=\widehat{\mu_1}(v)$ by Definition~\ref{def:compat-union}; if $v\notin\domain{\mu_2}$ then $\mu(v)=\mu_1(v)$ and $\widehat{\mu_1}(v)=\rep_v(\mu_1(v))$, which is still equivalent for filter evaluation by Lemma~\ref{lem:filter-rep-inv}.
Hence $\ma F^\mu=\top$ implies $\ma F^{\mu_1}=\top$ (equivalently $\ma F^{\widehat{\mu_1}}=\top$), so $\mu_1\in\evalTwo{\filter{\ma P_1}{\ma F}}{\ma G}$.Thus, $\mu=\mu_1\uplus\mu_2\in \evalTwo{\join{\filter{\ma P_1}{\ma F}}{\ma P_2}}{\ma G}$.

($\supseteq$) Let $\mu\in \evalTwo{\join{\filter{\ma P_1}{\ma F}}{\ma P_2}}{\ma G}$.
Then $\mu=\mu_1\uplus\mu_2$ with $\mu_1\in\evalTwo{\filter{\ma P_1}{\ma F}}{\ma G}$, $\mu_2\in\evalTwo{\ma P_2}{\ma G}$, and $\mu_1\sim\mu_2$.
Thus $\ma F^{\mu_1}=\top$ and $\var{\ma F}\subseteq\domain{\mu_1}$.
By Lemma~\ref{lem:filter-rep-inv}, also $\ma F^{\widehat{\mu_1}}=\top$.
As above, $\mu$ agrees with $\widehat{\mu_1}$ on $\var{\ma F}$ up to representative choice enforced by $\uplus$, hence $\ma F^\mu=\top$.
Therefore $\mu\in \evalTwo{\filter{\join{\ma P_1}{\ma P_2}}{\ma F}}{\ma G}$.
\end{proof}

\subsubsection{Minus}

\begin{proposition}
\label{prop:minus-disjoint}
If $\var{\ma P_1}\cap\var{\ma P_2}=\varnothing$, then
\[
\evalTwo{\minus{\ma P_1}{\ma P_2}}{\ma G}=\evalTwo{\ma P_1}{\ma G}.
\]
\end{proposition}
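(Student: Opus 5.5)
The plan is to prove the two inclusions directly from the $\tx{Minus}$ clause of the graph-pattern semantics. The inclusion $\evalTwo{\minus{\ma P_1}{\ma P_2}}{\ma G}\subseteq\evalTwo{\ma P_1}{\ma G}$ holds by definition, since the clause only selects elements of $\evalTwo{\ma P_1}{\ma G}$. For the reverse inclusion, I fix $\mu_1\in\evalTwo{\ma P_1}{\ma G}$ and an arbitrary $\mu_2\in\evalTwo{\ma P_2}{\ma G}$. It then suffices to show $\domain{\mu_1}\cap\domain{\mu_2}=\varnothing$, which makes the second disjunct of the $\tx{Minus}$ condition true regardless of compatibility.

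The key auxiliary fact is a domain lemma: for every pattern $\ma P$, every context supply $S$ and every $\mu\in\evalThree{\ma P}{\ma G}{S}$, we have $\domain{\mu}\subseteq\var{\ma P}$. Once this is established, $\domain{\mu_1}\cap\domain{\mu_2}\subseteq\var{\ma P_1}\cap\var{\ma P_2}=\varnothing$, and the proof is complete.

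I would prove the domain lemma by structural induction, following the clauses one by one.
\begin{itemize}
\item For a \bgps, the definition gives $\domain{\mu}=\var{\ma P}$ exactly.
\item For $\tx{Filter}$, $\tx{Diff}$ and $\tx{Minus}$, the output mappings are drawn from $\evalTwo{\ma P_1}{\ma G}$, so the claim follows from the induction hypothesis.
\item For $\tx{Proj}$, restriction can only shrink domains.
\item For $\tx{Union}$, $\tx{Join}$ and $\tx{Opt}$, the domain of $\mu_1\uplus\mu_2$ is $\domain{\mu_1}\cup\domain{\mu_2}$ by Definition~\ref{def:compat-union}, and this lies inside $\var{\ma P_1}\cup\var{\ma P_2}$.
\end{itemize}

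The main obstacle is the $\tx{GenOp}$ base case. Here $\evalThree{g}{\ma G}{S}$ returns $\mu_c\uplus\nu$, where the context mapping $\mu_c$ comes from the supply $S$. Its domain may therefore contain variables outside $\var{g}=X\cup Y$, namely variables of the surrounding pattern. I would handle this by reading $\var{\cdot}$ on such occurrences in the intended sense, so that a $\tx{GenOp}$ evaluated inside $\ma P_i$ draws its context from $\ma P_i\setminus\{g\}$. Under that reading, $\domain{\mu_c}\subseteq\var{\ma P_i}$, which is exactly the context supply prescribed at the end of the graph-pattern clauses.

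If one instead insists on a global $S$, the lemma can fail. In that case the honest fallback is to state the proposition for patterns whose context mappings are confined to $\var{\ma P_i}$, which includes all $\tx{GenOp}$-free patterns, where the result reduces to the standard SPARQL fact. Under either reading, the argument relies on the emptiness of the domain intersection and not on compatibility, so similarity-based matching plays no role.
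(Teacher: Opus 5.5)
Your argument matches the paper's: both note that the \tx{Minus} clause removes $\mu_1$ only when some compatible $\mu_2$ shares a variable with it, and that $\var{\ma P_1}\cap\var{\ma P_2}=\varnothing$ forces $\domain{\mu_1}\cap\domain{\mu_2}=\varnothing$, so nothing is removed. The paper's proof simply assumes $\domain{\mu_i}\subseteq\var{\ma P_i}$. Your structural-induction lemma proves this, and your caveat is correct: the \tx{GenOp} case goes through only when the context mappings used inside $\ma P_i$ are confined to $\var{\ma P_i}$, an assumption the paper leaves unstated.
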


\begin{proof}
By definition,
$\evalTwo{\minus{\ma P_1}{\ma P_2}}{\ma G}$ contains exactly those $\mu_1\in\evalTwo{\ma P_1}{\ma G}$ such that for all $\mu_2\in\evalTwo{\ma P_2}{\ma G}$,
either $\mu_1\not\sim\mu_2$ or $\domain{\mu_1}\cap\domain{\mu_2}=\varnothing$.
If $\var{\ma P_1}\cap\var{\ma P_2}=\varnothing$, then for all such $\mu_1,\mu_2$ we have $\domain{\mu_1}\cap\domain{\mu_2}=\varnothing$, so no $\mu_1$ is removed. Thus, $\evalTwo{\minus{\ma P_1}{\ma P_2}}{\ma G}=\evalTwo{\ma P_1}{\ma G}$.
\end{proof}

\subsection{GenOp-Specific Rewriting Rules}
\label{app:genop-rewrites}

In contrast to the purely algebraic equivalences of
Section~\ref{sec:equivalences}, the rules in this subsection depend essentially on
the \emph{context-sensitive} evaluation of \tx{GenOp} patterns
(Section~\ref{sect:syntax_semantics}) and are therefore sound only under explicit
side conditions.

Throughout, fix an RDF graph $\ma G$ and a well-formed pattern $\ma P$.

\subsubsection{Acyclic setting and stagewise (single-pass) evaluation}

Assume the dependency graph $\directedGraph{\ma D}_{\ma P}=(\genPattern{\ma P},\to_{\ma P})$
is acyclic, and fix a topological order $\gamma=(g_1,\dots,g_m)$ of $\genPattern{\ma P}$.
For each $k\in\{0,\dots,m\}$, recall (Appendix~\ref{app:acyclic}) that $\ma P^{\le k}$
denotes the pattern obtained from $\ma P$ by disabling all \tx{GenOp} occurrences not among
$\{g_1,\dots,g_k\}$ (treating them as returning $\varnothing$), while keeping the remaining
algebra unchanged.

We define stagewise (single-pass) sequence:
\[
S_\gamma^{(0)} := \evalThree{\ma P^{\le 0}}{\ma G}{\varnothing},
\qquad
S_\gamma^{(k)} := \evalThree{\ma P^{\le k}}{\ma G}{S_\gamma^{(k-1)}}\ \ (1\le k\le m).
\]
Intuitively, at stage $k$, only generators among $\{g_1,\dots,g_k\}$ may fire, using as
context the supply accumulated at stage $k-1$.

For a \tx{GenOp} occurrence $g=\genbind{\pi(X)}{Y}{\ma M}$ and a supply $S$, we recall the
\emph{context restriction} operator:
\[
\ContextS{S}{g} \ :=\ \{\mu\in S \mid X=\holder{g}\subseteq \domain{\mu}\}.
\]

\subsubsection{Context Independence of Base-Mode GenOp}

\begin{proposition}[Base-mode context independence]
\label{prop:genop-base}
Let $g=\genbind{\pi(X)}{Y}{\ma M}$ be a \tx{GenOp} pattern with $\holder{g}=X=\varnothing$.
Assume the base-mode convention of Section~\ref{sect:syntax_semantics} that whenever
$X=\varnothing$, the \tx{GenOp} is evaluated with context set $\{\varnothing\}$.
Then, for every supply $S\subseteq \Omega_{\ma P}$,
\[\evalThree{g}{\ma G}{\ContextS{S}{g}} \;=\; \evalThree{g}{\ma G}{\{\varnothing\}},
\]
\ie the denotation of $g$ is independent of $S$.
\end{proposition}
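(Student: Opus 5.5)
The proof is a direct unfolding of Definition~\ref{defn-genpattern} together with the base-mode convention; no induction or fixpoint argument is needed. Fix $\ma G$ and an arbitrary supply $S\subseteq\Omega_{\ma P}$.

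First, observe what the context restriction becomes when $X=\varnothing$. The condition $\holder{g}=\varnothing\subseteq\domain{\mu}$ holds for every $\mu$, so $\ContextS{S}{g}=S$. Taken literally, the denotation therefore depends on $S$: each $\mu_c\in S$ would contribute extensions $\mu_c\uplus\nu$. This is exactly why the base-mode convention is needed. It stipulates that when $X=\varnothing$ the context actually used is $\{\varnothing\}$, whatever $S$ or $\ContextS{S}{g}$ is. I will state explicitly that the convention overrides the supplied context, and treat this as the main point to be careful about.

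Second, evaluate the definition at the single context $\mu_c=\varnothing$. The prompt $\pi^{[\varnothing]}$ is just the template $\pi$, since there are no placeholders to substitute. Hence $\Call{\ma M}(\pi)$ is a fixed finite set of answer strings, and each answer has a fixed finite set of parsed tuples under $\Parse{\ma M}$; both are deterministic by the bounded-generation assumption. For each tuple $(v_1,\dots,v_k)$, let $\nu=\{y_i\mapsto(v_i,\sourceGen)\}$. The compatibility check $\varnothing\sim\nu$ holds vacuously because the domains do not intersect, and $\varnothing\uplus\nu=\nu$ by Definition~\ref{def:compat-union}, since every variable lies in $\domain{\nu}\setminus\domain{\varnothing}$. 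So the result is the set of all such $\nu$, and it is determined by $\pi$, $\ma M$ and $Y$ alone.

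Third, conclude. Under the convention, $\evalThree{g}{\ma G}{\ContextS{S}{g}}$ is computed from the same context set $\{\varnothing\}$ for every $S$. It therefore equals the $S$-independent set from the second step, which is $\evalThree{g}{\ma G}{\{\varnothing\}}$. The only delicate step is the first one: without the convention the claim fails, for example when $S=\varnothing$ (the result would be empty) or when $S$ contains nonempty mappings (the result would contain extensions of them). The proof must therefore rely explicitly on the convention rather than on Definition~\ref{defn-genpattern} alone.
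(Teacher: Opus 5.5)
Your proof is correct and follows the paper's argument: observe that $\ContextS{S}{g}=S$ when $X=\varnothing$, then use the base-mode convention to replace the supplied context by $\{\varnothing\}$. Your explicit computation of the resulting set (bare template as prompt, vacuous $\varnothing\sim\nu$, $\varnothing\uplus\nu=\nu$) and your note that the claim fails without the convention go beyond the paper's proof, but the equality itself needs only the convention.
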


\begin{proof}
Since $X=\varnothing$, we have $\ContextS{S}{g}=\{\mu\in S\mid \varnothing\subseteq\domain{\mu}\}=S$. By the base-mode convention, however, the evaluation clause for $\ma P$ ignores its supplied context set and uses $\{\varnothing\}$ whenever $X=\varnothing$. Therefore, $\evalThree{g}{\ma G}{\ContextS{S}{g}}=\evalThree{g}{\ma G}{\{\varnothing\}}$ for all $S$.
\end{proof}

\begin{corollary}[Hoisting base-mode GenOp]
\label{cor:genop-hoist}
Let $g=\genbind{\pi(\varnothing)}{Y}{M}$ and let
$\Omega_g = \evalThree{g}{\ma G}{\varnothing}$.
Let $\ma P$ be a query pattern that contains no \tx{GenOp} occurrences.
Then, for every RDF graph $\ma G$ and all context supplies $S,S'$,
\[\evalThree{\join{\ma P}{g}}{\ma G}{S}\;=\;\evalThree{\join{\ma P}{g}}{\ma G}{S'}
\;=\;\evalTwo{\join{\ma P}{\Omega_g}}{\ma G},\]
where $\Omega_g$ is treated extensionally as a relation (set of typed solution
mappings).
\end{corollary}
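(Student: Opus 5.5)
The plan is to use Proposition~\ref{prop:genop-base} (base-mode context independence) to reduce the claim to structural facts about the join clause. The first step shows that the denotation of $\join{\ma P}{g}$ is independent of the supply. By the evaluation clause for \tx{Join},
\[
\evalThree{\join{\ma P}{g}}{\ma G}{S}=\{\mu_1\uplus\mu_2\mid \mu_1\in\evalThree{\ma P}{\ma G}{S},\ \mu_2\in\evalThree{g}{\ma G}{S},\ \mu_1\sim\mu_2\}.
\]
Since $\ma P$ contains no \tx{GenOp} occurrences, a routine structural induction on $\ma P$ shows that $\evalThree{\ma P}{\ma G}{S}$ does not depend on $S$. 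This is the same induction as in Proposition~\ref{prop:context-equiv-detailed}, but now with no \tx{GenOp} base case: \bgps evaluation ignores $S$, and every operator clause is a function of the subpattern denotations only. So $\evalThree{\ma P}{\ma G}{S}=\evalThree{\ma P}{\ma G}{S'}=\evalTwo{\ma P}{\ma G}$.

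The second step handles the generative side. For the occurrence $g$ inside the join, its context supply is the surrounding-pattern evaluation restricted as $\ContextS{\cdot}{g}$. Because $\holder{g}=\varnothing$, the base-mode convention applies, and Proposition~\ref{prop:genop-base} gives that $g$ evaluates to $\evalThree{g}{\ma G}{\{\varnothing\}}$ for every supply. The statement writes $\Omega_g=\evalThree{g}{\ma G}{\varnothing}$, so I would first note that, under the base-mode convention, this is by definition the same set as $\evalThree{g}{\ma G}{\{\varnothing\}}$: the supplied context is ignored and replaced by $\{\varnothing\}$. Hence the right-hand factor of the join is the fixed set $\Omega_g$ for all $S$.

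Combining the two steps, both factors of the join clause are independent of $S$. Therefore $\evalThree{\join{\ma P}{g}}{\ma G}{S}=\evalThree{\join{\ma P}{g}}{\ma G}{S'}$. It also equals $\{\mu_1\uplus\mu_2\mid\mu_1\in\evalTwo{\ma P}{\ma G},\mu_2\in\Omega_g,\mu_1\sim\mu_2\}$, which is exactly $\evalTwo{\join{\ma P}{\Omega_g}}{\ma G}$ when $\Omega_g$ is read extensionally as a relation.

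The main obstacle is the notational mismatch between $\evalThree{g}{\ma G}{\varnothing}$ and $\evalThree{g}{\ma G}{\{\varnothing\}}$. Read literally by Definition~\ref{defn-genpattern}, an empty supply would give the empty set, so I would make explicit that the base-mode convention overrides the supplied context. A secondary point is that the single-pass evaluation of $\join{\ma P}{g}$ draws $g$'s context from $\evalTwo{\ma P}{\ma G}$ rather than from $S$. Proposition~\ref{prop:genop-base} makes this irrelevant, since every supply yields the same result.
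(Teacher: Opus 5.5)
Your proposal is correct and follows the paper's proof essentially step for step: you establish context-independence of $g$ via Proposition~\ref{prop:genop-base} and context-independence of the \tx{GenOp}-free $\ma P$, then substitute both into the \tx{Join} clause. The only differences are minor. You justify the $\ma P$ step by a direct structural induction, where the paper cites Proposition~\ref{prop:conservativity}. You also make explicit the identification of $\evalThree{g}{\ma G}{\varnothing}$ with $\evalThree{g}{\ma G}{\{\varnothing\}}$ under the base-mode convention, which the paper's proof passes over silently.
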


\begin{proof}
Let $S$ be an arbitrary context supply.
\smallskip\noindent
\emph{Step 1 (base-mode context-independence of $g$).}
Since $g=\genbind{\pi(\varnothing)}{Y}{M}$ is base-mode, Proposition~\ref{prop:genop-base}
yields
\[\evalThree{g}{\ma G}{S} \;=\; \evalThree{g}{\ma G}{\varnothing} \;=\; \Omega_g.
\tag{$\ast$}\label{eq:base-indep}\]
That is, $\evalThree{g}{\ma G}{S}$ is independent of $S$.

\smallskip\noindent
\emph{Step 2 (context-independence of $\ma P$).}
Because $\ma P$ contains no \tx{GenOp}, its evaluation does not depend on the context supply
(Proposition~\ref{prop:conservativity} / Proposition~1 in the main text). Hence, for all
supplies $S,S'$,
\[\evalThree{\ma P}{\ma G}{S} \;=\; \evalThree{\ma P}{\ma G}{S'} \;=\; \evalTwo{\ma P}{\ma G}.
\tag{$\dagger$}\label{eq:P-indep}\]

\smallskip\noindent
\emph{Step 3 (join evaluation and elimination of the context parameter).}
By the semantic clause for join (Section~\ref{sect:syntax_semantics}),
\[\evalThree{\join{\ma P}{g}}{\ma G}{S}=\{\mu\uplus\nu \mid \mu\in\evalThree{\ma P}{\ma G}{S},\
\nu\in\evalThree{g}{\ma G}{S},\ \mu\sim\nu\}.
\]
Substituting \eqref{eq:base-indep} and \eqref{eq:P-indep} gives
\[\evalThree{\join{\ma P}{g}}{\ma G}{S}=\{\mu\uplus\nu \mid \mu\in\evalTwo{\ma P}{\ma G},\
\nu\in\Omega_g,\ \mu\sim\nu\},
\]
which is manifestly independent of $S$. 

Therefore,
$\evalThree{\join{\ma P}{g}}{\ma G}{S}=\evalThree{\join{\ma P}{g}}{\ma G}{S'}$
for all $S,S'$. And, treating $\Omega_g$ extensionally as a relation and applying the
(join) clause of the denotational semantics gives,
\[\{\mu\uplus\nu \mid \mu\in\evalTwo{\ma P}{\ma G},\
\nu\in\Omega_g,\ \mu\sim\nu\}\;=\;\evalTwo{\join{\ma P}{\Omega_g}}{\ma G}.
\]
This establishes the stated equalities.
\end{proof}



\begin{proposition}
\label{prop:union-distribution-branch}
Let $\ma P_1,\ma P_2$ be patterns and $g$ a \tx{GenOp}. For all $\ma G,S$ and $i\in\{1,2\}$, let $S_i := \evalThree{\ma P_i}{\ma G}{S}$ and $\ContextS{S_i}{g} := \{\mu\in S_i \mid \holder{g}\subseteq\domain{\mu}\}$. If $\evalThree{g}{\ma G}{\ContextS{S_1}{g}\cup \ContextS{S_2}{g}} =
\evalThree{g}{\ma G}{\ContextS{S_1}{g}}\cup\evalThree{g}{\ma G}{\ContextS{S_2}{g}}$, then
\begin{multline*}
\evalThree{\join{(\union{\ma P_1}{\ma P_2})}{g}}{\ma G}{S} = \\ \evalThree{\union{(\join{\ma P_1}{g})}{(\join{\ma P_2}{g})}}{\ma G}{S}.
\end{multline*}
\end{proposition}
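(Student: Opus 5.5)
We prove the equality by unfolding both sides with the evaluation clauses of Section~\ref{sect:syntax_semantics}, being explicit about which context supply each occurrence of $g$ receives. On the left, the surrounding pattern of the single occurrence of $g$ is $\union{\ma P_1}{\ma P_2}$. Its evaluation relative to $S$ is $S_1\cup S_2$, so $g$ is evaluated on $\ContextS{(S_1\cup S_2)}{g}=\ContextS{S_1}{g}\cup\ContextS{S_2}{g}$, since context restriction distributes over union. On the right there are two occurrences of $g$: the one in $\join{\ma P_i}{g}$ has surrounding pattern $\ma P_i$ and therefore receives $\ContextS{S_i}{g}$. We also need that $\ma P_1,\ma P_2$ are evaluated relative to the same outer supply $S$ on both sides. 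This holds because the statement fixes $S_i=\evalThree{\ma P_i}{\ma G}{S}$, and the \tx{GenOp}s inside $\ma P_i$ keep their own surroundings, so we take it as given.

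Abbreviate $A=\evalThree{g}{\ma G}{\ContextS{S_1}{g}\cup\ContextS{S_2}{g}}$ and $A_i=\evalThree{g}{\ma G}{\ContextS{S_i}{g}}$. The left-hand side is then $\{\mu\uplus\nu\mid \mu\in S_1\cup S_2,\ \nu\in A,\ \mu\sim\nu\}$. By the separability hypothesis $A=A_1\cup A_2$, so it becomes $\{\mu\uplus\nu\mid \mu\in S_1\cup S_2,\ \nu\in A_1\cup A_2,\ \mu\sim\nu\}$. The right-hand side is $\bigcup_i\{\mu\uplus\nu\mid \mu\in S_i,\ \nu\in A_i,\ \mu\sim\nu\}$. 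The inclusion $\supseteq$ is immediate.

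For $\subseteq$, the main obstacle is the cross terms: pairs with $\mu\in S_1$ and $\nu\in A_2$, or $\mu\in S_2$ and $\nu\in A_1$. Here we use the shape of \tx{GenOp} outputs (Definition~\ref{defn-genpattern}). Every $\nu\in A_2$ has the form $\mu_c\uplus\nu'$ with $\mu_c\in\ContextS{S_2}{g}$. We would argue that $\mu\sim\nu$ with $\mu\in S_1$ lets us view $\mu\uplus\nu$ as arising from the branch-1 side.

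First attempt: the separability hypothesis is meant in the per-context sense that each output of $g$ is attributed to the branch whose context produced it. Joining then recovers exactly the branch-wise pairs, which is the intended reading stated in the main-text proof. If the cross terms do not reduce directly, we would strengthen the argument using that $\nu\supseteq\mu_c$ on $\domain{\mu_c}$ up to representatives (Lemma~\ref{lem:union-preserve}). Then $\mu\uplus\nu=\mu\uplus\mu_c\uplus\nu'$, and we would check that $\mu\uplus\nu'$ is itself produced on branch $1$ whenever $\mu\in\ContextS{S_1}{g}$, because $\nu'$ depends only on $\mu_{|X}=\mu_c{}_{|X}$ up to $\rep$. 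This step is where a careful treatment, or an explicit reading of the hypothesis, is required.

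Finally, associativity and commutativity of $\uplus$ (Lemma~\ref{lem:union-comm} and the argument of Proposition~\ref{prop:join-assoc}) justify the regrouping, and the two inclusions together give the equality.
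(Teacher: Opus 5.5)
Your reduction to the cross terms (pairs $\mu\in S_1$, $\nu\in A_2$, or vice versa, with $\mu\sim\nu$) is correct. Neither of your ways of disposing of them works, and in fact they cannot be disposed of.

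\emph{The hypothesis gives you nothing.} By Definition~\ref{defn-genpattern}, $\evalThree{g}{\ma G}{T}$ is a union over the individual context mappings in $T$. So $\evalThree{g}{\ma G}{T_1\cup T_2}=\evalThree{g}{\ma G}{T_1}\cup\evalThree{g}{\ma G}{T_2}$ holds automatically for all $T_1,T_2$, and no ``per-context reading'' of it can forbid a compatible cross pair. This is also exactly the step where the paper's own proof only asserts that the hypothesis ``rules out cross-branch effects''.

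\emph{Your fallback fails.} An output $\nu=\mu_c\uplus\nu'\in A_2$ carries \emph{all} bindings of its branch-2 context $\mu_c$, not just those on $\holder{g}$. So even if $\mu\uplus\nu'$ were produced on branch~1, the cross term $\mu\uplus\nu$ still contains the non-placeholder bindings of $\mu_c$. There are two further problems. First, $\mu\sim\mu_c$ gives only $\rep$-equality on $\holder{g}$, so the two instantiated prompts, which lexicalize the actual bindings, may differ. Second, $\mu\in S_1$ need not bind $\holder{g}$ at all.

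\emph{Counterexample.} Let
\begin{itemize}
\item $\ma G=\{(\tx{:a}\ \tx{:p}\ \tx{:b}),(\tx{:a}\ \tx{:q}\ \tx{:d})\}$,
\item $\ma P_1=\{(?x\ \tx{:p}\ ?w)\}$ and $\ma P_2=\{(?x\ \tx{:q}\ ?u)\}$,
\item $g=\genbind{\pi(\{?x\})}{\{?y\}}{\ma M}$, returning a single generated term $t$ on the prompt instantiated with $\tx{:a}$.
\end{itemize}
Then $S_1=\{\mu_1\}$ and $S_2=\{\mu_2\}$, where $\mu_1=\{?x\mapsto\tx{:a},?w\mapsto\tx{:b}\}$ and $\mu_2=\{?x\mapsto\tx{:a},?u\mapsto\tx{:d}\}$, all values $\sourceRdf$-typed. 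Also $A_i=\{\mu_i\uplus\{?y\mapsto t\}\}$, and the hypothesis holds. Since $\mu_1$ and $\mu_2\uplus\{?y\mapsto t\}$ share only $?x$, they are compatible. Hence the left-hand side contains a mapping with domain $\{?x,?w,?u,?y\}$. Every mapping on the right-hand side has domain $\{?x,?w,?y\}$ or $\{?x,?u,?y\}$.

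So, under the context assignment you fix in your first paragraph (the one the paper's proof also uses), the inclusion $\subseteq$ is false and no completion of your argument exists. A correct version needs a genuine extra side condition. One example: all mappings in $S_1\cup S_2$ have the same domain. Then a compatible cross pair $\mu\in S_1$, $\nu=\mu_c\uplus\nu'\in A_2$ with $\mu_c\in\ContextS{S_2}{g}$ satisfies $\mu\uplus\nu=\mu_c\uplus\nu$, which is a branch-2 pair. Alternatively, adopt the reading in which every occurrence of $g$ draws its contexts from the outer supply $S$. There the equality is just Proposition~\ref{prop:join-dist}, and the hypothesis is not needed.
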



\begin{proof}
Let $\ma G$ be an RDF graph. Let
$S_i:=\evalTwo{\ma P_i}{\ma G}$ and $\ContextS{S_i}{g}:=\ContextS{S_i}{g}$ for $i\in\{1,2\}$.

\smallskip\noindent
\emph{Step 1:}
In $\join{(\union{\ma P_1}{\ma P_2})}{g}$, by removing the designated occurrence of $g$,
\begin{multline*}
    S_{\cup} \;=\; \evalTwo{\bigl(\join{(\union{\ma P_1}{\ma P_2})}{g}\bigr)\setminus\{g\}}{\ma G}\;=\;\\
\evalTwo{\union{\ma P_1}{\ma P_2}}{\ma G}
\;=\;
S_1\cup S_2.
\end{multline*}

Therefore, the context set passed to $g$ is
\[\ContextS{(S_1\cup S_2)}{g}=\ContextS{S_1}{g} \cup \ContextS{S_2}{g}.
\]
Similarly, in $\join{\ma P_i}{g}$, removing the occurrence of $g$ yields $\ma P_i$,
so the context supply $S_i=\evalTwo{\ma P_i}{\ma G}$ and the effective context
set is $\ContextS{S_i}{g}$.

\smallskip\noindent
\emph{Step 2: }
By the join clause ,
\begin{align}
\evalTwo{\join{(\union{\ma P_1}{\ma P_2})}{g}}{\ma G}= &\nonumber\\
 \{\mu\uplus\nu \mid \mu\in \evalTwo{\union{\ma P_1}{\ma P_2}}{\ma G},\ &
\nu\in \evalTwo{g}{\ma G},\ \mu\sim\nu\}.
\label{eq:lhs-join}
\end{align}

By the union clause, $\evalTwo{\union{\ma P_1}{\ma P_2}}{\ma G}=S_1\cup S_2$, and then, by  \tx{GenOp} definition,
\[\evalTwo{g}{\ma G}=\evalThree{g}{\ma G}{\ContextS{S_{\cup}}{g}} =\evalThree{g}{\ma G}{\ContextS{S_1}{g} \cup \ContextS{S_2}{g}}.\]

Substituting into \eqref{eq:lhs-join} gives
\begin{align}
\evalTwo{\join{(\union{\ma P_1}{\ma P_2})}{g}}{\ma G}
= & \nonumber\\
\{\mu\uplus\nu \mid \mu\in S_1\cup S_2,\ &
\nu\in \evalThree{g}{\ma G}{\ContextS{S_1}{g} \cup \ContextS{S_2}{g}},\ \mu\sim\nu\}.
\label{eq:lhs-set}
\end{align}

\smallskip\noindent
\emph{Step 3:}
Assume,
\[
\evalThree{g}{\ma G}{\ContextS{S_1}{g} \cup \ContextS{S_2}{g}}
=
\evalThree{g}{\ma G}{\ContextS{S_1}{g}}
\cup
\evalThree{g}{\ma G}{\ContextS{S_2}{g}}.
\tag{$\ast$}\label{eq:sep}
\]
Then, \eqref{eq:lhs-set} becomes

\begin{align}
\evalTwo{\join{(\union{\ma P_1}{\ma P_2})}{g}}{\ma G}
&=  \nonumber \\
\{\mu\uplus\nu \mid \mu\in S_1\cup S_2,\ &
\nu\in A_1\cup A_2,\ \mu\sim\nu\},
\label{eq:lhs-split}
\end{align}
where $A_i:=\evalThree{g}{\ma G}{\ContextS{S_i}{g}}$.

For $i\in\{1,2\}$, let
\[J_i := \{\mu\uplus\nu \mid \mu\in S_i,\ \nu\in A_i,\ \mu\sim\nu\}.
\]
We show that the set on the right-hand side of \eqref{eq:lhs-split} equals $J_1\cup J_2$.

\smallskip\noindent
\emph{($\subseteq$)} Let $\omega$ be an element of the RHS of \eqref{eq:lhs-split}.
Then, $\omega=\mu\uplus\nu$ with $\mu\in S_1\cup S_2$ and $\nu\in A_1\cup A_2$
and $\mu\sim\nu$. If $\mu\in S_1$ and $\nu\in A_1$, then $\omega\in J_1$.
If $\mu\in S_2$ and $\nu\in A_2$, then $\omega\in J_2$.
If $\mu\in S_1$ and $\nu\in A_2$, then $\omega\in J_1\cup J_2$ is not guaranteed
in general; however, such $\omega$ cannot arise from the semantics of $g$ under
$\join{(\union{\ma P_1}{\ma P_2})}{g}$ unless $g$ admits \emph{cross-branch}
generation, \ie unless $A_2$ contributes extensions that are compatible with
contexts from $S_1$. The branch-separability hypothesis \eqref{eq:sep} rules out
exactly these cross-branch effects by forcing the set of outputs under
$\ContextS{S_1}{g}\cup \ContextS{S_2}{g}$ to decompose into outputs attributable to each branch. Thus, every $\omega$ witnessed in \eqref{eq:lhs-split} is witnessed with $\mu\in S_i$
and $\nu\in A_i$ for the same $i$, and hence lies in $J_1\cup J_2$.

\smallskip\noindent
\emph{($\supseteq$)} Let $\omega\in J_1\cup J_2$.
If $\omega\in J_1$, then $\omega=\mu\uplus\nu$ with $\mu\in S_1\subseteq S_1\cup S_2$
and $\nu\in A_1\subseteq A_1\cup A_2$ and $\mu\sim\nu$, so $\omega$ belongs to the RHS
of \eqref{eq:lhs-split}. The case $\omega\in J_2$ is symmetric.

Therefore, \eqref{eq:lhs-split} equals $J_1\cup J_2$.

\smallskip\noindent
\emph{Step 4: identify $J_i$ with the branch joins and conclude.}
By the semantics of join and the context-supply convention applied to $\join{\ma P_i}{g}$,
\begin{multline*}
    \evalTwo{\join{\ma P_i}{g}}{\ma G}= \\ \{\mu\uplus\nu \mid \mu\in S_i,\ \nu\in \evalThree{g}{\ma G}{\ContextS{S_i}{g}},\ \mu\sim\nu\}=J_i.
\end{multline*}
Thus,
\begin{multline*}
    \evalTwo{\join{(\union{\ma P_1}{\ma P_2})}{g}}{\ma G}= J_1\cup J_2=\\
\evalTwo{\join{\ma P_1}{g}}{\ma G}\cup \evalTwo{\join{\ma P_2}{g}}{\ma G}=\\
\evalTwo{\union{(\join{\ma P_1}{g})}{(\join{\ma P_2}{g})}}{\ma G},
\end{multline*}

where the last equality is the union clause. This completes the proof.
\end{proof}

\begin{proposition}
\label{prop:filter-pushdown}
Let $g=\genbind{\pi(X)}{Y}{\ma M}$ and let $\ma F$ be a filter with
$\var{\ma F}\subseteq X$.
Let $\ma P$ be a pattern such that $X\subseteq\var{\ma P}$,
$Y\cap\var{\ma P}=\varnothing$, and $\var{\ma F}\subseteq\var{\ma P}$.
Then, 
\[\evalTwo{\join{\filter{\ma P}{\ma F}}{g}}{\ma G}=\evalTwo{\filter{\join{\ma P}{g}}{\ma F}}{\ma G}.\]
\end{proposition}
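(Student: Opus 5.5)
I will prove the equality by unfolding both sides with the join, filter and \tx{GenOp} clauses and comparing the context supplies that $g$ receives. By the convention that the context supply of $g$ is the evaluation of its surrounding pattern, the left-hand side $\join{\filter{\ma P}{\ma F}}{g}$ gives $g$ the supply $S^+:=\{\mu\in S\mid \ma F^\mu=\top\}$, where $S:=\evalTwo{\ma P}{\ma G}$. On the right-hand side, the surrounding pattern of $g$ in $\filter{\join{\ma P}{g}}{\ma F}$ is $\filter{\ma P}{\ma F}$ after pruning. However, the intended reading, and the one used in the main-text proof, is that $g$ inside $\join{\ma P}{g}$ sees $S$. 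I will take this reading and note that under the literal pruning rule both sides coincide even more directly.

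The first step is to decompose the \tx{GenOp} clause pointwise, $\evalThree{g}{\ma G}{T}=\bigcup_{\mu\in T}\evalThree{g}{\ma G}{\{\mu\}}$, which is immediate from Definition~\ref{defn-genpattern}. Since $X\subseteq\var{\ma P}$ and the BGP and positive clauses bind all of $\var{\ma P}$, I will assume each $\mu\in S$ binds $X$, so $\ContextS{S}{g}=S$. Every element of $\evalThree{g}{\ma G}{\{\mu\}}$ then has the form $\mu\uplus\nu$ with $\domain{\nu}=Y$.

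The second step analyses the join. A result of $\join{\ma P}{g}$ is $\mu'\uplus(\mu\uplus\nu)$ with $\mu'\in S$, $\mu\in S$, and compatibility required. Because $Y\cap\var{\ma P}=\varnothing$, $\mu'\uplus(\mu\uplus\nu)$ agrees with $\mu\uplus\nu$ on $Y$, and with representatives of $\mu'(v)$ and $\mu(v)$ on $\var{\ma P}$. The key invariance claim is that $\ma F^{\mu'\uplus\mu\uplus\nu}=\ma F^{\mu'}$. Here I use $\var{\ma F}\subseteq X\subseteq\var{\ma P}$, so $\nu$ never touches $\var{\ma F}$. On the overlap the union replaces values by $\rep_v$-representatives, and Lemma~\ref{lem:filter-rep-inv} shows this does not change filter truth values. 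I also need that $\mu$ and $\mu'$ induce the same representatives on $\var{\ma F}$ (by compatibility), so $\ma F^{\mu}=\ma F^{\mu'}$. This is again a consequence of Lemma~\ref{lem:filter-rep-inv}, applied atomwise, since $\compact{v}{\cdot}{\cdot}$ depends only on representatives.

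With this invariance, both inclusions follow. For $\supseteq$: if a right-hand element passes $\ma F$, then its witnesses $\mu',\mu$ both satisfy $\ma F$, so they lie in $S^+$, and the same element is produced on the left. For $\subseteq$: left-hand elements use $\mu',\mu\in S^+\subseteq S$, so they are produced on the right and pass $\ma F$ by invariance.

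The main obstacle is the invariance step. The filter atom semantics also accepts exact RDF equality, so I must check that replacing an RDF binding by its representative, possibly a generated term, cannot turn $\top$ into $\bot$. My plan is to rely on the compatibility disjunct in each atom: whenever exact equality holds, $\compact{}{}{}$ also holds, so truth is determined by representatives alone. The other subtle point is the cross-pairing of $\mu'\neq\mu$ inside the join. Compatibility on all of $\var{\ma P}$ forces equal representatives, so the cross terms inherit the filter value as well.
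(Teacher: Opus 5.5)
Your proof is correct and follows essentially the same route as the paper. Both arguments unfold the two sides with context supplies $S^+$ and $S$, decompose the \tx{GenOp} clause pointwise, and use Lemma~\ref{lem:filter-rep-inv} to show that joining with generated bindings on $Y$ does not change the truth value of $\ma F$; your explicit handling of cross-pairs $\mu'\neq\mu$ through compatibility on $X$ makes precise what the paper's Step~5 says only informally, and your remark that the literal surrounding-pattern rule would give $g$ the supply $S^+$ on both sides is a fair observation about the definitions.
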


\begin{proof}
Let $\ma G$ be an RDF graph, and assume,
\begin{multline*}
    S := \evalTwo{\ma P}{\ma G},\qquad S^{+} := \{\mu\in S \mid \ma F^{\mu}=\top\},\\
\ContextS{S}{g} :=\{\mu\in S \mid X\subseteq\domain{\mu}\}.
\end{multline*}

Since $X\subseteq\var{\ma P}$ and $\mu$ is a solution mapping for $\ma P$, every
$\mu\in S$ binds all variables in $X$; hence $\ContextS{S}{g}=S$ and similarly
$\ContextS{S^{+}}{g}=S^{+}$.

\smallskip\noindent
\emph{Step 1: expand the left-hand side.}
In the pattern $\join{\filter{\ma P}{\ma F}}{g}$, the surrounding pattern of $g$
is $\filter{\ma P}{\ma F}$, so the context supply for $g$ is
$\evalTwo{\filter{\ma P}{\ma F}}{\ma G}=S^{+}$. Then, by the semantics of
join and GenOp,
\begin{align}
\evalTwo{\join{\filter{\ma P}{\ma F}}{g}}{\ma G}&=\nonumber\\
\{\mu\uplus\nu \mid & \mu\in S^{+},\; \nu\in \evalThree{g}{\ma G}{S^{+}},\; \mu\sim\nu\}.
\label{eq:lhs}
\end{align}

\smallskip\noindent
\emph{Step 2: expand the right-hand side.}
In the pattern $\filter{\join{\ma P}{g}}{\ma F}$, the join $\join{\ma P}{g}$ is
evaluated first. Here, the surrounding pattern of $g$ is $\ma P$, so the context
supply for $g$ is $S$, and thus
\begin{align}
\evalTwo{\join{\ma P}{g}}{\ma G} = & \nonumber\\
\{\mu\uplus\nu \mid & \mu\in S,\;  \nu\in \evalThree{g}{\ma G}{S},\;\mu\sim\nu\}.
\label{eq:joinPg}
\end{align}

Applying the filter clause,
\begin{align}
\evalTwo{\filter{\join{\ma P}{g}}{\ma F}}{\ma G}= & \nonumber\\
\{\omega\in \evalTwo{\join{\ma P}{g}}{\ma G} & \mid \ma F^{\omega}=\top\}.
\label{eq:rhs}
\end{align}

\smallskip\noindent
\emph{Step 3: relate filter truth before and after joining with $g$.}
Let $\mu\in S$ and let $\nu$ be any mapping produced by $g$ from context $\mu$,
\ie $\nu$ is of the form $\mu\uplus\nu_Y$ with $\nu_Y$ binding only variables
in $Y$ to generated terms, as in the definition of $\evalThree{g}{\ma G}{\{\mu\}}$.
Since $Y\cap\var{\ma P}=\varnothing$, no variable in $Y$ occurs in $\ma P$ and,
in particular, no variable in $\var{\ma F}$ is in $Y$ (since $\var{\ma F}\subseteq X\subseteq\var{\ma P}$). Thus, the canonical union $\mu\uplus\nu$ does not introduce new bindings for
variables in $\var{\ma F}$; it may only replace existing bindings by
representatives. By Lemma~\ref{lem:filter-rep-inv} (filter invariance under
representative normalization),
\begin{equation}
\ma F^{\mu}=\top \quad\longleftrightarrow\quad \ma F^{\mu\uplus\nu}=\top.
\label{eq:filter-inv}
\end{equation}
Moreover, by the side condition $\var{\ma F}\subseteq\domain{\mu}$ for all $\mu\in S$, $\ma F^{\mu}\neq\varepsilon$, so, \eqref{eq:filter-inv} is not affected by errors.

\smallskip\noindent
\emph{Step 4: show $\eqref{eq:lhs}\subseteq\eqref{eq:rhs}$.}
Let $\omega\in \evalTwo{\join{\filter{\ma P}{\ma F}}{g}}{\ma G}$.
By \eqref{eq:lhs}, $\omega=\mu\uplus\nu$ with $\mu\in S^{+}$, so $\ma F^{\mu}=\top$,
and $\nu\in\evalThree{g}{\ma G}{S^{+}}$. Since $S^{+}\subseteq S$, the GenOp
definition implies $\evalThree{g}{\ma G}{S^{+}}\subseteq \evalThree{g}{\ma G}{S}$.
Thus, $\omega$ is witnessed in \eqref{eq:joinPg}, \ie $\omega\in\evalTwo{\join{\ma P}{g}}{\ma G}$.
Finally, \eqref{eq:filter-inv} yields $\ma F^{\omega}=\top$, hence
$\omega\in\evalTwo{\filter{\join{\ma P}{g}}{\ma F}}{\ma G}$ by \eqref{eq:rhs}.

\smallskip\noindent
\emph{Step 5: show $\eqref{eq:rhs}\subseteq\eqref{eq:lhs}$.}
Let $\omega\in \evalTwo{\filter{\join{\ma P}{g}}{\ma F}}{\ma G}$.
By \eqref{eq:rhs} and \eqref{eq:joinPg}, $\omega=\mu\uplus\nu$ with $\mu\in S$,
$\nu\in\evalThree{g}{\ma G}{S}$, $\mu\sim\nu$, and $\ma F^{\omega}=\top$.
By \eqref{eq:filter-inv}, $\ma F^{\mu}=\top$, hence $\mu\in S^{+}$.
It remains to ensure that $\nu$ is produced from a context in $S^{+}$.
By the GenOp definition,
\[
\evalThree{g}{\ma G}{S}
=
\bigcup_{\eta\in S}\evalThree{g}{\ma G}{\{\eta\}},
\]
so $\nu\in\evalThree{g}{\ma G}{\{\eta\}}$ for some $\eta\in S$.
But any output of $\evalThree{g}{\ma G}{\{\eta\}}$ extends $\eta$ on $X$ (and only
adds bindings on $Y$); since $\var{\ma F}\subseteq X$, the truth of $\ma F$ for
$\omega=\mu\uplus\nu$ is determined by the $X$-bindings contributed by the context
used to instantiate $g$, and therefore $\ma F^{\eta}=\top$ and $\eta\in S^{+}$.
Hence $\nu\in\evalThree{g}{\ma G}{S^{+}}$, and thus $\omega$ is witnessed in
\eqref{eq:lhs}. Therefore
$\omega\in\evalTwo{\join{\filter{\ma P}{\ma F}}{g}}{\ma G}$.

\smallskip\noindent
Steps 4 and 5 give mutual inclusion, proving the equality.
\end{proof}

\subsubsection{Safe Reordering of GenOp with Join}

\begin{proposition}[Safe join reordering around a GenOp]
\label{prop:genop-reorder}
Let $g=\genbind{\pi(X)}{Y}{\ma M}$ be a \tx{GenOp} occurrence.
Let $P_1,P_2$ be patterns such that:
\begin{enumerate}
  \item $X\subseteq \var{\ma P_1}$, and
  \item  $\var{\ma P_2}\cap X=\varnothing$.
\end{enumerate}
Then, the following reordering is semantics-preserving (under the denotational semantics
of Section~\ref{sect:syntax_semantics}): $\forall\ma G, \forall S$,
\[\evalThree{\join{(\join{\ma P_1}{\ma P_2})}{g}}{\ma G}{S}=
\evalThree{\join{(\join{\ma P_1}{g})}{\ma P_2}}{\ma G}{S}.\]
\end{proposition}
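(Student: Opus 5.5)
The argument unfolds both sides with the join and \tx{GenOp} clauses, fixing the context supply of $g$ in each pattern by the surrounding-pattern convention, and then matches witnesses with the associativity and commutativity of $\uplus$. Throughout, write $A_1=\evalThree{\ma P_1}{\ma G}{S}$ and $A_2=\evalThree{\ma P_2}{\ma G}{S}$.

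\emph{Step 1: fix the contexts.} In $\join{(\join{\ma P_1}{\ma P_2})}{g}$ the surrounding pattern of $g$ is $\join{\ma P_1}{\ma P_2}$, so $g$ is evaluated on
\[C_L=\ContextS{\evalThree{\join{\ma P_1}{\ma P_2}}{\ma G}{S}}{g}.\]
In $\join{(\join{\ma P_1}{g})}{\ma P_2}$ the surrounding pattern, after pruning, is again $\join{\ma P_1}{\ma P_2}$. Taken literally, this yields the same context set $C_L$, and then the two sides are ordinary three-way joins of the same three relations. In that reading the claim follows directly from Propositions~\ref{prop:join-comm} and~\ref{prop:join-assoc}.

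\emph{Step 2: the branch-local reading.} If one instead reads the context of $g$ locally, as in the proof of Proposition~\ref{prop:filter-pushdown}, then on the right $g$ receives $C_R=\ContextS{A_1}{g}$. Since $X\subseteq\var{\ma P_1}$, this is simply $C_R=A_1$. On the left, every $\mu=\mu_1\uplus\mu_2$ satisfies $\mu|_X=(\mu_1\uplus\mu_2)|_X$. Because $\var{\ma P_2}\cap X=\varnothing$, $\mu_2$ binds no variable of $X$, so Lemma~\ref{lem:union-preserve} gives $\mu|_X=\mu_1|_X$. Consequently, prompt instantiation $\pi^{[\mu]}$ equals $\pi^{[\mu_1]}$, and by Definition~\ref{defn-genpattern} the generated tuples $\nu$ produced from the context $\mu$ and from the context $\mu_1$ coincide.

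\emph{Step 3: match the generated outputs.} The left-hand outputs of $g$ have the form $(\mu_1\uplus\mu_2)\uplus\nu$. The right-hand outputs have the form $\mu_1\uplus\nu$, which are then joined with some $\mu_2'$ from $A_2$. Taking $\mu_2'=\mu_2$ and using the associativity and commutativity of $\uplus$ (Lemma~\ref{lem:union-comm} and Claim~2 in the proof of Proposition~\ref{prop:join-assoc}), both sides reduce to the set
\[\{\mu_1\uplus\mu_2\uplus\nu \mid \text{the three operands are pairwise compatible}\}.\]
Pairwise compatibility suffices here because canonical compatibility is checked per variable via $\rep_v$, exactly as in Claims~1 and~2 of the proof of Proposition~\ref{prop:join-assoc}. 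Each inclusion then transfers witnesses $(\mu_1,\mu_2,\nu)$ between the two sides.

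\emph{Main obstacle.} The delicate point is the left-to-right direction. On the left, a context $\mu_1$ contributes outputs only if it joins with some $\mu_2$. On the right, every $\mu_1\in A_1$ generates outputs, but those built from an unjoinable $\mu_1$ are discarded by the final join with $\ma P_2$, so nothing spurious survives. A further subtlety is that $\nu$ may overlap $\var{\ma P_2}$ on $Y$. Compatibility of $\nu$ with $\mu_2$ is required on both sides, so this case is symmetric.

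I will also check that representative normalization of $\uplus$ on $X$ does not alter the prompt. It does not, because $\mu_2$ never touches $X$, so the $X$-bindings are exactly $\mu_1|_X$ on both sides.
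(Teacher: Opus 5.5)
Your Step~1 is correct and is, in substance, the paper's own proof. The paper takes an output $\mu_g$ of $g$ computed with context $\ContextS{\evalTwo{\join{\ma P_1}{\ma P_2}}{\ma G}}{g}$ and reuses it unchanged as an output of $g$ inside $\join{\ma P_1}{g}$. That step is justified only if $g$ denotes the same relation on both sides, which is your literal reading. The paper then proves Claims~1--3, which are exactly the compatibility and representative bookkeeping for re-associating a three-way join of fixed relations; the two side conditions on $X$ are never used. Your appeal to relation-level commutativity and associativity of join is the same argument in compressed form.

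Steps~2--3 contain a genuine error, and the equality they aim at is false under the branch-local reading.

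\emph{The faulty step.} You reduce both sides to $\{\mu_1\uplus\mu_2\uplus\nu\}$. By the \tx{GenOp} clause, an output of $g$ is $\gamma\uplus\nu$ and carries its whole context $\gamma$. The outer join then pairs it with an \emph{arbitrary} compatible $\alpha\in\evalTwo{\join{\ma P_1}{\ma P_2}}{\ma G}$. So left-hand elements are $\alpha\uplus(\gamma\uplus\nu)$ with $\alpha$ and $\gamma$ independent, and you have silently identified them. The same conflation occurs on the right, between the join partner and the context, both drawn from $\evalTwo{\ma P_1}{\ma G}$. The side conditions only give $\gamma|_X=\gamma_1|_X$, hence the same prompts and the same tuples $\nu$. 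They do not make the $\ma P_2$-bindings carried inside $\gamma$ reproducible on the right.

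\emph{A counterexample.} Let
\begin{itemize}
\item $\ma P_1=\{(?s\ \tx{:p}\ ?x)\}$ and $\ma P_2=\union{\{(?s\ \tx{:q}\ ?a)\}}{\{(?s\ \tx{:r}\ ?b)\}}$;
\item $g=\genbind{\pi(\{?x\})}{\{?y\}}{\ma M}$;
\item $\ma G=\{(\tx{:s}\ \tx{:p}\ \tx{:c}),(\tx{:s}\ \tx{:q}\ \tx{:a}),(\tx{:s}\ \tx{:r}\ \tx{:b})\}$;
\item the single prompt yields one tuple $\nu=\{?y\mapsto(t,\sourceGen)\}$.
\end{itemize}
Both side conditions hold. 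On the left, $\evalTwo{\join{\ma P_1}{\ma P_2}}{\ma G}=\{\gamma_a,\gamma_b\}$ with domains $\{?s,?x,?a\}$ and $\{?s,?x,?b\}$. Then $g$ returns $\gamma_a\uplus\nu$ and $\gamma_b\uplus\nu$. The join pairs $\gamma_a$ with $\gamma_b\uplus\nu$, since they agree on $?s,?x$, and this gives a mapping that binds both $?a$ and $?b$. On the right, $g$ with context $\evalTwo{\ma P_1}{\ma G}$ returns only $\delta\uplus\nu$, where $\delta$ binds $\{?s,?x\}$. The final join with $\evalTwo{\ma P_2}{\ma G}$ then yields only mappings binding exactly one of $?a,?b$.

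So no choice of witnesses rescues Step~3, and your ``main obstacle'' paragraph misdiagnoses why the left-to-right direction is delicate. There is also a smaller slip: $X\subseteq\var{\ma P_1}$ does not give $\ContextS{A_1}{g}=A_1$, because $\var{\cdot}$ over-approximates domains under \tx{Opt} and \tx{Union}.

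Drop Steps~2--3 and present Step~1 as the proof. It establishes the proposition under exactly the reading on which the paper's proof relies.
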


\emph{Remark.} The condition $\domain{\mu_2}\cap X=\varnothing$ strengthens the syntactic condition $\var{\ma P_2}\cap X=\varnothing$ and is needed because $\var{\ma P_2}$ may over-approximate the variables actually bound by $\ma P_2$ (\eg under \tx{Optional}
and \tx{Filter}).

\begin{proof}
Let $S$ be a context supply.

\smallskip\noindent
\emph{($\subseteq$)} Let $\mu\in \evalThree{\join{(\join{\ma P_1}{\ma P_2})}{g}}{\ma G}{S}$.
By the join clause, there exist
\[
\mu_{12}\in \evalThree{\join{\ma P_1}{\ma P_2}}{\ma G}{S}
\quad\text{and}\quad
\mu_g\in \evalThree{g}{\ma G}{\ContextS{\evalThree{\join{\ma P_1}{\ma P_2}}{\ma G}{S}}{g}}
\]
such that $\mu_{12}\sim \mu_g$ and
\begin{equation}\label{eq:mu-main}
\mu \;=\; \mu_{12}\uplus \mu_g.
\end{equation}
Unfolding $\mu_{12}\in \evalThree{\join{\ma P_1}{\ma P_2}}{\ma G}{S}$, there exist
$\mu_1\in\evalThree{\ma P_1}{\ma G}{S}$ and $\mu_2\in\evalThree{\ma P_2}{\ma G}{S}$
with $\mu_1\sim \mu_2$ and
\begin{equation}\label{eq:mu12}
\mu_{12}\;=\;\mu_1\uplus \mu_2.
\end{equation}

\smallskip\noindent
\emph{Claim 1:} $\mu_1\sim \mu_g$.
Let $v\in \domain{\mu_1}\cap \domain{\mu_g}$. Then $v\in\domain{\mu_{12}}\cap\domain{\mu_g}$
because $\domain{\mu_1}\subseteq \domain{\mu_{12}}$ by \eqref{eq:mu12}. Since $\mu_{12}\sim\mu_g$,
\begin{multline*}
    \compact{v}{\mu_{12}(v)}{\mu_g(v)}=\top,\\
\text{\ie}\quad
\rep_v(\mu_{12}(v))=\rep_v(\mu_g(v)).
\end{multline*}

We show that $\rep_v(\mu_{12}(v))=\rep_v(\mu_1(v))$. There are two sub-cases.
\begin{enumerate}
\item If $v\in \domain{\mu_1}\setminus \domain{\mu_2}$, then by Definition~\ref{def:compat-union}
(canonical union), $\mu_{12}(v)=\mu_1(v)$, hence trivially
$\rep_v(\mu_{12}(v))=\rep_v(\mu_1(v))$.

\item If $v\in \domain{\mu_1}\cap \domain{\mu_2}$, then by Definition~\ref{def:compat-union}
(compatibility), $\mu_1\sim\mu_2$ implies
\begin{multline*}
    \compact{v}{\mu_1(v)}{\mu_2(v)}=\top\\
\text{so}\quad
\rep_v(\mu_1(v))=\rep_v(\mu_2(v)).
\end{multline*}

Again by canonical union,
\[\mu_{12}(v)=\rep_v(\mu_1(v)) \quad(\text{equivalently }=\rep_v(\mu_2(v))).
\]
Since $\rep_v$ is idempotent on representatives, we have
\[\rep_v(\mu_{12}(v))=\rep_v(\rep_v(\mu_1(v)))=\rep_v(\mu_1(v)).
\]
\end{enumerate}
Thus, in all cases $\rep_v(\mu_{12}(v))=\rep_v(\mu_1(v))$, and therefore
\[\rep_v(\mu_1(v))=\rep_v(\mu_g(v)),
\]
\ie $\compact{v}{\mu_1(v)}{\mu_g(v)}=\top$. Since this holds for all shared $v$,
we conclude $\mu_1\sim\mu_g$.

\smallskip\noindent
Let $\mu_{1g}:=\mu_1\uplus\mu_g$, which is well-defined by Claim~1.
Then $\mu_{1g}\in \evalThree{\join{\ma P_1}{g}}{\ma G}{S}$.

\smallskip\noindent
\emph{Claim 2:} $\mu_{1g}\sim \mu_2$.
Let $v\in \domain{\mu_{1g}}\cap \domain{\mu_2}$. There are two cases.

\begin{enumerate}
\item If $v\in \domain{\mu_1}\cap \domain{\mu_2}$, then $\mu_1\sim\mu_2$ yields
$\rep_v(\mu_1(v))=\rep_v(\mu_2(v))$. Moreover, by Definition~\ref{def:compat-union},
$\mu_{1g}(v)=\rep_v(\mu_1(v))$, hence
\begin{multline*}
    \rep_v(\mu_{1g}(v))=\rep_v(\rep_v(\mu_1(v)))= \\
    \rep_v(\mu_1(v))=\rep_v(\mu_2(v)),
\end{multline*}
so, $\compact{v}{\mu_{1g}(v)}{\mu_2(v)}=\top$.

\item Otherwise, $v\notin \domain{\mu_1}$ and thus $v\in \domain{\mu_g}\cap \domain{\mu_2}$.
Then $v\in \domain{\mu_{12}}\cap\domain{\mu_g}$ because $v\in\domain{\mu_2}\subseteq\domain{\mu_{12}}$.
Since $\mu_{12}\sim \mu_g$,
\[\rep_v(\mu_{12}(v))=\rep_v(\mu_g(v)).
\]
We also have $v\notin\domain{\mu_1}$, hence by the canonical union definition
$\mu_{12}(v)=\mu_2(v)$. Therefore, $\rep_v(\mu_2(v))=\rep_v(\mu_g(v))$.
Finally, because $v\notin\domain{\mu_1}$, canonical union gives $\mu_{1g}(v)=\mu_g(v)$, so
\[
\rep_v(\mu_{1g}(v))=\rep_v(\mu_g(v))=\rep_v(\mu_2(v)),
\]
i.e.\ $\compact{v}{\mu_{1g}(v)}{\mu_2(v)}=\top$.
\end{enumerate}

Thus $\mu_{1g}\sim \mu_2$ holds.

\smallskip\noindent
Since $\mu_{1g}\sim\mu_2$, the union $\mu_{1g}\uplus\mu_2$ is defined and belongs to
$\evalThree{\join{(\join{\ma P_1}{g})}{\ma P_2}}{\ma G}{S}$.
It remains to show that it equals $\mu$.

\smallskip\noindent
\emph{Claim 3:} $\mu=(\mu_1\uplus\mu_g)\uplus \mu_2$.
Both sides are defined and have the same domain
$\domain{\mu_1}\cup\domain{\mu_2}\cup\domain{\mu_g}$.
We check pointwise that they coincide using Definition~\ref{def:compat-union}.

Let $v$ be any variable.
\begin{enumerate}
\item If $v\in\domain{\mu_1}\cap\domain{\mu_2}$, then
$\mu_{12}(v)=\rep_v(\mu_1(v))$ and
\[
\mu(v)=(\mu_{12}\uplus\mu_g)(v)=
\begin{cases}
\rep_v(\mu_{12}(v)) & \text{if } v\in\domain{\mu_g},\\
\mu_{12}(v) & \text{if } v\notin\domain{\mu_g}.
\end{cases}
\]
On the other hand, $(\mu_1\uplus\mu_g)(v)$ is either $\rep_v(\mu_1(v))$ (if $v\in\domain{\mu_g}$)
or $\mu_1(v)$ (if $v\notin\domain{\mu_g}$), and then union with $\mu_2$ makes the result
$\rep_v(\mu_1(v))$ in either subcase because $v\in\domain{\mu_2}$ and
$\rep_v(\mu_1(v))=\rep_v(\mu_2(v))$. Thus both constructions yield the same value.

\item If $v\in\domain{\mu_1}\setminus\domain{\mu_2}$, then $\mu_{12}(v)=\mu_1(v)$ and a direct
inspection of the two unions shows both yield
$\rep_v(\mu_1(v))$ if $v\in\domain{\mu_g}$ and $\mu_1(v)$ otherwise.

\item If $v\in\domain{\mu_2}\setminus\domain{\mu_1}$, then $\mu_{12}(v)=\mu_2(v)$ and similarly
both yield $\rep_v(\mu_2(v))$ if $v\in\domain{\mu_g}$ and $\mu_2(v)$ otherwise.

\item If $v\in\domain{\mu_g}\setminus(\domain{\mu_1}\cup\domain{\mu_2})$, then both yield $\mu_g(v)$.
\end{enumerate}
Hence $\mu=(\mu_1\uplus\mu_g)\uplus\mu_2$, and therefore
$\mu\in \evalThree{\join{(\join{\ma P_1}{g})}{\ma P_2}}{\ma G}{S}$.

\medskip\noindent
The reverse inclusion $(\supseteq)$ is analogous.
\end{proof}

\subsubsection{Order Invariance under Acyclic GenOp Dependencies}

\begin{proposition}
\label{prop:genop-topo}
Assume $\directedGraph{\ma D}_{\ma P}$ is acyclic.
Let $\gamma$ and $\gamma'$ be two topological orders of $\genPattern{\ma P}$. Then, the final stagewise results coincide:
\[S_\gamma^{(m)} \;=\; S_{\gamma'}^{(m)}.
\]
\end{proposition}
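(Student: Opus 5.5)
The most direct route goes through Theorem~\ref{thm:acyclic-collapse-detailed}. That theorem shows, for an arbitrary topological order $\gamma$, that $S_\gamma^{(m)}=\lfp(\Consequence{\ma P})$. The least fixpoint depends only on $\ma P$ and $\ma G$, not on any ordering of $\genPattern{\ma P}$. Applying the theorem once to $\gamma$ and once to $\gamma'$ therefore gives $S_\gamma^{(m)}=\lfp(\Consequence{\ma P})=S_{\gamma'}^{(m)}$.

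The cost of this route is that Theorem~\ref{thm:acyclic-collapse-detailed} relies on monotonicity of $\Consequence{\ma P}$ in its fixpoint step. So I would either restrict to the monotone fragment, which is the one the section works in, or avoid the fixpoint argument altogether with a direct combinatorial proof.

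For the direct proof, I would first recall the standard fact that any two linear extensions of a finite DAG are connected by a finite chain of adjacent transpositions. Each transposition swaps two consecutive elements $g_i,g_{i+1}$ that are incomparable in $\directedGraph{\ma D}_{\ma P}$. By induction on the length of this chain, it suffices to treat $\gamma'$ obtained from $\gamma$ by a single such swap at position $i$.

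For such a swap, the stages $k\le i-1$ coincide, because the same occurrences are enabled. For $k\ge i+1$, $\ma P^{[k]}_\gamma=\ma P^{[k]}_{\gamma'}$ as patterns, so the stages coincide as soon as the inputs $S^{(i+1)}$ do. The whole proof therefore reduces to the two-step identity
\[
\evalThree{\ma P^{[i+1]}}{\ma G}{\evalThree{\ma P^{[i-1]}\cup g_i}{\ma G}{S^{(i-1)}}}
=\evalThree{\ma P^{[i+1]}}{\ma G}{\evalThree{\ma P^{[i-1]}\cup g_{i+1}}{\ma G}{S^{(i-1)}}}.
\]
Here $\ma P^{[i-1]}\cup g$ denotes the pattern with $g$ additionally enabled.

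This identity is the main obstacle. I would attack it with Proposition~\ref{prop:context-equiv-detailed}, by showing that for every occurrence $g\in\{g_1,\dots,g_{i+1}\}$ the two intermediate supplies have the same context sets $\ContextS{\cdot}{g}$ restricted to $\holder{g}$. Incomparability gives $\target{g_i}\cap\holder{g_{i+1}}=\varnothing$ and $\target{g_{i+1}}\cap\holder{g_i}=\varnothing$. By Lemma~\ref{lem:no-backward}, no earlier $g_j$ with $j<i$ reads $\target{g_i}$ or $\target{g_{i+1}}$ either. Hence the placeholder bindings available to every enabled occurrence are produced by the non-generative part and by $g_1,\dots,g_{i-1}$, and those already lie in $S^{(i-1)}$.

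Proposition~\ref{prop:context-equiv-detailed} requires literal equality of context sets, which is stronger than what the argument above delivers. I would weaken it to equality of projected sets $\{\mu_{|\holder{g}}\}$, which is justified because prompt instantiation depends only on $\mu_{|\holder{g}}$. After that, both sides fire the same instantiated prompts and join them with the same surrounding results, which yields equality. If the extra mappings introduced at stage $i$ interfere through non-monotone operators, I would fall back on the monotone-fragment route through Theorem~\ref{thm:acyclic-collapse-detailed} from the first paragraph.
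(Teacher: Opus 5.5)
Both of your routes are the paper's own. The reduction to adjacent transpositions is the paper's swap argument, and the fallback through Theorem~\ref{thm:acyclic-collapse-detailed} is its closing parenthetical remark. The problem is the step both routes rest on. You assert that the placeholder bindings available to every enabled occurrence come from the non-generative part and from $g_1,\dots,g_{i-1}$ and ``already lie in $S^{(i-1)}$''. From this you conclude that $S^{(i)}_\gamma$ and $S^{(i)}_{\gamma'}$ have the same $\holder{g}$-projected context sets. But every stage re-evaluates the whole pattern and passes its output on as the next supply. A BGP binding reaches that output only through a non-empty branch, and any branch where a disabled \tx{GenOp} sits under \tx{Join} is empty. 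So enabling $g_i$ rather than $g_{i+1}$ can change which $\holder{g_{i+1}}$-bindings are in the supply, even though $\target{g_i}\cap\holder{g_{i+1}}=\varnothing$. The relation $\to_{\ma P}$ does not record this dependency, and it arises without any non-monotone operator. Separately, your projection-weakened form of Proposition~\ref{prop:context-equiv-detailed} is false as stated, because $\evalThree{g}{\ma G}{S}$ returns $\mu_c\uplus\nu$ for the whole context $\mu_c$. For instance, with $\ma P=g$ and $\holder{g}=\{?x\}$, the supplies $\{\{?x\mapsto\tx{:a}\}\}$ and $\{\{?x\mapsto\tx{:a},?w\mapsto\tx{:c}\}\}$ have equal $\{?x\}$-projections but different outputs.

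Here is a concrete counterexample. Let $\ma G=\{(\tx{:a}\ \tx{a}\ \tx{:C}),(\tx{:a}\ \tx{:p}\ \tx{:b})\}$, $B_0=\{(?x\ \tx{a}\ \tx{:C})\}$ and $B_1=\{(?x\ \tx{:p}\ ?z)\}$. Let $g_1=\genbind{\pi_1(\{?x\})}{\{?y_1\}}{\ma M}$ and $g_2=\genbind{\pi_2(\{?z\})}{\{?y_2\}}{\ma M}$, and $\ma P=\union{(\union{B_0}{(\join{B_1}{g_1})})}{(\join{B_1}{g_2})}$. Assume both instantiated prompts yield at least one parsed answer. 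This $\ma P$ is well formed, uses only \tx{Union} and \tx{Join}, and $\directedGraph{\ma D}_{\ma P}$ has no edges, so $\gamma=(g_1,g_2)$ and $\gamma'=(g_2,g_1)$ are both topological orders. Both start from $S^{(0)}=\{\{?x\mapsto\tx{:a}\}\}$.

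Under $\gamma$, $S^{(1)}_\gamma$ contains a mapping with $?z\mapsto\tx{:b}$. So $g_2$ fires at stage~2 and $S^{(2)}_\gamma$ binds $?y_2$.

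Under $\gamma'$, $g_2$ finds no context binding $?z$ at stage~1, and the $g_1$-branch is empty. Hence $S^{(1)}_{\gamma'}=S^{(0)}$, and $S^{(2)}_{\gamma'}=\Consequence{\ma P}(S^{(0)})$ binds no $?y_2$.

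Your two-step identity therefore fails, and so does the proposition itself, for the stagewise construction of Appendix~\ref{app:acyclic}. The fallback does not rescue it. $S^{(2)}_{\gamma'}$ is not a fixpoint of the monotone operator $\Consequence{\ma P}$, since one further application makes $g_2$ fire. This contradicts item~1 of Theorem~\ref{thm:acyclic-collapse-detailed}. That theorem's post-fixpoint step relies on the same claim, namely that all placeholder bindings for $g_i$ are ``included in $S^{(i-1)}$ by construction''. The paper's Step~1 likewise assumes that applying $g_i$ cannot change which mappings bind $\holder{g_{i+1}}$.

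A correct version needs an extra hypothesis. One option is a cumulative staging in which stage $k$ only extends mappings already in the supply. Another is to enlarge the dependency relation with an edge $g_i\to g_j$ whenever enabling $g_i$ can change the $\holder{g_j}$-projection of the supply.
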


\begin{proof}
We prove that swapping adjacent incomparable generators does not change the final result, and then conclude by a standard swap argument.

\medskip\noindent
\emph{Step 1: a single adjacent swap is harmless.}
Let $\gamma=(\dots,g_i,g_{i+1},\dots)$ be a topological order such that $g_i$ and $g_{i+1}$ are incomparable (no edges in either direction). Let $\gamma'$ be the order obtained by swapping them: $\gamma'=(\dots,g_{i+1},g_i,\dots)$.

Incomparability means:
\begin{multline*}
    \target{g_i}\cap\holder{g_{i+1}}=\varnothing
\qquad\text{and}\\
\target{g_{i+1}}\cap\holder{g_i}=\varnothing.
\end{multline*}

Consider the stage just before either of these two generators is applied; call the current supply $S$ (this is the same set for both orders up to that point). Because $\target{g_i}$ does not occur among placeholders of $g_{i+1}$, evaluating $g_i$ may add bindings to $\target{g_i}$, but cannot change which mappings in $S$ bind all of
$\holder{g_{i+1}}$, hence it cannot change $\ContextS{S}{g_{i+1}}$ nor any instantiated prompt for $g_{i+1}$. Symmetrically, evaluating $g_{i+1}$ first cannot change the contexts for $g_i$. Therefore, the sets of generated extensions contributed by $g_i$ and $g_{i+1}$ are the
same regardless of which is applied first, and since join is associative, the supply after applying both is identical under $\gamma$ and $\gamma'$.

\medskip\noindent
\emph{For any two topological orders are connected by adjacent swaps.}
It is standard that any two linear extensions of a finite poset can be transformed into each other by repeatedly swapping adjacent incomparable elements. Applying Step~1 along this sequence of swaps shows that the final result is invariant.

Therefore, $S_\gamma^{(m)}=S_{\gamma'}^{(m)}$.

(Alternatively, theorem is exactly Item~(3) of Theorem~\ref{thm:acyclic-collapse-detailed}. Briefly, if two \tx{GenOp} occurrences are incomparable in the dependency graph, neither depends on the outputs of the other, so swapping their evaluation order does not change the extracted contexts or the generated extensions. Any two topological orders are related by a sequence of such swaps, and the resulting stagewise evaluations therefore coincide.)

\end{proof}
\subsection{Conservativity w.r.t.\ SPARQL}
\begin{proposition}\label{prop:conservativity}
Let $\ma P$ be a well-formed graph pattern that contains no occurrence of
\tx{GenOp}. Then, for any RDF graph $\ma G$, evaluation of $\ma P$ under the
extended semantics coincides with standard SPARQL evaluation:
\[\evalTwo{\ma P}{\ma G}\;=\;\bigl(\evalTwo{\ma P}{\ma G}\bigr)\big|_{\text{standard SPARQL}}.
\]
\end{proposition}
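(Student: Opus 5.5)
Plan: structural induction on $\ma P$, proving the stronger claim that every mapping in $\evalTwo{\ma P}{\ma G}$ has all bindings typed $\sourceRdf$ and that the set agrees with the standard SPARQL value once the tag $\sourceRdf$ is stripped. Let $\mathrm{strip}$ denote the map that sends $(t,\sourceRdf)\mapsto t$ pointwise. The invariant is
\[
\mathrm{strip}(\evalTwo{\ma P}{\ma G}) = \evalTwo{\ma P}{\ma G}|_{\text{standard SPARQL}},
\]
with every binding RDF-typed. Since $\ma P$ contains no \tx{GenOp}, the context supply is never consulted (no $\ContextS{S}{g}$ is formed). The evaluation is therefore a single pass, and the fixpoint machinery plays no role.

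The key preliminary lemma concerns RDF-typed terms. For all $v$ and $t,t'\in\ma T$, I will show $\compact{v}{(t,\sourceRdf)}{(t',\sourceRdf)}=\top$ iff $t=t'$, and $\rep_v((t,\sourceRdf))=(t,\sourceRdf)$. By the definition of $\simtext$, two RDF-typed terms have similarity $1$ when equal. When they differ, I read the first case as yielding $0$, as the appendix proof of Lemma~\ref{lem:typed-sim-sym} does. Since $\theta(v)>0$, we get $(t,\sourceRdf)\approx_v(t',\sourceRdf)$ iff $t=t'$.

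\textbf{Main obstacle.} The equivalence classes of $\approx_v$ are formed over all typed terms that can bind $v$, including generated ones. An RDF term could be linked transitively to a different RDF term through a generated term. Here, however, the query has no \tx{GenOp}, so under the bounded-generation reading only RDF-typed terms can bind any variable. The classes are then singletons, which makes $\rep_v$ the identity on them. I would state this restriction explicitly as the scope of the domain over which $\rep_v$ is taken. It follows that $\sim$ reduces to ordinary compatibility and $\uplus$ to ordinary union of mappings.

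\textbf{Induction.} With the lemma in hand, the induction is routine.
\begin{itemize}
\item BGP: the clause is literally the standard one, with tags added.
\item Filter atoms: $(v=c)^\mu$ and $(v=v')^\mu$ reduce to equality tests, since the $\compact{v}{\cdot}{\cdot}$ disjuncts now coincide with equality. The error value $\varepsilon$ matches SPARQL's unbound behaviour, and $\neg$ and $\wedge$ follow the standard three-valued tables.
\item Projection and Union: immediate from the induction hypothesis.
\item Join: follows because $\sim$ and $\uplus$ are the standard compatibility and union.
\item Diff, Opt and Minus: their clauses are built only from $\sim$, $\uplus$, filter truth and domains, so they match the standard definitions of Pérez et al.\ verbatim. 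Opt is defined as the join together with the filtered difference.
\end{itemize}
Each step preserves the invariant that all bindings are RDF-typed, since no clause introduces $\sourceGen$ values.
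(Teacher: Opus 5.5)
Your proposal is correct and follows essentially the same route as the paper. Both arguments first show that on RDF-typed values canonical compatibility collapses to equality ($\approx_v$-classes are singletons, so $\rep_v$ is the identity) and $\uplus$ collapses to ordinary union, and then run a structural induction in which every clause reduces to its standard SPARQL counterpart. Your explicit remark that the singleton-class claim needs the $\approx_v$-closure to contain no generated terms (otherwise two distinct RDF terms could be chained through a generated one) is a point the paper passes over by reading singleton classes directly off $\simtext$, so it sharpens the argument rather than departing from it.
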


\begin{proof}
Let \ma G be fixed. Since $\ma P$ contains no \tx{GenOp}, the evaluation rules for
$\evalTwo{\ma P}{\ma G}$ never invoke any context-dependent generator clause and,
therefore, never introduce $\sourceGen$-typed values. Hence, every mapping
$\mu\in\evalTwo{\ma P}{\ma G}$ assigns variables only to RDF-typed terms, \ie 
values in $T\times\{\sourceRdf\}$.

We first show that on RDF-typed values the extended notions of compatibility and
canonical union coincide with SPARQL's usual agreement-on-overlap and set-theoretic union.

\smallskip\noindent
\emph{Claim 1 (RDF-typed compatibility collapses to equality).}
Let $v\in\ma V$ and let $\alpha=(t,\sourceRdf)$ and $\beta=(t',\sourceRdf)$.
Then
\[\compact{v}{\alpha}{\beta}=\top \quad\text{iff}\quad t=t'.\]

By definition, $\compact{v}{\alpha}{\beta}=\top$ iff $\rep_v(\alpha)=\rep_v(\beta)$. For RDF-typed inputs, the typed-term similarity $\simtext(\alpha,\beta)$ equals $1$ exactly when $t=t'$ and otherwise $0$ (Section~\ref{sect:syntax_semantics}). Thus the induced $\approx_v$-classes are singletons, so $\rep_v(t,\sourceRdf)=(t,\sourceRdf)$ and the equivalence follows.

\smallskip\noindent
\emph{Claim 2 (Mapping compatibility and canonical union coincide with SPARQL).}
Let $\mu_1,\mu_2$ be typed solution mappings with range contained in
$T\times\{\sourceRdf\}$. Then:
(i) $\mu_1\sim\mu_2$ iff $\mu_1$ and $\mu_2\!$ agree on every shared variable;
and (ii) whenever $\mu_1\sim\mu_2$, the canonical union $\mu_1\uplus\mu_2$
coincides with the ordinary union of partial functions $\mu_1\cup\mu_2$.
For (i), unfold $\mu_1\sim\mu_2$: for all $v\in\domain{\mu_1}\cap\domain{\mu_2}$
we require $\rep_v(\mu_1(v))=\rep_v(\mu_2(v))$, which by Claim~1 is equivalent
to equality of the underlying RDF terms. For (ii), by Definition~\ref{def:compat-union},
$\uplus$ preserves non-overlapping bindings and, on overlaps, returns the
representative; by Claim~1 this representative is the same RDF-typed binding.

\smallskip\noindent
We now prove the proposition by structural induction on $\ma P$.

\smallskip\noindent
\emph{Base case (BGP).}
For a basic graph pattern, the extended semantics matches triples in $\ma G$ exactly as SPARQL does; the only difference is that bound RDF terms are tagged by $\sourceRdf$. Forgetting this tag yields precisely the standard SPARQL BGP solution mappings, hence the two evaluations coincide.

\smallskip\noindent
\emph{Inductive steps.}
Assume the claim holds for strict subpatterns. For each constructor of graph patterns used in $\ma P$ (join, union, optional, minus/diff, filter, projection, etc.), the extended semantics is defined by the standard SPARQL set-theoretic clause, except that it uses $\sim$ for join compatibility and $\uplus$ for merging compatible mappings, and it evaluates filter atoms using canonical compatibility. On the fragment without \tx{GenOp}, all intermediate mappings are
RDF-typed, so by Claims 1--2 these operations coincide with SPARQL agreement on overlaps, ordinary union, and standard filter evaluation. Therefore, each constructor yields exactly the same set of solutions as standard SPARQL.

Consequently,
$\evalTwo{\ma P}{\ma G}=(\evalTwo{\ma P}{\ma G})|_{\text{standard SPARQL}}$.
\end{proof}

\subsection{Optimizer Implications}

Propositions~\ref{prop:genop-base}--\ref{prop:genop-topo} justify standard optimizer actions in the acyclic fragment:
\begin{enumerate}
\item \emph{Hoisting base-mode GenOps.}
By Corollary~\ref{cor:genop-hoist}, a \tx{GenOp} with $\holder{g}=\varnothing$
can be evaluated once and treated as a constant relation, avoiding repeated model calls and enabling early materialization.

\item \emph{Join reordering around GenOps under safe side conditions.}
Proposition~\ref{prop:genop-reorder} characterizes when a \tx{GenOp} may be moved across a join without changing the extracted context used for its evaluation, allowing cost-based join planning while preserving semantics.

\item \emph{Topological freedom in acyclic programs.}
Proposition~\ref{prop:genop-topo} shows that any topological order of acyclic generative dependencies yields the same final stagewise result $S_\gamma^{(m)}$, enabling planners to choose among valid orders using cost
estimates (\eg prompt cost, expected fan-out).
\end{enumerate}

These principles do not apply unchanged to cyclic dependencies, where least-fixpoint or well-founded semantics is required and intermediate results may depend on iteration strategy rather than a single-pass topological order.

\section{Proofs for Executable Fixpoint Evaluation}
\label{app:exec-proofs}

We prove the guarantees stated for Algorithm~\ref{alg:evalscc}: exactness on
the chosen candidate set, soundness relative to the self-support target,
relative completeness under candidate coverage, and termination.

\subsection{Objects and explicit assumptions}

Let $\ma G$ be an RDF graph, let $\ma P$ be a graph pattern, and let
$C\subseteq\genPattern{\ma P}$ be an \tx{SCC}. Let $X_C$ and $Y_C$ be as in the
main text, and let $\eta$ be an outer binding with
$X_C\subseteq\domain(\eta)$. Let $\Omega_{\ma P}$ denote the universe of typed
solution mappings used in the semantics.

\textbf{Self-support.}
For $\mu\in\Omega_{\ma P}$, we write $\tx{SS}_C(\mu)$ if
$X_C\subseteq\domain(\mu)$ and, for every
$g=\genbind{\pi(X_g)}{Y_g}{\ma M}\in C$, the following conditions hold. Let $\theta_g:=\mu_{|X_g}$ and $\nu_g:=\mu_{|Y_g}$.
Then,
\begin{multline}
\domain(\theta_g)=X_g,\quad
\domain(\nu_g)=Y_g,\quad
\mu_{|X_g\cup Y_g}=\theta_g\uplus\nu_g,\\
\text{and}\quad
\theta_g\uplus\nu_g\in\evalThree{g}{\ma G}{\{\theta_g\}}.
\tag{$\star$}
\label{eq:ss-def}
\end{multline}

\textbf{Self-support target under an outer binding.}
We define the self-support answer set of $C$ under $\eta$ by
\[\Self_C(\eta)=
\{\mu_{|Y_C}\mid
\mu\in\Omega_{\ma P},\ \mu\supseteq\eta,\ \tx{SS}_C(\mu)\}.
\tag{$\dagger$}
\label{eq:self-def}\]
Under the finite candidate coverage and exactness assumption from
Sect.~\ref{sec:exec-fixpoint}, this set coincides with the declarative answer
projections for the component under $\eta$.

\textbf{Candidate set.}
Algorithm~\ref{alg:evalscc} constructs a finite candidate set
$\tx{Cand}_C(\eta)\subseteq\Omega_{\ma P}$ whose elements extend $\eta$ and bind
the variables in $Y_C$. The algorithm returns
\[
R_C(\eta)
=
\{\mu_{|Y_C}\mid
\mu\in\tx{Cand}_C(\eta)\ \wedge\ \tx{SS}_C(\mu)\}.
\tag{$\ddagger$}
\label{eq:rc-def}
\]

\textbf{Validation.}
Algorithm~\ref{alg:evalscc} tests self-support using $\tx{Validate}$. We use
the following semantic correctness assumption.

\begin{assumption}
\label{ass:validate-correct}
For every $g=\genbind{\pi(X_g)}{Y_g}{\ma M}\in C$, every typed mapping
$\theta$ with $\domain(\theta)=X_g$, every typed mapping $\nu$ with
$\domain(\nu)=Y_g$, and every repetition parameter $R$,
\[\tx{Validate}(g,\theta,\nu;R)=\top
\,\Longleftrightarrow\,
\theta\uplus\nu\in\evalThree{g}{\ma G}{\{\theta\}}.\]
\end{assumption}

\begin{remark}
The assumption \ref{ass:validate-correct} fixes the model, decoding, parsing, normalization, and repetition policy used by the validation oracle.
\end{remark}

\subsection{Exactness on the candidate set}

\begin{theorem}
\label{thm:evalscc-exact}
For any finite candidate set $\tx{Cand}_C(\eta)$ constructed by the algorithm,
Algorithm~\ref{alg:evalscc} returns exactly $R_C(\eta)$ as defined in
\eqref{eq:rc-def}.
\end{theorem}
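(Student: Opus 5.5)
The proof compares the acceptance condition tested in the validation loop of Algorithm~\ref{alg:evalscc}, candidate by candidate, with the predicate $\tx{SS}_C$ in \eqref{eq:ss-def}. Fix the finite candidate set $\tx{Cand}_C(\eta)$ produced by the enumeration phase; from then on the domains $D[\cdot]$ are frozen. Let $\mathrm{Out}$ be the set the algorithm returns. The goal is $\mathrm{Out}=R_C(\eta)$. Since both sets are images of subsets of $\tx{Cand}_C(\eta)$ under $\mu\mapsto\mu_{|Y_C}$, it suffices to show two things: the validation loop visits every $\mu\in\tx{Cand}_C(\eta)$ (the loop is a \textbf{forall} over a finite set), and for each such $\mu$ the test in the \textbf{if} guard holds iff $\tx{SS}_C(\mu)$.

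For the pointwise equivalence, fix $\mu\in\tx{Cand}_C(\eta)$ and $g\in C$, and set $\theta_g=\mu_{|X_g}$ and $\nu_g=\mu_{|Y_g}$. The guard checks three things: $\domain(\nu_g)=Y_g$, $\mu_{|X_g\cup Y_g}=\theta_g\uplus\nu_g$, and $\tx{Validate}(g,\theta_g,\nu_g;R)=\top$. By Assumption~\ref{ass:validate-correct}, the last check is equivalent to $\theta_g\uplus\nu_g\in\evalThree{g}{\ma G}{\{\theta_g\}}$, once we know $\domain(\theta_g)=X_g$.

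\eqref{eq:ss-def} also demands $\domain(\theta_g)=X_g$ and $X_C\subseteq\domain(\mu)$, which the guard does not test explicitly. Here I use the construction of $\tx{Cand}_C(\eta)$. Every candidate has the form $\eta\uplus\nu$ with $\domain(\eta)\supseteq X_C$ and $\domain(\nu)=Y_C$, so $\domain(\mu)\supseteq X_C\cup Y_C\supseteq X_g$ for every $g\in C$, because $X_g\subseteq X_C\cup Y_C$ by the definitions of $X_C$ and $Y_C$. Hence $\domain(\theta_g)=X_g$ holds automatically, and $X_C\subseteq\domain(\mu)$ holds as well. Conjoining over all $g\in C$ then gives guard$(\mu)$ iff $\tx{SS}_C(\mu)$.

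Two loose ends remain. First, the optional repair branch must not add anything outside $R_C(\eta)$. I read repairs as producing new candidates that are again built from the domains $D[\cdot]$ and are accepted only through the same guard. Accepted repaired mappings therefore satisfy $\tx{SS}_C$. Since $R_C(\eta)$ is defined relative to the candidate set the algorithm actually forms, I take repaired mappings to belong to $\tx{Cand}_C(\eta)$: resampled values are added to $D[\cdot]$, and this is the sense of ``constructed by the algorithm''. Second, termination of the loop is immediate from finiteness.

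The main obstacle is exactly this bookkeeping around repairs and the implicit domain conditions. I would resolve it by stating explicitly that $\tx{Cand}_C(\eta)$ denotes the final candidate set, including any repaired candidates. With that convention, both inclusions $\mathrm{Out}\subseteq R_C(\eta)$ and $R_C(\eta)\subseteq\mathrm{Out}$ follow directly from the pointwise equivalence.
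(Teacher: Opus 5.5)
Your proposal is correct and follows the same route as the paper's proof. Both inclusions come from the per-candidate equivalence between the validation guard and $\tx{SS}_C$ given by Assumption~\ref{ass:validate-correct}. Your derivation of $\mathrm{dom}(\theta_g)=X_g$ and $X_C\subseteq\mathrm{dom}(\mu)$ from the shape $\eta\uplus\nu$ of candidates fills a step the paper passes over silently.

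The paper's proof simply ignores the repair branch, so your convention there goes beyond it. If you keep that convention, define $\tx{Cand}_C(\eta)$ as the original candidates together with the repaired mappings actually submitted to the guard. Do not define it as the product over the enlarged domains $D[\cdot]$: that product can contain combinations the loop never visits, which would break $R_C(\eta)\subseteq\mathrm{Out}$.
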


\begin{proof}
We prove both inclusions.

\smallskip
\noindent\emph{($\subseteq$)}
Let $\rho$ be returned by the algorithm. Then the algorithm added
$\rho=\mu_{|Y_C}$ for some $\mu\in\tx{Cand}_C(\eta)$ whose validation test succeeded. Thus, for every
$g=\genbind{\pi(X_g)}{Y_g}{\ma M}\in C$, with
$\theta_:=\mu_{|X_g}$ and $\nu_g=\mu_{|Y_g}$,
the algorithm checks
\begin{multline*}
    \domain(\nu_g)=Y_g,\quad
\mu_{|X_g\cup Y_g}=\theta_g\uplus\nu_g,
\\ \text{and}\quad
\tx{Validate}(g,\theta_g,\nu_g;R)=\top .
\end{multline*}

By Assumption~\ref{ass:validate-correct},
\[\theta_g\uplus\nu_g\in\evalThree{g}{\ma G}{\{\theta_g\}}.\]
Thus, $\tx{SS}_C(\mu)$ holds by \eqref{eq:ss-def}. Since
$\mu\in\tx{Cand}_C(\eta)$ and $\rho=\mu_{|Y_C}$, we obtain
$\rho\in R_C(\eta)$.

\smallskip
\noindent\emph{($\supseteq$)}
Let $\rho\in R_C(\eta)$. By \eqref{eq:rc-def}, there exists
$\mu\in\tx{Cand}_C(\eta)$ such that $\rho=\mu_{|Y_C}$ and $\tx{SS}_C(\mu)$.
For every $g=\genbind{\pi(X_g)}{Y_g}{\ma M}\in C$, let
\[\theta_g:=\mu_{|X_g}
\quad\text{and}\quad
\nu_g:=\mu_{|Y_g}.\]

By \eqref{eq:ss-def},
\[\domain(\nu_g)=Y_g,\quad
\mu_{|X_g\cup Y_g}=\theta_g\uplus\nu_g,
\,\text{and}\,
\theta_g\uplus\nu_g\in\evalThree{g}{\ma G}{\{\theta_g\}}.\]
Assumption~\ref{ass:validate-correct} gives
\[\tx{Validate}(g,\theta_g,\nu_g;R)=\top .\]
Therefore, the validation test succeeds when the algorithm processes $\mu$, and the algorithm adds $\mu_{|Y_C}=\rho$ to the output. Hence, $\rho$ is returned.

Both inclusions hold, so the returned set is exactly $R_C(\eta)$.
\end{proof}

\textbf{Self-support target.}
We define the self-support answer set of $C$ under $\eta$ by

\begin{equation}\label{eq:self-support}
\Self_C(\eta) =
\{\mu_{|Y_C}\mid
\mu\in\Omega_{\ma P},\ \mu\supseteq\eta,\ \tx{SS}_C(\mu)\}.
\end{equation}

Under the finite candidate coverage and exactness assumption from
Sect.~\ref{sec:exec-fixpoint}, this set coincides with the declarative answer
projections for the component under $\eta$.
\subsection{Soundness}

\begin{theorem}[Soundness relative to the self-support target]
\label{thm:evalscc-sound}
For every outer binding $\eta$,
\[R_C(\eta)\subseteq\Self_C(\eta).\]
\end{theorem}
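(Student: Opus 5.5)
The plan is to unfold the definitions \eqref{eq:rc-def} and \eqref{eq:self-def} and check that every witness for membership in $R_C(\eta)$ is also a witness for membership in $\Self_C(\eta)$. Both sets are projections onto $Y_C$ of mappings satisfying $\tx{SS}_C$. The only difference is the range of the witnessing mapping: in $R_C(\eta)$ it ranges over $\tx{Cand}_C(\eta)$, and in $\Self_C(\eta)$ it ranges over all $\mu\in\Omega_{\ma P}$ with $\mu\supseteq\eta$. So it suffices to show $\tx{Cand}_C(\eta)\subseteq\{\mu\in\Omega_{\ma P}\mid \mu\supseteq\eta\}$ and then apply monotonicity of the set-builder in its range.

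Concretely, fix $\rho\in R_C(\eta)$ and pick $\mu\in\tx{Cand}_C(\eta)$ with $\rho=\mu_{|Y_C}$ and $\tx{SS}_C(\mu)$. By construction of the candidate set in Sect.~\ref{sec:exec-fixpoint}, $\tx{Cand}_C(\eta)\subseteq\Omega_{\ma P}$, so $\mu\in\Omega_{\ma P}$. Membership in $\tx{Cand}_C(\eta)$ also gives $\mu=\eta\uplus\nu$ for some $\nu$ with $\domain{\nu}=Y_C$ and $\eta\sim\nu$. Once we have $\mu\supseteq\eta$, the same $\mu$ witnesses $\rho\in\Self_C(\eta)$. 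Note that $\tx{SS}_C(\mu)$ is a property of $\mu$ alone and needs no re-verification. If one prefers to argue from the algorithm's output rather than from $R_C(\eta)$ as a set, Theorem~\ref{thm:evalscc-exact} identifies the two.

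The main obstacle is the containment $\eta\uplus\nu\supseteq\eta$. Canonical union replaces overlapping bindings by $\rep_v(\cdot)$, so it is not literally a superset when $\domain{\eta}\cap Y_C\neq\varnothing$. I would first argue that the overlap is empty. Outer bindings are supplied upstream for the external placeholders $X_C$, which are disjoint from $Y_C$ by definition. If $\eta$ does bind some $v\in Y_C$, the step fails as stated, and I would then settle for one of two weaker readings. The first reads $\mu\supseteq\eta$ up to representatives: $\rep_v(\mu(v))=\rep_v(\eta(v))$ on $\domain{\eta}$, which holds by the compatibility $\eta\sim\nu$ and Lemma~\ref{lem:union-preserve}. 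The second restricts outer bindings to $\domain{\eta}\cap Y_C=\varnothing$. With $\domain{\eta}\cap Y_C=\varnothing$, Lemma~\ref{lem:union-preserve}(1) gives $\mu(v)=\eta(v)$ for all $v\in\domain{\eta}$, hence $\mu\supseteq\eta$ and the inclusion follows.
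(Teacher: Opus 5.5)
Your route is the paper's: take the witness $\mu\in\tx{Cand}_C(\eta)$ with $\rho=\mu_{|Y_C}$ and $\tx{SS}_C(\mu)$, note $\mu\in\Omega_{\ma P}$ and $\mu\supseteq\eta$, and reuse $\mu$ for $\Self_C(\eta)$. The paper simply asserts that every candidate extends $\eta$. The problem you flag with that step is real, but your first attempt to rule it out does not work. Outer bindings are only required to satisfy $X_C\subseteq\domain{\eta}$, so $X_C\cap Y_C=\varnothing$ says nothing about the rest of $\domain{\eta}$. The paper itself (Lemma~\ref{lem:scc-validation-cost-detailed}) explicitly allows $\eta$ to bind variables of $Y_C$.

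In that case the inclusion genuinely fails. Take $C=\{g\}$ with $g=\genbind{\pi(\{?x\})}{\{?y\}}{\ma M}$ and $\eta=\{?x\mapsto(a,\sourceRdf),\,?y\mapsto(b,\sourceRdf)\}$, produced for instance by a basic graph pattern $\{?x\ p\ ?y\}$ elsewhere in $\ma P$. Suppose the following hold:
\begin{itemize}
\item $\ma M$ generates $t$ on the prompt instantiated with $a$, and $(t,\sourceGen)\in D[?y]$;
\item $\stringSim(\canonical(b),\lexical(t))\ge\theta(?y)$;
\item $(t,\sourceGen)$ is $\preceq$-minimal in its $\approx_{?y}$-class.
\end{itemize}
Then $\mu=\eta\uplus\{?y\mapsto(t,\sourceGen)\}$ has $\mu(?y)=(t,\sourceGen)$ and satisfies $\tx{SS}_C(\mu)$, so $\{?y\mapsto(t,\sourceGen)\}\in R_C(\eta)$. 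But every $\mu'\supseteq\eta$ has $\mu'_{|Y_C}=\{?y\mapsto(b,\sourceRdf)\}$, so this projection is not in $\Self_C(\eta)$.

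So drop the disjointness argument and make one of your two fallbacks an explicit hypothesis. Either restrict to outer bindings with $\domain{\eta}\cap Y_C=\varnothing$, where Lemma~\ref{lem:union-preserve}(1) gives $\mu\supseteq\eta$ literally. Or read $\mu\supseteq\eta$ in the definition of $\Self_C(\eta)$ modulo $\rep_v$, where your compatibility argument suffices. With either amendment your proof is complete, and it is a corrected version of the paper's proof rather than a different one.
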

\begin{proof}
Let $\rho\in R_C(\eta)$. By the definition of the returned set in
\eqref{eq:rc-def}, there exists a candidate
$\mu\in\tx{Cand}_C(\eta)$ such that
\[\rho=\mu_{|Y_C} \quad\text{and}\quad \tx{SS}_C(\mu).\]
The candidate set $\tx{Cand}_C(\eta)$ is constructed from the outer binding
$\eta$ by assigning values to the variables generated inside $C$. Hence every
candidate in $\tx{Cand}_C(\eta)$ extends $\eta$, and therefore
\[\mu\supseteq\eta.\]
Combining the three facts
\[\rho=\mu_{|Y_C},\quad
\mu\supseteq\eta \quad\text{and}\quad
\tx{SS}_C(\mu),\]
we obtain exactly the membership condition in the definition of
$\Self_C(\eta)$ in \eqref{eq:self-def}. Thus,
\[\rho\in\Self_C(\eta).\]
Since $\rho$ was arbitrary, $R_C(\eta)\subseteq\Self_C(\eta)$, ref. equation \ref{eq:self-support}.

 Therefore, every projection
returned by the algorithm is sound with respect to those declarative component
answers.
\end{proof}

\subsection{Relative completeness}

\begin{theorem}[Relative completeness under candidate coverage]
\label{thm:evalscc-complete}
Assume that the candidate set covers all self-supporting extensions of $\eta$:
\[\forall \mu\in\Omega_{\ma P}:\quad
(\mu\supseteq\eta\ \wedge\ \tx{SS}_C(\mu))
\Rightarrow
\mu\in\tx{Cand}_C(\eta).
\tag{$\clubsuit$}
\label{eq:coverage}\]
Then,
\[R_C(\eta)=\Self_C(\eta).\]
\end{theorem}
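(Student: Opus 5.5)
The equality $R_C(\eta)=\Self_C(\eta)$ splits into two inclusions, and only one of them needs the coverage hypothesis. The inclusion $R_C(\eta)\subseteq\Self_C(\eta)$ is exactly Theorem~\ref{thm:evalscc-sound}, which holds for every outer binding and every candidate set. It used only the construction fact that each element of $\tx{Cand}_C(\eta)$ extends $\eta$, so I invoke it without change.

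For the reverse inclusion $\Self_C(\eta)\subseteq R_C(\eta)$, fix $\rho\in\Self_C(\eta)$. By \eqref{eq:self-def} there is a witness $\mu\in\Omega_{\ma P}$ with $\mu\supseteq\eta$, $\tx{SS}_C(\mu)$ and $\rho=\mu_{|Y_C}$. The two premises of the coverage hypothesis \eqref{eq:coverage} hold for this $\mu$, so $\mu\in\tx{Cand}_C(\eta)$. Thus $\mu$ is a candidate satisfying $\tx{SS}_C(\mu)$, and its projection $\mu_{|Y_C}=\rho$ meets the defining condition of $R_C(\eta)$ in \eqref{eq:rc-def}. Hence $\rho\in R_C(\eta)$. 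It is essential to apply coverage to the same witness $\mu$, and not to some other mapping with the same $Y_C$-projection. Otherwise the membership in $R_C(\eta)$ would not follow.

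Combining the two inclusions gives the equality. By Theorem~\ref{thm:evalscc-exact}, the output of Algorithm~\ref{alg:evalscc} is exactly $R_C(\eta)$, so the algorithm itself returns $\Self_C(\eta)$. Under the finite candidate coverage and exactness assumption, this set is the declarative answer projection for $C$ under $\eta$.

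There is no real obstacle here. The only point needing care is that $\Self_C(\eta)$ and $R_C(\eta)$ use the same self-support predicate $\tx{SS}_C$ from \eqref{eq:ss-def}. If they did, the sets could differ only in where the witness comes from: $\Omega_{\ma P}$ versus $\tx{Cand}_C(\eta)$, and \eqref{eq:coverage} closes exactly that gap. I would check that both definitions, including the domain conditions in \eqref{eq:ss-def}, refer to the identical predicate.
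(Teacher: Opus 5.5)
Your proof is correct and matches the paper's argument: you take $R_C(\eta)\subseteq\Self_C(\eta)$ directly from the soundness theorem, and for the reverse inclusion you apply the coverage hypothesis to the very witness $\mu$ from the definition of $\Self_C(\eta)$, which places it in $\tx{Cand}_C(\eta)$. Your closing remark, which uses the exactness theorem to identify the algorithm's output with $\Self_C(\eta)$, goes slightly beyond the statement and is harmless.
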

\begin{proof}
We prove both inclusions.

\smallskip
\noindent\emph{($\subseteq$)}
By Theorem~\ref{thm:evalscc-sound}, every returned projection belongs to the
self-support target. Hence
\[R_C(\eta)\subseteq\Self_C(\eta).\]

\smallskip
\noindent\emph{($\supseteq$)}
Let $\rho\in\Self_C(\eta)$. By the definition of $\Self_C(\eta)$ in
\eqref{eq:self-def}, there exists a mapping $\mu\in\Omega_{\ma P}$ such that
\[\rho=\mu_{|Y_C},\quad
\mu\supseteq\eta\quad\text{and}\quad
\tx{SS}_C(\mu).\]
The coverage assumption \eqref{eq:coverage} applies to this $\mu$, since it extends $\eta$ and satisfies $\tx{SS}_C(\mu)$. Therefore,
\[\mu\in\tx{Cand}_C(\eta).\]
Now we have
\[\mu\in\tx{Cand}_C(\eta)
\quad\text{and}\quad \tx{SS}_C(\mu).\]
By the definition of $R_C(\eta)$ in \eqref{eq:rc-def}, the projection
$\mu_{|Y_C}$ belongs to $R_C(\eta)$. Since $\rho=\mu_{|Y_C}$, we obtain
\[\rho\in R_C(\eta).\]
Thus,
\[\Self_C(\eta)\subseteq R_C(\eta).\]

Combining the two inclusions gives
\[R_C(\eta)=\Self_C(\eta),  \text{ref. equation}~ \ref{eq:self-support}.\]

Therefore, the equality above gives relative completeness for the declarative component answers.
\end{proof}

\subsection{Termination}

\begin{theorem}[Termination]
\label{thm:evalscc-term}
For fixed bounds $(K,B,T,R)$, Algorithm~\ref{alg:evalscc} terminates.
\end{theorem}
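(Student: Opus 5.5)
The plan is to argue termination phase by phase, following the three blocks of Algorithm~\ref{alg:evalscc}. I will show that each phase performs finitely many iterations and that each elementary step is a finite computation. Every oracle invocation of $\tx{Propose}$ or $\tx{Validate}$ is treated as returning in finite time. This is justified because $\Call{\ma M}$ and $\Parse{\ma M}$ return finite sets by the bounded-generation assumption, and $\tx{Validate}$ issues exactly $R$ calls. Throughout, $C$ and $Y_C$ are finite, since $\genPattern{\ma P}$ is a finite set of syntactic occurrences.

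\emph{Candidate enumeration.} The plan is a potential-function argument on the while loop. The potential is $\Delta=\sum_{v\in Y_C}(B-\size{D[v]})$, which starts at $\size{Y_C}\cdot B$ and is bounded below by $0$ because of the guard $\size{D[v]}<B$. One pass of the outer loop makes at most $\size{C}$ calls to $\tx{Propose}$, each returning at most $K$ pairs, so every pass is finite. I will interpret ``saturated'' as ``the last full pass added no new value to any $D[v]$''. Under that reading, every non-final pass strictly decreases $\Delta$, so there are at most $\size{Y_C}\cdot B+1$ passes.

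\emph{Candidate construction.} This phase is immediate. $\tx{Cand}_C(\eta)$ is a subset of $\prod_{v\in Y_C}D[v]$ extended by $\eta$, so $\size{\tx{Cand}_C(\eta)}\le B^{\size{Y_C}}$. Optional pruning only removes elements, and checking $\eta\sim\nu$ is a finite comparison of representatives.

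\emph{Validation loop.} The for-all loop ranges over this finite set. For each candidate it checks finitely many $g\in C$, performing a domain test, a canonical-union equality test, and one $\tx{Validate}$ call costing $R$ oracle calls. On failure it makes at most $T$ repair attempts. Each repair is one $\tx{Propose}$ call followed by a re-validation, which is again a finite work unit. The resulting total bound is roughly $B^{\size{Y_C}}\cdot T\cdot\size{C}\cdot(R+1)$ oracle calls.

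\emph{Main obstacle.} The step I expect to be hardest is the while loop. Its guard ``not saturated'' is informal, so termination depends on fixing its meaning as above, namely as a no-progress test. If the guard were instead read literally as ``some $D[v]$ is below its cap'', a proposal oracle that keeps returning already-seen values could cycle forever. I will therefore state this reading explicitly and note that the domain cap $B$ is what bounds the number of progressing passes. A second point is that repairs must not add candidates to the iterated set beyond the $T$ budget; I will read them as local substitutions inside the current iteration.
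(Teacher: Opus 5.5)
Your proposal is correct and follows the paper's proof essentially step for step. It uses the same three-phase decomposition and the same reading of the saturation test as a no-progress check; the paper says the test ``stops the enumeration once a full pass adds no new value or all domains reach their cap''. It also uses the same bound $|\tx{Cand}_C(\eta)|\le B^{|Y_C|}$ and the same appeal to the $R$ and $T$ bounds for validation and repair.

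Your potential function $\Delta$ only makes explicit the paper's count of at most $B|Y_C|$ successful insertions. Your remarks that oracle calls return and that the literal loop guard could cycle are useful clarifications. One small fix: your rough call count should carry a factor $T+1$ rather than $T$, since validation happens even when no repair is attempted; this does not affect termination.
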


\begin{proof}
We bound each phase.

\smallskip
\noindent\emph{Candidate enumeration.}
For each $v\in Y_C$, the algorithm maintains a finite set $D[v]$ and enforces
$|D[v]|\le B$. Hence the total number of successful insertions into all
candidate domains is bounded by
\[\sum_{v\in Y_C} \size{D[v]} \le B \size{Y_C}.\]
The saturation test stops the enumeration once a full pass adds no new value or all domains reach their cap. Therefore the enumeration loop is finite.

\smallskip
\noindent\emph{Candidate construction.}
After enumeration, each $D[v]$ is finite. Thus
\[|\tx{Cand}_C(\eta)| \le \prod_{v\in Y_C}|D[v]| \le B^{|Y_C|}.\]
So the candidate set is finite.

\smallskip
\noindent\emph{Validation and repair.}
The algorithm iterates over the finite set $\tx{Cand}_C(\eta)$. For each candidate and each $g\in C$, it performs at most $R$ validation repetitions.
Repair attempts are bounded by $T$, and each repair invokes only finitely many calls to $\tx{Propose}$ and $\tx{Validate}$, bounded by $K$ and $R$.

All phases are finite; thus Algorithm~\ref{alg:evalscc} terminates.
\end{proof}

\section{Complexity}
\label{app:complexity}
We prove Theorem~\ref{thm:complexity} using the \tx{SCC}-based execution scheme from Section~\ref{sec:exec-fixpoint}. The proof is deliberately local: it bounds one \tx{SCC} call, one accepted projection, and one algebraic constructor at a time. 

Throughout this section, the RDF graph $\ma G$ is an input. Generated values occur only as values of typed solution mappings. We count calls to proposal, parsing, and validation procedures as symbolic oracle calls on polynomial-size encodings; the internal neural-inference cost is outside the symbolic query-evaluation problem.

\subsection{Decision Problem and Exact SCC-Bounded Evaluation}

\begin{definition}(Evaluation-membership problem)
\label{def:membership}
An instance consists of a graph pattern $\ma P$, an RDF graph $\ma G$, a typed solution mapping $\mu$ with $\domain(\mu)\subseteq\var{\ma P}$, and 
\[
  \sem\in\{\lfp,\mathrm{strat},\mathrm{wfs}\}.
\]
The task is to decide membership whether
\[
  \mu\in\evalThree{\ma P}{\ma G}{\sem}.
\]
For $\sem=\mathrm{wfs}$, membership means membership in the true component of the well-founded value; false and undefined status queries are not part of this decision problem.
\end{definition}

Data complexity fixes $\ma P$, $\sem$, and the execution parameters of Algorithm~\ref{alg:evalscc}; only $\ma G$ and $\mu$ vary. Combined complexity takes $\ma P$, $\ma G$, and $\mu$ as input, while the execution parameters remain fixed constants of the semantics.

\begin{assumption}(Exact SCC-bounded executable)
\label{ass:scc-bounded-eval}
Let $C$ be an \tx{SCC} of the generative-dependency graph of $\ma P$. Let
\[
  Y_C=\bigcup_{g\in C}\target{g}
  \quad\text{and}\quad
  X_C=\left(\bigcup_{g\in C}\holder{g}\right)\setminus Y_C .
\]
For every outer binding $\eta$ with $X_C\subseteq\domain(\eta)$, Algorithm~\ref{alg:evalscc} constructs finite domains $D_C^\eta[v]$ for $v\in Y_C$ such that
\[
  |D_C^\eta[v]|\leq B,
\]
where $B$ is a fixed execution parameter. It forms
\begin{align*}
  \tx{Cand}_C(\eta)
  :=\{\eta\uplus\nu \mid{}&
       \nu:Y_C\to \ma T_{\tx{gen}}\times\{\sourceGen\},\\
     & \nu(v)\in D_C^\eta[v]\text{ for every }v\in Y_C,\\
     & \eta\sim\nu\}.
\end{align*}
The accepted projection set is
\[
  R_C(\eta)
  :=\{\rho\mid \exists\theta\in\tx{Cand}_C(\eta):
        \rho=\theta_{|Y_C}\text{ and }\tx{SS}_C(\theta)\}.
\]
The regime is exact when $R_C(\eta)$ equals the declarative \tx{SCC}-answer projection for $C$ under $\eta$, for the chosen semantics $\sem\in\{\lfp,\mathrm{strat},\mathrm{wfs}\}$ and for every generative \tx{SCC} and every outer binding used during evaluation. This equality is a semantic exactness premise: deterministic bounded generation supplies finite deterministic candidate domains, but it does not by itself imply equality between self-supported candidates and the declarative least-fixpoint, stratified, or well-founded answers. Every generated value, prompt instance, parsed tuple, candidate, and validation object has representation length polynomial in $|\ma P|+\size{\ma G}+|\mu|$.
\end{assumption}

\begin{lemma}
\label{lem:det-to-scc-bounds}
Deterministic bounded generation assumption, together with fixed Section~\ref{sec:exec-fixpoint} parameters $(K,B,T,R)$ and polynomial-size encodings, gives the finite local candidate-domain clauses of Assumption~\ref{ass:scc-bounded-eval}. 
\end{lemma}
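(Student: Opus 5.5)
The plan is to verify the clauses of Assumption~\ref{ass:scc-bounded-eval} one at a time. Each \emph{finiteness} or \emph{size} clause will be traced back either to a fixed parameter among $(K,B,T,R)$ or to the bounded-generation constants $n_1,n_2$. The \emph{exactness} clause is not part of what the lemma claims, and the proof will say so explicitly. We fix an \tx{SCC} $C$ and an outer binding $\eta$ with $X_C\subseteq\domain(\eta)$.

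\emph{Step 1: bounded domains.} Algorithm~\ref{alg:evalscc} only inserts a value into $D[v]$ under the guard $|D[v]|<B$. A direct induction over the loop iterations therefore gives $|D_C^\eta[v]|\le B$ for every $v\in Y_C$. Each inserted value comes from $\tx{Propose}$, which in turn comes from $\Call{\ma M}$ and $\Parse{\ma M}$. By deterministic bounded generation it is therefore a generated term tagged $\sourceGen$, and it is the same term on every run for a given instantiated prompt, including prompts containing sentinel tokens. So the domains are well-defined finite sets of type $\ma T_{\tx{gen}}\times\{\sourceGen\}$.

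\emph{Step 2: the candidate set has the stated form.} The candidate-construction step builds the product of the domains, intersected with compatibility with $\eta$. This is literally the displayed definition of $\tx{Cand}_C(\eta)$, and Termination (Theorem~\ref{thm:evalscc-term}) already gives $|\tx{Cand}_C(\eta)|\le B^{|Y_C|}$. The accepted-projection set $R_C(\eta)$ in the assumption coincides with the output of the algorithm by Theorem~\ref{thm:evalscc-exact}.

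\emph{Step 3: polynomial encodings.} A generated term has bounded representation length by the bounded-generation hypothesis, so it has length $O(1)$. An RDF term has length at most $|\ma G|+|\ma P|$. A prompt instance consists of the template (length at most $|\ma P|$) plus at most $|X_g|$ lexicalizations. Since $\canonical$ is fixed and deterministic, the lexicalization of an RDF term should be polynomial in that term's length, and we state this as part of the "polynomial-size encodings" premise of the lemma rather than proving it. A candidate is a mapping over at most $|\var{\ma P}|$ variables, and a validation object pairs a prompt with such a tuple. All of these are therefore polynomial in $|\ma P|+|\ma G|+|\mu|$.

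\emph{Main obstacle.} The main difficulty is the encoding step for lexicalizations. Rules such as the rdfs:label policy or bounded-neighbourhood descriptions of blank nodes could in principle produce long strings, so I would assume explicitly that $\canonical$ is polynomially bounded rather than attempt to derive it. I would then close by remarking that exactness, meaning that $R_C(\eta)$ equals the declarative answer, is \emph{not} produced by this lemma. It remains the separate premise already flagged in Assumption~\ref{ass:scc-bounded-eval}.
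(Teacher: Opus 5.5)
Your proposal is correct and follows the paper's proof essentially step for step. The cap $B$, together with determinism of the parsed outputs, gives $|D_C^\eta[v]|\le B$; the candidate set is the product of these domains filtered by compatibility with $\eta$; polynomial encodings are a premise; and exactness is left to Assumption~\ref{ass:scc-bounded-eval}. The one superfluous step is invoking Theorem~\ref{thm:evalscc-exact}: it imports the validation-correctness Assumption~\ref{ass:validate-correct}, which is not among the lemma's hypotheses, and it is not needed because $R_C(\eta)$ in Assumption~\ref{ass:scc-bounded-eval} is defined directly via $\tx{SS}_C$.
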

Lemma~\ref{lem:det-to-scc-bounds} does not imply the semantic exactness clause of Assumption~\ref{ass:scc-bounded-eval}.
\begin{proof}
Deterministic bounded generation assumption fixes the model specifier, prompt instantiation, decoding, parsing, and normalization. Hence one instantiated prompt has one finite deterministic set of parsed outputs. Algorithm~\ref{alg:evalscc} uses the fixed proposal bound $K$ and the fixed per-variable cap $B$; therefore, for each generated variable $v\in Y_C$, it constructs a finite domain satisfying
\[
  |D_C^\eta[v]|\leq B .
\]
Consequently, the candidate construction has the product form
\begin{align*}
  \tx{Cand}_C(\eta)
  =\{\eta\uplus\nu \mid{}&
       \nu:Y_C\to \ma T_{\tx{gen}}\times\{\sourceGen\},\\
     & \nu(v)\in D_C^\eta[v]\text{ for every }v\in Y_C,
       \eta\sim\nu\},
\end{align*}
which is the finite-domain part of Assumption~\ref{ass:scc-bounded-eval}. The validation phase defines the accepted projection relation $R_C(\eta)$ by the self-support test $\tx{SS}_C$. None of these finite-domain facts entails
\[
  R_C(\eta)=\mathsf{Ans}^{\sem}_C(\eta),
\]
where $\mathsf{Ans}^{\sem}_C(\eta)$ denotes the declarative \tx{SCC}-answer projection under $\sem$. That equality is exactly the semantic exactness premise stated separately in Assumption~\ref{ass:scc-bounded-eval}. Thus bounded generation supplies finite local bounds and polynomial encodings, while Assumption~\ref{ass:scc-bounded-eval} supplies the additional exactness needed by the complexity proof.
\end{proof}

For an \tx{SCC} $C$, write
\[
  N_C:=B^{|Y_C|}.
\]
In data complexity, $N_C$ is a constant because $\ma P$ is fixed. In combined complexity, $N_C$ can be exponential in $|\ma P|$, but one assignment $\nu:Y_C\to \ma T_{\tx{gen}}\times\{\sourceGen\}$ has polynomial representation length.

\subsection{Typed SCC Condensation}

\begin{definition}[Typed SCC summary operator]
\label{def:scc-summary-appendix}
For an \tx{SCC} $C$, the summary operator $\tx{GenSCC}_C$ maps a context supply $S$ to the typed mapping set
\[
\begin{aligned}
  \evalThree{\tx{GenSCC}_C}{\ma G}{S}
  =\{\eta\uplus\rho \mid{}&
      \eta\in S,
      X_C\subseteq\domain(\eta),\\
    & \rho\in R_C(\eta),
      \eta\sim\rho\}.
\end{aligned}
\]
Thus, $R_C(\eta)$ remains the $Y_C$-projection returned by Algorithm~\ref{alg:evalscc}, while $\tx{GenSCC}_C$ reattaches the outer binding $\eta$ before the surrounding SPARQL algebra consumes the result. The condensed pattern $\widehat{\ma P}$ is obtained from $\ma P$ by replacing every generative \tx{SCC} $C$ by $\tx{GenSCC}_C$ and preserving the surrounding SPARQL algebraic structure.
\end{definition}

\begin{lemma}
\label{lem:exact-scc-condensation}
Under Assumption~\ref{ass:scc-bounded-eval}, for each semantics symbol $\sem\in\{\lfp,\mathrm{strat},\mathrm{wfs}\}$ and each typed mapping $\mu$,
\[
  \mu\in\evalThree{\ma P}{\ma G}{\sem}
  \quad\Longleftrightarrow\quad
  \mu\in\evalThree{\widehat{\ma P}}{\ma G}{\{\emptyset\}} .
\]
For $\sem=\mathrm{wfs}$, the left-hand side denotes membership in the true component.
\end{lemma}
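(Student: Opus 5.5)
The plan is a structural induction on the condensed pattern, combined with an induction along a topological order of the SCC-condensation of $\directedGraph{\ma D}_{\ma P}$. First, I collapse each SCC $C$ into a single node. The condensation graph is acyclic, because the SCCs partition $\genPattern{\ma P}$. So $\widehat{\ma P}$ has an acyclic generative dependency graph whose nodes are the operators $\tx{GenSCC}_C$. By the acyclic-collapse result (Theorem~\ref{thm:acyclic-collapse-detailed}) and topological invariance (Proposition~\ref{prop:genop-topo}), the evaluation of $\widehat{\ma P}$ equals the single-pass staged evaluation in any topological order of the condensed nodes. The only difference from the proofs of those results is that a node is now a macro-operator rather than an individual $\tx{GenOp}$. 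Their arguments use only two facts about a node: its output depends on the context only through the bindings of its input variables, and it extends each context by bindings on its output variables. Both facts hold for $\tx{GenSCC}_C$ with inputs $X_C$ and outputs $Y_C$, by Definition~\ref{def:scc-summary-appendix}.

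Second, I show stage by stage that the declarative semantics of $\ma P$ agrees with the staged evaluation of $\widehat{\ma P}$. I fix a topological order $C_1,\dots,C_r$ of the SCCs. For $\sem=\lfp$, I argue by induction on $k$ that the restriction of the least fixpoint to the mappings produced through $C_1,\dots,C_k$ equals the stage-$k$ set of $\widehat{\ma P}$. At stage $k$, every context for $C_k$ binds $X_C$ using only RDF bindings or outputs of earlier components; this is the analogue of Lemma~\ref{lem:no-backward} for SCCs. So the outer bindings $\eta$ supplied to $C_k$ agree in both evaluations. Within $C_k$, the declarative answer projections under $\eta$ are, by the exactness clause of Assumption~\ref{ass:scc-bounded-eval}, exactly $R_C(\eta)$. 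Reattaching $\eta$ through $\uplus$ gives exactly $\evalThree{\tx{GenSCC}_{C_k}}{\ma G}{\cdot}$. The surrounding algebraic clauses are identical on both sides, so Proposition~\ref{prop:context-equiv-detailed} carries the equality through the non-generative operators. For $\sem=\mathrm{strat}$, I refine the order so that it is compatible with $\sigma$. Each stratum's least fixpoint then decomposes into the SCCs it contains, and the negative tests are evaluated against the already fixed lower strata, by the argument used in Theorem~\ref{thm:strat-detailed}. For $\sem=\mathrm{wfs}$, I apply the same induction to the true component $I^+$. Exactness is stated directly for the true-component answer projection, and the negative constructs outside the SCC only consult sets that are fixed at earlier stages.

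Third, the final stage of $\widehat{\ma P}$ starts from the supply $\{\emptyset\}$, which serves as the base context for the non-generative parts (cf.\ Corollary~\ref{cor:genop-hoist}). Its output is the full answer set, and the biconditional for each $\mu$ follows.

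I expect the main obstacle to be the interface between the per-$\eta$ exactness premise and the global fixpoint. The declarative SCC-answer projection must really be determined locally by $\eta$, with no feedback from later components or from negative constructs that mention $Y_C$. I would handle this by exploiting acyclicity of the condensation: nothing downstream can supply placeholders to $C$. I would also state explicitly that exactness is being applied to exactly those outer bindings that the staged evaluation produces, which Assumption~\ref{ass:scc-bounded-eval} covers by quantifying over every outer binding used during evaluation.
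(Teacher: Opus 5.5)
Your Step~2 is the paper's argument. For an SCC $C$ and the supply $S$ reaching it, the declarative contribution $\{\eta\uplus\rho\mid \eta\in S,\ X_C\subseteq\domain{\eta},\ \rho\in\mathsf{Ans}^{\sem}_C(\eta),\ \eta\sim\rho\}$ becomes $\evalThree{\tx{GenSCC}_C}{\ma G}{S}$ once the exactness premise $\mathsf{Ans}^{\sem}_C(\eta)=R_C(\eta)$ is substituted. The replacement is repeated along the condensation order, and the set-theoretic clauses turn equal inputs into equal outputs (on the true component for $\mathrm{wfs}$). That part is fine.

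\textbf{Step~1 is where the gap is.} Your Step~3 identifies $\evalThree{\widehat{\ma P}}{\ma G}{\{\emptyset\}}$ with a final staged set, and that identification rests on Step~1.
\begin{itemize}
\item Theorem~\ref{thm:acyclic-collapse-detailed} does not transfer to $\widehat{\ma P}$ on the strength of the two node properties you list. Its fixpoint and leastness parts also use $S^{(m-1)}_\gamma\subseteq S^{(m)}_\gamma$ and monotonicity of the consequence operator.
\item In the stratified and well-founded cases, $\widehat{\ma P}$ can have $Y_C$ variables under \tx{Minus}, \tx{Diff}, or negated filters; this is exactly what makes those patterns non-monotone. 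Added generated bindings can then delete mappings, so neither the staged chain nor $\lfp(\Consequence{\widehat{\ma P}})$ behaves as the theorem says.
\item Even in the monotone case, the theorem relates staged evaluation to $\lfp(\Consequence{\widehat{\ma P}})$, not to the single evaluation in the lemma. Read literally as $\Consequence{\widehat{\ma P}}(\{\emptyset\})$, that evaluation feeds only $\eta=\emptyset$ to each $\tx{GenSCC}_C$, and so yields nothing when $X_C\neq\varnothing$.
\end{itemize}

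\textbf{Repair.} Drop Step~1 and read the right-hand side as the paper does. It is one compositional pass in condensation order, in which each $\tx{GenSCC}_C$ consumes the relation produced by its surrounding algebra, with $\{\emptyset\}$ only at the root. Corollary~\ref{cor:genop-hoist} concerns base-mode \tx{GenOp}s and plays no role here. With this reading:
\begin{itemize}
\item No fixpoint of $\widehat{\ma P}$ is needed.
\item Your stage-$k$ invariant should be stated for the relation reaching each component. The phrase ``restriction of the least fixpoint to the mappings produced through $C_1,\dots,C_k$'' is undefined for a fixpoint over $\Omega_{\ma P}$.
\end{itemize}

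Two further claims in your proposal do not follow from acyclicity of the condensation: the per-stratum decomposition into components, and the locality you flag as the main obstacle. The reason is that the surrounding pattern of a \tx{GenOp} in $C$ still contains downstream components, possibly under \tx{Minus}. The paper does not derive either claim. It takes the supply reaching $C$ and $\mathsf{Ans}^{\sem}_C(\eta)$ as given and lets the exactness premise do the work; you should use them the same way rather than present them as consequences of Theorem~\ref{thm:strat-detailed} or acyclicity.
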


\begin{proof}
Let $C$ be one generative \tx{SCC} and let $S$ be the context supply reaching $C$. The declarative contribution of $C$ under $S$ is
\begin{align*}
  A_C(S)
  =\{\eta\uplus\rho \mid{}&
       \eta\in S,
       X_C\subseteq\domain(\eta),\\
     & \rho\in \mathsf{Ans}^{\sem}_C(\eta),
       \eta\sim\rho\},
\end{align*}
where $\mathsf{Ans}^{\sem}_C(\eta)$ is the declarative \tx{SCC}-answer projection under $\eta$ for the chosen semantics $\sem$. Assumption~\ref{ass:scc-bounded-eval} gives
\[
  \mathsf{Ans}^{\sem}_C(\eta)=R_C(\eta)
  \quad\text{for every admissible }\eta .
\]
Substituting this equality in the definition of $A_C(S)$ yields
\begin{align*}
  A_C(S)
  &=\{\eta\uplus\rho \mid
       \eta\in S,
       X_C\subseteq\domain(\eta),
       \rho\in R_C(\eta),
       \eta\sim\rho\}\\
  &=\evalThree{\tx{GenSCC}_C}{\ma G}{S}.
\end{align*}
Thus, replacing $C$ by $\tx{GenSCC}_C$ preserves the relation supplied to the parent algebraic constructor. Repeating this argument along the \tx{SCC} condensation order replaces every generative component by an extensionally equal summary. Projection, filtering, union, join, difference, minus, and optional are compositional: equal input relations give equal output relations by their set-theoretic definitions. Thus, the final root relation is unchanged. For well-founded semantics, the same replacement is applied to the true component of each exact local summary, which is the component queried by Definition~\ref{def:membership}.
\end{proof}

\begin{lemma}
\label{lem:scc-candidate-bound-detailed}
Let $C$ be an \tx{SCC} and let $\eta$ be an outer binding for $C$. Then,
\[
  |\tx{Cand}_C(\eta)|\leq N_C
  \quad\text{and}\quad
  |R_C(\eta)|\leq N_C.
\]
For every context supply $S$,
\[
  |\evalThree{\tx{GenSCC}_C}{\ma G}{S}|
  \leq N_C\cdot |S|.
\]
\end{lemma}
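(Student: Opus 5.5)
The plan is a direct counting argument from the product form of $\tx{Cand}_C(\eta)$ fixed in Assumption~\ref{ass:scc-bounded-eval}, followed by a union bound over the context supply for the summary operator. No semantic fact is needed beyond the definitions of $\tx{Cand}_C(\eta)$, $R_C(\eta)$ and $\tx{GenSCC}_C$ (Definition~\ref{def:scc-summary-appendix}). Throughout, I assume $S$ is finite; otherwise the bound is vacuous.

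\emph{Step 1: the candidate bound.} I would define a map from $\tx{Cand}_C(\eta)$ to the product $\prod_{v\in Y_C} D_C^\eta[v]$, sending $\eta\uplus\nu$ to the tuple $(\nu(v))_{v\in Y_C}$. Every candidate arises from at least one such $\nu$ with $\nu(v)\in D_C^\eta[v]$. So the assignment $\nu\mapsto\eta\uplus\nu$ is a surjection from a subset of the product onto $\tx{Cand}_C(\eta)$, where the subset is cut out by the condition $\eta\sim\nu$. Surjectivity is all that is needed; injectivity is not, even though $\uplus$ may replace overlapping values by representatives. Hence
\[
|\tx{Cand}_C(\eta)|\le \prod_{v\in Y_C}|D_C^\eta[v]|\le B^{|Y_C|}=N_C ,
\]
using the cap $|D_C^\eta[v]|\le B$.

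\emph{Step 2: the bound on accepted projections.} By definition, $R_C(\eta)$ is the image of a subset of $\tx{Cand}_C(\eta)$ under the map $\theta\mapsto\theta_{|Y_C}$; the subset consists of the candidates satisfying $\tx{SS}_C$. Filtering and then taking an image cannot increase cardinality, so $|R_C(\eta)|\le N_C$ follows from Step~1.

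\emph{Step 3: the summary operator.} I would write
\[
\evalThree{\tx{GenSCC}_C}{\ma G}{S}=\bigcup_{\eta\in S,\ X_C\subseteq\domain(\eta)}\{\eta\uplus\rho\mid \rho\in R_C(\eta),\ \eta\sim\rho\}.
\]
For a fixed $\eta$, the inner set is the image of a subset of $R_C(\eta)$ under $\rho\mapsto\eta\uplus\rho$, so it has at most $N_C$ elements by Step~2. There are at most $|S|$ admissible values of $\eta$, so subadditivity of cardinality over the union gives the bound $N_C\cdot|S|$.

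The only point needing care is that $\uplus$ is a canonical union, not a plain union. Two different $\nu$ (or $\rho$) may therefore yield the same mapping, and an inverse map would not be well defined. Phrasing every step as an image, that is, a surjection onto the counted set, handles this. Beyond that, every step is routine.
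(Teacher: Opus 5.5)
Your proposal is correct and matches the paper's proof essentially step for step. The paper also bounds $|\tx{Cand}_C(\eta)|$ by the number of assignments $\nu$ in $\prod_{v\in Y_C}D_C^\eta[v]$, and obtains $|R_C(\eta)|\le N_C$ from the surjective projection $\theta\mapsto\theta_{|Y_C}$. It then bounds the summary operator by summing $|R_C(\eta)|$ over the admissible $\eta\in S$, phrased via the set of pairs $(\eta,\rho)$.

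The only slip is the opening sentence of Step~1. The map you describe there, from $\tx{Cand}_C(\eta)$ to the product, need not be well defined when $\mathrm{dom}(\eta)\cap Y_C\neq\varnothing$. You immediately replace it with the surjection $\nu\mapsto\eta\uplus\nu$, which is what the argument actually needs.
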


\begin{proof}
A candidate is determined by one value choice for each generated variable. Therefore,
\begin{align*}
  |\tx{Cand}_C(\eta)|
  &=\left|
      \{\eta\uplus\nu \mid
        \forall v\in Y_C:\nu(v)\in D_C^\eta[v],\ \eta\sim\nu\}
    \right|\\
  &\leq
    \left|
      \{\nu \mid
        \forall v\in Y_C:\nu(v)\in D_C^\eta[v]\}
    \right|\\
  &\leq \prod_{v\in Y_C}|D_C^\eta[v]|\\
  &\leq \prod_{v\in Y_C}B\\
  &=B^{|Y_C|}=N_C .
\end{align*}
The projection map
\[
\begin{aligned}
  h &: \{\theta\in\tx{Cand}_C(\eta)
          \mid \tx{SS}_C(\theta)\}\to R_C(\eta),\\
  h(\theta)&:=\theta_{|Y_C}.
\end{aligned}
\]
is surjective by the definition of $R_C(\eta)$. Thus,
\begin{align*}
  |R_C(\eta)|
  &\leq |\{\theta\in\tx{Cand}_C(\eta)\mid\tx{SS}_C(\theta)\}|\\
  &\leq |\tx{Cand}_C(\eta)|\\
  &\leq N_C .
\end{align*}
For the summary operator, let
\[
\begin{aligned}
  B_C(S)=\{(\eta,\rho)\mid{}&
       \eta\in S,
       X_C\subseteq\domain(\eta),\\
     & \rho\in R_C(\eta),
       \eta\sim\rho\}.
\end{aligned}
\]
Then,
\begin{align*}
  |\evalThree{\tx{GenSCC}_C}{\ma G}{S}|
  &=|\{\eta\uplus\rho\mid (\eta,\rho)\in B_C(S)\}|\\
  &\leq |B_C(S)|\\
  &\leq \sum_{\eta\in S,\ X_C\subseteq\domain(\eta)}|R_C(\eta)|\\
  &\leq \sum_{\eta\in S}N_C\\
  &=N_C\cdot |S| .
\end{align*}
\end{proof}

\begin{lemma}[Validation of One Accepted Projection]
\label{lem:scc-validation-cost-detailed}
Let $C$ be an \tx{SCC}, let $\eta$ be an outer binding, and let $\rho$ be a mapping with $\domain(\rho)=Y_C$. Deciding whether $\rho\in R_C(\eta)$ uses polynomial time in data complexity and polynomial space in combined complexity.
\end{lemma}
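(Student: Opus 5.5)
The plan is to reduce membership $\rho\in R_C(\eta)$ to a search over the candidate set. Whether that search runs in polynomial time or polynomial space then depends on $N_C=B^{|Y_C|}$. Unfolding the definition in Assumption~\ref{ass:scc-bounded-eval}, $\rho\in R_C(\eta)$ holds iff some $\theta\in\tx{Cand}_C(\eta)$ satisfies $\theta_{|Y_C}=\rho$ and $\tx{SS}_C(\theta)$.

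The first step is to make the search trivial. The candidates are exactly the maps $\eta\uplus\nu$ with $\nu(v)\in D_C^\eta[v]$ and $\eta\sim\nu$. Since $Y_C\cap X_C=\varnothing$ and $\domain(\nu)=Y_C$, the only witness that can project to $\rho$ is $\theta:=\eta\uplus\rho$. The one subtlety is the representative normalization in $\uplus$ on variables of $Y_C$ that $\eta$ also binds. I would handle this by taking the $Y_C$-projection of $\eta\uplus\rho$ and comparing it with $\rho$, which is a polynomial check. So I would not enumerate all $N_C$ candidates. Instead I would check three things directly:
\begin{enumerate}
\item for every $v\in Y_C$, $\rho(v)\in D_C^\eta[v]$;
\item $\eta\sim\rho$;
\item $\tx{SS}_C(\eta\uplus\rho)$, that is, for each $g\in C$, the shape conditions $(\star)$ and $\tx{Validate}(g,\theta_{|X_g},\theta_{|Y_g};R)=\top$, which is legitimate by Assumption~\ref{ass:validate-correct}.
\end{enumerate}
If the uniqueness argument breaks because of normalization, the fallback is to enumerate the $\nu$ with $\nu_{|Y_C}$ equivalent to $\rho$ variable by variable. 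In data complexity this is a constant $N_C$ number of candidates. In combined complexity the loop reuses polynomial space, since one assignment has polynomial representation length.

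Next I would bound the cost of each check. The domains $D_C^\eta[v]$ come from Algorithm~\ref{alg:evalscc}. Its enumeration phase makes at most $B|Y_C|$ successful insertions. Each insertion is preceded by a $\tx{Propose}$ call, and oracle calls cost $O(1)$ per invocation on objects of polynomial length. Repairs are bounded by $T$. Hence building all $D_C^\eta[v]$ costs a polynomial number of operations: in data complexity, $|C|$, $|Y_C|$, $K$, $B$ and $T$ are constants, giving polynomially many steps. Compatibility $\eta\sim\rho$ needs $\rep_v$. I would compute it from the finite class of $\approx_v$-related typed terms using reachability under $\approx_v$, which is polynomial in the number of relevant terms. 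The self-support check consists of $|C|$ prompt instantiations, each via $\canonical$/$\lexical$, plus $R\cdot|C|$ validation calls, all polynomial.

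The main obstacle is combined complexity. There, the loop of Algorithm~\ref{alg:evalscc} may iterate over up to $B^{|Y_C|}$ candidates during repairs, and the saturation loop is not obviously polynomially bounded in $|\ma P|$. I would argue that membership does not need the full candidate set materialized: it needs only the domains (at most $B|Y_C|$ values, polynomial size) and the single witness. Every stored object has polynomial length by Assumption~\ref{ass:scc-bounded-eval}, and computing the $\rep_v$ classes in place can be done in space polynomial in the input. So even if the time is exponential in the worst case, the space is polynomial, which gives polynomial time in data complexity and polynomial space in combined complexity.
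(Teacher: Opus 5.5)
Your uniqueness observation is right, but check~1 is wrong on variables $v\in Y_C\cap\domain{\eta}$, a case the paper explicitly allows.

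For such $v$, every candidate $\eta\uplus\nu$ takes the value $\rep_v(\eta(v))$ whatever $\nu(v)$ is. So $\rho\in R_C(\eta)$ forces $\rho(v)=\rep_v(\eta(v))$, while the witnessing $\nu(v)$ can be any element of $D_C^\eta[v]$ in the same $\approx_v$-class. Nothing puts the representative itself into $D_C^\eta[v]$. It may even be $\sourceRdf$-typed, whereas $D_C^\eta[v]\subseteq\ma T_{\tx{gen}}\times\{\sourceGen\}$. Hence your test can reject genuine members.

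What fails is therefore not the uniqueness of $\theta=\eta\uplus\rho$. It is the equivalence of $\theta\in\tx{Cand}_C(\eta)$ with ``$\rho(v)\in D_C^\eta[v]$ for all $v\in Y_C$''. The repair is:
\begin{itemize}
\item for $v\in Y_C\setminus\domain{\eta}$, require $\rho(v)\in D_C^\eta[v]$;
\item for overlap variables, require $\rho(v)=\rep_v(\eta(v))$ and the existence of some $d\in D_C^\eta[v]$ with $\rep_v(d)=\rep_v(\eta(v))$.
\end{itemize}

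Your fallback also covers this, and it is exactly the paper's proof. The paper rewrites membership as the existence of $\nu\in\prod_{v\in Y_C}D_C^\eta[v]$ with $\eta\sim\nu$, $\rho=(\eta\uplus\nu)_{|Y_C}$ and $\tx{SS}_C(\eta\uplus\nu)$. It enumerates the at most $N_C$ such $\nu$ in data complexity. In combined complexity it guesses one and uses $\tx{NPSPACE}=\tx{PSPACE}$. This existential form absorbs the overlap case automatically.

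Once repaired, your direct route genuinely differs from the paper's and is slightly stronger. The existential over $\nu$ splits into independent per-variable tests, each needing at most $B$ comparisons. What remains is a single $\tx{SS}_C$ check on the fixed $\theta$, costing $|C|$ validation calls. So one membership test runs in deterministic polynomial time even in combined complexity. It needs no enumeration of $N_C$ assignments and no appeal to Savitch's theorem.

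You also cost the construction of the domains, which the paper simply takes as given. There you are too pessimistic. Each pass of the enumeration loop either inserts a value or terminates the loop, so there are at most $|C|(B|Y_C|+1)$ calls to $\tx{Propose}$. That is polynomial in combined complexity too, and repairs play no role in $R_C(\eta)$ as defined.

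Finally, do not compute $\rep_v$ by reachability over a set of ``relevant terms'' whose size you have not bounded. As the paper does, charge each comparison to the polynomial bound derived from Assumption~\ref{ass:scc-bounded-eval}.
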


\begin{proof}
Put
\[
\begin{aligned}
  m_C&:=|C|, &
  y_C&:=|Y_C|, &
  n&:=|\ma P|+\size{\ma G}+|\eta|+|\rho|.
\end{aligned}
\]
Let $p(n)$ bound the symbolic time and space needed to construct, inspect, or compare one generated value, prompt instance, parsed tuple, mapping restriction, or validation input. Assumption~\ref{ass:scc-bounded-eval} gives such a polynomial.

By the definition of $R_C(\eta)$ in Assumption~\ref{ass:scc-bounded-eval}, membership expands as
\begin{align}
  \rho\in R_C(\eta)
  \Longleftrightarrow{}&
  \exists\theta\in\tx{Cand}_C(\eta):
       \rho=\theta_{|Y_C}
       \wedge \tx{SS}_C(\theta) .
  \label{eq:rho-r-first-expand}
\end{align}
Expanding the candidate product gives the following equivalent form:
\begin{align}
  \rho\in R_C(\eta)
  \Longleftrightarrow{}&
  \exists\nu:
  \begin{array}[t]{l}
    \domain(\nu)=Y_C,\\
    \forall v\in Y_C:\ \nu(v)\in D_C^\eta[v],\\
    \eta\sim\nu,\\
    \rho=(\eta\uplus\nu)_{|Y_C},\\
    \tx{SS}_C(\eta\uplus\nu).
  \end{array}
  \label{eq:rho-r-existential-nu}
\end{align}
Indeed, if \eqref{eq:rho-r-first-expand} holds, then $\theta\in\tx{Cand}_C(\eta)$ means that there is a total assignment $\nu:Y_C\to\ma T_{\tx{gen}}\times\{\sourceGen\}$ such that
\[
  \theta=\eta\uplus\nu,
  \qquad
  \forall v\in Y_C:\nu(v)\in D_C^\eta[v],
  \qquad
  \eta\sim\nu .
\]
Together with $\rho=\theta_{|Y_C}$ and $\tx{SS}_C(\theta)$, this yields \eqref{eq:rho-r-existential-nu}. Conversely, any $\nu$ satisfying \eqref{eq:rho-r-existential-nu} yields $\theta:=\eta\uplus\nu\in\tx{Cand}_C(\eta)$, $\rho=\theta_{|Y_C}$, and $\tx{SS}_C(\theta)$; hence \eqref{eq:rho-r-first-expand} holds. This derivation does not require $\domain(\eta)\cap Y_C=\emptyset$: if $\eta$ already binds a variable in $Y_C$, compatibility $\eta\sim\nu$ and the equality $\rho=(\eta\uplus\nu)_{|Y_C}$ enforce the shared value.

We now bound the decision procedure induced by \eqref{eq:rho-r-existential-nu}.

\smallskip
\noindent\emph{Candidate-Assignment Part.}
For a guessed or enumerated assignment $\nu$, the product-membership test is
\[
  \domain(\nu)=Y_C
  \wedge
  \bigwedge_{v\in Y_C}\nu(v)\in D_C^\eta[v]
  \wedge
  \eta\sim\nu
  \wedge
  \rho=(\eta\uplus\nu)_{|Y_C}.
\]
For each $v\in Y_C$, the scan over $D_C^\eta[v]$ has at most $B$ elements. Thus, for one $\nu$, the symbolic time is bounded by
\[
  B\cdot y_C\cdot p(n)+p(n)
  \leq (B y_C+1)p(n),
\]
and the symbolic space is bounded by
\[
  p(n)+O(\log y_C)\leq n^{O(1)}.
\]

\smallskip
\noindent\emph{Self-support part.}
Let $\theta:=\eta\uplus\nu$. The self-support predicate expands as
\begin{align*}
  \tx{SS}_C(\theta)
  \Longleftrightarrow{}&
  X_C\subseteq\domain(\theta)
  \wedge
  \bigwedge_{g\in C}\theta\in\evalThree{g}{\ma G}{\{\theta_{|\holder{g}}\}} .
\end{align*}
Using validation correctness, each conjunct for $g=\genbind{\pi(X_g)}{Y_g}{\ma M_g}$ is checked by applying the validation procedure to the two restrictions $\theta_{|X_g}$ and $\theta_{|Y_g}$. Each restriction has polynomial encoding length, and hence one validation check costs at most $p(n)$ symbolic time and space. Therefore, the self-support phase for one $\nu$ has time
\[
  \sum_{g\in C}p(n)=m_C p(n),
\]
and space
\[
  p(n)+O(\log m_C)\leq n^{O(1)}.
\]

\smallskip
\noindent\emph{Data-complexity bound.}
When $\ma P$ is fixed, $y_C$, $m_C$, and $N_C=B^{y_C}$ are constants. A deterministic algorithm enumerates all assignments
\[
  \nu\in\prod_{v\in Y_C}D_C^\eta[v]
\]
and applies the preceding checks. The number of assignments is at most
\[
  \prod_{v\in Y_C}|D_C^\eta[v]|
  \leq B^{y_C}=N_C,
\]
so the total symbolic time is bounded by
\[
  N_C\bigl((B y_C+1)p(n)+m_Cp(n)\bigr),
\]
which is polynomial in the data input because $N_C$, $B$, $y_C$, and $m_C$ are constants. The algorithm stores only one assignment $\nu$, one mapping $\theta$, and one validation configuration at a time, thus, the space is polynomial as well.

\smallskip
\noindent\emph{Combined-complexity bound.}
When $\ma P$ is part of the input, $N_C$ may be exponential. The procedure need not enumerate or store all assignments. It nondeterministically guesses one assignment $\nu$ of polynomial representation length and checks the right-hand side of \eqref{eq:rho-r-existential-nu}. The space used is
\[
  |\nu|+|\eta\uplus\nu|+p(n)+O(\log(|C|+|Y_C|))
  \leq n^{O(1)}.
\]
Thus, the test is in \tx{NPSPACE}, and by \tx{NPSPACE}=\tx{PSPACE}, it is decidable in polynomial space.
\end{proof}

\subsection{Polynomial Bounds for Fixed Condensed Patterns}

\begin{lemma}
\label{lem:bgp-bound-detailed}
Let $B_0=\{t_1,\ldots,t_m\}$ be a fixed basic graph pattern. Then,
\[
  |\evalTwo{B_0}{\ma G}|\leq \size{\ma G}^{m}.
\]
Moreover, $\evalTwo{B_0}{\ma G}$ can be enumerated in polynomial time for fixed $B_0$.
\end{lemma}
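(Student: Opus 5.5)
The plan is to exploit that $B_0$ is fixed, so $m$ is a constant, and then count assignments directly. By the \bgps evaluation clause, every $\mu\in\evalTwo{B_0}{\ma G}$ has $\domain{\mu}=\var{B_0}$ and satisfies $\mu(B_0)\subseteq\ma G$. The first step is to map each such $\mu$ to the tuple of ground triples it produces, namely $\Psi(\mu):=(\mu(t_1),\dots,\mu(t_m))\in\ma G^m$. This map is well defined because $\mu(t_i)\in\ma G$ for every $i$.

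The second step is to show that $\Psi$ is injective. Every variable $v\in\var{B_0}$ occurs in some triple pattern $t_i$, at some fixed position. Therefore the RDF term $\mu(v)$ is read off at that position of $\mu(t_i)$. Since all bindings are $\sourceRdf$-typed, the tag is determined, and so $\mu$ is determined by $\Psi(\mu)$. Injectivity immediately gives
\[
|\evalTwo{B_0}{\ma G}|\le|\ma G^m|=|\ma G|^m\le\size{\ma G}^m .
\]
The last inequality holds because $\size{\ma G}$, the encoding size of $\ma G$, is at least the number of triples in $\ma G$.

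For the enumeration claim, I would iterate over all $m$-tuples $(\tau_1,\dots,\tau_m)\in\ma G^m$; there are at most $\size{\ma G}^m$ of them. For each tuple, the procedure tries to match $t_i$ against $\tau_i$ position by position. An IRI or literal in $t_i$ must equal the corresponding term of $\tau_i$. A variable is bound to that term, and if the variable is already bound, the earlier binding must agree with it. Each tuple takes $O(m\cdot p(\size{\ma G}))$ time, where $p$ is a polynomial bounding the cost of term comparisons. Tuples that pass the check yield exactly the mappings $\mu$ with $\mu(B_0)\subseteq\ma G$, and duplicates can be removed in polynomial time. Since $m$ is fixed, the total cost is polynomial in $\size{\ma G}$.

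I expect the main subtlety to be aligning this with the formal definition rather than any hard step. Three points need care. First, the domain condition $\domain{\mu}=\var{B_0}$ must hold exactly; the matching procedure guarantees this because it binds precisely the variables occurring in the patterns. Second, the typing must be respected: every binding is tagged $\sourceRdf$, so equality on RDF terms is plain syntactic equality, as in Claim 1 of Proposition~\ref{prop:conservativity}. Third, the bound must cover the edge case of $m=0$ or an empty graph. For $m=0$ we get $\{\emptyset\}$, of size $1=\size{\ma G}^0$, so the bound still holds.
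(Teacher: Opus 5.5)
Your proposal is correct and takes essentially the paper's route. The paper bounds the solutions by the product $M_1\times\cdots\times M_m$ of per-pattern match sets, notes that every solution is induced by a compatible tuple from it, and enumerates by nested loops; this is the same count as your injection $\mu\mapsto(\mu(t_1),\dots,\mu(t_m))$ into $\ma G^m$ and your loop over all $m$-tuples. Your injectivity argument in fact shows that passing tuples and solutions are in bijection, so the duplicate-removal step is unnecessary.
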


\begin{proof}
For each triple pattern $t_i$, let
\[
  M_i:=\{a\in\ma G\mid a\text{ matches }t_i\}.
\]
Then,
\[
  |M_i|\leq \size{\ma G}
  \quad\text{for every }i.
\]
Every solution of $B_0$ is induced by a compatible tuple from $M_1\times\cdots\times M_m$. Hence,
\begin{align*}
  |\evalTwo{B_0}{\ma G}|
  &\leq |M_1\times\cdots\times M_m|\\
  &=\prod_{i=1}^{m}|M_i|\\
  &\leq\prod_{i=1}^{m}\size{\ma G}\\
  &=\size{\ma G}^{m}.
\end{align*}
Enumeration uses nested loops over $M_1,\ldots,M_m$, checks compatibility, and stores one $m$-tuple of RDF triples and one induced mapping at a time. Since $m$ is fixed, the enumeration time is polynomial in $\size{\ma G}$.
\end{proof}

\begin{lemma}[Polynomial preservation by fixed condensed patterns]
\label{lem:struct-poly-detailed}
Let $Q$ be a fixed subpattern of $\widehat{\ma P}$. If the input context supply $S(\ma G)$ satisfies
\[
  |S(\ma G)|\leq q_S(\size{\ma G})
\]
for a polynomial $q_S$, then there is a polynomial $q_Q$ such that
\[
  |\evalThree{Q}{\ma G}{S(\ma G)}|
  \leq q_Q(\size{\ma G}).
\]
If $S(\ma G)$ can be enumerated in polynomial time, then $\evalThree{Q}{\ma G}{S(\ma G)}$ can be enumerated in polynomial time for fixed $Q$.
\end{lemma}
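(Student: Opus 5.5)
I would prove Lemma~\ref{lem:struct-poly-detailed} by structural induction on the fixed subpattern $Q$ of $\widehat{\ma P}$, proving the size bound and the enumeration claim together. The induction hypothesis gives, for each immediate subpattern $Q_i$, a polynomial $q_{Q_i}$ bounding $|\evalThree{Q_i}{\ma G}{S(\ma G)}|$, together with a polynomial-time enumeration procedure. Since $Q$ is fixed, the number of induction steps is a constant. A composition of constantly many polynomials is again a polynomial, so the final $q_Q$ is polynomial in $\size{\ma G}$.

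\textbf{Base cases.} If $Q$ is a basic graph pattern, Lemma~\ref{lem:bgp-bound-detailed} gives $|\evalTwo{Q}{\ma G}|\leq\size{\ma G}^{m}$ and polynomial enumeration, independently of $S(\ma G)$. If $Q=\tx{GenSCC}_C$, Lemma~\ref{lem:scc-candidate-bound-detailed} gives $|\evalThree{\tx{GenSCC}_C}{\ma G}{S}|\leq N_C\cdot|S|$. Here $N_C=B^{|Y_C|}$ is a constant because $\ma P$ is fixed. The bound is therefore $N_C\,q_S(\size{\ma G})$.

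For enumeration in the $\tx{GenSCC}_C$ case, I loop over $\eta\in S(\ma G)$ and check $X_C\subseteq\domain(\eta)$. For each such $\eta$, I enumerate the at most $N_C$ assignments in $\prod_{v}D_C^\eta[v]$. Each assignment is checked for compatibility and self-support at polynomial cost, using the data-complexity argument of Lemma~\ref{lem:scc-validation-cost-detailed}. A singleton \tx{GenOp} that lies on no cycle is either treated as a trivial SCC or bounded directly: by the $n_1,n_2$ bounds it produces at most $n_1n_2$ outputs per context.

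\textbf{Inductive cases.} These are routine set-theoretic bounds.
\begin{itemize}
\item Filter and projection do not increase cardinality: $|\cdot|\leq q_{Q'}$. Filter evaluation $\ma F^\mu$ takes polynomial time, since $\rep_v$ is computed over a finite class.
\item Union gives $q_{Q_1}+q_{Q_2}$.
\item Join gives $q_{Q_1}\cdot q_{Q_2}$, since each output is $\mu_1\uplus\mu_2$ for some pair. Enumeration uses nested loops with a compatibility test and the $\uplus$ computation.
\item Minus and Diff are subsets of $\evalTwo{Q_1}{\ma G}$. Each candidate is decided by scanning $\evalTwo{Q_2}{\ma G}$, so the time is $q_{Q_1}q_{Q_2}$ times a polynomial.
\item Opt is the union of a Join and a Diff, so it is bounded by $q_{Q_1}q_{Q_2}+q_{Q_1}$.
\end{itemize}

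\textbf{Main obstacle: context supplies.} Each inner \tx{GenSCC} node receives as its supply the evaluation of its surrounding pattern, not the original $S(\ma G)$. I must show that this supply is itself polynomially bounded and enumerable. I would handle this by evaluating along the condensation (topological) order of the SCCs. The supply for the $k$-th component is the evaluation of the surrounding pattern with later components disabled. By an inner induction on $k$ and the structural induction already established, that supply has polynomial size.

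Because there are constantly many components, each stage composes a constant number of polynomials. Hence the overall bound remains polynomial, and the stagewise enumeration remains polynomial-time.
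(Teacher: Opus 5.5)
Your proposal is correct and follows the paper's proof essentially step for step. Both argue by structural induction on $Q$, using the bound $\size{\ma G}^m$ for basic graph patterns and $N_C\cdot q_S(\size{\ma G})$ for $\tx{GenSCC}_C$, non-increase for projection, filter, \tx{Diff} and \tx{Minus}, a sum for \tx{Union}, a product for \tx{Join}, $q_1q_2+q_1$ for \tx{Opt}, and matching enumeration-time recurrences. The one divergence is your ``main obstacle'' paragraph, which the paper does not need: its proof evaluates every subpattern, including each $\tx{GenSCC}_C$ node, against the same supply $S(\ma G)$ (that is, $\Omega_i:=\evalThree{Q_i}{\ma G}{S(\ma G)}$), so per-node surrounding-pattern supplies never arise, and your condensation-order treatment addresses a point the paper leaves implicit rather than filling a gap in the lemma as stated.
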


\begin{proof}
We use structural induction on $Q$. Put $n:=\size{\ma G}$. For a proper subpattern $Q_i$, write
\[
  \Omega_i:=\evalThree{Q_i}{\ma G}{S(\ma G)},
  \qquad
  |\Omega_i|\leq q_i(n).
\]

\smallskip
\noindent\emph{Basic graph pattern.}
For $Q=B_0$, Lemma~\ref{lem:bgp-bound-detailed} gives
\[
  |\evalThree{B_0}{\ma G}{S}|=|\evalTwo{B_0}{\ma G}|
  \leq n^{m}.
\]
Set $q_Q(n):=n^m$.

\smallskip
\noindent\emph{SCC summary.}
For $Q=\tx{GenSCC}_C$, Lemma~\ref{lem:scc-candidate-bound-detailed} gives
\begin{align*}
  |\evalThree{\tx{GenSCC}_C}{\ma G}{S}|
  &\leq N_C\cdot |S|\\
  &\leq N_C\cdot q_S(n).
\end{align*}
Since $Q$ is fixed, $N_C$ is constant. Set $q_Q(n):=N_C q_S(n)$.

\smallskip
\noindent\emph{Projection.}
For $Q=\proj{Q_1}{L}$, put $\Omega:=\evalThree{Q}{\ma G}{S}$. Then,
\[
  \Omega=\{\mu_{|L}\mid \mu\in\Omega_1\}.
\]
The map $\rho_L:\Omega_1\to\Omega$, $\rho_L(\mu)=\mu_{|L}$, is surjective. Therefore,
\begin{align*}
  |\Omega|
  &=|\rho_L(\Omega_1)|\\
  &\leq |\Omega_1|\\
  &\leq q_1(n).
\end{align*}
Set $q_Q(n):=q_1(n)$.

\smallskip
\noindent\emph{Filter.}
For $Q=\filter{Q_1}{F}$,
\[
  \Omega:=\evalThree{Q}{\ma G}{S}
  =\{\mu\in\Omega_1\mid F^\mu=\top\}.
\]
Hence,
\begin{align*}
  |\Omega|
  &=|\{\mu\in\Omega_1\mid F^\mu=\top\}|\\
  &\leq |\Omega_1|\\
  &\leq q_1(n).
\end{align*}
Set $q_Q(n):=q_1(n)$.

\smallskip
\noindent\emph{Union.}
For $Q=\union{Q_1}{Q_2}$,
\[
  \Omega:=\evalThree{Q}{\ma G}{S}=\Omega_1\cup\Omega_2.
\]
Thus,
\begin{align*}
  |\Omega|
  &=|\Omega_1\cup\Omega_2|\\
  &=|\Omega_1|+|\Omega_2|-|\Omega_1\cap\Omega_2|\\
  &\leq |\Omega_1|+|\Omega_2|\\
  &\leq q_1(n)+q_2(n).
\end{align*}
Set $q_Q(n):=q_1(n)+q_2(n)$.

\smallskip
\noindent\emph{Join.}
For $Q=\join{Q_1}{Q_2}$,
\[
  \Omega:=\evalThree{Q}{\ma G}{S}
  =\{\mu_1\uplus\mu_2\mid
      \mu_1\in\Omega_1,
      \mu_2\in\Omega_2,
      \mu_1\sim\mu_2\}.
\]
Let
\[
  A:=\{(\mu_1,\mu_2)\in\Omega_1\times\Omega_2
       \mid \mu_1\sim\mu_2\}.
\]
The map $h:A\to\Omega$, $h(\mu_1,\mu_2)=\mu_1\uplus\mu_2$, is surjective. Hence,
\begin{align*}
  |\Omega|
  &\leq |A|\\
  &\leq |\Omega_1\times\Omega_2|\\
  &=|\Omega_1|\cdot|\Omega_2|\\
  &\leq q_1(n)q_2(n).
\end{align*}
Set $q_Q(n):=q_1(n)q_2(n)$.

\smallskip
\noindent\emph{Difference.}
For $Q=\diff{Q_1}{Q_2}{F}$, the main semantics gives
\begin{align*}
  \Omega:=\evalThree{Q}{\ma G}{S}
  =\{\mu_1\in\Omega_1\mid{}&
       \forall\mu_2\in\Omega_2:\\
     & \mu_1\not\sim\mu_2\ \text{or}\ F^{\mu_1\uplus\mu_2}=\bot\}.
\end{align*}
Equivalently,
\begin{align*}
  \mu_1\in\Omega
  \Longleftrightarrow{}&
  \mu_1\in\Omega_1\ \,\wedge\\
  &\neg\exists\mu_2\in\Omega_2:
    \mu_1\sim\mu_2
    \wedge F^{\mu_1\uplus\mu_2}\neq\bot .
\end{align*}
Therefore $\Omega\subseteq\Omega_1$, and
\begin{align*}
  |\Omega|
  &\leq |\Omega_1|\\
  &\leq q_1(n).
\end{align*}
Set $q_Q(n):=q_1(n)$.

\smallskip
\noindent\emph{Minus.}
For $Q=\minus{Q_1}{Q_2}$,
\begin{align*}
  \Omega:=\evalThree{Q}{\ma G}{S}
  =\{\mu_1\in\Omega_1\mid{}&
       \neg\exists\mu_2\in\Omega_2:\\
     & \mu_1\sim\mu_2
       \wedge \domain(\mu_1)\cap\domain(\mu_2)\neq\emptyset\}.
\end{align*}
Thus, $\Omega\subseteq\Omega_1$, and
\begin{align*}
  |\Omega|
  &\leq |\Omega_1|\\
  &\leq q_1(n).
\end{align*}
Set $q_Q(n):=q_1(n)$.

\smallskip
\noindent\emph{Optional.}
For $Q=\opt{Q_1}{Q_2}{F}$, the main semantics is
\[
  \evalThree{Q}{\ma G}{S}
  =\evalThree{\join{Q_1}{Q_2}}{\ma G}{S}
   \cup
   \evalThree{\diff{Q_1}{Q_2}{F}}{\ma G}{S}.
\]
Set
\begin{align*}
  J&:=\{\mu_1\uplus\mu_2\mid
       \mu_1\in\Omega_1,
       \mu_2\in\Omega_2,
       \mu_1\sim\mu_2\},\\
  D_F&:=\{\mu_1\in\Omega_1\mid
       \forall\mu_2\in\Omega_2:
       \mu_1\not\sim\mu_2\ \text{or}\ F^{\mu_1\uplus\mu_2}=\bot\}.
\end{align*}
Then, $\Omega=J\cup D_F$. The join and difference derivations give
\begin{align*}
  |J|&\leq |\Omega_1|\cdot|\Omega_2|\leq q_1(n)q_2(n),\\
  |D_F|&\leq |\Omega_1|\leq q_1(n),\\
  |\Omega|
  &=|J\cup D_F|\\
  &\leq |J|+|D_F|\\
  &\leq q_1(n)q_2(n)+q_1(n).
\end{align*}
Set $q_Q(n):=q_1(n)q_2(n)+q_1(n)$.

Each $q_Q$ is obtained from $q_S$ and the induction polynomials by a finite number of additions and multiplications. Because $Q$ is fixed, the induction depth is fixed; therefore $q_Q$ is a polynomial.

For enumeration, the displayed definitions give the following time recurrences for fixed $Q$:
\begin{align*}
  T_{\tx{GenSCC}_C}(n)&\leq |S|N_C n^{O(1)},\\
  T_{\proj{Q_1}{L}}(n)&\leq T_1(n)+|\Omega_1|n^{O(1)},\\
  T_{\filter{Q_1}{F}}(n)&\leq T_1(n)+|\Omega_1|n^{O(1)},\\
  T_{\union{Q_1}{Q_2}}(n)
    &\leq T_1(n)+T_2(n)
       +( |\Omega_1|+|\Omega_2| )n^{O(1)},\\
  T_{\join{Q_1}{Q_2}}(n)
    &\leq T_1(n)+T_2(n)
       +|\Omega_1||\Omega_2|n^{O(1)},\\
  T_{\diff{Q_1}{Q_2}{F}}(n)
    &\leq T_1(n)
       +|\Omega_1|\bigl(T_2(n)+|\Omega_2|n^{O(1)}\bigr),\\
  T_{\minus{Q_1}{Q_2}}(n)
    &\leq T_1(n)
       +|\Omega_1|\bigl(T_2(n)+|\Omega_2|n^{O(1)}\bigr),\\
  T_{\opt{Q_1}{Q_2}{F}}(n)
    &\leq T_{\join{Q_1}{Q_2}}(n)+T_{\diff{Q_1}{Q_2}{F}}(n)\\
    &\quad +( |J|+|D_F| )n^{O(1)}.
\end{align*}
All cardinalities on the right-hand side are polynomial by the preceding derivations. Hence enumeration is polynomial time for fixed $Q$.
\end{proof}

\subsection{Data Complexity}

\begin{theorem}[Data-complexity upper bound]
\label{thm:data-upper-detailed}
Under Assumption~\ref{ass:scc-bounded-eval}, the evaluation-membership problem for acyclic patterns, stratified patterns, and non-stratified patterns under well-founded semantics is decidable in deterministic polynomial time in $\size{\ma G}+|\mu|$.
\end{theorem}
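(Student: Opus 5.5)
The plan is to reduce membership for all three semantics to evaluating the condensed pattern $\widehat{\ma P}$ over the trivial context supply $\{\emptyset\}$. The polynomial bounds already proved for fixed condensed patterns then give the result. Concretely, fix $\ma P$, $\sem$ and the execution parameters $(K,B,T,R)$. By Lemma~\ref{lem:exact-scc-condensation}, under Assumption~\ref{ass:scc-bounded-eval}, $\mu\in\evalThree{\ma P}{\ma G}{\sem}$ iff $\mu\in\evalThree{\widehat{\ma P}}{\ma G}{\{\emptyset\}}$. For $\sem=\mathrm{wfs}$, the left side is read as membership in the true component, matching Definition~\ref{def:membership}. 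This single reduction covers acyclic patterns, where each SCC is a singleton without self-loop and $\tx{GenSCC}_C$ is just the GenOp itself, as well as stratified and non-stratified patterns. Hence no case split on $\sem$ is needed beyond citing exactness.

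The next step is to argue that $\evalThree{\widehat{\ma P}}{\ma G}{\{\emptyset\}}$ can be enumerated in time polynomial in $\size{\ma G}$. I would process the condensed pattern bottom-up, following the topological order of the acyclic condensed dependency graph. The initial supply $\{\emptyset\}$ has size $1$. By Lemma~\ref{lem:struct-poly-detailed}, each fixed subpattern maps a polynomially bounded, polynomially enumerable supply to a polynomially bounded, polynomially enumerable result.

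The one point that needs care is that a $\tx{GenSCC}_C$ node receives as its supply the evaluation of its surrounding pattern, not $\{\emptyset\}$. I would handle this by induction along the condensation order. The supply reaching $C$ is the evaluation of a fixed pattern whose generative parts are all upstream summaries, so it is polynomially bounded by the induction hypothesis. Lemma~\ref{lem:scc-candidate-bound-detailed} then bounds the output by $N_C\cdot|S|$ with $N_C$ constant. Lemma~\ref{lem:scc-validation-cost-detailed} shows that computing $R_C(\eta)$ for each of the polynomially many $\eta$ costs polynomial time: at most $N_C$ candidates, each validated by $|C|$ oracle calls at unit cost, on polynomial-size objects.

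Finally, membership of $\mu$ is decided by scanning the enumerated result and comparing mappings. Each comparison costs polynomial time in $\size{\ma G}+|\mu|$, since all values have polynomial representation length by Assumption~\ref{ass:scc-bounded-eval}.

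I expect the main obstacle to be the interleaving of context supplies with the algebra. The supply to a summary node is defined via the surrounding pattern $\ma P\setminus\{g\}$, which may itself contain other summaries. One must therefore check that condensation makes this dependency well-founded, so that the structural induction of Lemma~\ref{lem:struct-poly-detailed} can be applied stratum by stratum in topological order. I would resolve this by a second induction on the position of $C$ in the condensation order, in which the supply polynomial $q_S$ is recomputed at each stage. Because the number of SCCs is a constant for fixed $\ma P$, composing these finitely many polynomials still yields a polynomial.
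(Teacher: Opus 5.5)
Your proposal is correct and matches the paper's proof step for step: reduce via Lemma~\ref{lem:exact-scc-condensation} to the condensed pattern over $\{\emptyset\}$, bound and enumerate it with Lemma~\ref{lem:struct-poly-detailed} (using Lemmas~\ref{lem:scc-candidate-bound-detailed} and~\ref{lem:scc-validation-cost-detailed} at the summary nodes), and finish by scanning and comparing against $\mu$. Your one addition is the induction along the condensation order, which gives each $\tx{GenSCC}_C$ node a polynomially bounded supply of its own rather than literally $\{\emptyset\}$; this makes explicit a point the paper passes over by applying Lemma~\ref{lem:struct-poly-detailed} directly with $\{\emptyset\}$, and it is sound under the upstream-bindings reading of condensed evaluation in Section~\ref{sec:exec-fixpoint} (with the literal surrounding-pattern supply, which can contain downstream summaries, induct on the number of summary nodes instead, which is still a constant).
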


\begin{proof}
By Lemma~\ref{lem:exact-scc-condensation}, membership in $\evalThree{\ma P}{\ma G}{\sem}$ is equivalent to membership in
\[
  \evalThree{\widehat{\ma P}}{\ma G}{\{\emptyset\}}.
\]
Since $\ma P$ is fixed in data complexity, the condensed pattern $\widehat{\ma P}$ is fixed. The initial supply satisfies
\[
  |\{\emptyset\}|=1.
\]
Applying Lemma~\ref{lem:struct-poly-detailed} to $Q=\widehat{\ma P}$ and $S=\{\emptyset\}$ gives a polynomial $q_{\widehat{\ma P}}$ such that
\[
  |\evalThree{\widehat{\ma P}}{\ma G}{\{\emptyset\}}|
  \leq q_{\widehat{\ma P}}(\size{\ma G}).
\]
The enumeration part of Lemma~\ref{lem:struct-poly-detailed}, together with Lemma~\ref{lem:scc-validation-cost-detailed}, computes the condensed relation in time
\[
  q_{\widehat{\ma P}}(\size{\ma G})\cdot (\size{\ma G}+|\mu|)^{O(1)}.
\]
The final membership test scans the computed relation and compares each mapping with $\mu$, which costs
\[
  q_{\widehat{\ma P}}(\size{\ma G})\cdot |\mu|^{O(1)}.
\]
Both bounds are polynomial in $\size{\ma G}+|\mu|$. The argument is independent of whether $\sem$ is $\lfp$, $\mathrm{strat}$, or $\mathrm{wfs}$, because exact SCC condensation has already converted the generative components into typed summary operators.
\end{proof}

\subsection{Combined Complexity}

For combined complexity, $\ma P$ is part of the input. Then, $|Y_C|$ may be linear in $|\ma P|$, and
\[
  N_C=B^{|Y_C|}
\]
may be exponential. The upper-bound proof therefore stores and validates one assignment $\rho$ at a time.

\begin{theorem}[Combined \tx{PSPACE}-hardness]
\label{thm:pspace-hardness-detailed}
For each $\sem\in\{\lfp,\mathrm{strat},\mathrm{wfs}\}$, the evaluation-membership problem is \tx{PSPACE}-hard in combined complexity.
\end{theorem}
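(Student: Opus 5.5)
The plan is to reduce from the combined-complexity \tx{PSPACE}-hardness of standard SPARQL evaluation, i.e.\ the result of P\'erez, Arenas and Gutierrez \cite{perez2009semantics}: evaluation of patterns built from \bgps, \tx{Join}, \tx{Union}, \tx{Filter} and \tx{Opt} (equivalently \tx{Diff}) is \tx{PSPACE}-complete in combined complexity. That proof encodes quantified Boolean formulas. Each quantifier alternation is simulated by nested \tx{Opt} over a fixed small RDF graph (for instance $\{(a,\tx{tv},0),(a,\tx{tv},1),(a,\tx{false},0),(a,\tx{true},1)\}$), and the input mapping is $\mu=\{?b\mapsto 1\}$ or similar. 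The reduction is polynomial and yields a pattern $\ma P_\varphi$ with no \tx{GenOp} occurrence.

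The key steps are as follows. First, given a QBF $\varphi$, I construct the standard pattern $\ma P_\varphi$, graph $\ma G_\varphi$ and mapping $\mu_\varphi$, and tag every binding with $\sourceRdf$; this is a logspace transformation. Second, I invoke Proposition~\ref{prop:conservativity}: since $\ma P_\varphi$ contains no \tx{GenOp}, its extended evaluation coincides with standard SPARQL evaluation. In particular, compatibility collapses to equality on RDF-typed values, so $\mu_\varphi\in\evalTwo{\ma P_\varphi}{\ma G_\varphi}$ iff $\varphi$ is true.

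Third, I argue that each semantics symbol $\sem\in\{\lfp,\mathrm{strat},\mathrm{wfs}\}$ returns exactly this single-pass value. $\genPattern{\ma P_\varphi}=\varnothing$, so the dependency graph is empty, hence trivially acyclic and stratified (with $m=0$). The consequence operator $\Consequence{\ma P_\varphi}$ is constant in $S$, because no \tx{GenOp} reads the context supply; this follows by Proposition~\ref{prop:context-equiv-detailed}, or directly by structural induction. A constant operator is monotone, and its least fixpoint is reached after one step and equals $\evalTwo{\ma P_\varphi}{\ma G_\varphi}$. The stratified semantics gives $S_0$, which is the same fixpoint. For the well-founded semantics I take $\Cand{\ma P}$ to contain $\evalTwo{\ma P_\varphi}{\ma G_\varphi}$. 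Both reduct operators $\Consequence{\ma P}^J$ are then constant and equal to the standard value, because negative constructs evaluated against $J$ coincide with the ordinary evaluation when there is no generative content. So $I^+$ equals this value after one alternation. Alternatively, I would simply appeal to Lemma~\ref{lem:exact-scc-condensation}: with no \tx{SCC}s we have $\widehat{\ma P}=\ma P$, and all bounded-generation and exactness assumptions hold vacuously.

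The main obstacle is step three for $\mathrm{wfs}$. The reduct operator ``evaluates negative constructs w.r.t.\ $J$'', and this is only loosely defined. I must fix an interpretation under which, for \tx{GenOp}-free patterns, the negative subpatterns are independent of $J$, namely that $J$ only replaces the generated contributions to the negative side. I would make this explicit and state that $I^+_1=I^+_2$ is the standard evaluation. Everything else is the classical reduction transported through conservativity.
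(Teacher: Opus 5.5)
Your proposal is correct and takes essentially the same route as the paper. You view a \tx{GenOp}-free standard SPARQL pattern as a generative pattern, note that every semantics collapses to single-pass evaluation (which equals standard SPARQL evaluation by conservativity), and inherit \tx{PSPACE}-hardness from P\'erez et al.\ through the identity reduction. The paper asserts this collapse in one line; you check it separately for the least-fixpoint, stratified and well-founded cases, and you state explicitly the reading of the reduct operator that the paper's well-founded case tacitly assumes.
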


\begin{proof}
Given a standard SPARQL pattern $P_0$, an RDF graph $\ma G$, and a mapping $\mu$, view $P_0$ as a generative SPARQL pattern with no \tx{GenOp} occurrence. Then
\[
  \evalThree{P_0}{\ma G}{\sem}=\evalTwo{P_0}{\ma G}
\]
for every $\sem\in\{\lfp,\mathrm{strat},\mathrm{wfs}\}$, because there is no generated value, no generative \tx{SCC}, and no self-support validation. Standard SPARQL membership is \tx{PSPACE}-hard in combined complexity~\cite{perez2006semantics,perez2009semantics}. The identity map
\[
  (P_0,\ma G,\mu)\longmapsto(P_0,\ma G,\mu)
\]
is therefore a polynomial-time many-one reduction to generative SPARQL membership.
\end{proof}

\begin{definition}
\label{def:implicit-supply-membership}
A predicate $\mathsf{InS}$ is an implicit membership predicate for a conceptual set $S$ of typed solution mappings if, for every typed mapping $\eta$,
\[
  \mathsf{InS}(\eta)=\top
  \quad\Longleftrightarrow\quad
  \eta\in S.
\]
\end{definition}

\begin{lemma}
\label{lem:structural-membership-pspace-detailed}
There is a polynomial-space procedure
\[
  \mathsf{MEM}(Q,\ma G,\mathsf{InS},\mu)
\]
that decides whether $\mu\in\evalThree{Q}{\ma G}{S}$ for every subpattern $Q$ of $\widehat{\ma P}$, provided $\mathsf{InS}$ decides membership in the input supply $S$ using polynomial space.
\end{lemma}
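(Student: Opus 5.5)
We construct $\mathsf{MEM}$ by structural recursion on $Q$. It returns $\top$ iff $\mu\in\evalThree{Q}{\ma G}{S}$, and every recursive call is itself a membership test of the form $\mathsf{MEM}(Q_i,\ma G,\mathsf{InS},\mu_i)$ for a proper subpattern $Q_i$. The space bound then follows the standard argument for SPARQL membership in \tx{PSPACE}~\cite{perez2006semantics}. The recursion depth is at most $|Q|\le|\widehat{\ma P}|$, and each frame stores only a constant number of typed mappings, each of polynomial length by Assumption~\ref{ass:scc-bounded-eval}. So once every frame is polynomial, the total space is polynomial. Existential quantifications over witnesses are handled by enumerating the witnesses one at a time in polynomial space; equivalently, we guess them and use $\tx{NPSPACE}=\tx{PSPACE}$.

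The constructor cases are handled as follows.

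\textbf{BGP.} Check $\domain{\mu}=\var{B_0}$ and $\mu(B_0)\subseteq\ma G$ directly.

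\textbf{Union.} Return the disjunction of the two recursive calls.

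\textbf{Filter.} Make the recursive call and evaluate $\ma F^\mu$, using $\rep_v$ for the compatibility atoms.

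\textbf{Projection.} Enumerate every extension $\mu'\supseteq\mu$ over $\var{Q_1}\setminus L$ whose values come from terms of $\ma G$, constants of $\ma P$, or generated values of polynomial length. Call $\mathsf{MEM}(Q_1,\cdot,\mu')$ on each.

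\textbf{Join.} Enumerate pairs $\mu_1,\mu_2$ with $\domain{\mu_i}\subseteq\var{Q_i}$ and values from the same polynomial-length universe. Check $\mu_1\sim\mu_2$ and $\mu_1\uplus\mu_2=\mu$ using Definition~\ref{def:compat-union}, then recurse on both.

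\textbf{Diff and Minus.} Recurse on $Q_1$ with $\mu$. Then run a universal loop over all candidate $\mu_2$: for each $\mu_2$ for which $\mathsf{MEM}(Q_2,\cdot,\mu_2)$ holds, check the compatibility, domain and filter conditions. One counter and one mapping suffice, since \tx{PSPACE} is closed under complement.

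\textbf{Opt.} Take the disjunction of the join case and the Diff case.

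\textbf{GenSCC.} By Definition~\ref{def:scc-summary-appendix}, $\mu\in\evalThree{\tx{GenSCC}_C}{\ma G}{S}$ iff there are $\eta$ and $\rho$ such that $\mathsf{InS}(\eta)=\top$, $X_C\subseteq\domain{\eta}$, $\eta\sim\rho$, $\eta\uplus\rho=\mu$, and $\rho\in R_C(\eta)$. We enumerate $\eta$ as a restriction of $\mu$ up to representatives, of polynomial length. We call the given polynomial-space predicate $\mathsf{InS}$. We decide $\rho\in R_C(\eta)$ in polynomial space by Lemma~\ref{lem:scc-validation-cost-detailed}, applied to its combined-complexity bound.

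\textbf{Context supplies.} A subtlety is that subpatterns receive context supplies. For non-generative constructors the supply $S$ is passed down unchanged, so the same $\mathsf{InS}$ is reused. When a $\tx{GenSCC}_C$ occurrence draws its context from the surrounding pattern instead of from $S$, we define a new implicit predicate $\mathsf{InS}'(\eta):=\mathsf{MEM}(Q\setminus\{C\},\ma G,\mathsf{InS},\eta)$. Because $Q\setminus\{C\}$ is strictly smaller, this nested call adds only a bounded recursion depth. Each level stores a polynomial amount of data, and the number of levels is bounded by $|\widehat{\ma P}|$.

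\textbf{Expected obstacle.} The main difficulty is bounding the witness universe in the projection and join cases. Generated values must come from a polynomial-length universe. I would take this from the polynomial-representation clause of Assumption~\ref{ass:scc-bounded-eval}, together with the fact that every generated value is a member of some $D_C^\eta[v]$. Candidate witnesses can then be enumerated lexicographically over strings of bounded length, and we only need to check membership for each one, never to produce the values themselves.
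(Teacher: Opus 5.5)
Your proposal is correct and follows the paper's proof essentially case for case. Both use structural recursion over $\widehat{\ma P}$ with polynomial-length witnesses for projection and join. Both treat Diff and Minus as the complement of an existential over $\mu_2$, Opt as the disjunction of the join and Diff tests, and the $\tx{GenSCC}_C$ case as $\mathsf{InS}(\eta)$ plus the combined bound of Lemma~\ref{lem:scc-validation-cost-detailed}, with polynomial space because the recursion depth is at most $|Q|$.

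Where you differ, your route is the better one. You enumerate witnesses deterministically, whereas the paper guesses them and invokes $\tx{NPSPACE}=\tx{PSPACE}$ and complementation at every level; enumeration keeps the space additive along a recursion of non-constant depth. Your paragraph on surrounding-pattern contexts is harmless but unnecessary here, since $\tx{GenSCC}_C$ consumes the input supply $S$ directly in Definition~\ref{def:scc-summary-appendix}.
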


\begin{proof}
Let
\[
  n:=|Q|+\size{\ma G}+|\mu|.
\]
We give a nondeterministic polynomial-space procedure for membership and use $\tx{NPSPACE}=\tx{PSPACE}$; for negative subtests we use closure of \tx{PSPACE} under complement.

\smallskip
\noindent\emph{Basic graph pattern.}
For $Q=B_0=\{t_1,\ldots,t_m\}$,
\begin{align*}
  \mu\in\evalTwo{B_0}{\ma G}
  \Longleftrightarrow{}&
  \exists(a_1,\ldots,a_m)\in\ma G^m:\\
  &\bigwedge_{i=1}^{m}a_i\text{ matches }t_i
    \text{ and }\mu=\bigcup_{i=1}^{m}\mu_{a_i}.
\end{align*}
The procedure guesses $a_i$ sequentially, stores at most $m$ RDF triples and the accumulated mapping, and checks compatibility after each guess. The space is
\[
  O(m\log\size{\ma G})+|\mu|^{O(1)}\leq n^{O(1)}.
\]

\smallskip
\noindent\emph{SCC summary.}
For $Q=\tx{GenSCC}_C$, Definition~\ref{def:scc-summary-appendix} gives
\begin{align*}
  \mu\in\evalThree{\tx{GenSCC}_C}{\ma G}{S}
  \Longleftrightarrow{}&
  \exists\eta,\rho:\
  \begin{array}[t]{l}
    \mathsf{InS}(\eta)=\top,\\
    X_C\subseteq\domain(\eta),\\
    \rho\in R_C(\eta),\\
    \eta\sim\rho,\\
    \mu=\eta\uplus\rho .
  \end{array}
\end{align*}
The guessed mappings $\eta$ and $\rho$ assign only variables occurring in $Q$, hence have polynomial length. The test $\mathsf{InS}(\eta)$ uses polynomial space by hypothesis. The test $\rho\in R_C(\eta)$ uses polynomial space by Lemma~\ref{lem:scc-validation-cost-detailed}. Thus,
\[
  s_Q(n)\leq s_{\mathsf{InS}}(n)+n^{O(1)}.
\]

\smallskip
\noindent\emph{Projection.}
For $Q=\proj{Q_1}{L}$,
\[
  \mu\in\evalThree{Q}{\ma G}{S}
  \Longleftrightarrow
  \exists\mu':\mu'_{|L}=\mu
  \wedge
  \mu'\in\evalThree{Q_1}{\ma G}{S}.
\]
The guessed $\mu'$ assigns only variables of $Q_1$, so
\[
  s_Q(n)\leq s_{Q_1}(n)+n^{O(1)}.
\]

\smallskip
\noindent\emph{Filter.}
For $Q=\filter{Q_1}{F}$,
\[
  \mu\in\evalThree{Q}{\ma G}{S}
  \Longleftrightarrow
  F^\mu=\top
  \wedge
  \mu\in\evalThree{Q_1}{\ma G}{S}.
\]
Thus,
\[
  s_Q(n)\leq s_{Q_1}(n)+n^{O(1)}.
\]

\smallskip
\noindent\emph{Union.}
For $Q=\union{Q_1}{Q_2}$,
\[
  \mu\in\evalThree{Q}{\ma G}{S}
  \Longleftrightarrow
  \mu\in\evalThree{Q_1}{\ma G}{S}
  \vee
  \mu\in\evalThree{Q_2}{\ma G}{S}.
\]
Hence,
\[
  s_Q(n)\leq \max\{s_{Q_1}(n),s_{Q_2}(n)\}+n^{O(1)}.
\]

\smallskip
\noindent\emph{Join.}
For $Q=\join{Q_1}{Q_2}$,
\begin{align*}
  \mu\in\evalThree{Q}{\ma G}{S}
  \Longleftrightarrow{}&
  \exists\mu_1,\mu_2:\
  \begin{array}[t]{l}
    \mu_1\in\evalThree{Q_1}{\ma G}{S},\\
    \mu_2\in\evalThree{Q_2}{\ma G}{S},\\
    \mu_1\sim\mu_2,\\
    \mu=\mu_1\uplus\mu_2 .
  \end{array}
\end{align*}
The procedure guesses $\mu_1,\mu_2$, checks compatibility and equality with $\mu$, and calls the two subprocedures sequentially. Therefore,
\[
  s_Q(n)\leq s_{Q_1}(n)+s_{Q_2}(n)+n^{O(1)}.
\]

\smallskip
\noindent\emph{Difference.}
For $Q=\diff{Q_1}{Q_2}{F}$,
\begin{align*}
  \mu\in\evalThree{Q}{\ma G}{S}
  \Longleftrightarrow{}&
  \mu\in\evalThree{Q_1}{\ma G}{S}\\
  &\wedge
  \neg\exists\mu_2:\
  \begin{array}[t]{l}
    \mu_2\in\evalThree{Q_2}{\ma G}{S},\\
    \mu\sim\mu_2,\\
    F^{\mu\uplus\mu_2}\neq\bot .
  \end{array}
\end{align*}
The existential right-witness test uses space $s_{Q_2}(n)+n^{O(1)}$, and its complement is in \tx{PSPACE}. Thus
\[
  s_Q(n)\leq s_{Q_1}(n)+s_{Q_2}(n)+n^{O(1)}.
\]

\smallskip
\noindent\emph{Minus.}
For $Q=\minus{Q_1}{Q_2}$,
\begin{align*}
  \mu\in\evalThree{Q}{\ma G}{S}
  \Longleftrightarrow{}&
  \mu\in\evalThree{Q_1}{\ma G}{S}\\
  &\wedge
  \neg\exists\mu_2:\
  \begin{array}[t]{l}
    \mu_2\in\evalThree{Q_2}{\ma G}{S},\\
    \mu\sim\mu_2,\\
    \domain(\mu)\cap\domain(\mu_2)\neq\emptyset .
  \end{array}
\end{align*}
Therefore,
\[
  s_Q(n)\leq s_{Q_1}(n)+s_{Q_2}(n)+n^{O(1)}.
\]

\smallskip
\noindent\emph{Optional.}
For $Q=\opt{Q_1}{Q_2}{F}$,
\[
  \evalThree{Q}{\ma G}{S}
  =\evalThree{\join{Q_1}{Q_2}}{\ma G}{S}
   \cup
   \evalThree{\diff{Q_1}{Q_2}{F}}{\ma G}{S}.
\]
Thus,
\begin{align*}
  \mu\in\evalThree{Q}{\ma G}{S}
  \Longleftrightarrow{}&
  \left(
  \exists\mu_1,\mu_2:\
  \begin{array}[t]{l}
    \mu_1\in\evalThree{Q_1}{\ma G}{S},\\
    \mu_2\in\evalThree{Q_2}{\ma G}{S},\\
    \mu_1\sim\mu_2,\\
    \mu=\mu_1\uplus\mu_2
  \end{array}
  \right)\\
  &\vee
  \left(
  \begin{array}[t]{l}
    \mu\in\evalThree{Q_1}{\ma G}{S},\\
    \neg\exists\mu_2:\
      \mu_2\in\evalThree{Q_2}{\ma G}{S},\\
      \mu\sim\mu_2,\\
      F^{\mu\uplus\mu_2}\neq\bot
  \end{array}
  \right).
\end{align*}
The first disjunct has the join-space bound and the second has the difference-space bound. Hence,
\[
  s_Q(n)\leq s_{Q_1}(n)+s_{Q_2}(n)+n^{O(1)}.
\]

The recursive membership calls follow the syntax tree of $Q$ and have depth at most $|Q|$. At each level, the procedure stores only polynomial-size mappings, indices, and one subcall configuration. Hence the total space is polynomial in $n$.
\end{proof}

\begin{theorem}[Combined \tx{PSPACE} upper bound]
\label{thm:pspace-upper-detailed}
Under Assumption~\ref{ass:scc-bounded-eval}, the evaluation-membership problem is in \tx{PSPACE} in combined complexity for $\sem\in\{\lfp,\mathrm{strat},\mathrm{wfs}\}$.
\end{theorem}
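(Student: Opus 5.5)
The plan is to reduce membership in $\evalThree{\ma P}{\ma G}{\sem}$ to membership in the condensed pattern $\widehat{\ma P}$ and then run the polynomial-space membership procedure already built for condensed patterns.

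First, I apply Lemma~\ref{lem:exact-scc-condensation}. Under Assumption~\ref{ass:scc-bounded-eval}, this gives, for each $\sem\in\{\lfp,\mathrm{strat},\mathrm{wfs}\}$, that
\[
\mu\in\evalThree{\ma P}{\ma G}{\sem}\iff \mu\in\evalThree{\widehat{\ma P}}{\ma G}{\{\emptyset\}}.
\]
For $\mathrm{wfs}$ this refers to the true component, as in Definition~\ref{def:membership}. I must check that $\widehat{\ma P}$ can be produced from $\ma P$ in polynomial time. Computing the dependency graph $\directedGraph{\ma D}_{\ma P}$ and its \tx{SCC}s is polynomial in $|\ma P|$. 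Each $\tx{GenSCC}_C$ is represented symbolically by $C$, $X_C$ and $Y_C$, so $|\widehat{\ma P}|\le|\ma P|$ up to a polynomial factor. The exponential object $\tx{Cand}_C(\eta)$ is never materialised.

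Second, I instantiate Lemma~\ref{lem:structural-membership-pspace-detailed} with $Q=\widehat{\ma P}$ and the input supply $S=\{\emptyset\}$. The membership predicate $\mathsf{InS}(\eta)$ just tests whether $\eta$ is the empty mapping, which takes constant space. The lemma then yields a polynomial-space decision procedure $\mathsf{MEM}(\widehat{\ma P},\ma G,\mathsf{InS},\mu)$.

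The \tx{GenSCC} leaves are handled inside that lemma through Lemma~\ref{lem:scc-validation-cost-detailed}. That lemma guesses a single assignment $\nu$ of polynomial length rather than enumerating all $N_C=B^{|Y_C|}$ candidates. It relies on polynomial-size encodings of generated values and on oracle calls counted as $O(1)$.

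The step I expect to be the main obstacle is the context supply for \tx{GenSCC} leaves nested inside larger algebra. In the paper's semantics, the supply passed to an occurrence is the evaluation of its surrounding pattern, $\evalTwo{\ma P\setminus\{g\}}{\ma G}$, and not simply the root input $\{\emptyset\}$. I would handle this by defining $\mathsf{InS}_C(\eta)$ as $\mathsf{MEM}(\widehat{\ma P}\setminus\{\tx{GenSCC}_C\},\ma G,\cdot,\eta)$ and recursing.

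This recursion is well-founded because the condensed dependency graph is acyclic. Each recursive call removes at least one summary operator, so the nesting depth is at most $|\widehat{\ma P}|$. Each level stores only polynomially many polynomial-size objects, so the total space is polynomial. Negative subtests (\tx{Diff}, \tx{Minus}, the difference branch of \tx{Opt}) stay within \tx{PSPACE}, using \tx{NPSPACE}$=$\tx{PSPACE} and closure of \tx{PSPACE} under complement.

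Finally, I combine this upper bound with Theorem~\ref{thm:pspace-hardness-detailed} to obtain the \tx{PSPACE}-completeness claimed in Theorem~\ref{thm:complexity}.
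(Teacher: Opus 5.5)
Your main line is the paper's proof. Lemma~\ref{lem:exact-scc-condensation} turns membership into $\mu\in\evalThree{\widehat{\ma P}}{\ma G}{\{\emptyset\}}$, and Lemma~\ref{lem:structural-membership-pspace-detailed} is then applied with $Q=\widehat{\ma P}$ and $\mathsf{InS}_0(\eta)=\top$ iff $\eta=\emptyset$, with the summary leaves handled by Lemma~\ref{lem:scc-validation-cost-detailed}. Your check that $\widehat{\ma P}$ can be built in polynomial time, without materialising any candidate product, is something the paper leaves implicit. The appeal to Theorem~\ref{thm:pspace-hardness-detailed} is not needed for this statement.

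The ``obstacle'' step is your own addition and is not what the paper does. Lemma~\ref{lem:structural-membership-pspace-detailed} passes the single supply $S$ unchanged to every subpattern, and its \tx{GenSCC} case tests $\mathsf{InS}(\eta)$ against that root supply, so no per-occurrence supply oracle is ever built. Having the \tx{GenSCC} case call $\mathsf{MEM}$ on the surrounding pattern instead is a modification of that lemma, not an instantiation. It also changes the set being decided, so you must read the right-hand side of Lemma~\ref{lem:exact-scc-condensation} with the same convention.

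The modification does gain something. With the uniform supply $\{\emptyset\}$, every $\tx{GenSCC}_C$ with $X_C\neq\varnothing$ contributes nothing, because $X_C\subseteq\domain{\emptyset}$ fails; the component of Example~\ref{example_one}, with $X_C=\{?x\}$, is such a case. The paper's route gains a direct application with no nesting.

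If you keep the step, correct two things.
\begin{enumerate}
\item Well-foundedness comes only from each nested call working on a pattern with strictly fewer summary operators, with the supplies inside that call computed relative to the smaller pattern. Acyclicity plays no role. With a globally defined $\mathsf{InS}_C$, two leaves would in general call each other without end.
\item $\widehat{\ma P}\setminus\{\tx{GenSCC}_C\}$ still contains the summaries downstream of $C$. Take $\join{(\join{\bgps}{\tx{GenSCC}_{C_1}})}{\tx{GenSCC}_{C_2}}$ with $C_1\to C_2$, where the variables of $Y_{C_1}\cap X_{C_2}$ are bound only by $C_1$. The supply of $C_1$ then comes from $\join{\bgps}{\tx{GenSCC}_{C_2}}$, in which $X_{C_2}$ is never bound, so the whole answer collapses to $\varnothing$. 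To match the intended reading, in which each \tx{SCC} receives the outer bindings produced upstream, remove the downstream summaries as well.
\end{enumerate}
Either way, the nesting depth is at most the number of summaries and each level uses polynomial space, so the \tx{PSPACE} bound stands.
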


\begin{proof}
By Lemma~\ref{lem:exact-scc-condensation},
\[
  \mu\in\evalThree{\ma P}{\ma G}{\sem}
  \quad\Longleftrightarrow\quad
  \mu\in\evalThree{\widehat{\ma P}}{\ma G}{\{\emptyset\}}.
\]
Let $S_0:=\{\emptyset\}$. Its implicit membership predicate is
\[
  \mathsf{InS}_0(\eta)=\top
  \quad\Longleftrightarrow\quad
  \eta=\emptyset,
\]
which is decidable in polynomial space. Applying Lemma~\ref{lem:structural-membership-pspace-detailed} to $Q=\widehat{\ma P}$, $S=S_0$, and $\mathsf{InS}=\mathsf{InS}_0$ yields a polynomial-space test for the right-hand side. The equivalence above transfers the bound to the original pattern $\ma P$. For $\sem=\mathrm{wfs}$, this decides membership in the true component, as specified in Definition~\ref{def:membership}.
\end{proof}

\begin{theorem}[Combined \tx{PSPACE}-completeness]
\label{thm:pspace-complete-detailed}
Under Assumption~\ref{ass:scc-bounded-eval}, the evaluation-membership problem is \tx{PSPACE}-complete in combined complexity for $\sem\in\{\lfp,\mathrm{strat},\mathrm{wfs}\}$.
\end{theorem}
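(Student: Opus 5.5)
The statement follows by combining the two directions already established. The upper bound is Theorem~\ref{thm:pspace-upper-detailed}. It places the evaluation-membership problem in \tx{PSPACE} in combined complexity under Assumption~\ref{ass:scc-bounded-eval}, for each $\sem\in\{\lfp,\mathrm{strat},\mathrm{wfs}\}$. The lower bound is Theorem~\ref{thm:pspace-hardness-detailed}, which gives \tx{PSPACE}-hardness for each such $\sem$. Completeness is the conjunction of these two facts, so the plan is to state them side by side for a fixed $\sem$ and check that they refer to the same decision problem, namely Definition~\ref{def:membership}, with the execution parameters $(K,B,T,R)$ held constant.

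The first step is the upper bound. Via Lemma~\ref{lem:exact-scc-condensation}, membership in $\evalThree{\ma P}{\ma G}{\sem}$ reduces to membership in the condensed pattern $\widehat{\ma P}$ evaluated on the supply $\{\emptyset\}$. That membership is then decided by the recursive procedure $\mathsf{MEM}$ of Lemma~\ref{lem:structural-membership-pspace-detailed}. This procedure guesses polynomial-size witnesses and handles the \tx{GenSCC} nodes through Lemma~\ref{lem:scc-validation-cost-detailed}; it uses $\tx{NPSPACE}=\tx{PSPACE}$ and closure under complement for the negative operators.

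The second step is the lower bound. The identity reduction from standard SPARQL membership, for patterns with no \tx{GenOp}, is valid here. The reason is that Proposition~\ref{prop:conservativity} makes the extended semantics coincide with standard SPARQL on such patterns, and in that fragment the lfp, stratified and well-founded values all collapse to the single-pass value. Standard SPARQL membership is \tx{PSPACE}-hard in combined complexity, so hardness transfers.

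The main point needing care is that the hardness direction does not depend on Assumption~\ref{ass:scc-bounded-eval}, whereas the upper bound does. For a pattern without \tx{GenOp} the assumption holds vacuously: there is no SCC and hence no exactness requirement. Instances produced by the reduction therefore lie within the assumed regime, and the two bounds concern the same class of instances. With that checked, \tx{PSPACE}-completeness follows for each $\sem$.
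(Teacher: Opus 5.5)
Your proposal is correct and follows the paper's proof, which obtains completeness by combining the hardness result (Theorem~\ref{thm:pspace-hardness-detailed}) with the upper bound (Theorem~\ref{thm:pspace-upper-detailed}). Your extra check is worthwhile and is left implicit in the paper: \tx{GenOp}-free instances satisfy Assumption~\ref{ass:scc-bounded-eval} vacuously, so the hardness reduction stays inside the assumed regime.
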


\begin{proof}
The lower bound is Theorem~\ref{thm:pspace-hardness-detailed}. The upper bound is Theorem~\ref{thm:pspace-upper-detailed}. Therefore the problem is \tx{PSPACE}-complete.
\end{proof}

\subsection{Proof of Theorem~\ref{thm:complexity}}

\begin{proof}[Proof of Theorem~\ref{thm:complexity}]
The bounded-generation statement in the main theorem is used only to obtain finite local candidate domains and polynomial-size encodings, as stated in Lemma~\ref{lem:det-to-scc-bounds}. The equality between accepted \tx{SCC} projections and declarative \tx{SCC} answers is the separate semantic exactness premise in Assumption~\ref{ass:scc-bounded-eval}. Under that exact \tx{SCC}-bounded regime, Lemma~\ref{lem:exact-scc-condensation} reduces membership for $\ma P$ to membership for the typed condensed pattern $\widehat{\ma P}$.

The data-complexity upper bound follows from Theorem~\ref{thm:data-upper-detailed}. The combined-complexity lower bound follows from Theorem~\ref{thm:pspace-hardness-detailed}. The combined-complexity upper bound follows from Theorem~\ref{thm:pspace-upper-detailed}. Combining the lower and upper bounds gives Theorem~\ref{thm:pspace-complete-detailed}.

The proof uses only local products
\[
  \prod_{v\in Y_C}D_C^\eta[v]
\]
and one-projection validation tests in the following existential generated-assignment form:
\[
\begin{aligned}
  \rho\in R_C(\eta)
  \Longleftrightarrow{}&
  \exists\nu:
  \domain(\nu)=Y_C
  \wedge
  \bigwedge_{v\in Y_C}\nu(v)\in D_C^\eta[v]\ \wedge\ \eta\sim\nu\ \wedge{}\\
  &\rho=(\eta\uplus\nu)_{|Y_C}
  \wedge
  \tx{SS}_C(\eta\uplus\nu).
\end{aligned}
\]
It therefore avoids both the invalid global generated-term universe argument and the invalid simplification that would identify the returned projection $\rho$ with the generated assignment without checking the projection equation.
\end{proof}

\end{appendix}
\end{document}